\colorlet{DarkRed}{red!75!black}
\colorlet{DarkGreen}{green!75!black}
\colorlet{DarkBlue}{blue!75!black}
\providecommand{\algorithmname}{Algorithm}
\theoremstyle{plain}
\newtheorem{thm}{\protect\theoremname}
  \theoremstyle{definition}
  \theoremstyle{plain}
  \newtheorem{lem}{\protect\lemmaname}
  \newtheorem{claim}{\protect\claimname}
  \newtheorem{fact}{Fact}
  \newtheorem{corollary}{Corollary}
\newcommand{\minpq}[3]{\mathrm{min}_{#1\rightarrow #2}(#3)}
\def\norm#1{%fancy norm command
  \@ifnextchar\bgroup%
   {\normalpnorm{#1}}%
   {\defaultnorm{#1}}%
}
\def\defaultnorm#1{%
    \renderNorm{#1}{}
}
\def\normalpnorm#1#2{%
   \@ifnextchar\bgroup%
   {\banaxnorm{#1}{#2}}%
   {\renderNorm{#2}{#1}}%\ensuremath{\left\lVert {#2} \right\rVert_{{#1}}}}%
}
\def\banaxnorm#1#2#3{%
    \renderNorm{#3}{#1\to#2 }
    %\ensuremath{\left\lVert #3 \right\rVert_{{#1}\to{#2}}}%
}
\def\renderNorm#1#2{%
    \@ifnextchar^%
    {\fixExponent{#1}{#2}}%
    {\ensuremath{\mathchoice%
        {\lVert #1 \rVert_{#2}}%
        {\lVert #1 \rVert_{#2}}%
        {\lVert #1 \rVert_{#2}}%
        {\lVert #1 \rVert_{#2}}}%
    }%
}
\def\fixExponent#1#2^#3{%
    \ensuremath{{\lVert #1 \rVert^{#3}_{#2}}}%
}
\def\enorm#1#2{%
   \@ifnextchar\bgroup%
   {\norm{L_{#1}}{L_{#2}}}%
   {\norm{L_{#1}}{#2}}%
}
\def\cnorm#1#2{%
   \@ifnextchar\bgroup%
   {\norm{\ell_{#1}}{\ell_{#2}}}%
   {\norm{\ell_{#1}}{#2}}%
}
\newcommand{\mysmalldot}[2]{\ensuremath{\langle #1, #2 \rangle}}
\newcommand{\rowspace}{\ensuremath{\mathrm{rowspace}}}
\newcommand{\cp}{\ensuremath{\mathrm{cp}}}
\newcommand{\hatf}{\widehat{f}}
\newcommand{\hatg}{\widehat{g}}
\newcommand{\bfa}{{\mathbf a}}
\newcommand{\bff}{{\mathbf f}}
\newcommand{\bfA}{{\mathbf A}}
\newcommand{\bfhatg}{{\bf \widehat{g}}}
\newcommand{\calL}{\mathcal{L}}
\newcommand{\calI}{\ensuremath{\mathcal{I}}}
\newcommand{\R}{{\mathbb R}}
\newcommand{\F}{{\mathbb F}}
\newcommand{\N}{{\mathbb N}}
\newcommand{\Holder}{H\"{o}lder\xspace}
\newcommand{\sgn}{\mathrm{sgn}}
\newtheorem*{rep@theorem}{\rep@title}
\newcommand{\newreptheorem}[2]{%
	\newenvironment{rep#1}[1]{%
		\def\rep@title{#2 \ref{##1}}%
		\begin{rep@theorem}[restated]}%
		{\end{rep@theorem}}}
  \providecommand{\claimname}{Claim}
  \providecommand{\definitionname}{Definition}
  \providecommand{\lemmaname}{Lemma}
\providecommand{\theoremname}{Theorem}
\newcommand{\tC}{{\widetilde C}}
\newcommand{\tU}{{\widetilde U}}
\newcommand{\tV}{{\widetilde V}}
\newcommand{\tZ}{{\widetilde Z}}
\newcommand{\tE}{{\widetilde E}}
\newcommand{\ah}[1]{\alpha_{#1}}
\newcommand{\tD}{{\widetilde D}}
\newcommand{\fN}{\mathcal{N}}
\newcommand{\dM}{d_{M}}
\newcommand{\nuN}{\nu_{|\fN|}}
\newcommand{\CVy}{C^V_{y}}
\newcommand{\D}[1]{\mathcal{D}_{#1}}
\newcommand{\Cy}{C_{y}}
\newcommand{\tCy}{\widetilde{C}_{y}}
\newcommand{\tsy}{\widetilde{s}^{(y)}}
\newcommand{\CW}{C^W}
\newcommand{\CWx}{C^W_{x}}
\newcommand{\CWy}{C^W_{y}}
\newcommand{\CWz}{C^W_{z}}
\newcommand{\tR}{\widetilde{R}}
\newcommand{\colsupp}{\textup{ColSupp}}
\newcommand{\Wix}{\mathcal{W}_{i,x}}
\newcommand{\Yix}{Y_{i,x}}
	\newcommand{\hYix}{\hat{Y}_{i,x}}
\newcommand{\pr}[1]{\text{\bf Pr}\normalfont\lbrack #1 \rbrack} %probability
\newcommand{\ex}[1]{\mathbb{E}\normalfont\lbrack #1 \rbrack}%expectation
\newcommand{\exx}[1]{\mathbb{E}\normalfont\left[ #1 \right] }%expectation
\newcommand{\bpr}[1]{\text{\bf Pr}\normalfont \Big[#1 \Big]} %probability
\newcommand{\var}[1]{\text{Var}\normalfont\lbrack #1 \rbrack}%variance
\newcommand{\LV}[1]{}
\newcommand{\SV}[1]{#1}
\newcommand{\eqdef}{\mathrel{\mathop:}=}
\newcommand{\Ex}{\mathbb{E}}
\newcommand{\Var}{\mathrm{Var}}
\newcommand{\E}{\Ex}
\newcommand{\opt}{\mathrm{OPT}}
\newcommand{\optk}{\mathrm{OPT}_{k}}
\newcommand{\optV}{\mathrm{OPT}_{k}^{V}}
\newcommand{\optW}{\mathrm{OPT}_{k}^{W}}
\newcommand{\poly}{\mathrm{poly}}
\newcommand{\nnz}[1]{\lVert #1 \rVert_0}
\newcommand{\nnzs}[1]{\Vert #1 \Vert_0}
\newcommand{\Def}{\overset{\mathrm{def}}{=}}
\newcommand{\eps}{\varepsilon}
\newcommand{\prob}{Binary $\ell_0$-Rank-$k$ Approximation\xspace}
\newcommand{\probbf}{Binary \boldmath$\ell_0$-Rank-\boldmath$k$ Approximation\xspace}
\newcommand{\lprobbf}{Entrywise $\ell_p$-Rank-$k$ Approximation\xspace}
\newcommand{\fieldprobbf}{Entrywise $\ell_0$-Rank-$k$ Approximation over $\mathbb{F}_q$\xspace}
\newcommand{\probk}{Binary $\ell_0$-Rank-$k$\xspace}
\newcommand{\genprobk}{Generalized Binary $\ell_0$-Rank-$k$\xspace}
\newtheorem{theorem}{Theorem}
\newtheorem{definition}[theorem]{Definition}
\newtheorem{hypothesis}[theorem]{Hypothesis}
\newtheorem{observation}[theorem]{Observation}
\newcommand{\classfont}[1]{\textsf{#1}}
\newcommand{\classP}{\classfont{P}\xspace}
\newcommand{\NP}{\classfont{NP}\xspace}
\newcommand{\med}{\text{med}}
\newcommand{\NNZ}{\text{nnz}}
\newcommand{\est}{\text{est}}
\DeclareMathOperator*{\argmin}{arg\,min}
\title{A PTAS for \texorpdfstring{$\ell_p$}{lp}-Low Rank Approximation}
\date{}
\author{Frank Ban\thanks{University of California, Berkeley, \texttt{fban@math.berkeley.edu}}\\
	\and
	Vijay Bhattiprolu\thanks{Carnegie Mellon University, \texttt{vpb@cs.cmu.edu}. Supported by NSF CCF-1422045 and CCF-1526092. Work partly done while visiting UC Berkeley and CMSA, Harvard.}\\
	\and
	Karl Bringmann\thanks{Max Planck Institute for Informatics, \texttt{kbringma@mpi-inf.mpg.de}}\\
	\and
	Pavel Kolev\thanks{Max Planck Institute for Informatics, \texttt{pkolev@mpi-inf.mpg.de}. Funded by the Cluster of Excellence ``Multimodal Computing and Interaction'' within the Excellence Initiative of the German Federal Government.}\\
	\and
	Euiwoong Lee\thanks{NYU, \texttt{luw0315@gmail.com}. Supported in part by the Simons Collaboration on Algorithms and Geometry. Work partly done while a research fellow at the Simons Institute.}\\
	\and
	David P. Woodruff\thanks{Carnegie Mellon University, \texttt{dwoodruf@andrew.cmu.edu}. Supported partly by Office of Naval Research (ONR) grant N00014-18-1-2562. Work partly done while visiting the Simons Institute.}}
\begin{document}

	\maketitle

	\begin{abstract}
	A number of recent works have studied  algorithms for entrywise $\ell_p$-low rank approximation,
    namely algorithms which given an $n \times d$ matrix $A$ (with $n \geq d$),
	output a rank-$k$ matrix $B$ minimizing $\|A-B\|_p^p=\sum_{i,j} |A_{i,j} - B_{i,j}|^p$ when $p > 0$; and $\|A-B\|_0 = \sum_{i,j} [A_{i,j} \neq B_{i,j}]$ for $p=0$, where $[\cdot]$ is the Iverson bracket, that is, $\|A-B\|_0$ denotes the number of entries $(i,j)$ for which $A_{i,j} \neq B_{i,j}$. For $p = 1$, this is often considered more robust than the SVD, while for $p = 0$ this corresponds to minimizing the number of disagreements, or robust PCA. This problem is known to be NP-hard for $p \in \{0,1\}$, already for $k = 1$, and while there are polynomial time approximation algorithms, their approximation factor is at best $\poly(k)$. It was left open if there was a polynomial-time approximation scheme (PTAS) for $\ell_p$-approximation for any $p \geq 0$. We show the following:

	\begin{enumerate}
		\item On the algorithmic side, for $p \in (0,2)$,
		we give the first $n^{\poly(k/\eps)}$ time $(1+\eps)$-approximation algorithm.
		For $p = 0$, there are various problem formulations,
		a common one being the binary setting in which $A\in\{0,1\}^{n\times d}$
		and $B = U\cdot V$, where $U\in\{0,1\}^{n \times k}$ and $V\in\{0,1\}^{k \times d}$.
		There are also various notions of multiplication $U \cdot V$,
		such as a matrix product over the reals, over a finite field, or over a Boolean semiring.
		We give the first almost-linear time approximation scheme
		for what we call the Generalized Binary $\ell_0$-Rank-$k$ problem,
		for which these variants are special cases.
		Our algorithm computes $(1+\eps)$-approximation in time
		$(1/\eps)^{2^{O(k)}/\eps^{2}} \cdot nd^{1+o(1)}$,
		where $o(1)$ hides a factor $\left(\log\log d\right)^{1.1}/\log d$.
		In addition, for the case of finite fields of constant size,
		we obtain an alternate PTAS running in time $n \cdot d^{\poly(k/\eps)}$.

		\item On the hardness front, for $p \in (1,2)$, we show under
		the Small Set Expansion Hypothesis and Exponential Time Hypothesis (ETH),
		there is no constant factor approximation algorithm running
		in time $2^{k^{\delta}}$ for a constant $\delta > 0$,
		showing an exponential dependence on $k$ is necessary.
		For $p = 0$, we observe that there is no approximation algorithm for the
		Generalized Binary $\ell_0$-Rank-$k$ problem running in time
		$2^{2^{\delta k}}$ for a constant $\delta > 0$.
		We also show for finite fields of constant size, under the ETH,
		that any fixed constant factor approximation algorithm requires
		$2^{k^{\delta}}$ time for a constant $\delta > 0$.
	\end{enumerate}
\end{abstract}
	\thispagestyle{empty}
	\newpage

	\thispagestyle{empty}
	\tableofcontents
    \newpage
	\setcounter{page}{1}

	\section{Introduction}
	\label{sec:intro}
	Low rank approximation is a common way of compressing a matrix via dimensionality reduction. The goal is to replace a given $n \times d$ matrix $A$ by a rank-$k$ matrix $A'$ that approximates $A$ well, in the sense that $\|A - A'\|$ is small for some measure $\|.\|$.
	Since we can write the rank-$k$ matrix $A'$ as $U \cdot V$, where $U$ is $n \times k$ and $V$ is $k \times d$, it suffices to store the $k(n+d)$ entries of $U$ and $V$, which is a significant reduction compared to the $nd$ entries of $A$. Furthermore, computing $A' x = U(V x)$ takes time $O(k(n+d))$, which is much less than the time $O(nd)$ for computing $A x$.

	Low rank approximation is extremely well studied, see the surveys~\cite{kv09, m11, w14}
	and the many references therein.
	In this paper, we study the following two variants of entrywise $\ell_p$-low rank approximation.
	Given a matrix $A$ and an integer $k$, one seeks to find a rank-$k$ matrix $A'$, minimizing
	$\|A-A'\|_p^p = \sum_{i,j} |A_{i,j} - A'_{i,j}|^p$ when $p > 0$;
	and $\|A-A'\|_0 = \sum_{i,j} [A_{i,j} \neq A'_{i,j}]$ for $p=0$,
	where $[\cdot]$ is the Iverson bracket, that is,
	$\|A-A'\|_0$ denotes the number of entries $(i,j)$ for which $A_{i,j} \neq A'_{i,j}$.

	When $p = 2$,
	this coincides with the Frobenius norm error measure, which can be
	solved in polynomial time using the singular value decomposition (SVD);
	see also \cite{w14} for a survey of more efficient algorithms based on
	the technique of linear sketching.

	Recently there has been considerable interest in obtaining algorithms
	for $p \neq 2$. For $0 \leq p < 2$, this error measure is often considered
	more robust than the SVD, since one pays less attention to noisy entries
	as one does not square the differences, but instead raises the difference
	to a smaller power. Conversely, for $p > 2$, this error measure pays
	more attention to outliers, and $p = \infty$ corresponds to a guarantee
	on each entry. This problem was shown to be NP-hard for $p \in \{0,1\}$
	\cite{GV15,dajw15,Pauli09}.

	{\it $\ell_p$-Low Rank Approximation for $p > 0$.}
	A number of initial algorithms for $\ell_1$-low rank approximation were given in
	\cite{kk03,kk05,kimefficient,kwak08,zlsyO12,bj12,bd13,bdb13,meng2013cyclic,mkp13,mkp14,mkcp16,park2016iteratively}. There is also related work on robust PCA \cite{wgrpm09,clmw11,nnasj14,nyh14,chd16,zzl15} and measures which
	minimize the sum of Euclidean norms of rows \cite{dv07,fmsw10,fl11,sv12,cw15b}, though neither directly
	gives an algorithm for $\ell_1$-low rank approximation.
	Song et al. \cite{swz17} gave the first approximation
	algorithms with provable guarantees
	for entrywise $\ell_p$-low rank approximation for $p \in [1,2)$.
	Their algorithm provides a $\poly(k \log n)$ approximation and runs in
	polynomial time, that is, the algorithm outputs a matrix $B$ for which
	$\|A-B\|_p \leq \poly(k \log n) \min_{\textrm{rank-}k \ A'}\|A-A'\|_p$.
	This was generalized by Chierichetti et al. \cite{CGOL17}
	to $\ell_p$-low rank approximation, for every $p \geq 1$, where the authors
	also obtained a $\poly(k \log n)$ approximation in polynomial time.

	In Song et al. \cite{swz17} it is also shown that if $A$ has entries bounded by $\text{poly}(n)$ then an $O(1)$ approximation can be achieved, albeit in $n^{\poly(k)}$ time. This algorithm depends inherently on the triangle inequality and as a result the constant factor of approximation is greater than $3$. Improving this constant of approximation requires techniques that break this triangle inequality barrier. This is a real barrier, since the algorithm of \cite{swz17} is based on a row subset selection algorithm, and
	there exist matrices for which any subset of rows contains at best
	a $2(1-\Theta(1/n))$-approximation (Theorem G.8 of \cite{swz17}),
	which we discuss more below.

	{\it $\ell_0$-Low Rank Approximation.}
	When $p = 0$, one seeks a rank-$k$ matrix $A'$ for which
	$\|A-A'\|_0$ is as small as possible, where for a matrix $C$, $\|C\|_0$
	denotes the number of non-zero entries of $C$. Thus, in this case,
	we are trying to minimize the number of disagreements between $A$ and $A'$.
	Since $A'$ has rank $k$,
	we can write it as $U \cdot V$ and we seek to minimize $\|A-U\cdot V\|_0$.
	This was studied by Bringmann et al. \cite{BKW17} when $A, U,$ and $V$ are matrices over the reals
	and $U \cdot V$
	denotes the standard matrix product, and the work of \cite{BKW17} provides a
	$\poly(k \log n)$ bicriteria approximation algorithm. See also earlier work for $k = 1$
	giving a 2-approximation \cite{sjy09,j14}.
	$\ell_0$-low rank approximation is also well-studied when $A$, $U$, and $V$
	are each required to be binary matrices. In this case, there are a number of choices for
	the ground field (or, more generally, semiring). Specifically, for $A' = U \cdot V$ we can
	write the entry $A'_{i,j}$ as the inner product of the $i$-th row of $U$ with the $j$-th column of $V$
	-- and the specific inner product function $\langle .,. \rangle$ depends on the ground field.
	We consider both (1) the ground field is $\mathbb{F}_2$ with inner product $\langle x,y \rangle = \bigoplus_{i=1}^k x_i \cdot y_i \in \{0,1\}$ \cite{y11,ggyt12,dajw15,prf15}, and (2) the Boolean semiring $\{0,1,\wedge,\vee\}$ in which the inner product becomes $\langle x,y \rangle = \bigvee_{i=1}^k x_i \wedge y_i = 1 - \prod_{i=1}^k (1-x_i \cdot y_i) \in \{0,1\}$ \cite{bv10,dajw15,mmgdm08,sbm03,sh06,vag07}.
	Besides the abovementioned upper bounds, which coincide with all of these models when $k = 1$, the only other
	algorithm we are aware of is by Dan et al.~\cite{dajw15}, who for arbitrary $k$ presented an $n^{O(k)}$-time $O(k)$-approximation over $\mathbb{F}_2$, and an $n^{O(k)}$-time $O(2^k)$-approximation over the Boolean semiring.

	Although $\ell_p$-low rank approximation is NP-hard for $p \in \{ 0, 1 \}$, a central open question is if $(1+\eps)$-approximation is possible, namely:
	\emph{Does $\ell_p$-low rank approximation have a polynomial time approximation scheme (PTAS) for any constant~$k$ and $\eps$?}

	\subsection{Our Results}
	We give the first PTAS for $\ell_p$-low rank approximation for $0 \leq p < 2$ in the unit cost RAM model of computation. For $p = 0$ our algorithms work for both finite fields and the Boolean semiring models. We also give time lower bounds, assuming the Exponential Time Hypothesis (ETH)~\cite{ImpagliazzoP01}
	and in some cases the Small Set Expansion Hypothesis~\cite{RS10}, providing evidence that an exponential dependence on $k$, for $p > 0$, and a doubly-exponential dependence on $k$, for $p = 0$, may be necessary.
	%our algorithms may be close to optimal in their dependence on $k$ and $\eps$.

	\subsubsection{Algorithms}
	We first formally define the problem we consider for $0 < p < 2$. We may assume w.l.o.g. that $n\geq d$, and thus the input size is $O(n)$.

	\begin{definition}(\textbf{\lprobbf})
		Given an $n \times d$ matrix $A$ with integer entries bounded
		in absolute value by $\poly(n)$, and a positive integer $k$,
		output matrices $U\in\mathbb{R}^{n \times k}$ and $V\in\mathbb{R}^{k \times d}$ minimizing $\|A-UV\|_p^p \eqdef \sum_{i=1, \ldots, n, j = 1, \ldots, d} |A_{i,j} - (U \cdot V)_{i,j}|^p$.
		An algorithm for \lprobbf is an $\alpha$-approximation if it outputs $U$ and $V$ for which
		$$\|A-UV\|_p^p \leq \alpha\cdot \min_{U' \in \mathbb{R}^{n \times k}, V' \in \mathbb{R}^{k \times d}} \|A-U'V'\|_p^p.$$
	\end{definition}

	Our main result for $0 < p < 2$ is as follows.
	\begin{thm}[PTAS for $0 < p < 2$] \label{thm:main1}
		For any $p \in (0,2)$ and $\eps \in (0,1)$,
		there is a $(1+\eps)$-approximation algorithm to \lprobbf running
		in time $n^{\poly(k/\eps)}$.
	\end{thm}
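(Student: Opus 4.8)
The plan is to reduce the problem to a bounded-dimensional optimization that can be searched over by a net-based enumeration, exploiting the fact that an optimal rank-$k$ factorization is determined (up to the low-rank structure) by the row space $V \in \mathbb{R}^{k \times d}$, and that each row of the optimal $U$ can be computed independently once $V$ is fixed. First I would observe that if the true row space (the span of the rows of the optimal $UV$) were known, then the problem decomposes over the $n$ rows of $A$: for each row $A_i$, we separately solve $\min_{x \in \mathbb{R}^k} \|A_i - xV\|_p^p$, an $\ell_p$-regression problem in $k$ dimensions. So the whole difficulty is in (approximately) guessing a good $V$. The standard trick here — which underlies the $n^{\poly(k)}$ constant-factor result of \cite{swz17} and must be pushed further to get $(1+\eps)$ — is that a near-optimal row space can be spanned by a small number of (reweighted or perturbed) rows of $A$ itself, plus a coarse net over the coefficients.

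The key steps, in order: (1) Prove a \emph{structural / dimension-reduction lemma}: there exists a matrix $\tilde V$ whose $k$ rows lie in the span of $O(k/\eps)^{O(1)}$ rows of $A$ (or more precisely, each row of some near-optimal $V$ is an approximately sparse combination of rows of $A$, using $\ell_p$ sampling / Lewis weights to argue that a small sample of rows $\eps$-preserves the regression cost), such that the best rank-$k$ approximation using row space $\tilde V$ is within $(1+\eps)$ of $\opt$. (2) \emph{Enumerate}: guess the $\poly(k/\eps)$ rows of $A$ that participate, and guess, from a net of size $(1/\eps)^{\poly(k/\eps)}$ (using the $\poly(n)$ entry bound to bound the dynamic range, so $O(\log n)$ bits of precision suffice per coordinate, and the relevant coefficients live in a bounded box after rescaling), the coefficient vectors expressing the $k$ rows of $\tilde V$ in terms of those rows. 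This is $n^{\poly(k/\eps)}$ candidates for $V$. (3) For each candidate $V$, solve the $n$ independent $k$-dimensional $\ell_p$-regressions $\min_x \|A_i - xV\|_p^p$ — each of these is a convex (for $p \ge 1$) or at least tractable low-dimensional problem solvable to $(1+\eps)$ accuracy in time $\poly(d, \log(1/\eps))$, and for $0<p<1$ one can again net over $x$ — assemble $U$, and keep the best $UV$ found. (4) Conclude: the enumeration hits a candidate $V$ that is a $(1+\eps)$-approximate row space, and for that $V$ the per-row optimization recovers a $(1+\eps)$-approximate $U$, so overall $\|A-UV\|_p^p \le (1+\eps)^2 \opt \le (1+O(\eps))\opt$; rescale $\eps$.

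The main obstacle I expect is Step (1): establishing that a \emph{small} (size $\poly(k/\eps)$, not $\poly(k\log n/\eps)$) subset of rows of $A$ spans a $(1+\eps)$-approximate — rather than merely $O(1)$-approximate — row space, and that the coefficients can be discretized at polynomial bit-length without blowing up the error. The $O(1)$-approximation barrier noted after Theorem G.8 of \cite{swz17} shows that na\"ive row-subset selection \emph{cannot} break below factor $2$, so the argument cannot simply pick rows of $A$ as the rows of $V$; instead the rows of $V$ must be (possibly dense) combinations of sampled rows, and the sampling must be done with respect to $\ell_p$ Lewis weights (or $p$-th power leverage scores) so that the sampled rows $\eps$-preserve the cost of \emph{every} rank-$k$ subspace simultaneously, via a net argument over $k$-dimensional subspaces. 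Controlling the net size there, and the interaction between the subspace-net accuracy and the coefficient-net accuracy, together with handling the non-convexity when $p<1$, is where the technical weight of the proof lies; the enumeration and per-row regression steps (2)–(4) are then comparatively routine given the entry bound $\poly(n)$.
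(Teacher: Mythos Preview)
Your Step (1) structural lemma is false, and this is precisely the barrier the paper must work around. Theorem G.8 of \cite{swz17} constructs matrices $A$ for which the closest rank-$k$ matrix \emph{anywhere in the row span of $A$} is at best a $2(1-\Theta(1/n))$-approximation. You acknowledge this barrier but misdiagnose it: you write that ``na\"ive row-subset selection cannot break below factor $2$, so \ldots\ the rows of $V$ must be (possibly dense) combinations of sampled rows.'' But a dense combination of rows of $A$ is still an element of $\mathrm{rowspan}(A)$; the lower bound applies to the entire span, not merely to individual rows. No amount of Lewis-weight sampling or coefficient-netting over linear combinations of rows of $A$ can produce a $(1+\eps)$-approximate $V$, because for these instances no such $V$ exists inside $\mathrm{rowspan}(A)$. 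Your Step (1) is therefore not merely ``the main obstacle'' but an impossibility.

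The paper's route is accordingly quite different. It applies an oblivious sketch $S$ with i.i.d.\ $p$-stable entries and $m = \poly(k/\eps)$ rows, so that $S(U^*V - A)$ lives in a random subspace unrelated to $\mathrm{rowspan}(A)$, and then \emph{guesses the entries of the small matrix $SU^*$} after showing they can be discretized to $n^{\poly(k)}$ values each (using a well-conditioned basis for $V^*$ and the $\poly(n)$ entry bound). For each guess it minimizes a column-wise \emph{median} functional $\mathrm{med}(SU^*V - SA)$ over $V$. The technical heart (Theorem~\ref{constantsketch}) is proving that with only $\poly(k/\eps)$ rows of $S$---not the $\poly(k/\eps)\log d$ rows a union bound over columns would demand, which would yield only a quasi-PTAS---one nonetheless has $\mathrm{med}(S(U^*V-A)) \ge (1-O(\eps))\|U^*V-A\|_1$ for \emph{every} $V$. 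This requires partitioning columns into ``good,'' ``large bad,'' and ``small bad'' according to quantiles of the sketched optimal residual $S(U^*V^* - A)$, and separately controlling each class; nothing in a row-sampling framework corresponds to this structure.
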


	For any constants $k\in\N$ and $\eps>0$, Theorem~\ref{thm:main1} computes a $(1+\eps)$-approximate solution to \lprobbf in polynomial time. This significantly strengthens the approximation guarantees in~\cite{swz17,CGOL17}.

	We next consider the case $p = 0$.
	In order to study the $\mathbb{F}_2$ and Boolean semiring settings
	in a unified way, we introduce the following more general problem.

	\begin{definition}(\textbf{\genprobk})
		Given a matrix $A\in\{0,1\}^{n\times d}$ with $n\geq d$, an integer $k$,
		and an inner product function
		$\langle .,. \rangle \colon \{0,1\}^k \times \{0,1\}^k \to \mathbb{R}$,
		compute matrices $U\in\{0,1\}^{n \times k}$ and $V\in\{0,1\}^{k \times d}$
		minimizing $\|A - U V\|_0$,
		where the product $U V$ uses $\langle .,. \rangle$.
		An algorithm for the \genprobk problem is an $\alpha$-approximation,
		if it outputs matrices $U\in\{0,1\}^{n \times k}$ and $V\in\{0,1\}^{k \times d}$
		such that
		$$\|A-U V\|_0 \leq \alpha\cdot
		\min_{U' \in \{0,1\}^{n \times k}, V' \in \{0,1\}^{k \times d}}
		\|A-U' V'\|_0.$$
	\end{definition}

	Our first result for $p = 0$ is as follows.
	\begin{thm}[PTAS for $p = 0$] \label{thm:main2}
		For any $\eps \in (0,\tfrac{1}{2})$, there is a $(1+\eps)$-approximation algorithm
		for the \genprobk problem running in time
		$(1/\eps)^{2^{O(k)}/\eps^2} \cdot nd^{1+o(1)}$
		and succeeds with constant probability~\footnote{
			The success probability can be further amplified to $1-\delta$
			for any $\delta>0$ by running $O(\log(1/\delta))$ independent
			trials of the preceding algorithm.},
		where $o(1)$ hides a factor $\left(\log\log d\right)^{1.1}/\log d$.
	\end{thm}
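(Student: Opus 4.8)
The plan is to reduce the matrix problem to many independent instances of a "hard core" problem and then solve each core via random sampling. First I would observe that once the matrix $V \in \{0,1\}^{k\times d}$ is fixed, the optimal choice of $U$ decomposes row-by-row: for each row $i$, we want $u \in \{0,1\}^k$ minimizing the Hamming distance between $A_{i,:}$ and the vector $(\langle u, V_{:,j}\rangle)_{j=1,\dots,d}$. Crucially, the columns of $V$ take only $2^k$ distinct values in $\{0,1\}^k$, so $V$ is determined (up to the multiplicity of each column type) by a function assigning to each of the $2^k$ possible column-patterns $q \in \{0,1\}^k$ a count; and the cost of a fixed $U$-row against a fixed $V$ depends only on, for each pair $(a,b)$ where $a$ ranges over the $2^k$ possible "ideal" entries the row could produce, how many columns of that type appear and how they match the actual data. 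The natural approach is: (i) \emph{guess} the $U$ matrix on a small sample of rows and the $V$ matrix on a small sample of columns, (ii) use these guesses to recover accurate estimates of the relevant statistics, and (iii) extend to the full matrix by assigning each remaining row/column its locally optimal pattern with respect to the estimated statistics. This is the standard PTAS-by-sampling paradigm (à la Arora–Karger–Karpinski / Frieze–Kannan), adapted to the combinatorial structure here.

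More concretely, here are the main steps. \textbf{Step 1 (structure of $V$).} Partition the $d$ columns of any candidate solution according to which of the $2^k$ patterns the corresponding column of $V$ equals; let $\beta_q \ge 0$ (summing to $1$) be the fraction of columns in class $q$. The cost contributed by a row $i$ with $U$-pattern $u$ is $\sum_q \beta_q d \cdot \Pr_{j \in \text{class }q}[A_{i,j} \neq \langle u, q\rangle]$; for the optimal solution the column-class assignment and the $\beta_q$'s are fixed, so if we knew them (approximately) we could optimize each row independently, and symmetrically each column. \textbf{Step 2 (sampling the heavy part).} Take a uniformly random sample $S$ of $s = \poly(2^k/\eps)$ rows and $T$ of $\poly(2^k/\eps)$ columns. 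Exhaustively enumerate all $(2^k)^{|S|}$ choices of $U$ restricted to $S$ and all $(2^k)^{|T|}$ choices of $V$ restricted to $T$ — this is where the $(1/\eps)^{2^{O(k)}/\eps^2}$ factor comes from. For the correct guess, standard additive Chernoff/Hoeffding bounds show that the empirical matching statistics computed from $A$ restricted to $S \times [d]$ and $[n] \times T$ estimate, to additive error $\eps'$, the quantities needed in Step 1. \textbf{Step 3 (rounding / extension).} Given the estimated column-class fractions $\hat\beta_q$ and the estimated per-column data statistics, assign each of the $n$ rows the $U$-pattern minimizing its estimated cost, and symmetrically assign columns; output the resulting $U,V$. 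A hybrid/exchange argument bounds the total additive error by $O(\eps' \cdot nd)$, and since one can assume $\opt \ge \Omega(nd)$ in the regime where a multiplicative guarantee is meaningful (handling the small-$\opt$ case separately, e.g.\ by noting that if $\opt$ is tiny the structure is nearly exact and can be read off), additive $\eps' nd$ becomes multiplicative $(1+\eps)$ for $\eps' = \eps/\mathrm{poly}$.

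The running-time target $(1/\eps)^{2^{O(k)}/\eps^2}\cdot nd^{1+o(1)}$ forces two refinements. First, the brute-force enumeration over samples must have size $(1/\eps)^{2^{O(k)}/\eps^2}$, which pins down $|S|, |T| = 2^{O(k)}/\eps^2$ and requires the sampling analysis to tolerate exactly this sample size — so the Chernoff bounds must be applied carefully to each of the $2^{O(k)}$ statistic-types with union bound absorbed into the exponent. Second, and this is the part I expect to be the genuine obstacle, naively the extension step (Step 3) costs $\Theta(nd \cdot 2^k)$: for each row we evaluate $2^k$ candidate patterns against $d$ columns. Getting this down to $nd^{1+o(1)}$ requires exploiting that the cost of a pattern $u$ against the data, given the column-class estimates, can be written as a low-dimensional linear functional of precomputed column aggregates, so that after an $O(nd)$ preprocessing pass the best pattern for each row is found in time depending only on $k$ and $\eps$, not $d$; the residual $d^{o(1)}$ slack (the $(\log\log d)^{1.1}/\log d$ factor) is exactly what one pays for, e.g., fast evaluation or a slightly superlinear aggregation/sorting subroutine over the $2^k$ pattern classes. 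Managing this bookkeeping — keeping every sub-step at $nd^{1+o(1)}$ while the sampling/enumeration overhead stays multiplicatively separate — is the crux; the approximation analysis itself is a routine (if careful) concentration-plus-hybrid argument.
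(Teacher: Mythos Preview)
Your proposal has a genuine gap at the additive-to-multiplicative step, and this gap is exactly what the paper's proof is built to close.

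Your Steps 1--3 deliver an additive $O(\eps' nd)$ error guarantee in the style of Arora--Karger--Karpinski / Alon--Sudakov. You then write that ``one can assume $\opt \ge \Omega(nd)$ in the regime where a multiplicative guarantee is meaningful (handling the small-$\opt$ case separately, e.g.\ by noting that if $\opt$ is tiny the structure is nearly exact and can be read off).'' This is false for the minimization problem. There is no threshold below which small $\opt$ becomes easy: $\opt$ can be anywhere in $[1,nd]$, and an instance with $\opt = n^{0.9}$, say, is neither dense enough for your additive bound to be multiplicative nor structured enough to ``read off.'' The paper explicitly flags this as the contribution over Alon--Sudakov: for the \emph{maximization} version the optimum is always $\ge nd/2$ so additive becomes multiplicative, but for minimization one must prove multiplicative error directly. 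The paper does this via a per-row charging argument (the $L_0/L_1/L_2$ case split on $D_{i,x}$ in Claims~\ref{cla:firstcase} and~\ref{cla:secondcase}), bounding $\Pr[\tU_{i,:}=x]\cdot D_{i,x}$ against $E_i$ rather than against $d$.

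Two further issues follow from this. First, your uniform column sample $T$ will not see small clusters, yet a single column in a small cluster can contribute up to $n$ to $\opt$; the paper instead needs $t$ samples from \emph{each} optimal cluster $\CVy$ (the distribution $\D{V,t}$), and the entire recursive \textbf{Sample} procedure with representatives and pruning (Section~\ref{sec:sampling}, following Kumar--Sabharwal--Sen) exists to simulate per-cluster sampling without knowing the clusters. Second, your diagnosis of the $d^{o(1)}$ factor is off: it is not a sorting or aggregation artifact but comes from the recursion $T(a,b)\le X + Y\cdot T(a-1,b)+T(a,b-1)$ in Lemma~\ref{lem:samplingRT}, which solves to $X\cdot(2Y(\log d +2))^{2^k}$; the $(\log d)^{2^k}$ is absorbed into $d^{o(1)}$ only when $k$ is small relative to $\log\log d$, and into the $(2/\eps)^{2^{O(k)}}$ prefactor otherwise.
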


	Hence, we obtain the first almost-linear time approximation scheme for the \genprobk problem,
	for any constant $k$. In particular, this yields the first polynomial time
	$(1+\eps)$-approximation for constant $k$ for $\ell_0$-low rank approximation
	of binary matrices when the underlying field is $\mathbb{F}_2$ or the Boolean semiring.
	Even for $k=1$, no PTAS was known before.

	Theorem \ref{thm:main2} is doubly-exponential in $k$, and we show below that this is necessary
	for any approximation algorithm for \genprobk. However, in the special case when the
	base field is $\mathbb{F}_2$, or more generally $\mathbb{F}_q$ and $A, U,$ and $V$ have
	entries belonging to $\mathbb{F}_q$, it is possible to obtain an algorithm running in time
	$n \cdot d^{\poly(k/\eps)}$, which is an improvement for certain super-constant
	values of $k$ and $\eps$. We formally define the problem and state our result next.

	\begin{definition}(\textbf{\fieldprobbf})
		Given an $n \times d$ matrix $A$ with entries that are in $\mathbb{F}_q$ for any constant $q$,
		and a positive integer $k$, output matrices
		$U\in\mathbb{F}_q^{n \times k}$ and $V\in\mathbb{F}_q^{k \times d}$ minimizing $\|A-UV\|_0$.
		An algorithm for \fieldprobbf is an $\alpha$-approximation if it outputs matrices
		$U$ and $V$ such that
		$$\|A-UV\|_0 \leq \alpha\cdot
		\min_{U' \in \mathbb{F}_q^{n \times k}, V' \in \mathbb{F}_q^{k \times d}} \|A-U'V'\|_0.$$
	\end{definition}

	Our main result for \fieldprobbf is the following:
	\begin{thm}[Alternate $\mathbb{F}_q$ PTAS for $p = 0$] \label{thm:main3}
		For $\eps \in (0,1)$ there is
		a $(1+\eps)$-approximation algorithm to \fieldprobbf running
		in time $n \cdot d^{\poly(k/\eps)}$.
	\end{thm}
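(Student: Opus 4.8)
The plan is to reduce the problem to finding a single near-optimal $k$-dimensional subspace and to locate such a subspace by enumerating a tiny ``column sketch'' of the optimal $V$ and extrapolating from it. Write the optimal solution as $B^\ast=U^\ast V^\ast$ and let $\mathcal R^\ast=\rowspace(V^\ast)\subseteq\mathbb F_q^d$; I may assume $\dim\mathcal R^\ast=k$, since otherwise I run the algorithm for every rank $k'\le k$ and keep the best. Every row of $B^\ast$ lies in $\mathcal R^\ast$, so $\opt=\sum_{i=1}^{n}\min_{c\in\mathcal R^\ast}\|A_{i,:}-c\|_0$; conversely, for \emph{any} $k$-dimensional $\mathcal R\subseteq\mathbb F_q^d$ the best rank-$\le k$ matrix whose rows lie in $\mathcal R$ is obtained row by row as $B_{i,:}=\argmin_{c\in\mathcal R}\|A_{i,:}-c\|_0$, computable in $O(n\,q^{k}\,d)$ time since $|\mathcal R|=q^{k}$. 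Hence it suffices to build a family of $d^{\poly(k/\eps)}$ candidate subspaces, one of which has cost at most $(1+\eps)\opt$.

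\textbf{The algorithm.} For each subset $T\subseteq[d]$ with $|T|=\poly(k/\eps)$ and each matrix $M\in\mathbb F_q^{k\times T}$, thought of as a guess for $V^\ast_{:,T}$, I would set $\mathcal R_M=\rowspace(M)\subseteq\mathbb F_q^{T}$ and then: (i) assign every row $i$ to the element $g(i)\in\mathcal R_M$ closest in Hamming distance to $A_{i,T}$, which partitions $[n]$ into at most $q^{k}$ groups; (ii) for each column $j\in[d]$, choose the linear functional $\phi_j\colon\mathcal R_M\to\mathbb F_q$ minimizing $\sum_{i=1}^{n}[A_{i,j}\ne\phi_j(g(i))]$ over the $q^{k}$ candidates; (iii) output the subspace $\mathcal R'=\{(\phi_j(v))_{j\in[d]}:v\in\mathcal R_M\}\subseteq\mathbb F_q^d$, which has dimension $\le k$ because $v\mapsto(\phi_j(v))_j$ is linear, and complete it row by row as above. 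There are $\binom{d}{\poly(k/\eps)}\cdot q^{\poly(k/\eps)}=d^{\poly(k/\eps)}$ pairs $(T,M)$, each processed in time $O(nd\,q^{O(k)})$, for a total of $n\cdot d^{\poly(k/\eps)}$; in particular no randomness is needed since all $T$ are tried.

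\textbf{Why a good candidate exists.} I would analyze the pair $M=V^\ast_{:,T}$, so $\mathcal R_M=\mathcal R^\ast|_T$; assuming $T$ is such that this restriction is still $k$-dimensional (the general case reduces to this), every $v\in\mathcal R_M$ has a unique lift $\widetilde v\in\mathcal R^\ast$. Since $(\phi_j(g(i)))_j\in\mathcal R'$ and each $\phi_j$ is chosen optimally, feasibility of the ``natural'' lift ($\phi_j$ equal to the $j$-th coordinate functional of $\mathcal R^\ast$) gives $\sum_i\min_{c\in\mathcal R'}\|A_{i,:}-c\|_0\le\sum_i\|A_{i,:}-\widetilde{g(i)}\|_0$. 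To bound the right-hand side I would use that for a uniformly random $T$, with high probability every nonzero $u\in\mathcal R^\ast$ with $\|u\|_0\ge\tfrac\eps2 d$ satisfies $\|u_T\|_0\ge\tfrac\eps4|T|$ (a Chernoff bound, union-bounded over the $q^{k}$ codewords); consequently any row $i$ with at most $\tfrac\eps8|T|$ errors inside $T$ has $\|\widetilde{g(i)}-c^\ast_i\|_0<\eps d$ for its optimal codeword $c^\ast_i$, so $\|A_{i,:}-\widetilde{g(i)}\|_0\le\|A_{i,:}-c^\ast_i\|_0+\eps d$. Because the number of $B^\ast$-errors falling inside $T$ is $O(\tfrac{|T|}{d}\opt)$ for a typical $T$, only $O(\opt/(\eps d))$ rows can violate this condition, so a suitable $T$ exists.

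\textbf{The main obstacle.} Two genuine gaps remain, and I expect them to be the technical heart of the proof. First, summing the per-row slack $\eps d$ over all $n$ rows gives an additive error $\eps nd$, which is affordable only when $\opt=\Omega\!\big(nd/\poly(k/\eps)\big)$; in the complementary ``nearly low-rank'' regime one must instead fit the error-free rows (almost) exactly. I plan to handle this by first guessing $\opt$ up to a factor $1+\eps$ (only $O(\log(nd)/\eps)$ values) and, when the guess is small, exploiting that all but an $\opt/(nd)$-fraction of rows and columns are entirely error-free, so that a small exact rank computation on a clean sub-instance pins down the relevant part of $\mathcal R^\ast$, possibly combined with an iterative refinement. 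Second, the $O(\opt/(\eps d))$ ``bad'' rows with atypically many errors inside $T$ may receive a wildly wrong label $g(i)$, both incurring large cost themselves and corrupting the functional fit in step (ii); controlling them requires showing that the minimizing $\phi_j$ is robust to this $o(1)$-fraction of adversarial rows, and noting that bad rows are in any case re-fit optimally against $\mathcal R'$ in the final row-by-row completion. Establishing these two points — the regime split near $\opt\approx nd$, and the robustness of the extrapolation to badly sampled rows — is where the real work lies.
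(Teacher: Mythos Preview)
Your route is genuinely different from the paper's, and the two obstacles you flag are real; I do not see how to close them along the lines you sketch.

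The paper does not sample columns. It builds a \emph{linear} $\ell_0$-sketch over $\mathbb{F}_q$: a stack of $K=\poly(k/\eps)$ independent subsample-then-hash maps (adapted from the distinct-elements literature), each with $O(\eps^{-O(1)}\log d)$ rows, together with a median-of-estimates readout~$\mathcal A$. The key analytic statement, proved analogously to the $p=1$ case, is that with constant probability $\med(\mathcal A(U^\ast V-A))\ge(1-O(\eps))\|U^\ast V-A\|_0$ for \emph{every} $V$, while $\med(\mathcal A(U^\ast V^\ast-A))\le(1+\eps)\opt$. One then enumerates all possible values of the sketched optimum: in the row-space version this is $H(V^\ast)^T$, which has $\poly(k/\eps)\cdot\log d$ entries \emph{each lying in $\mathbb{F}_q$}, hence only $q^{\poly(k/\eps)\log d}=d^{\poly(k/\eps)}$ candidates --- this small alphabet is precisely what column sampling lacks, since each sampled column index ranges over $[d]$. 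For each guess, $U$ is recovered row by row over the $q^k$ candidates. The guarantee is multiplicative from the start; no regime split is needed.

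On your plan, the second obstacle is already fatal by itself and is not cured by the regime split. With $|T|=\poly(k/\eps)$, the number of rows with $\ge\eps|T|/8$ errors inside $T$ is, as you compute, $\Theta(\opt/(\eps d))$; each such row can be arbitrarily mislabeled, and through step~(ii) it shifts every column's vote by one, so comparing the chosen $\phi_j$ to the natural lift $\psi_j$ across all $d$ columns gives an excess of $d\cdot\Theta(\opt/(\eps d))=\Theta(\opt/\eps)$. This is a \emph{multiplicative} $\Theta(1/\eps)$ loss, present regardless of whether $\opt$ is large or small, and step~(iii) does not undo it since $\mathcal R'$ itself may already be corrupted. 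Enlarging $|T|$ enough to kill the bad rows forces $|T|=\Omega(\log n)$ and makes the enumeration quasi-polynomial. A column-sampling route to a genuinely multiplicative bound does exist --- Section~\ref{sec:combinatorial} carries it out for the binary setting via a weighted cost estimator and a Chebyshev-based case analysis (Theorem~\ref{thm:maintech}) rather than nearest-codeword decoding --- but that analysis is exactly the ``real work'' you anticipate, and it yields a $2^{2^{O(k)}}$ dependence on $k$, not the $d^{\poly(k/\eps)}$ claimed here.
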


	\subsubsection{Hardness}
	We first obtain conditional time lower bounds for \lprobbf for $p \in (1,2)$.
	Our results assume the Small Set Expansion Hypothesis (SSEH).
	Originally conjectured by Raghavendra and Stuerer~\cite{RS10}, it is still the only assumption that implies strong hardness results for various graph problems such as
	Uniform Sparsest Cut~\cite{RST12} and Bipartite Clique~\cite{Manurangsi18}.
	Assuming this hypothesis, we rule out {\em any} constant factor approximation $\alpha$.

	\begin{thm}[Hardness for \lprobbf] \label{thm:hardp}
		Fix $p \in (1, 2)$ and $\alpha > 1$.
		Assuming the Small Set Expansion Hypothesis, there is no $\alpha$-approximation algorithm
		for \lprobbf that runs in time $\poly(n)$.

		Consequently, additionally assuming the Exponential Time Hypothesis,
		there exists $\delta := \delta(p, \alpha) > 0$ such that
		there is no $\alpha$-approximation algorithm for \lprobbf
		that runs in time $2^{k^{\delta}}$.
	\end{thm}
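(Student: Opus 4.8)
The plan is to establish the $\poly(n)$-time part by a gap-preserving polynomial-time reduction from the Small Set Expansion problem (or, via \cite{RST12,Manurangsi18}, from an SSEH-hard problem such as Uniform Sparsest Cut or a gap version of Bipartite Clique) to \lprobbf, and then to read off the $2^{k^{\delta}}$ bound by tracking how the parameters of the reduction sit relative to the ETH-hardness of the underlying instances. Given a regular graph $G$ on $N$ vertices together with the SSE promise, I would construct an integer matrix $A_G$ of dimensions $n, d = \poly(N)$ and fix the target rank $k = k(\alpha, p)$ so that: \textbf{(completeness)} if $G$ contains a small non-expanding set then some rank-$k$ matrix $B$ satisfies $\|A_G - B\|_p^p \le C_{\mathrm{yes}}$; and \textbf{(soundness)} if every small set of $G$ expands then $\|A_G - B\|_p^p \ge C_{\mathrm{no}}$ for every rank-$k$ $B$, with $C_{\mathrm{no}} / C_{\mathrm{yes}} \ge \alpha$. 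As SSEH asserts that this promise problem admits no polynomial-time algorithm, an $\alpha$-approximation for \lprobbf in time $\poly(n)$ would decide it, giving the first claim; note SSEH is genuinely needed here, since even NP-hardness is not known for $p \in (1,2)$.

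The heart of the argument is the gadget, and the real difficulty is its soundness analysis. The mechanism exploits that for $p \in (1,2)$ the objective $\|A - B\|_p^p$ is outlier-tolerant: compared with the Frobenius case it is far cheaper to leave a few entries of $A - B$ large than to spread the same $\ell_2$-mass over many entries, since $\|x\|_p \ge \|x\|_2$ with near-equality precisely when $x$ is concentrated. Hence a near-optimal rank-$k$ matrix $B$ is essentially forced to reproduce $A_G$ exactly on almost all coordinates, and I would design $A_G$ so that the only way a $k$-dimensional column space can do this is by aligning with a small non-expanding set of $G$, the residual being supported on (a blow-up of) that set's edge boundary; the strict convexity available for $p > 1$ is used to pin down the completeness solution and keep the two thresholds separated. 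Soundness is the converse and the hard step: from any rank-$k$ $B$ with $\|A_G - B\|_p^p$ below threshold one must decode a small vertex set of $G$ whose expansion is controlled by the residual mass, which requires understanding how the $\ell_p$-geometry of residuals interacts with an \emph{arbitrary} low-dimensional subspace — here a hypercontractivity / noise-stability estimate on $A_G$, conveniently packaged as an $\ell_p \to \ell_q$-type bound for an associated matrix with $q = p/(p-1)$, should do the work. To amplify the gap to an arbitrary constant $\alpha$, I would tensor the instance $t = t(\alpha, p)$ times, which multiplies completeness and soundness and pushes the ratio above $\alpha$, while $n, d, k$ stay $\poly(N)$ for each fixed $\alpha$.

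For the ETH consequence, the point is that the reduction is polynomial in every parameter and, crucially, yields $k = \poly(N)$ with the polynomial degree depending only on $\alpha$ and $p$. Thus an $\alpha$-approximation algorithm running in time $2^{k^{\delta}}$ would solve the hard SSE instances in time $2^{N^{O(\delta)}}$; since, under ETH, these instances cannot be solved in time $2^{N^{\delta_0}}$ for a suitable $\delta_0 = \delta_0(\alpha, p) > 0$ (as is standard in the SSE-hardness literature, where the sub-exponential Arora--Barak--Steurer-type algorithm is the barrier), taking $\delta$ small enough gives a contradiction, hence the claimed $2^{k^{\delta}}$ lower bound for some $\delta = \delta(p, \alpha) > 0$. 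I expect the soundness step — showing that \emph{every} near-optimal low-rank matrix, not just a combinatorially structured one, can be rounded back to a non-expanding set — to be the principal obstacle, together with arranging that the completeness and soundness constants already separate by a fixed factor before tensoring, without the robustness of $p < 2$ washing the distinction out.
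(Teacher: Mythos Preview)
Your plan has a genuine gap that would make the approach fail as written. You propose to ``fix the target rank $k = k(\alpha,p)$'' depending only on $\alpha$ and $p$, i.e.\ a constant. But the paper's own Theorem~\ref{thm:main1} gives a $(1+\eps)$-approximation in time $n^{\poly(k/\eps)}$, which for constant $k$ and constant $\eps$ is polynomial. Hence for any fixed $k$ there \emph{is} a polynomial-time $\alpha$-approximation, and no hardness of the type you claim can hold at constant rank. Your proposal is internally inconsistent on this point: you first declare $k$ constant, later assert ``$k = \poly(N)$'' for the ETH step; only the second is compatible with any hardness, and you never explain how your construction produces it. Relatedly, tensoring to boost the gap does not in itself make $k$ grow with $N$, and the soundness step (rounding an arbitrary rank-$k$ approximant back to a small non-expanding set) is left as a wish rather than an argument.

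The paper takes a completely different and much tighter route: it fixes $k = d-1$ and proves the exact identity
\[
\min_{U \in \R^{n\times k},\,V\in\R^{k\times d}} \|A - UV\|_p \;=\; \min_{\|x\|_{p^*}=1} \|Ax\|_p \;=:\; \minpq{p^*}{p}{A}
\]
via \Holder\ duality (Lemma~\ref{lem:lp_to_minpq}). It then reduces computing $\minpq{p^*}{p}{\cdot}$ to computing $\norm{p}{p^*}{\cdot}$ by matrix inversion, and uses the factorization identity $\norm{p}{p^*}{AA^T} = \norm{2}{p^*}{A}^2$ to reach the $2\to q$ operator norm with $q = p^* > 2$ (Lemma~\ref{lem:minpq_to_maxpq}). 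The hardness input is then Barak et al.'s SSEH-based inapproximability of $\norm{2}{q}{\cdot}$ for all $q>2$ (Theorem~\ref{thm:hardness:SSE}) --- this is where the hypercontractivity you allude to actually enters, but on the operator-norm side, not directly on a low-rank gadget. Since the whole chain is polynomial and ends with $k = d-1 = \poly(N)$, the ETH consequence follows. So the key idea you are missing is not a combinatorial gadget at small rank but the algebraic link between rank-$(d{-}1)$ $\ell_p$-approximation and the $p^*\to p$ min-gain of $A$, which places $k$ automatically at the right scale.
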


	This shows that assuming the SSEH and the ETH, any constant factor approximation algorithm needs at least a subexponential dependence on $k$.
	We also prove hardness of approximation results for $p \in (2, \infty)$ (see Theorem~\ref{thm:hardness_big_p}) without the SSEH.
	They are the first hardness results for Entrywise $\ell_p$-Rank-$k$ Approximation other than $p = 0, 1$.

	We next show that our running time for \genprobk is close to optimal,
	in the sense that the running time of any PTAS for \genprobk
	must depend exponentially on $1/\eps$ and doubly exponentially on $k$,
	assuming the Exponential Time Hypothesis.

	\begin{thm}[Hardness for \genprobk] \label{thm:hard0}
		Assuming the Exponential Time Hypothesis, \genprobk has no $(1+\eps)$-approximation algorithm
		in time $2^{1/\eps^{o(1)}} \cdot 2^{n^{o(1)}}$. Further, for any $\eps \geq 0$, \genprobk has no
		$(1+\eps)$-approximation algorithm in time $2^{2^{o(k)}} \cdot 2^{n^{o(1)}}$.
	\end{thm}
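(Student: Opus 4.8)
The plan is to prove both statements by reduction from $3$-SAT, using in each case a reduction that produces a \emph{gap} instance of \genprobk and then matching the instance parameters against what the Exponential Time Hypothesis allows; in both reductions I instantiate \genprobk with the Boolean-semiring inner product $\langle x,y\rangle=\bigvee_{i=1}^{k}x_i\wedge y_i$, for which $\opt=0$ holds exactly when $A$ has Boolean rank at most $k$. For the doubly-exponential-in-$k$ bound the gap will be the trivial one ``$\opt=0$ versus $\opt\ge 1$'', so that an approximation of \emph{any} quality suffices; for the $2^{1/\eps^{o(1)}}$ bound the gap is a genuine multiplicative $(1+\eps)$ gap obtained by a polynomial-size blow-up.

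For the ``for any $\eps\ge 0$'' part I would start from the NP-hardness of deciding whether a $0/1$ matrix has Boolean rank at most $k$, equivalently the Biclique Cover (bipartite dimension) problem~\cite{Pauli09}. The quantitative ingredient I need is that, assuming ETH, Biclique Cover cannot be decided in time $2^{2^{o(k)}}\cdot\poly(n)$; this follows from the known parameterized hardness of Biclique Cover (Chandran--Issac--Karrenbauer), whose reduction sends an $N$-variable $3$-SAT instance to a bipartite-dimension instance on $\poly(N)$ vertices with target $k=\Theta(\log N)$. A $2^{2^{o(k)}}\cdot 2^{n^{o(1)}}$-time $(1+\eps)$-approximation for \genprobk would then decide $3$-SAT in time $2^{2^{o(\log N)}}\cdot 2^{\poly(N)^{o(1)}}=2^{N^{o(1)}}$, contradicting ETH, because in this reduction $\opt=0$ in the YES case and $\opt$ is a positive integer (hence $\ge 1$) in the NO case, so evaluating $\|A-UV\|_0$ on the output of the approximation algorithm already separates the two cases for every $\eps\ge 0$ (and in particular for the exact problem).

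For the $2^{1/\eps^{o(1)}}$ bound I would work with $k=1$, where \genprobk asks, given $A\in\{0,1\}^{n\times d}$, for a combinatorial rectangle $S\times T$ minimizing $\|A-\mathbf{1}_S\mathbf{1}_T^{\top}\|_0=\|A\|_0+|S|\,|T|-2\,e(S,T)$, with $e(S,T)$ the number of $1$-entries of $A$ inside $S\times T$; equivalently one wants the maximum-sum submatrix of the $\pm 1$ matrix $2A-J$. Starting from the NP-hard Maximum Edge Biclique problem (this is essentially the reduction underlying the NP-hardness of $\ell_0$-rank-$1$ approximation of binary matrices~\cite{GV15,dajw15}), a standard blow-up that replaces each $0$-entry by a $0$-block and each $1$-entry by a $1$-block of a suitable polynomial size forces the optimal rectangle to avoid every non-edge, so that $\opt$ equals $\|A\|_0$ minus the size of the largest biclique. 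Thus an $N$-variable $3$-SAT instance is mapped in polynomial time to a \genprobk instance with $n=\poly(N)$ and a threshold $W=\poly(N)$ (with $W=\Omega(\|A\|_0)$, hence bounded below by a fixed polynomial in $N$) such that $\opt\le W$ in the YES case and $\opt\ge W+1$ in the NO case. Taking $\eps\eqdef 1/(2W)$, a $(1+\eps)$-approximation has integer cost strictly below $W+1$ in the YES case and at least $W+1$ in the NO case, so it decides $3$-SAT; since $1/\eps$ is polynomially related to $N$ and $n=\poly(N)$, any algorithm running in time $2^{1/\eps^{o(1)}}\cdot 2^{n^{o(1)}}=2^{N^{o(1)}}$ would refute ETH.

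The main obstacle, and the only place that needs real care, is the $k=1$ gap construction for the first statement: one must check that the block blow-up makes including \emph{any} non-edge cell strictly suboptimal for the rectangle, so that the optimum is attained on a biclique of the underlying graph, while keeping all parameters polynomial and ensuring that the YES-case value $W$ is large enough that $1/W$ is a meaningful approximation parameter (and that padding to restore $n\ge d$ does not change $\opt$). For the second statement the only subtlety is quoting the ETH lower bound for Biclique Cover in the precise form $2^{2^{\Omega(k)}}\cdot\poly(n)$ — so that the inner exponent is $\Omega(k)$ and matches the $2^{2^{O(k)}}$ running time of the algorithm of Theorem~\ref{thm:main2} — everything else being routine bookkeeping of the reduction sizes against ETH.
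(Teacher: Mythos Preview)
Your proposal is correct and takes essentially the same approach as the paper: the Chandran--Issac--Karrenbauer ETH lower bound for Biclique Cover (plus the observation that any $(1+\eps)$-approximation distinguishes $\opt=0$ from $\opt\ge 1$) for the $k$-dependence, and NP-hardness of the $k=1$ case combined with integrality of $\opt$ for the $\eps$-dependence. The paper's treatment of the $\eps$-part is more economical than yours: rather than building an explicit gap instance via a block blow-up and threshold $W$, it simply cites the known NP-hardness of the exact $k=1$ problem from~\cite{GV15,dajw15} and notes that since $\opt\le nd\le n^2$ is an integer, any $(1+\eps)$-approximation with $\eps=1/n^2$ is already exact---so the hypothetical $2^{1/\eps^{o(1)}}\cdot 2^{n^{o(1)}}$-time algorithm, run with $\eps=1/n^2$, would solve the NP-hard exact problem in time $2^{n^{o(1)}}$, contradicting ETH; your blow-up is thus re-deriving hardness that can be quoted directly.
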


	Next we obtain conditional lower bounds for \fieldprobbf for any fixed $q$:
	\begin{thm}[Hardness for \fieldprobbf] \label{thm:hard0field}
		Let $\mathbb{F}_q$ be a finite field and $\alpha > 1$.
		Assuming $\classP \neq \NP$,
		there is no $\alpha$-approximation algorithm for \fieldprobbf
		that runs in time $poly(n)$.

		Consequently, assuming the Exponential Time Hypothesis, there exists $\delta := \delta(\alpha) > 0$ such that
		there is no $\alpha$-approximation algorithm for \fieldprobbf
		that runs in time $2^{k^{\delta}}$.
	\end{thm}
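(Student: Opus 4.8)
The plan is to give a gap-preserving reduction from the \emph{Nearest Codeword Problem} (NCP) over $\mathbb{F}_q$: given a generator matrix $G \in \mathbb{F}_q^{m \times \ell}$ spanning the code $C = \{Gx : x \in \mathbb{F}_q^{\ell}\}$ and a target $y \in \mathbb{F}_q^{m}$, find the codeword $c \in C$ minimizing $\|c - y\|_0$. It is known that NCP over any fixed finite field is NP-hard to approximate within every constant factor, via a polynomial-time reduction from $3$-SAT (for instance, the $\mathbb{F}_q$ analogue of H{\aa}stad-style hardness of Min-Unsatisfy for linear systems). By Gaussian elimination we may assume $\mathrm{rank}(G) = \ell$, which does not change the NCP optimum. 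Given such an instance and the target factor $\alpha > 1$, set $k := \ell$ and $M := \lceil \alpha m \rceil + 1$, and let $A$ be the $m \times (M\ell + 1)$ matrix over $\mathbb{F}_q$ whose columns are $M$ copies of each column of $G$ (the ``forcing block'') followed by a single copy of $y$ (the ``decoding column''); if the problem formulation insists on $n \ge d$, use $A^{\mathsf T}$ instead, which is legitimate since both $\mathrm{rank}$ and $\|\cdot\|_0$ are transpose-invariant and a factorization of $A^{\mathsf T}$ transposes to one of $A$. This all takes time $\mathrm{poly}(m, \ell, \alpha)$.

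\textbf{Correctness.} Fix any factorization $B = UV$ with $U \in \mathbb{F}_q^{m \times k}$. If $\mathrm{colspace}(U) \not\supseteq C$ then some column of $G$ is not representable as $Uv$, so by the $M$ repetitions the forcing block of $A$ contributes $\ge M$ to $\|A - B\|_0$. If instead $\mathrm{colspace}(U) \supseteq C$ then, since $\dim \mathrm{colspace}(U) \le k = \ell = \dim C$, in fact $\mathrm{colspace}(U) = C$; the forcing block can then be matched exactly (each column separately), and the decoding column contributes, for the best choice of the corresponding column of $V$, exactly $\min_{c \in C}\|c - y\|_0 = \mathrm{OPT}_{\mathrm{NCP}}$. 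Hence $\mathrm{OPT}_{\text{rank-}k}(A) = \mathrm{OPT}_{\mathrm{NCP}} \le m < M$. Now any $\alpha$-approximation returns $(U,V)$ of cost $\le \alpha\,\mathrm{OPT}_{\mathrm{NCP}} \le \alpha m < M$, hence its forcing-block cost is $0$, hence $\mathrm{colspace}(U) = C$, and so the decoding column of $UV$ is a codeword $c \in C$ with $\|c - y\|_0 \le \alpha\,\mathrm{OPT}_{\mathrm{NCP}}$; solving $Gx = c$ yields an $\alpha$-approximate NCP solution. Since constant-factor NCP over $\mathbb{F}_q$ is NP-hard, this gives the first statement.

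\textbf{The ETH bound, and the main difficulty.} For the ETH consequence, track parameters: the $3$-SAT-to-gap-$\alpha$-NCP reduction is polynomial, so from $\nu$ variables it outputs an NCP instance with code dimension $\ell \le \nu^{a}$ and block length $\mathrm{poly}(\nu)$ for some constant $a = a(\alpha)$; composing with the above reduction yields an \fieldprobbf instance with $k = \ell \le \nu^{a}$ and $n = \mathrm{poly}(\nu)$. Setting $\delta := \delta(\alpha) := 1/(2a)$, an $\alpha$-approximation algorithm running in time $2^{k^{\delta}}$ would decide $3$-SAT in time $2^{\nu^{a\delta}} \cdot \mathrm{poly}(\nu) = 2^{\nu^{1/2} + o(\nu)} = 2^{o(\nu)}$, contradicting the ETH. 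The reduction itself is elementary; the real work, and the main obstacle, is to invoke the inapproximability input in exactly the form needed --- constant-factor NP-hardness of NCP over an \emph{arbitrary} fixed finite field $\mathbb{F}_q$ (not only $\mathbb{F}_2$), together with a quantitative bound on the blow-up of the underlying PCP-based reduction guaranteeing $\ell \le \nu^{a(\alpha)}$ --- which is precisely what makes $\delta$ depend only on $\alpha$ and lets the subexponential lower bound go through.
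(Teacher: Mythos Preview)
Your reduction is essentially correct, but it is a genuinely different route from the paper's. The paper does \emph{not} reduce from Nearest Codeword; instead it proves an exact structural identity: for $A\in\mathbb{F}_q^{n\times d}$ with $n\ge d$ and $k=d-1$,
\[
\min_{U,V}\|UV-A\|_0 \;=\; \min_{x\in\mathbb{F}_q^d,\;x\neq 0}\|Ax\|_0,
\]
i.e.\ rank-$(d-1)$ approximation equals the \emph{minimum distance} of the code generated by $A^{\mathsf T}$. Hardness then follows by a black-box citation of Austrin--Khot (constant-factor NP-hardness of minimum distance over any $\mathbb{F}_q$). So the paper's reduction is a one-line equivalence at $k=d-1$, whereas yours is a gadget reduction (forcing block plus decoding column) at $k=\ell$, the code dimension, from NCP rather than from minimum distance. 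Both feed the ETH consequence the same way, by tracking that the composed reduction is polynomial and hence $k\le \nu^{a(\alpha)}$. Your approach has the mild advantage of landing at small $k$ (the code dimension) rather than at $k=d-1$; the paper's has the advantage of needing no gadgets at all.

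One small logical slip: from ``total cost $<M$'' you write ``hence its forcing-block cost is $0$, hence $\mathrm{colspace}(U)=C$.'' The first implication is false as stated (the algorithm could return a $V$ that is suboptimal on the forcing columns while still having total cost $<M$). What \emph{does} follow from total cost $<M$, via your earlier dichotomy, is directly $\mathrm{colspace}(U)\supseteq C$ and hence $=C$; the ``forcing-block cost $=0$'' step is both unproved and unnecessary. Once $\mathrm{colspace}(U)=C$, the decoding column of $UV$ is automatically a codeword and its distance to $y$ is bounded by the \emph{total} cost $\le\alpha\cdot\mathrm{OPT}_{\mathrm{NCP}}$, which is all you need. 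So just delete the intermediate claim and chain the implications correctly.
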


	This shows that assuming the ETH, any constant factor approximation algorithm needs at least a subexponential dependence on $k$.

	\subsubsection{Additional Results}
	We obtain several additional results on $\ell_p$-low rank approximation.
	We summarize our results below and defer the details to Section~\ref{sec:misc}.

	\paragraph{$\ell_p$-low rank approximation for $p > 2$}
	Let $g$ be a standard Gaussian random variable and let $\gamma_p := \E_{g} [|g|^p]^{1/p}$.
	We note that $\gamma_p>1$, for any $p > 2$. Then, under ETH
	no $(\gamma_p^p-\eps)$-approximation algorithm runs in time $O(2^{k^{\delta}})$.
	On the algorithmic side, we give a simple $(3+\eps)$-approximation algorithm running
	in time $n^{\poly(k/\eps)}$.

	\paragraph{Weighted $\ell_p$-low rank approximation for $0< p < 2$}
	We also generalize Theorem~\ref{thm:main1} to the following weighted setting.
	Given a matrix $A$, an integer $k$ and a rank-$r$ matrix $W$, we seek to find
	a rank-$k$ matrix $A'$ such that
	\[
	\|W \circ (A-A')\|_p^p \leq (1+\eps) \min_{\textrm{rank-}k \ A_k}\|W \circ (A -A_k)\|_p^p.
	\]
	Our algorithm runs in time $n^{r \cdot \poly(k / \eps)}$. We defer the details to Theorem~\ref{thm:Wp02}.

	\subsubsection*{Related Work}
	Our results, in particular Theorem~\ref{thm:main2} and Theorem~\ref{thm:hard0},
	had been in submission as of April~2018. Shortly after posting this 	manuscript~\cite{BBBKLW18_arXiv} to arXiv on 16~July~2018,
	we became aware that in an unpublished work Fomin et al.
	have independently obtained a very similar PTAS for \probk.
	Their manuscript~\cite{FGLPS18} was posted to arXiv on 18~July~2018.
	Interestingly, \cite{BBBKLW18_arXiv,FGLPS18} have independently discovered
	i) a reduction between the \probk problem and a clustering problem with constrained centers;
	ii) a structural sampling theorem extending~\cite{AlonS99} which yields a simple
	but inefficient deterministic PTAS; and
	iii) an efficient sampling procedure,
	building on ideas from~\cite{kumar2004simple,ABHKS05,ACMN2010},
	which gives an efficient randomized PTAS.
	Notably, by establishing an additional structural result, Fomin et al.~\cite{FGLPS18}
	design a faster sampling procedure which yields a randomized PTAS for the \probk problem
	that runs in linear time $(1/\eps)^{2^{O(k)}/\eps^{2}} \cdot nd$.

	\subsection{Our Techniques}
	We give an overview of our techniques, separating them into those for
	our algorithms for $0 < p < 2$, those for our algorithms for $p = 0$,
	and those for our hardness proofs.

	\subsubsection{Algorithms for \texorpdfstring{$0 < p < 2$}{0 < p < 2}}

	We illustrate the techniques for $p = 1$; the algorithms for other $p \in (0,2)$ follow similarly. Consider a target rank $k$. One of the
	surprising aspects of our $(1 + \eps)$-approximation result is that for $p = 1$, it breaks a potential lower bound from \cite{swz17}. Indeed, in Theorem G.8, they construct $(n-1) \times n$ matrices $A$ such that the closest rank-$k$ matrix $B$ in the row span of $A$ provides at best a $2(1-\Theta(1/n))$-approximation to $A$!

	This should be contrasted with $p = 2$, for which it is
	well-known that for any $A$ there exists a subset of $k/\eps$ rows of $A$ containing a $k$-dimensional
	subspace in its span which is a $(1+\eps)$-approximation (these are called column subset selection
	algorithms; see \cite{w14} for a survey).
	In fact, for $p = 1$, all known algorithms
	\cite{swz17,CGOL17}
	find a best $k$-dimensional subspace in either the span of the rows or of the columns of $A$, and thus
	provably cannot give better than a $2$-approximation. To bypass this, we therefore critically need to leave the row space and column space of $A$.

	Our starting point is the ``guess a sketch'' technique of \cite{rsw16}, which was used in the context of weighted low
	rank approximation. Let us consider the optimization problem $\min_V \|U^*V-A\|_1$, where $U^*$
	is a left factor of an optimal $\ell_1$-low rank approximation for $A$. Suppose we could choose a {\it sketching matrix}
	$S$ with a small number $r$ of rows for which $\|SU^*V-SA\|_1 = (1 \pm \eps)\|U^*V -A\|_1$ for all
	$V$. Then, if we somehow knew $U^*$, we could optimize for $V$ in the sketched space to find a good
	right factor $V$.

	Of course we do not know $U^*$, but if $S$ had a small number $r$ of rows,
	then we could consider instead the $\|\cdot\|_{1,2}$-norm optimization problem $\min_V \|SU^*V-SA\|_{1,2}$,
	where for a matrix $C$, $\|C\|_{1,2}$ is defined as $\sum_{i=1}^d \|C_{:,i}\|_2$, the sum of the $\| \cdot \|_2$-norms of its columns.
	The solution $V$ to $\min_V \|SU^*V-SA\|_{1,2}$ is a $\sqrt{r}$-approximation to the original problem
	$\min_V \|SU^*V-SA\|_1$.

	In the $\|\cdot\|_{1,2}$ norm, the solution $V$ can be written in terms of the so-called normal
	equations for regression, namely, $V = (SU^*)^{\dagger} SA$, where $C^\dagger$ denotes the Moore-Penrose
	pseudoinverse of $C$.
	The key property exploited in \cite{swz17} is then that although we do not know $U^*$,
	$(SU^*)^\dagger SA$ is a $k$-dimensional subspace in the row span of $SA$ providing a $\sqrt{r}$-approximation,
	and one does know $SA$. This line of reasoning ultimately leads to a $\poly(k)$-approximation.

	The approach above fails to give a $(1+\eps)$-approximation for multiple reasons: (1) we may not be able to find a
	$(1+\eps)$-approximation from the row span of $A$, and (2) we lose a $\sqrt{r}$
	factor when we switch to the $\|\cdot\|_{1,2}$ norm.

	Instead, suppose we were instead just to guess all the values of $SU^*$. These values might be arbitrary real numbers, but observe that we can assume
	there is an optimal solution $U^* V^*$ for which $V^*$ is a so-called $\ell_1$-well conditioned basis, which
	loosely speaking means that $\|yV^*\|_1 \approx \|y\|_1$ for any row vector $y$. Also,
	we can show that if $U^*V^* \neq A$, then $\|U^*V^*-A\|_1 \geq n^{-\Theta(k)}$. Furthermore, we can assume that the entries of $A$ are bounded by $\text{poly}(n)$. These three facts allow us to round the entries of $U^*$ to an integer multiple of $n^{-\Theta(k)}$ of absolute value at most $n^{O(k)}$. Now
	suppose we could also discretize the entries of $S$ to multiples of $n^{-\Theta(k)}$ and of absolute value
	at most $n^{O(k)}$. Then we would actually be able to guess the
	correct $SU^*$ after $n^{\Theta(k^2 r)}$ tries, where recall $r$ is the number of rows of $S$.
	We will show below that $r$ can be $\poly(k/\eps)$, so this will be within our desired running time.

	In general, if $\mathcal{A}(Ux) = (1 \pm \eps) \| Ux \|$ for all $x$, then we say that $\mathcal{A}$ defines a subspace embedding. At this point, we can use the triangle inequality to get a constant factor approximation. If $S$ is a subspace embedding, then $$\| U^* V - A \|_1 \leq \| U^* (V - V^*) \|_1 + \| U^* V^* - A \|_1 \leq (1 + O(\eps)) \| SU^* (V - V^*) \|_1 + \| U^* V^* - A \|_1$$ and $$\| SU^* (V - V^*) \|_1 \leq \| SU^* V - SA \|_1 + \| SU^* V^* - SA \|_1$$ so by taking $V$ to be a minimizer for $\| SU^* V - SA \|_1$ we can get an approximation factor close to $3$. The triangle inequality was useful here because $S$ had a small distortion on the subspace defined by $U^*$. To improve this result, we would need a mapping that has small distortion on the {\it{affine space}} defined by $U^* V - A$, as $V$ varies.

	Given $SU^*$ and $SA$, if in fact $S$ has the property that $\|SU^*V-SA\|_1 = (1 \pm \eps)\|U^*V-A\|_1$ for all $V$, then we will be in good shape. At this point
	we can solve for the optimal $V$ to $\min_V \| SU^*V-SA \|_1$ by solving an $\ell_1$-regression problem using
	linear programming. Notice that unlike \cite{rsw16}, the approach described above does not create
	``unknowns'' to represent the entries of $SU^*$ and set up a polynomial system of inequalities.
	For Frobenius norm error, this approach is feasible because $ \| SU^*V-SA \|_F^2 = \sum_{i = 1}^n  \| SU^*V_{:,i}-SA_{:,i} \|_F^2$ can be minimized over each column $V_{:,i}$ using the normal equations for regression. However, we do not know how to set up a polynomial system of inequalities for $\ell_1$-error
	(which define $V$ in terms of the $SU^*$ variables).

	Unfortunately the approach above is fatally flawed; there is no known sketching matrix $S$ with a small number
	$r$ of rows for which $\| SU^*V-SA \|_1 = (1 \pm \eps)\|U^*V-A\|_1$ for all $V$. Instead, we adapt a
	``median-based'' embedding with a non-standard subspace embedding analysis
	that appeared in the context of sparse recovery \cite{birw16}.
	In Lemma F.1
	of that paper, it is shown that if $L$ is a $d$-dimensional subspace of $\mathbb{R}^n$, and
	$S$ is an $r \times n$ matrix of i.i.d. standard Cauchy random variables for $r = O(d \eps^{-2} \log(d/\eps))$,
	then with constant probability,
	$(1-\eps)\|x\|_1 \leq \med (Sx) \leq (1+\eps)\|x\|_1$
	simultaneously for all $x \in L$.
	Here for a vector $y$, $\med (y)$ denotes the
	median of absolute values of its entries.
	For a matrix $M$, $\med (M)$ denotes the sum of the medians of its columns $\sum_i \med(M_{:,i})$.

	In our context, this gives us that for a fixed column $A_{:,i}$ of $A$
	and $i$-th column $V_{:,i}$ of $V$, if
	$S$ is an i.i.d. Cauchy matrix with $O(k \eps^{-2} \log(k/\eps))$ rows, then
	with constant probability $\med(SU^*V_{:,i} - SA_{:,i}) = (1 \pm \eps)\| U^*V_{:,i} - A_{:,i} \|_1$ for all vectors $V_{:,i}$.
	Since $V_{:,i}$ is only $k$-dimensional, and one can show that its entries can be taken to be integer multiples
	of $n^{-\poly(k)}$ bounded in absolute value by $n^{\poly(k)}$, we can enumerate over all $V_{:,i}$ and find the best
	solution. We need, however, to adapt the argument in \cite{birw16} to argue that if rather than taking the
	median, we take a $(1/2 \pm \eps)$-quantile, we still obtain a subspace embedding.

	Unfortunately, this still does not work. The issue is that $S$ succeeds only with constant probability
	in achieving $\med (SU^*V_{:,i} - SA_{:,i}) = (1 \pm \eps)\|U^*V_{:,i} - A_{:,i} \|_1$ for all vectors $V_{:,i}$. Call this property,
	of an index $i \in [n] \eqdef \{1, 2, \ldots, n\}$, {\it good}. A na\"ive amplification
	of the probability to $1-1/n$ would allow us to union bound over all $i$, but this would require $S$ to have
	$\Omega(\log n)$ rows. At this point though, we would not obtain a PTAS since enumerating the entries of $SU^*$
	would take $n^{\Omega(\log n)}$ time. Nor can we use different $S$ for different columns of $A$, since we may guess
	different $SU^*$ for different $i$ and not obtain a consistent solution $V$.

	Before proceeding, we first relax the requirement
	that $\med (SU^*V-SA) = (1 \pm \eps)\|U^*V-A\|_1$ for all $V$.
	We only need $\med (SU^*V-SA) \geq (1-\eps)\|U^*V-A\|_1$ for all $V$,
	and $\med (SU^*V^* - SA) \leq (1+\eps)\|U^*V^* - A\|_1$
	for the fixed optimum $U^*V^*$. We can prove  $\med (SU^*V^*-SA) \leq (1+\eps)\min_V \|U^*V -A\|_1$ by using tail
	bounds for a Cauchy random variable; we do so in Lemma \ref{fixedbound}.

	Moreover, we next argue that it suffices to have the properties:
	i) a $(1-\poly(\eps/k))$-fraction of columns are good, and
	ii) the error introduced by bad columns is small.
	We can achieve (i) by increasing the number of rows of $S$ by a $\log(k/\eps)$ factor,
	which still allows for an enumeration in time $n^{\poly(k/\eps)}$.
	The main issue is to control the error from bad columns.
	In particular, it is possible to have a matrix $V$ and a column $A_{:,i}$
	such that $\|U^*V_{:,i} - A_{:,i}\|_1$ is large and yet
	$\med (SU^*V_{:,i}-SA_{:,i})$ is small, which results in accepting a bad solution $V$.
	While for an average matrix $V$, the expected value of $\sum_{i \textrm{ is bad}}\|U^*V_{:,i}-A_{:,i}\|_1$ is small,
	we need to argue that this holds for every matrix $V$.

	In order to control the error from bad columns, we first show that
	$\med (SU^*V^*-SA) = (1 \pm \eps)\|U^*V^*-A\|_1$ for the fixed matrix $U^*V^*-A$,
	and then we demonstrate that the total contribution to $\|U^*V^*-A\|_1$ from bad columns, is small.
	We show the latter using Markov's bound for the fixed matrix $U^*V^*-A$.
	Combining this with the former, yields that the total contribution of
	$\med (SU^*V_{:,i}^*-SA_{:,i})$ to $\|SU^*V^*-SA\|_1$
	from bad columns (in the original, unsketched space) is small.

	We convert the preceding argument for bad columns of the fixed matrix $U^*V^*-A$,
	into an argument for bad columns of a general matrix $U^*V-A$.
	Inspired by ideas for $\|\cdot\|_{1,2}$ norm, established in \cite{cw15b},
	we partition the bad columns of a given matrix $V$ into classes,
	using the following measurement, which differs substantially from~\cite{cw15b}.
	We look at \emph{quantiles} to handle the median operator, and we say
	that a bad column $A_{:,i}$ is \emph{large} if
	\begin{equation}\label{eq:BadGuys}
		\lVert U^{*}V_{:,i}-A_{:,i}\rVert_{1}\geq\frac{1}{\eps}\left(\lVert U^{*}V_{:,i}^{*}-A_{:,i}\rVert_{1}+\frac{1}{1-O(\eps)}q_{1-\eps/2}(S(U^{*}V^{*}-A)_{:,i})\right),
	\end{equation}
	where $q_{1-\eps/2}$ is the $(1-\eps/2)$-th quantile of coordinates of column
	$S(U^*V^*-A)_{:,i}$ arranged in order of non-increasing absolute values.
	Otherwise, a bad column $A_{:,i}$ is \emph{small}.

	We show that small bad columns can be handled by applying the preceding argument
	for the fixed matrix $U^*V^*-A$, since intuitively, the error they introduce is dominated by
	the contribution of the corresponding columns of matrix $U^*V^*-A$,
	and we can control this contribution.

	Our analysis for the large bad columns uses a different approach, which we summarize in
	Claim \ref{largesketch}. The key insight is to use the additivity of a sketch matrix $S$,
	and to write
	\begin{eqnarray}\label{eqn:exp1}
		S(U^*V-A)_{:,i} = S(U^*V-U^*V^*)_{:,i} + S(U^* V^*-A)_{:,i}.
	\end{eqnarray}
	Then, by applying our ``robust'' version (Lemma~\ref{embedding}) of median-based
	subspace embedding~\cite{birw16}, it follows that at least a $(1/2+\eps)$-fraction
	of the entries of column vector $S(U^*V-U^*V^*)_{:,i}$ have absolute value at least
	\begin{eqnarray*}
		&  & (1-O(\eps))\cdot\|U^* (V-V^*)_{:,i}\|_1\\
		& \overset{(a)}{\geq} & (1-O(\eps))\cdot\Big(\|(U^* V-A)_{:,i}\|_1 - \|(U^* V^*-A)_{:,i}\|_1\Big)\\
		& \overset{(b)}{\geq} & \left(1-O(\eps)\right)\cdot\lVert(U^{*}V-A)_{:,i}\rVert_{1}+ 	q_{1-\eps/2}(S(U^{*}V^{*}-A)_{:,i}),
	\end{eqnarray*}
	where (a) follows by triangle inequality, and (b) by \eqref{eq:BadGuys} since the bad column
	$A_{:,i}$ is \emph{large}.
	Thus, at least a $(1/2+\eps)$-fraction of entries of $S(U^*V-U^*V^*)_{:,i}$
	have absolute value at least
	\begin{equation}\label{eq:AvrBadEntries}
		(1-O(\eps))\cdot\|(U^* V-A)_{:,i}\|_1 + q_{1-\eps/2}(S(U^* V^*-A)_{:,i}).
	\end{equation}

	Since at most an $\eps/2$ fraction of entries of $S(U^* V^*-A)_{:,i}$
	have an absolute value of at least $q_{1-\eps/2}(S(U^* V^*-A)_{:,i})$,
	by definition of quantile,
	it follows by~\eqref{eq:AvrBadEntries} that in equation~\eqref{eqn:exp1}
	at most an $\eps/2$-fraction of entries of $S(U^*V-A)_{:,i}$
	can have their absolute value reduced to less than $(1-O(\eps))\cdot\|(U^* V-A)_{:,i}\|_1$.
	Furthermore, by~\eqref{eq:AvrBadEntries} at least $(1/2+\eps/2)$-fraction of entries
	of $S(U^*V-U^*V^*)$ have absolute value at least $(1-O(\eps))\|(U^* V-A)_{:,i}\|_1$.
	Therefore, the median of absolute value of the entries of $S(U^*V-A)_{:,i})$
	is at least $(1-O(\eps))\|(U^* V-A)_{:,i}\|_1$, as desired.

	Our analysis for $0 < p < 2$ uses similar arguments, but in contrast relies on
	$p$-stable random variables. In the case when $0 < p < 1$, special care is needed
	since the triangle inequality does not hold.

	\newpage
	\subsubsection{Algorithms for \texorpdfstring{$p = 0$}{p = 0}}

	In the case when $p = 0$ and the entries of matrix $A$ belong
	to a finite field $\mathbb{F}_q$ for constant $q$,
	we use similar arguments as in the case for $p=1$.
	Here, instead of $p$-stable random variables
	we apply a linear sketch for estimating the number of distinct elements,
	established in \cite{KNW10}.
	We show that it suffice to set the number of rows of
	the sketching matrix $S$ to $\poly(k/\eps) \cdot \log d$.
	Further, since each entry of $S$ has only $q$ possible values,
	it is possible to guess matrix $S$ by enumeration in time
	$q^{\poly(k/\eps) \cdot \log d} = d^{\poly(k/\eps)}$,
	which will lead to a total running time of $n \cdot d^{\poly(k/\eps)}$.
	This yields a PTAS for constant $q$. We defer the details to Section~\ref{sec:upperbound}.

	We now consider the binary setting, where both the input matrix $A$ has entries in $\{0,1\}$ and
	any solution $U,V$ is restricted to have entries in $\{0,1\}$.
	In this case, the \genprobk problem can be rephrased as a clustering problem with constrained centers,
	whose goal is to choose a set of centers satisfying a certain system of linear equations,
	in order to minimize the total $\ell_0$-distance of all columns of $A$ to their closest center.
	The main difference to usual clustering problems is that the centers cannot be chosen independently.

	We view the choice of matrix $U$ as picking a set of ``cluster centers''
	$S_U \Def  \{ U \cdot y \mid y \in \{0,1\}^k \}$.
	Observe that any column of $U \cdot V$ is in $S_U$, and thus
	we view the choice of column $V_{:,j}$ as picking one of the constrained centers in $S_U$.
	Formally, we rephrase the \genprobk problem as
	\begin{align}
		\min_{U \in \{0,1\}^{n \times k}, V \in \{0,1\}^{k \times d}} \|A - U \cdot V\|_0 &
		= \min_{U \in \{0,1\}^{n \times k}} \sum_{j=1}^d \min_{V_{:,j} \in \{0,1\}^k}
		\|A_{:,j} - U \cdot V_{:,j} \|_0  \nonumber \\
		&= \min_{U \in \{0,1\}^{n \times k}} \sum_{j=1}^d
		\min_{s \in S_U} \| A_{:,j} - s \|_0.\label{eq:defClusteringFormulation}
	\end{align}

	Any matrix $V$ gives rise to a ``clustering'' as partitioning
	$C_V = (\Cy )_{y \in \{0,1\}^k}$ of the columns of $V$ with
	$\Cy  = \{ j \in [d] \mid V_{:,j} = y \}$.
	If we knew an optimal clustering $C=C_V$, for some optimal matrix $V$,
	we could compute an optimal matrix $U$ as the best response to $V$.
	Note that
	\[
	\min_{U\in\{0,1\}^{n \times k}} \sum_{y \in \{0,1\}^k} \sum_{j \in \Cy}\
	\nnz{A_{:,j}-U\cdot y} =
	\sum_{i=1}^{n} \min_{U_{i,:}\in\{0,1\}^k} \sum_{y \in \{0,1\}^k} \sum_{j \in \Cy}
	\nnz{A_{i,j}-U_{i,:}\cdot y}.
	\]
	Therefore, given $C$ we can compute independently for each $i\in[n]$
	the optimal row $U_{i,:}\in\{0,1\}^k$, by enumerating over all possible
	binary vectors of dimension $k$ and selecting the one that minimizes the
	summation
	\[
	\sum_{y \in \{0,1\}^k} \sum_{j \in \Cy}\nnz{A_{i,j}-U_{i,:}\cdot y}.
	\]

	What if instead we could only \emph{sample from} $C$?
	That is, suppose that we are allowed to draw a constant number $t = \poly(2^k/\eps)$ of
	samples from each of the optimal clusters $\Cy $ uniformly at random. Denote by $\tCy $
	the samples drawn from $\Cy $. A natural approach is to replace the exact cost above by
	the following unbiased estimator:
	\begin{align*}
		\widetilde{E} \Def  \sum_{y \in \{0,1\}^k} \frac{|\Cy |}{|\tCy |}
		\cdot \sum_{j \in \tCy } \|A_{:,j} - U \cdot y \|_0.
	\end{align*}

	We show that with good probability any matrix $U = U(\tC)$
	minimizing the estimated cost $\widetilde{E}$ is close to an optimal solution.
	In particular, we prove for any matrix $V\in\{0,1\}^{k\times d}$ that
	\begin{equation}\label{eq:samplingtheorem}
		\Ex_{\tC}[\| A - U(\tC) \cdot V \|_0] \le (1+\eps)\cdot
		\min_{U \in \{0,1\}^{n \times k}} \|A - U \cdot V\|_0.
	\end{equation}
	The biggest issue in proving statement~\eqref{eq:samplingtheorem} is that
	the number of samples $t = \poly(2^k/\eps)$
	is independent of the ambient space dimension $d$.
	A key prior probabilistic result, established by Alon and Sudakov~\cite{AlonS99},
	gives an additive $\pm \eps nd$ approximation for the maximization version of
	a clustering problem with unconstrained centers, known as Hypercube Segmentation.
	Since the optimum value of this maximization problem is always at least $nd/2$,
	a multiplicative factor $(1+\eps)$-approximation is obtained.
	Our contribution is twofold. First, we generalize their analysis to clustering problems
	with constrained centers, and second we prove a multiplicative factor $(1+\eps)$-approximation
	for the minimization version.
	The proof of~\eqref{eq:samplingtheorem} takes a significant fraction of this chapter.

	We combine the sampling result (\ref{eq:samplingtheorem}) with the following observations to obtain
	a deterministic polynomial time approximation scheme (PTAS) in time $n\cdot d^{\poly(2^k/\eps)}$.
	We later discuss how to further improve this running time.
	Let $U,V$ be an optimal solution to the \genprobk problem.

	\smallskip
	\begin{compactenum}[\mbox{}\hspace{\parindent}(i)]
		\item[(1)] To evaluate the estimated cost $\widetilde{E}$, we need the sizes $|\Cy |$ of
		an optimal clustering $C$. We can guess these sizes with an $d^{2^k}$ overhead in the running time.
		In fact, it suffices to know these cardinalities approximately, see Lemma~\ref{lem:condition},
		and thus this overhead~\footnote{
			In Section~\ref{sec:sampling}, we establish an efficient sampling procedure, see Algorithm~\ref{alg:Sampling}, that further reduces the total overhead for guessing
			the sizes $|\Cy |$ of an optimal clustering to
			$(2^k/\eps)^{2^{O(k)}}\cdot (\log d)^{(\log\log d)^{0.1}}$.}
		can be reduced to $(t+ \eps^{-1} \cdot \log d)^{2^k}$.
		\smallskip

		\item[(2)] Using the (approximate) size $|\Cy |$ and
		the samples $\tCy$ drawn u.a.r. from $\Cy$, for all $y\in\{0,1\}^{k}$,
		we can compute in time $2^{O(k)} nd$ a matrix $U(\tC)$
		minimizing the estimated cost $\widetilde{E}$,
		since the estimator $\widetilde{E}$ can be split
		into a sum over the rows of $U(\tC)$ and
		each row is chosen independently as a minimizer among all possible
		binary vectors of dimension $k$.
		\smallskip

		\item[(3)] Given $U(\tC)$, we can compute a best response matrix $V(\tC)$
		which has cost $\|A - U(\tC) \cdot V(\tC)\|_0 \le \|A - U(\tC) \cdot V\|_0$,
		and thus by \eqref{eq:samplingtheorem} the expected cost at most $(1+\eps) \opt$.
		\smallskip

		\item[(4)] The only remaining step is to draw samples $\tCy $ from the optimal clustering.
		However, in time $O(d^{2^k t}) = d^{\poly(2^k/\eps)}$ we can enumerate all possible families
		$(\tCy )_{y \in \{0,1\}^k}$, and the best such family yields a solution that is at least
		as good as a random sample. In total, we obtain a PTAS in time
		$n\cdot d^{\poly(2^k/\eps)}$.
	\end{compactenum}
	\smallskip\smallskip

	The largest part of this chapter is devoted to make the above PTAS efficient,
	i.e., to reduce the running time
	from $n\cdot d^{\poly(2^k/\eps)}$
	to $(2/\eps)^{2^{O(k)}/\eps^2} \cdot nd^{1+o(1)}$,
	where $o(1)$ hides a factor $\left(\log\log d\right)^{1.1}/\log d$.
	By the preceding outline, it suffices to speed up Steps (1) and (4), i.e.,
	to design a fast algorithm that guesses approximate cluster sizes and
	samples from the optimal clusters.

	The standard sampling approach for clustering problems such as $k$-means~\cite{kumar2004simple}
	is as follows. At least one of the clusters of the optimal solution is ``large'',
	say $|\Cy | \ge d/2^k$.
	Sample $t$ columns uniformly at random from the set $[d]$ of all columns.
	Then with probability at least $(1/2^k)^t$ all samples lie in $\Cy $,
	and in this case they form a uniform sample from  this cluster.
	In the usual situation without restrictions on the cluster centers,
	the samples from $\Cy $ allow us to determine an approximate cluster center~$\tsy$.
	Do this as long as large clusters exist (recall that we have guessed approximate
	cluster sizes in Step (1), so we know which clusters are large).
	When all remaining clusters are small, remove the $d/2$ columns that are closest to
	the approximate cluster centers $\tsy$ determined so far,
	and estimate the cost of these columns using the centers $\tsy$.
	As there are no restrictions on the cluster centers, this yields a good cost
	estimation of the removed columns, and since the $\ell_0$-distance is additive
	the algorithm recurses on the remaining columns, i.e. on an instance of twice smaller size.
	We continue this process until each cluster is sampled.
	This approach has been used to obtain linear time approximation schemes
	for $k$-means and $k$-median in a variety of ambient spaces~\cite{kumar2004simple,kumar2005linear,ACMN2010}.

	The issue in our situation is that we cannot fix a cluster center $\tsy$
	by looking only at the samples $\tCy$, since we have dependencies among cluster centers.
	We nevertheless make this approach work, by showing that a uniformly random column
	$r^{(y)}\in[d]$ is a good ``representative'' of the cluster $\Cy$ with not-too-small probability.
	In the case when all remaining clusters are small, we then simply remove the $d/2$ columns
	that are closest to the representatives $r^{(y)}$ of the clusters that we already sampled from.
	Although these representatives can be far from the optimal cluster centers due to
	the linear restrictions on the latter, we show in Section~\ref{sec:sampling}
	that nevertheless this algorithm yields samples from the optimal clusters.

	We prove that the preceding algorithm succeeds with probability at least
	$(\eps/t)^{2^{O(k)\cdot t}}$.
	Further, we show that the approximate cluster sizes $|\tCy|$ of an optimal clustering
	can be guessed with an overhead of
	$(2^k/\eps)^{2^{O(k)}}\cdot (\log d)^{(\log\log d)^{0.1}}$.
	In contrast to the standard clustering approach, the representatives $r^{(y)}$
	do not yield a good cost estimation of the removed columns.
	We overcome this issue by first collecting all samples $\tC$ from the optimal clusters,
	and then computing approximate cluster centers that satisfy certain linear constraints,
	i.e. a matrix $U(\tC)$ and its best response matrix $V(\tC)$.
	The latter computation runs in linear time $2^{O(k)}\cdot nd$ in the size of
	the \emph{original} instance, and this in combination with the guessing overhead,
	yields the total running time of $(2/\eps)^{2^{O(k)}/\eps^2} \cdot nd^{1+o(1)}$.
	For further details, we refer the reader to Algorithm~\ref{alg:Sampling}
	in Subsection~\ref{subsec:Pseudocode}.

	Our algorithm achieves a substantial generalization of the standard clustering approach
	and applies to the situation with constrained centers.
	This yields the first randomized almost-linear time approximation scheme
	for the \genprobk problem.

	\subsubsection{Hardness}
	Our hardness results for the $\ell_p$ norm for $p \in (1, 2)$ in Theorem~\ref{thm:hardp} and $p \in (2, \infty)$ in Theorem~\ref{thm:hardness_big_p} are established via a connection to the matrix $p \to q$ norm problem and its variants.
	Given a matrix $A \in \R^{n \times d}$, $\norm{p}{q}{A}$ is defined to be
	%\[
	$\norm{p}{q}{A} := \max_{x \in \R^d, \norm{p}{x} = 1} \norm{q}{Ax}$.
	%\]

	Approximately computing this quantity for various values of $p$ and $q$ has been known to have applications to the Small Set Expansion Hypothesis~\cite{BBHKSZ12}, quantum information theory~\cite{HM13}, robust optimization~\cite{Steinberg05}, and the Grothendieck problem~\cite{Grothendieck56}.
	After active research~\cite{HO10, BV11, BBHKSZ12, BGGLT18}, it is now known that computing the $p \to q$ norm of a matrix is NP-hard to approximate within some constant $c(p, q) > 1$ except when $p = q = 2$, $p = 1$, or $ q = \infty$.
	(Hardness of the case $p < q$ with $2 \in [p, q]$ is only known under stronger assumptions such as the Small Set Expansion Hypothesis or the Exponential Time Hypothesis.)
	See~\cite{BGGLT18} for a survey of recent results on the approximability of these problems.

	We also introduce the problem of computing the following quantity
	$$\minpq{p}{q}{A} := \min_{x \in \R^d, \norm{p}{x} = 1} \norm{q}{Ax}$$
	as an intermediate problem.
	Recall that $p^* = p / (p - 1)$ is the \Holder conjugate of $p$ for which $1/p + 1/p^* = 1$.
	The following lemma shows that computing $\ell_p$-low rank approximation when $k = d-1$ is equivalent to computing $\minpq{p^*}{p}{\cdot}$.

	\begin{lem}
		\label{lem:lp_to_minpq}
		Let $p \in (1, \infty)$.
		Let $A \in \R^{n \times d}$ with $n \geq d$ and $k = d - 1$.
		Then \[
		\min_{U \in \R^{n \times k}, V \in \R^{k \times d}} \norm{p}{U V - A} = \min_{x \in \R^d, \norm{p^*}{x} = 1} \norm{p}{Ax} = \minpq{p^*}{p}{A}.
		\]
	\end{lem}

	A simple but crucial observation for the above lemma is that if we let $a_1, \dots, a_n \in \R^d$ be the rows of $A$, computing the best $(d - 1)$-rank approximation of $A$ in the entrywise $\ell_p$ norm is equivalent to computing the $(d - 1)$-dimensional subspace $S \subseteq \R^d$ (i.e., $\rowspace(V) = S$) that minimizes $\norm{p}{(\rho_1, \dots, \rho_n)}$, where
	$\rho_i := \min_{y \in S} \norm{p}{y - a_i}$ denotes the $\ell_p$-distance between $S$ and $a_i$.

	If $x \in \R^d$ is a vector orthogonal to $S$, \Holder's inequality shows that \[
	\rho_i = \min_{y \in S} \norm{p}{y - a_i} = \min_{\langle x, z + a_i \rangle = 0} \norm{p}{z}
	\geq \frac{|\langle x, z \rangle|}{\norm{p^*}{x}} = \frac{|\langle x, a_i \rangle|}{\norm{p^*}{x}}.
	\]

	Taking $z$ to be the {\em \Holder dual} of $x$, we can show that indeed
	$\rho_i = |\langle x, a_i \rangle| / \norm{p^*}{x}$. Then $\norm{p}{(\rho_1, \dots, \rho_n)}$ equals $\norm{p}{Ax} / \norm{p^*}{x}$, finishing the lemma.

	This new connection allows us to prove a number of new hardness results for low rank approximation problems.
	Previously, even exact hardness results were known only for $p = 0, 1$ and there was no APX-hardness result.

	\paragraph{$\ell_p$ norm with $1 < p < 2$.}
	For $p \in (1, 2)$, we reduce computing $\norm{2}{p^*}{\cdot}$ to computing $\minpq{p^*}{p}{\cdot}$.

	If $A$ is an invertible matrix, then
	\[
	\minpq{p}{p^{*}}{A^{-1}}=\min_{x\neq0}\frac{\norm{p}{A^{-1}x}}{\norm{p^{*}}{x}}=
	\left(\max_{x\neq0}\frac{\norm{p^{*}}{x}}{\norm{p}{A^{-1}x}}\right)^{-1}\\
	=\left(\max_{y\neq0}\frac{\norm{p^{*}}{Ay}}{\norm{p}{y}}\right)^{-1}=
	\frac{1}{\norm{p}{p^{*}}{A}},
	\]
	and thus computing $\minpq{p^*}{p}{\cdot}$ is equivalent to computing $\norm{p}{p^*}{\cdot}$.

	By appropriately perturbing and padding $0$'s, we can show that computing the latter can be reduced to computing the former modulo arbitrarily small error.
	Standard facts from Banach spaces additionally show that $\norm{p \to p^*}{AA^T} = \norm{2 \to p^*}{A}^2$, proving the following lemma.

	\begin{lem} \label{lem:minpq_to_maxpq}
		For any $\eps >0, p\in (1,\infty)$, there is an algorithm that runs in $poly(n, \log (1/\eps))$ time and on a non-zero input matrix $A$,
		computes a matrix $B$ satisfying \[
		(1-\eps)\norm{2}{p^*}{A}^{-2} \leq \minpq{p^*}{p}{B} \leq (1+\eps)\norm{2}{p^*}{A}^{-2}. \]
	\end{lem}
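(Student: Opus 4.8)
The plan is to avoid solving any optimization problem at all: the algorithm simply outputs the single matrix $B := (A A^{T} + \gamma I_{n})^{-1}$ for the explicit choice $\gamma := \eps\,\norm{F}{A}^{2}/n^{3}$, and I would then argue that $\minpq{p^{*}}{p}{B}$ already lands in the required interval. Since $A\neq 0$ we have $\gamma>0$, and since $AA^{T}\succeq 0$ the matrix $AA^{T}+\gamma I_{n}\succeq\gamma I_{n}$ is positive definite, hence invertible, so $B$ is well defined. Computing $\norm{F}{A}^{2}$, forming $AA^{T}$, adding $\gamma I_{n}$, and inverting take $\poly(n)$ arithmetic operations, and the only dependence on $\eps$ is the $O(\log(1/\eps))$ needed to handle $\eps$ itself, giving running time $\poly(n,\log(1/\eps))$.

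For correctness I would chain the two facts recorded just before the statement. Since $AA^{T}+\gamma I_{n}$ is invertible, the reciprocal identity $\minpq{p^{*}}{p}{C^{-1}}=1/\norm{p}{p^{*}}{C}$ (valid for any invertible $C$, as used above in the $p\to p^{*}$ reduction) gives
\[
\minpq{p^{*}}{p}{B}=\frac{1}{\norm{p}{p^{*}}{AA^{T}+\gamma I_{n}}},
\]
while the Banach-space identity $\norm{p}{p^{*}}{MM^{T}}=\norm{2}{p^{*}}{M}^{2}$, applied with $M=A$, gives $\norm{p}{p^{*}}{AA^{T}}=\norm{2}{p^{*}}{A}^{2}$. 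So it suffices to establish the sandwich
\[
\norm{2}{p^{*}}{A}^{2}\ \le\ \norm{p}{p^{*}}{AA^{T}+\gamma I_{n}}\ \le\ (1+\eps)\,\norm{2}{p^{*}}{A}^{2},
\]
as this already forces $(1+\eps)^{-1}\norm{2}{p^{*}}{A}^{-2}\le\minpq{p^{*}}{p}{B}\le\norm{2}{p^{*}}{A}^{-2}$, which is stronger than what is claimed.

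For the lower bound I would set $\widetilde A:=[\,A\mid\sqrt{\gamma}\,I_{n}\,]\in\R^{n\times(n+d)}$, so that $\widetilde A\widetilde A^{T}=AA^{T}+\gamma I_{n}$; applying the same Banach identity to $\widetilde A$ gives $\norm{p}{p^{*}}{AA^{T}+\gamma I_{n}}=\norm{2}{p^{*}}{\widetilde A}^{2}\ge\norm{2}{p^{*}}{A}^{2}$, since restricting a maximizer of $\norm{2}{p^{*}}{\widetilde A}$ to its first $d$ coordinates (padding the rest with zeros) cannot increase the value. For the upper bound I would use the triangle inequality for the operator norm together with the elementary estimate $\norm{p}{p^{*}}{I_{n}}=n^{\max(0,\,1/p^{*}-1/p)}\le n$, which yields $\norm{p}{p^{*}}{AA^{T}+\gamma I_{n}}\le\norm{2}{p^{*}}{A}^{2}+\gamma n$. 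Finally, testing $Ax$ on standard basis vectors and using $\norm{p^{*}}{v}\ge\norm{\infty}{v}$ (valid as $p^{*}<\infty$) gives $\norm{2}{p^{*}}{A}\ge\max_{i,j}|A_{i,j}|\ge\norm{F}{A}/\sqrt{nd}$, hence $\gamma n=\eps\,\norm{F}{A}^{2}/n^{2}\le\eps\,\norm{2}{p^{*}}{A}^{2}$, closing the sandwich.

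The step I expect to require the most care is the perturbation accounting: the reciprocal identity genuinely needs invertibility, and a blindly chosen $\gamma$ could be forced to be super‑polynomially small if $\norm{2}{p^{*}}{A}^{2}$ were allowed to be arbitrarily tiny; pinning $\gamma$ to the intrinsic scale $\norm{F}{A}^{2}$ via the cheap a priori bound $\norm{2}{p^{*}}{A}^{2}\ge\norm{F}{A}^{2}/(nd)$ is exactly what removes any dependence on the magnitude of the entries of $A$. A secondary, more routine point is to re‑verify the two ``standard'' Banach facts uniformly over $p\in(1,\infty)$: the identity $\norm{p}{p^{*}}{MM^{T}}=\norm{2}{p^{*}}{M}^{2}$ follows by writing $\ip{MM^{T}x,y}=\ip{M^{T}x,M^{T}y}$, applying Cauchy--Schwarz together with the transpose relation $\norm{p \to 2}{M^{T}}=\norm{2}{p^{*}}{M}$, and testing $x=y$ at a maximizer of $\norm{2}{M^{T}x}/\norm{p}{x}$; and the elementary $\ell_{p}$--$\ell_{p^{*}}$ comparisons above must be split into the cases $p^{*}\ge 2$ and $p^{*}<2$, but nothing deeper is involved.
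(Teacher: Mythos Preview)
Your proposal is correct and follows essentially the same approach as the paper: both invoke the identity $\norm{p}{p^*}{AA^T}=\norm{2}{p^*}{A}^2$, the reciprocal relation $\minpq{p^*}{p}{C^{-1}}=1/\norm{p}{p^*}{C}$ for invertible $C$, and a small additive perturbation by $\gamma I$ to force invertibility, with the error controlled by the triangle inequality for the operator norm. The only cosmetic differences are that the paper perturbs a zero-padded square version of $A$ (with scale $\eps\cdot\max_{i,j}|A_{i,j}|/\norm{p}{p^*}{I}$) while you perturb $AA^T$ directly, and your one-sided lower bound via $\widetilde A=[A\mid\sqrt\gamma\,I_n]$ is a bit cleaner than the paper's two-sided triangle inequality; one small nit is that your final inequality $\gamma n\le\eps\,\norm{2}{p^*}{A}^2$ as written uses $\norm{F}{A}^2\le n^2\norm{2}{p^*}{A}^2$ via the max-entry bound, which needs $d\le n$, but this is immediately repaired by the sharper estimate $\norm{F}{A}^2\le n\,\norm{2}{p^*}{A}^2$ obtained from $\norm{2}{p^*}{A}\ge\norm{2}{\infty}{A}=\max_i\norm{2}{A_{i,:}}$.
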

	To finish Theorem~\ref{thm:hardp} for $\ell_p$-low rank approximation for $p \in (1, 2)$, we use the hardness of approximating the $2 \to q$ norm of a matrix proved by Barak et al.~\cite{BBHKSZ12} assuming the Small Set Expansion Hypothesis when $q = p^* > 2$.
	Given a $d$-regular graph $G = (V, E)$ and size bound $\delta \in (0, 1/2)$,
	the Small Set Expansion problem asks to find a subset $U \subseteq V$ with $|U| / |V| \leq \delta$ that minimizes
	$\Phi(U) = \frac{|E(U, V \setminus U)|}{d|U|} = 1 - (1_U)^T A (1_U)$, where $A$ and $1_U$ are the normalized adjacency matrix of $G$ and the normalized indicator vector of $U$, respectively.
	Consequently, the problem is equivalent to finding a sparse indicator vector $v$ with high Rayleigh quotient $v^T A v$, and one natural approach is to find a sparse vector in a subspace corresponding to large eigenvalues of $A$.
	For $q > 2$, since $\norm{q}{v} / \norm{2}{v}$ is maximized when $v$ is supported on only one coordinate and minimized when all entries of $v$ are equal in magnitude, $\norm{q}{v} / \norm{2}{v}$ is a natural analytic notion of sparsity, so if we let $P$ be the orthogonal projection on to the subspace corresponding to large eigenvalues, a high $\norm{2}{q}{P}$ seems to indicate that $G$ has a non-expanding small set.
	Barak et al. formalized this and proved the following theorem when $q \geq 4$ is an even integer, but the same proof essentially works for $q \in (2, \infty)$. For completeness, we present the proof in Section~\ref{sec:hardness}.

	\begin{thm}[\cite{BBHKSZ12}]
		\label{thm:hardness:SSE}
		Assuming the Small Set Expansion Hypothesis,
		for any $q \in (2, \infty)$ and $r > 1$, it is NP-hard to approximate the $\norm{2}{q}{\cdot}$ norm within a factor $r$.
	\end{thm}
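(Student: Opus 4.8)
The plan is to follow Barak et al.~\cite{BBHKSZ12}: reduce the Small Set Expansion (SSE) problem to approximating $\norm{2}{q}{\cdot}$, and check that their argument --- stated there for even integers $q\ge 4$ --- carries over to every real $q\in(2,\infty)$. Throughout, norms are taken with respect to the uniform probability measure $\mu$ on $[n]$, i.e.\ $\|x\|_p := (\E_{i\sim\mu}|x_i|^p)^{1/p}$, and $\norm{2}{q}{\cdot}$ is the corresponding operator norm. Given an SSE instance --- a $d$-regular graph $G$ with normalized adjacency matrix $A$ --- the reduction computes in polynomial time the orthogonal projector $P$ onto the span of the eigenvectors of $A$ associated with eigenvalues $\ge 1-\eta$, and outputs $P$; the constants $\delta,\eps,\eta$ are fixed at the end. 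We invoke the SSEH in the form of~\cite{RS10,RST12}, so that the ``no'' instances are not merely small-set expanders but structured gadget graphs (locally resembling a Gaussian/noise graph) whose top eigenspace is ``spread'' in a sense exploited in the soundness analysis.

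\textbf{Completeness.} Suppose $G$ has a set $S$ with $\mu(S)=\delta$ and $\Phi(S)\le\eps$. Put $v := 1_S/\sqrt{\delta}$, so $\|v\|_2 = 1$ and $\langle v, Av\rangle_\mu = 1-\Phi(S) \ge 1-\eps$. Expanding $v$ in an eigenbasis of $A$ (and replacing $A$ by $(A+I)/2$ so all eigenvalues lie in $[0,1]$) gives $\|v-Pv\|_2^2 \le \eps/\eta$. For $\eps \le \eta/8$, a short counting argument shows that at least half the coordinates $i\in S$ satisfy $|(Pv)_i| \ge \tfrac{1}{2\sqrt{\delta}}$, whence $\|Pv\|_q^q \ge \tfrac12\mu(S)(2\sqrt{\delta})^{-q}$ and therefore $\norm{2}{q}{P} \ge \|Pv\|_q \ge \tfrac12\,\delta^{1/q-1/2}$.

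\textbf{Soundness.} This is the crux. Suppose $G$ is a ``no'' instance; we claim $\norm{2}{q}{P} \le 1+\eps$. If not, there is a unit vector $x$ such that $y := Px$ has $\|y\|_2 \le 1$ but $\|y\|_q \ge 1+\eps$, i.e.\ $y$ is ``spiky''. Passing to a level set $T = \{i : |y_i| \ge \tau\}$ with $\tau$ chosen so that $T$ carries a constant fraction of $\|y\|_q^q$, Markov applied to $\|y\|_2^2$ forces $\mu(T)$ to be small, while the fact that $y$ lies in the top eigenspace of $A$ forces $\langle 1_T, A 1_T\rangle_\mu/\mu(T)$ to be bounded away from $0$; that is, $T$ is a small non-expanding set, contradicting the ``no'' guarantee once $\eta$ is small relative to $\eps$ and $\mu(T)$ small relative to $\delta$. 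The quantitative form of this ``spiky vector in the top eigenspace $\Rightarrow$ small non-expanding set'' step is precisely where the structure of the SSEH gadget enters: the hypercontractive inequality of the underlying noise/Gaussian graph transfers through the reduction and bounds $\norm{2}{q}{P}$ by $1+\eps$. Crucially, nothing here needs $q$ to be an even integer --- the even-moment (sum-of-squares) identities used in~\cite{BBHKSZ12} can be replaced by this level-set / hypercontractivity estimate, valid for all $q>2$.

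\textbf{Setting parameters.} Given the target factor $r>1$, fix $\eps>0$ small, set $\eta := 8\eps$, and then choose $\delta>0$ small enough that $\tfrac12\,\delta^{1/q-1/2} > r(1+\eps)$ (possible since $1/q-1/2<0$). With these parameters the reduction sends ``yes'' instances to $P$ with $\norm{2}{q}{P} \ge \tfrac12\delta^{1/q-1/2}$ and ``no'' instances to $P$ with $\norm{2}{q}{P} \le 1+\eps$; as the ratio exceeds $r$, any $r$-approximation algorithm for $\norm{2}{q}{\cdot}$ would decide the SSEH-hard gap version of SSE, so approximating $\norm{2}{q}{\cdot}$ within $r$ is NP-hard under the Small Set Expansion Hypothesis. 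The main obstacle, as indicated, is the soundness step: making the implication ``large $\norm{2}{q}{P}$ $\Rightarrow$ small non-expanding set'' quantitatively tight for arbitrary real $q\in(2,\infty)$, which we handle by invoking hypercontractivity of the structured ``no'' instances in place of the even-$q$ moment manipulations of the original proof.
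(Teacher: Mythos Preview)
Your completeness argument is essentially correct and matches the paper's Lemma~\ref{thm:bbh:first} in spirit. The genuine gap is in soundness, and it is twofold.

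First, you assert that the ``no'' instances coming from SSEH are ``structured gadget graphs (locally resembling a Gaussian/noise graph)'' whose top eigenspace is hypercontractive. This is not available: SSEH is a \emph{hypothesis}, not a reduction from a known hard problem, so the ``no'' instances are completely arbitrary graphs about which you know only the expansion guarantee $\Phi_G(\delta) \ge 1-\eps$ (or, via~\cite{RST10}, the quantitatively stronger $\Phi_G(\delta') > 1 - 2^{-a'\log(1/\delta')}$). You cannot invoke hypercontractivity of the instance; the soundness must be derived from expansion alone.

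Second, your level-set step claims that because $y=Px$ lies in the top eigenspace, a level set $T$ of $y$ automatically has $\langle 1_T, A 1_T\rangle_\mu / \mu(T)$ bounded away from zero. This does not follow: a vector in $V_{\ge 1-\eta}$ can have level sets whose indicators have small Rayleigh quotient, and the standard Cheeger rounding does not control the \emph{size} of the resulting set. Getting a set that is simultaneously small (from the $q$-norm spikiness) and non-expanding (from membership in the top eigenspace) is exactly the hard part.

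The paper's soundness (Lemma~\ref{lem:bbh:second}) does this using only the expansion assumption, via a route quite different from what you sketch. A local Cheeger bound from~\cite{Steurer10} converts the strong expansion guarantee into a collision-probability bound $\cp(G(S)) \le 1/(e|S|)$ for all small $S$. Then, letting $f$ be the $\enorm{q}{\cdot}$-maximizer in $V_{\ge\lambda}$ and assuming $\enorm{q}{f} > 2/\sqrt{\delta}$, one builds $g$ with $Gg=f$, performs a dyadic level-set decomposition $U_0,\dots,U_I$ of $f$, and uses the collision bound (Claim~\ref{claim:bbh:collision}) on each selected level to construct nonnegative pieces $g_j$ with $\sum_j g_j \le |g|$ and $\E|g_j|^q$ large; summing shows $\enorm{q}{g}/\enorm{2}{g} > \enorm{q}{f}/\enorm{2}{f}$, contradicting maximality. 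The extension from even integers $q$ to all $q>2$ lives precisely inside this argument (the inequality $\E_{x\in T}[|g_j(x)|^q] \ge (\E_{x\in T}[g_j(x)^2])^{q/2}$ and the choice of $c$ with $2/c^{q-2}<1/100$), not in any hypercontractivity of the instance.
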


	\paragraph{$\ell_p$ norm with $2 < p$.}
	Our hardness results for $p \in (2, \infty)$ are proved directly from the above intermediate problem.
	The following hardness result for $\minpq{p^*}{p}{\cdot}$ implies our hardness result for $p \in (2, \infty)$.
	It follows from a similar result by Guruswami et al.~\cite{GRSW16}, which proves the same hardness for the $\minpq{2}{p}{\cdot}$ norm,
	with some modifications that connect the $2$ norm and the $p^*$ norm. Recall that $\gamma_p := \E_{g} [|g|^p]^{1/p}$ where $g$ is a standard Gaussian, which is strictly greater than $1$ for $p > 2$.

	\begin{thm}\label{thm:hardness:NP}
		For any $p \in (2, \infty)$ and $\eps > 0$, it is NP-hard to approximate the $\minpq{p^*}{p}{\cdot}$ norm within a factor $\gamma_p - \eps$. %\td{Def of $\gamma_p$ ?}{}
	\end{thm}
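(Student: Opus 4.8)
The plan is to obtain Theorem~\ref{thm:hardness:NP} from the NP-hardness of approximating $\minpq{2}{p}{\cdot}$ within $\gamma_p-\eps$ proved by Guruswami et al.~\cite{GRSW16}, by re-examining their reduction rather than treating it as a black box (a purely black-box reduction from the $2\to p$ version to the $p^*\to p$ version seems hopeless, since passing from the $\ell_2$-ball to the $\ell_{p^*}$-ball can collapse the gap). Since $p>2$ we have $p^*=p/(p-1)\in(1,2)$. GRSW's construction is a dictatorship-test reduction from (smooth) Label Cover; on an instance it outputs a matrix $A$ that we may regard as a linear operator on $L^2$ of a product probability space (e.g.\ the hypercube with the uniform measure, or Gaussian space), with the guarantees: in the completeness case there is a (near-)$\pm1$-valued dictator function $f$ with $\|f\|_2=1$ and $\|Af\|_p\le 1$, so $\minpq{2}{p}{A}\le 1$; and in the soundness case $\|Af\|_p\ge(\gamma_p-\eps)\|f\|_2$ for \emph{every} $f$, so $\minpq{2}{p}{A}\ge\gamma_p-\eps$.

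First I would transfer the soundness guarantee. Because the ground measure is a probability measure and $p^*<2$, Jensen's inequality gives $\|f\|_{p^*}\le\|f\|_2$ for every $f$; hence in the soundness case $\|Af\|_p\ge(\gamma_p-\eps)\|f\|_2\ge(\gamma_p-\eps)\|f\|_{p^*}$ for all $f$, i.e.\ $\minpq{p^*}{p}{A}\ge\gamma_p-\eps$. For the completeness guarantee, the witness $f$ is, up to a $1+o(1)$ factor, of constant modulus, and a function of constant modulus on a probability space has all its $\ell_r$ norms equal, so $\|f\|_{p^*}\ge(1-o(1))\|f\|_2$ and therefore $\minpq{p^*}{p}{A}\le\|Af\|_p/\|f\|_{p^*}\le(1+o(1))\|Af\|_p/\|f\|_2\le 1+o(1)$. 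Thus the very same matrix $A$ certifies a multiplicative $(\gamma_p-\eps)$-gap for $\minpq{p^*}{p}{\cdot}$, in the probability-measure norms.

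It remains to convert to the standard (counting) $\ell_{p^*}$ and $\ell_p$ norms used in the statement. Writing $A\in\R^{n\times d}$, a direct computation shows that the standard-norm value of $\minpq{p^*}{p}{A}$ equals $n^{1/p}/d^{1/p^*}$ times its probability-measure value, an instance-independent rescaling, so the multiplicative $(\gamma_p-\eps)$-gap survives verbatim. Two minor points: since $\eps$ is arbitrary we may absorb the $o(1)$ slack coming from the invariance principle in GRSW's soundness analysis and from the near-Booleanness of the completeness witness; and if GRSW's statement is written only for even integers $p$, the identical argument extends to all real $p\in(2,\infty)$ because $\gamma_p$ and the invariance-principle estimates depend continuously on $p$, exactly as is done for the $2\to q$ norm in Section~\ref{sec:hardness}.

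The only nontrivial part---and where the ``modifications'' of~\cite{GRSW16} actually live---is verifying that GRSW's reduction really has the two structural properties used above: (i) a soundness bound of the form $\|Af\|_p\ge(\gamma_p-\eps)\|f\|_2$, i.e.\ with the $\ell_2$ norm (equivalently, the standard deviation) of $f$ on the right-hand side, which is precisely what the invariance principle outputs once $f$ has no influential coordinate; and (ii) a completeness witness that is a genuine (near-)Boolean dictator, so that the same function certifies the completeness bound under both the $\ell_2$- and the $\ell_{p^*}$-normalization. Both are standard for this family of geometric inapproximability reductions, but confirming them requires unfolding GRSW's gadget; everything else is the elementary norm comparisons above together with the rescaling bookkeeping.
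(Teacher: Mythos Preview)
Your proposal is correct and follows essentially the same route as the paper. The paper's strategy is exactly the two observations you isolate: for soundness, the expectation-norm inequality $\enorm{p^*}{\bff}\le\enorm{2}{\bff}$ (since $p^*<2$) automatically lifts GRSW's lower bound $\|A\bff\|_p\ge(\gamma_p-\eps)\|\bff\|_2$ to one against $\|\bff\|_{p^*}$; for completeness, the dictator witness $\bff(v,x)=x_{\ell(v)}$ is genuinely $\pm1$-valued, so $\enorm{p^*}{\bff}=\enorm{2}{\bff}=1$ and the Yes-case value transfers exactly. The paper then carries out the ``unfolding'' you anticipate, presenting a variant of GRSW's gadget (with large-penalty coordinates enforcing near-linearity and edge consistency) and redoing the soundness decoding with minor bookkeeping changes.

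One small correction: your aside about even integers is misplaced. The even-integer restriction belongs to Barak et al.'s $2\to q$ hardness (handled separately in Section~\ref{subsec:bbhksz}); GRSW's $\minpq{2}{p}{\cdot}$ result already covers all real $p>2$, so no continuity argument in $p$ is needed here.
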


	\paragraph{Finite Fields.}
	Our hardness results for finite fields rely on the following lemma.
	\begin{lem}
		\label{lem:hardness_fields}
		Let $\F$ be a finite field and $A \in \F^{n \times d}$ with $n \geq d$ and $k = d - 1$.
		Then, we have
		\[
		\min_{U \in \F^{n \times k}, V \in \F^{k \times d}} \norm{0}{U V - A} = \min_{x \in \F^d, x \neq 0} \norm{0}{Ax}.
		\]
	\end{lem}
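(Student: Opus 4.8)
The plan is to mimic the argument sketched right before Lemma~\ref{lem:lp_to_minpq}, but replace the $\ell_p$/Hölder machinery (which has no analogue over a finite field) with elementary linear algebra over $\F$. First I would reduce both sides to a statement about $(d-1)$-dimensional subspaces. Writing $a_1,\dots,a_n\in\F^d$ for the rows of $A$, any rank-$(d-1)$ product $UV$ has row space contained in some subspace $S\subseteq\F^d$ with $\dim S\le d-1$, and conversely given such an $S$ one can always choose $U,V$ realizing the best possible approximation with rows in $S$. Concretely, for a fixed $S$ the quantity $\min_{U,V:\mathrm{rowspace}(V)\subseteq S}\norm{0}{UV-A}$ equals $\sum_{i=1}^n \rho_i$, where $\rho_i\eqdef\min_{y\in S}\norm{0}{y-a_i}$ is the $\ell_0$-distance (number of disagreeing coordinates) from $a_i$ to the affine problem of approximating $a_i$ by a vector of $S$. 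This is because the $\ell_0$ cost decomposes additively over columns/rows and each row of $A$ can be approximated independently once $S=\mathrm{rowspace}(V)$ is fixed; one verifies that setting $V$ to have row space exactly $S$ and choosing $U_{i,:}$ to express the nearest point of $S$ to $a_i$ attains the bound. Hence $\min_{U,V}\norm{0}{UV-A} = \min_{\dim S\le d-1}\sum_i \rho_i(S)$, and by monotonicity the minimizing $S$ has dimension exactly $d-1$, i.e.\ $S=\ker(x^T)$ for some nonzero $x\in\F^d$ (the orthogonal complement with respect to the standard bilinear form; over a finite field every codimension-1 subspace is a hyperplane $\{y:\langle x,y\rangle=0\}$).

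Next I would compute $\rho_i(S)$ for $S=\{y:\langle x,y\rangle=0\}$ and show it is exactly $[\langle x,a_i\rangle\ne 0]$, the Iverson bracket. If $\langle x,a_i\rangle=0$ then $a_i\in S$ and $\rho_i=0$. If $\langle x,a_i\rangle\ne 0$, then certainly $\rho_i\ge 1$ since $a_i\notin S$; for the upper bound I need to exhibit $y\in S$ differing from $a_i$ in a single coordinate. Pick any coordinate $j$ with $x_j\ne 0$; define $y$ to agree with $a_i$ on all coordinates except $j$, and set $y_j\eqdef a_{i,j}-x_j^{-1}\langle x,a_i\rangle$, which lies in $\F$ because $x_j\ne 0$. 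Then $\langle x,y\rangle = \langle x,a_i\rangle - x_j\cdot x_j^{-1}\langle x,a_i\rangle = 0$, so $y\in S$ and $\norm{0}{y-a_i}=1$. Thus $\rho_i(S)=[\langle x,a_i\rangle\ne 0]$, and therefore $\sum_i\rho_i(S)=\norm{0}{Ax}$. Minimizing over all nonzero $x$ gives the claimed identity.

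The main (and really the only) subtlety is the first reduction: making precise the claim that $\min_{U,V}\norm{0}{UV-A}$ over rank-$(d-1)$ factorizations equals $\min_S\sum_i\rho_i(S)$, i.e.\ that approximating $A$ within rank $d-1$ is the same as choosing a single $(d-1)$-dimensional ``target'' subspace and independently projecting each row onto it in $\ell_0$. One direction is immediate (any $U,V$ with $V$ of rank at most $d-1$ has all rows of $UV$ in $S\eqdef\mathrm{rowspace}(V)$, so $\norm{0}{UV-A}\ge\sum_i\rho_i(S)$). For the converse I would fix the optimal hyperplane $S$, take $V\in\F^{(d-1)\times d}$ to be any matrix whose row space is $S$, and for each $i$ choose $U_{i,:}\in\F^{d-1}$ so that $U_{i,:}V$ is the nearest point of $S$ to $a_i$ (possible since $\{U_{i,:}V:U_{i,:}\in\F^{d-1}\}=S$); then $\norm{0}{UV-A}=\sum_i\rho_i(S)$. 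A small corner case to dispatch is $d=1$ (then $k=0$, $UV=0$, and the right side is $\min_{x\ne0}\norm{0}{Ax}=\norm{0}{A}$, matching), which I would mention in one line. Everything else is the elementary single-coordinate correction computation above, which carries no real difficulty over a field since every nonzero scalar is invertible — this is exactly why the finite-field statement is cleaner than its $\ell_p$ counterpart and needs no Hölder-duality argument.
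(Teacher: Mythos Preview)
Your proposal is correct and takes essentially the same approach as the paper's proof: both identify a rank-$(d-1)$ factorization with a choice of hyperplane $S=\{y:\langle x,y\rangle=0\}$ for nonzero $x$, then show the per-row cost $\rho_i$ equals $[\langle x,a_i\rangle\ne 0]$ via a single-coordinate correction, yielding $\sum_i\rho_i=\norm{0}{Ax}$. Your write-up is in fact slightly more explicit than the paper's (giving the formula $y_j=a_{i,j}-x_j^{-1}\langle x,a_i\rangle$ for the correction and dispatching the $d=1$ corner case), but the argument is the same.
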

	The proof has a similar structure to Lemma~\ref{lem:lp_to_minpq} for the $\ell_p$ norm in $\R$.
	We can still identify a subspace $S \subseteq \F^d$ with codimension $1$ with a vector $x$ with $\langle v, x \rangle = 0$ for every $v \in S$.
	In finite fields, $x$ can be possibly in $S$, but it does not affect the proof.
	Then for each row $a_i$ of $A$, if $\langle a_i, x \rangle = 0$, then $a_i \in S$ and we incur no error on the $i$th row. If $\langle a_i, x \rangle \neq 0$, changing one entry of $a_i$ will ensure that it will be contained in $S$, so the total number of errors  given $S$ is  exactly $\norm{0}{Ax}$.

	The quantity in the right-hand side, $\min_{x \in \F^d, x \neq 0} \norm{0}{Ax}$, is exactly the minimum Hamming weight of any non-zero codeword of the code that has $A^T$ as a generator matrix, or the minimum distance of the code. Then Theorem~\ref{thm:hard0field} above immediately follows from the following theorem by Austrin and Khot~\cite{AK11}.

	\begin{thm} [\cite{AK11}]
		For any finite field $\F$ and $r > 1$, unless $\classP = \NP$, there is no $r$-approximation algorithm for computing the minimum distance of a given linear code in polynomial time.
		\label{thm:dms}
	\end{thm}

	\paragraph{Paper Outline:} In Section \ref{sec:prelim} we give preliminaries. In Section \ref{sec:upperbound} we give our algorithms for $\ell_p$-low rank approximation, $0 < p < 2$, and since it is technically similar, our algorithm for $p = 0$ over finite fields. In Section \ref{sec:combinatorial} we give our algorithm for \genprobk. In Section \ref{sec:hardness} we give all of our hardness results. In Section \ref{sec:misc} we mention various additional results.

	\section{Preliminaries} \label{sec:prelim}

	%\paragraph{Notation}
	For a matrix $A$ we write $A_{i,j}$ for its entry at position $(i,j)$, $A_{i,:}$ for its $i$-th row, and $A_{:,i}$ for its $i$-th column.

	For $0 \leq p \leq \infty$, we will let $\| A \|_p$ denote the entrywise $\ell_p$-norm of $A$. That is, $\| A \|_0$ equals the number of non-zero entries of $A$, $\| A \|_\infty = \max_{i, j} \lVert A_{i, j} \rVert$, and $\| A \|_p = (\sum_{i, j} A_{i, j}^p)^{1/p}$.

	For two matrices $A,B$ the value $\nnz{A-B}$ is a measure of similarity that is sometimes called their Hamming distance.

	We will typically give the dimensions of a matrix $A$ as $n \times d$ when $A$ has entries from a field such as $\mathbb{R}$ or $\mathbb{F}_q$. When the entries of $A$ are binary, we will typically give its dimensions as $m \times n$.

	We first recall some basic results about Cauchy variables. These have the property that if $x \in \R^n$ and $Z, C_i$ are i.i.d standard Cauchy variables (for $i = 1, \ldots, n$) then
	it holds that $\sum_{i=1}^n x_i C_i \sim \lVert x \rVert_1 Z$.

	\begin{fact} \label{cauchyfact}
		If $C$ is a Cauchy variable with scale $\gamma$, then \begin{enumerate}[label=(\roman*)]
			\item For $\tau > 1$, $\pr{\left| C \right| > \tau \gamma} \leq \frac{1}{\tau}$
			\item For small $\eps > 0$, $\pr{\left| C \right| > (1 + \eps) \gamma} < \frac{1}{2} - \Theta (\eps)$
			\item For small $\eps > 0$, $\pr{\left| C \right| < (1 - \eps) \gamma} < \frac{1}{2} - \Theta (\eps)$
		\end{enumerate}
	\end{fact}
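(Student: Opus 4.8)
The plan is to derive all three bounds directly from the cumulative distribution function of the Cauchy law. Recall that a standard Cauchy variable has density $\frac{1}{\pi(1+x^2)}$ and CDF $F(x)=\frac12+\frac1\pi\arctan x$, and that a Cauchy variable $C$ with scale $\gamma$ is distributed as $\gamma$ times a standard Cauchy. Hence for $t\ge 0$,
\[
\pr{|C|>t}=1-\pr{-t\le C\le t}=1-\frac2\pi\arctan\!\big(t/\gamma\big)=\frac2\pi\arctan\!\big(\gamma/t\big),
\]
where the last equality uses the identity $\arctan x+\arctan(1/x)=\pi/2$ for $x>0$. Each of the three claims will then follow by plugging in the appropriate value of $t$ and invoking elementary estimates on $\arctan$. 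This is essentially a textbook computation, so I expect no real obstacle; the only care needed is to make the $\Theta(\eps)$ in parts (2) and (3) rigorous, which I address below.

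For part (1), set $t=\tau\gamma$ with $\tau>1$. Then $\pr{|C|>\tau\gamma}=\frac2\pi\arctan(1/\tau)\le\frac2\pi\cdot\frac1\tau<\frac1\tau$, using $\arctan x\le x$ for $x\ge 0$ together with $\frac2\pi<1$.

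For parts (2) and (3), I would use that $\arctan$ is smooth and strictly concave on $(0,\infty)$ with $\arctan 1=\pi/4$ and $\arctan'(1)=\tfrac12$. Concavity gives the tangent-line upper bound $\arctan(1\pm\eps)\le\tfrac\pi4\pm\tfrac\eps2$, and on a fixed bounded interval (say $\eps\le\tfrac12$) one also has a matching linear lower bound $\arctan(1+\eps)\ge\tfrac\pi4+c_1\eps$ and $\arctan(1-\eps)\ge\tfrac\pi4-c_2\eps$ for absolute constants $c_1,c_2>0$, by bounding $\arctan'$ away from $0$ on the relevant subinterval. Substituting $t=(1+\eps)\gamma$ in part (2) yields
\[
\pr{|C|>(1+\eps)\gamma}=1-\frac2\pi\arctan(1+\eps)\ \in\ \Big[\tfrac12-\tfrac\eps\pi,\ \tfrac12-\tfrac{2c_1}{\pi}\eps\Big]=\tfrac12-\Theta(\eps),
\]
and substituting $t=(1-\eps)\gamma$ in part (3) yields
\[
\pr{|C|<(1-\eps)\gamma}=\frac2\pi\arctan(1-\eps)\ \in\ \Big[\tfrac12-\tfrac{2c_2}{\pi}\eps,\ \tfrac12-\tfrac\eps\pi\Big]=\tfrac12-\Theta(\eps).
\]
Equivalently, one may just Taylor-expand: $\arctan(1\pm\eps)=\tfrac\pi4\pm\tfrac\eps2+O(\eps^2)$, giving $\pr{|C|>(1+\eps)\gamma}=\tfrac12-\tfrac\eps\pi+O(\eps^2)$ and $\pr{|C|<(1-\eps)\gamma}=\tfrac12-\tfrac\eps\pi+O(\eps^2)$. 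I would present the explicit concave/monotone version so the statement holds with concrete constants rather than only asymptotically. The hardest part, such as it is, is merely keeping track of which monotonicity/convexity fact of $\arctan$ is needed in each direction.
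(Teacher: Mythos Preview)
Your derivation is correct and entirely standard. The paper does not actually prove this statement---it is recorded as a basic fact about the Cauchy distribution and used without justification---so there is no alternative argument to compare against; your CDF computation with the $\arctan$ expansion is exactly what one would supply if asked to fill in the details.
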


	The following results are adapted from \cite{birw16}. We want to analyze the quantiles of the entries of a vector after a dense Cauchy sketch is applied to it.

	\begin{definition}
		Let $0 < \alpha < 1$. Let $v \in \R^m$. We let $q_{\alpha} (v)$ denote the $\frac{1}{\alpha}$-quantile of $\lvert v_1 \rvert$, $\lvert v_2 \rvert$, $\ldots$, $\lvert v_m \rvert$, or the minimum value greater than $\lceil \alpha n \rceil$ of the values $\lvert v_1 \rvert$, $\lvert v_2 \rvert$, $\ldots$, $\lvert v_m \rvert$. For $M \in \R^{m \times n}$,
		we let $$q_{\alpha} (M) \Def \sum_{i = 1}^n q_{\alpha} (M_{:,i}).$$
	\end{definition}

	We will be particularly interested in the median of the entries of a sketched vector.
	\begin{definition}
		For $v \in \R^n$, we write $\med(v)$ as shorthand for $q_{\frac{1}{2}} (v)$.
		Further, for $M \in \R^{m \times n}$, we let $$\med (M) \Def \sum_{i = 1}^n \med (M_{:,i}).$$
	\end{definition}

	\begin{lem} \label{quantiles}
		Let $S \in \R^{m \times n}$ have entries that are i.i.d. standard Cauchy variables and let $x \in R^n$. Then \begin{enumerate}[label=(\roman*)]
			\item $\pr{q_{\frac{1}{2} - \Theta(\eps)} (Sx) < (1 - \eps) \lVert x \rVert_1} < \exp (-\Theta(\eps^2) m)$
			\item $\pr{q_{\frac{1}{2} + O (\eps)} (Sx) > (1 + \eps) \lVert x \rVert_1} < \exp (-\Theta(\eps^2) m)$
			\item For $M > 2$, $\pr{q_{1 - \frac{\eps}{2}} (Sx) > \frac{M}{\eps} \lVert x \rVert_1} < \exp (- \Theta (\eps) M m)$
			\item For $M > 2$, $\pr{\med (Sx) > M \lVert x \rVert_1} < \exp (- \Theta(m) M)$
		\end{enumerate}
	\end{lem}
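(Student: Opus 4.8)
The plan is to reduce all four bounds to one elementary concentration estimate. Since the rows of $S$ are independent vectors of i.i.d.\ standard Cauchys, $1$-stability (as recorded just before Fact~\ref{cauchyfact}) gives that the coordinates $(Sx)_1,\dots,(Sx)_m$ are i.i.d., each distributed as $\|x\|_1\cdot Z$ with $Z$ a standard Cauchy; rescaling, I may assume $\|x\|_1=1$ throughout. The key observation is then that, by definition of the quantile $q_\alpha$, each of the four events is equivalent to a bound on a $\mathrm{Binomial}(m,p)$ count of ``bad'' coordinates: $q_\alpha(Sx)<t$ says that at least $\approx\alpha m$ of the $|(Sx)_i|$ lie below $t$, and $q_\alpha(Sx)>t$ says that at least $\approx(1-\alpha)m$ of them lie above $t$, where the ceiling in the definition of $q_\alpha$ shifts the count by at most one. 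In each case the per-coordinate probability $p$ is controlled by the appropriate part of Fact~\ref{cauchyfact}, and a Chernoff bound for a sum of $m$ i.i.d.\ indicators finishes the argument.

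For part~(1), take $t=1-\eps$ and quantile index $\alpha=\tfrac12-c_1\eps$. By Fact~\ref{cauchyfact}(3), a fixed coordinate satisfies $|(Sx)_i|<1-\eps$ with probability at most $\tfrac12-c_0\eps$ for an absolute constant $c_0>0$; choosing $0<c_1<c_0$, the event in question forces a binomial variable of mean at most $(\tfrac12-c_0\eps)m$ to exceed $(\tfrac12-c_1\eps)m$, i.e.\ a deviation of $\Theta(\eps m)$ above a mean of $\Theta(m)$, and the multiplicative Chernoff bound gives failure probability $\exp(-\Theta(\eps^2)m)$. Part~(2) is the mirror image: with $t=1+\eps$, $\alpha=\tfrac12+O(\eps)$, and Fact~\ref{cauchyfact}(2) in place of~(3), a fixed coordinate exceeds $1+\eps$ with probability at most $\tfrac12-\Theta(\eps)$, the event pushes the count of such coordinates above $(\tfrac12-O(\eps))m$, and the same computation yields $\exp(-\Theta(\eps^2)m)$.

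Parts~(3) and~(4) instead use the polynomial tail Fact~\ref{cauchyfact}(1). For part~(3), with $t=M/\eps>1$ a fixed coordinate has $|(Sx)_i|>M/\eps$ with probability at most $\eps/M$, so the count $X$ of such coordinates has mean $\mu\le\eps m/M$, whereas $\{q_{1-\eps/2}(Sx)>M/\eps\}$ forces $X\ge a$ with $a=\Theta(\eps m)$ --- an overshoot by a multiplicative factor $\Theta(M)$. Feeding this into the Poisson form of the Chernoff bound $\Pr[X\ge a]\le(e\mu/a)^a$ gives the claimed failure probability, which is exponentially small in $m$ and shrinks as $M$ grows. Part~(4) is the identical computation with $t=M$, $\alpha=\tfrac12$, $\mu\le m/M$, and $a=\Theta(m)$.

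The proof is short once the reduction to i.i.d.\ Cauchys is in place, so the main (if modest) obstacle is the bookkeeping: tracking the ceiling in the definition of $q_\alpha$ to pin down the count thresholds exactly, keeping strict versus non-strict inequalities consistent so the correct one of Fact~\ref{cauchyfact}(1)--(3) applies in the right direction, and --- for parts~(1)--(2) --- choosing the constant in the ``$\Theta(\eps)$'' of the quantile index strictly smaller than the constant $c_0$ coming out of Fact~\ref{cauchyfact} so that the binomial mean is genuinely separated from the threshold. One also needs $\eps$ small enough for the ``small $\eps$'' hypotheses of Fact~\ref{cauchyfact}(2),(3), and $M>2$ so that both $\tau=M$ and $\tau=M/\eps$ exceed $1$ in Fact~\ref{cauchyfact}(1).
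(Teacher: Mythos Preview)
Your proposal is correct and follows essentially the same approach as the paper: reduce to i.i.d.\ Cauchy coordinates via $1$-stability, apply the relevant part of Fact~\ref{cauchyfact} to bound the per-coordinate probability, and finish with a Chernoff bound on the binomial count. The paper's proof is terser and for part~(3) writes the Chernoff exponent in the additive form $\exp\!\big(-\Theta(m)(\tfrac{\eps}{2}-\tfrac{\eps}{M})^2(\tfrac{\eps}{M})^{-1}\big)$ rather than your Poisson form $(e\mu/a)^a$, but these are the same computation.
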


	\begin{proof}
		Note that for each $1 \leq i \leq m$, $(Sx)_i$ is distributed as a Cauchy variable with scale $\lVert x \rVert_1$. By Fact \ref{cauchyfact}, $\pr{\left| (Sx)_i \right| < (1 - \eps) \lVert x \rVert_1} < \frac{1}{2} - \Theta (\eps)$. We want to bound the probability that more than a $\frac{1}{2} - \Theta(\eps)$ fraction of the $(Sx)_i$'s are smaller than $(1 - \eps) \lVert x \rVert_1$. The desired upper bound follows from Chernoff's bound as $\exp (-\Theta(m) (\frac{1}{2} - \Theta(\eps) - (\frac{1}{2} - \Theta(\eps)))^2 )$, from which (i) follows. We can prove (ii) using a similar argument.

		For (iii), we know from Fact \ref{cauchyfact} that $\pr{(Sx)_i > \frac{M}{\eps}} < \frac{\eps}{M}$. Thus a Chernoff bound gives
		$$\pr{q_{1 - \frac{\eps}{2}} (Sx) > \tfrac{M}{\eps} \lVert x \rVert_1} < \exp (- \Theta (m) (\tfrac{\eps}{2} - \tfrac{\eps}{M})^2 (\tfrac{\eps}{M})^{-1})$$ and the result follows.
		For (iv), a similar proof holds using $\pr{(Sx)_i > M \lVert x \rVert_1} < \frac{1}{M}$.

	\end{proof}

	\begin{lem} \label{embedding}
		Let $X \subset \R^n$ be a $k$-dimensional space and $\eps > 0$. Let $S$ have $\Theta(\frac{1}{\delta} \cdot \text{poly}(\frac{k}{\eps}))$ rows, $n$ columns, and i.i.d. Cauchy entries with scale parameter $\gamma = 1$. Then with probability at least $1 - \Theta(\delta)$, for all $x \in X$, $$(1 - \Theta(\eps)) \lVert x \rVert_1 \leq q_{\frac{1}{2} - \eps} (Sx) \leq q_{\frac{1}{2} + \eps} (Sx) \leq (1 + O(\eps)) \lVert x \rVert_1  $$
	\end{lem}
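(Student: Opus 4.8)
The statement is a net-style subspace embedding bound for the median/quantile of a dense Cauchy sketch. I would prove it in the standard two-step manner: (1) a pointwise bound for a fixed $x \in X$ via Lemma~\ref{quantiles}(i)--(ii), then (2) a union bound over a suitably fine $\eps$-net of the unit sphere of $X$, together with a tail bound controlling the behaviour of the sketch off the net. The key quantitative input is that for a fixed unit vector the failure probability in Lemma~\ref{quantiles} is $\exp(-\Theta(\eps^2)m)$; choosing $m = O(\eps^{-2} k \log(k/(\eps\delta)))$ makes this smaller than $\delta \cdot e^{-\Theta(k\log(1/\eps))} = \delta \cdot (\eps/k)^{\Theta(k)}$, which beats the size $(C/\eps)^{k}$ of a standard $\eps$-net of a $k$-dimensional unit ball, so the union bound closes with room to spare.

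\textbf{Step 1 (pointwise bound).} Fix $x \in X$ with $\|x\|_1 = 1$. Each coordinate $(Sx)_i$ is a standard Cauchy scaled by $\|x\|_1 = 1$, so by Lemma~\ref{quantiles}(i)--(ii), $q_{1/2-\Theta(\eps)}(Sx) \ge (1-\eps)$ and $q_{1/2+O(\eps)}(Sx) \le (1+\eps)$, each failing with probability at most $\exp(-\Theta(\eps^2)m)$. Homogeneity of $q_\alpha(\cdot)$ in $x$ extends this to all scalings, so the two-sided bound $(1-\eps)\|x\|_1 \le q_{1/2-\eps}(Sx) \le q_{1/2+\eps}(Sx) \le (1+\eps)\|x\|_1$ holds for each fixed $x$ with failure probability $\exp(-\Theta(\eps^2)m)$ (after relabelling the $\Theta(\eps)$ constants; this is the standard cosmetic adjustment of constants in the quantile index).

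\textbf{Step 2 (net + off-net tail).} Let $\mathcal{N}$ be an $\eta$-net of $\{x \in X : \|x\|_1 = 1\}$ in $\ell_1$, with $\eta$ a small constant multiple of $\eps$; one can take $|\mathcal{N}| \le (3/\eta)^k$. Union-bounding Step~1 over $\mathcal{N}$, and using $m = O(\eps^{-2} k \log(k/(\eps\delta)))$, the total failure probability is at most $(3/\eta)^k \exp(-\Theta(\eps^2)m) \le \Theta(\delta)$. To pass from the net to all of $X$, I would first control the sup-norm blowup: for each row $i$, $(Sx)_i$ is $1$-Lipschitz in $x$ with respect to $\ell_\infty$ of the $i$th row of $S$, and with $m$ rows the largest $|S_{i,j}|$ is $\mathrm{poly}(m,n)$ with probability $1 - \delta$ by a Cauchy tail bound; hence on this event $|(Sx)_i - (Sx')_i| \le \mathrm{poly}(m,n)\cdot\|x-x'\|_1$ for all $i$. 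But $\mathrm{poly}(m,n)\cdot\eta$ is not yet $O(\eps)$, so — as is standard for Cauchy embeddings — I would instead iterate: write an arbitrary unit $x$ as $x = x_0 + \sum_{\ell \ge 1} \beta_\ell x_\ell$ with $x_\ell \in \mathcal{N}$ and $|\beta_\ell| \le \eta^\ell$, so that the ``tail'' $\sum_{\ell\ge1}\beta_\ell x_\ell$ has $\ell_1$-norm $O(\eta)$ and, more importantly, is itself a scaled net vector up to a geometrically small error. The quantile is not linear, but it is monotone and subadditive-up-to-quantile-shifts: since at most an $O(\eps)$ fraction of coordinates of the tail contribution can be large (again by Lemma~\ref{quantiles}, applied with the crude bound (iv), to each $x_\ell$ on the good event), shifting the median index by $O(\eps)$ absorbs the tail. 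This is exactly the quantile-shift idea already used in the ``large bad columns'' analysis sketched in the techniques section, transplanted here to the net argument.

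\textbf{Main obstacle.} The routine part is Step~1 and the raw counting in Step~2; the delicate part is the net-to-subspace passage, because $q_\alpha$ is not a norm — it is neither subadditive nor Lipschitz in $x$ in any clean way. The right way through is to never compare $q_\alpha(Sx)$ with $q_\alpha(Sx')$ coordinate-wise, but to argue at the level of coordinate fractions: on the good event, for every net vector a $(1/2+\eps)$-fraction of coordinates of its image lie in the band $[(1-\eps)\|\cdot\|_1, (1+\eps)\|\cdot\|_1]$ and only an $O(\eps)$-fraction are ``huge'', and these fraction statements do compose additively across the geometric decomposition $x = \sum_\ell \beta_\ell x_\ell$, at the cost of degrading the fraction $1/2+\eps$ by a geometric series that sums to $O(\eps)$. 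Choosing $\eta = c\eps$ with $c$ small enough that this series stays below, say, $\eps/10$, yields the claimed two-sided bound with the stated number of rows, completing the proof.
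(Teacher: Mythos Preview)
Your Step~1 and the union bound over the net are correct. The gap is in the net-to-subspace passage. Your ``fraction statements compose additively'' argument does not close: for each net vector $x_\ell$ in the geometric decomposition $x = x_0 + \sum_{\ell\ge 1}\beta_\ell x_\ell$, the good event (via Lemma~\ref{quantiles}(iii)) only tells you that at most an $\eps/2$-fraction of coordinates of $Sx_\ell$ are large, and these bad coordinate sets can differ across $\ell$. Summing over the (a priori unbounded) number of levels gives an unbounded total bad fraction; truncating after finitely many levels leaves a remainder in $X$ that is not a net vector, so you are back to needing exactly the uniform control you are trying to prove. Choosing $\eta$ small does not help here, because the per-level bad fraction $\eps/2$ is independent of $\eta$. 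You could try to union-bound over multiple quantile thresholds per net vector so that the bad fractions decay geometrically in $\ell$, but Lemma~\ref{quantiles}(iii) with parameter $\alpha_\ell\to 0$ gives failure probability only $\exp(-\Theta(\alpha_\ell)m)$, which defeats the union bound for small $\alpha_\ell$ at the stated value of $m$.

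The paper sidesteps all of this by controlling $\lVert Sz\rVert_\infty$ uniformly over $z\in X$, giving a coordinate-wise (not fraction-of-coordinates) perturbation bound. The key device is an Auerbach basis $X'$ for $X$: its columns have $\lVert X'_{:,j}\rVert_1 = 1$, so each entry of $SX'$ is standard Cauchy, and $SX'$ has only $m k$ entries (not $mn$, which is what doomed your first attempt). A union bound then gives $\lVert SX'\rVert_\infty \le O(k^2/\delta)$ with probability $1-\delta/2$, and the Auerbach property $\lVert z'\rVert_\infty \le \lVert X'z'\rVert_1$ turns this into $\lVert Sz\rVert_\infty \le O(k^3/\delta)\,\lVert z\rVert_1$ for every $z\in X$. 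Now take the net of mesh $\eps\delta/k^3$ (finer than your $c\eps$, but still of size $\exp(O(k\log(k/\eps\delta)))$, compatible with the stated $m$): for $x=y+z$ with $y$ in the net and $\lVert z\rVert_1\le \eps\delta/k^3$, every coordinate of $Sx$ differs from the corresponding coordinate of $Sy$ by $O(\eps)$, so the quantile value shifts by $O(\eps)$ with no index shift at all. This is both simpler and avoids the composability issue entirely.
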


	\begin{proof}
	    Let $u$ and $v$ be positive integers such that $S$ has a number of rows between $\frac{1}{\delta} \cdot \Omega(k^u / \eps^v)$ and $\frac{1}{\delta} \cdot O(k^{u+1} / \eps^{v+1})$.

	    Let $\eps' = \frac{\delta^2 \cdot \eps^{v+2}}{k^{u+3}}$ and let $N$ be an $\eps'$-net for the intersection of $X$ and the unit $\ell_1$ ball. Then $\lvert N \rvert = \exp(O(k \log \frac{k}{\eps \cdot \delta}))$.

		Let $y \in N$. By Lemma \ref{quantiles}, $\pr{q_{\frac{1}{2} - \Theta(\eps)} (Sy) < (1 - \eps) \lVert y \rVert_1} < \exp (-\Theta (\frac{k^u}{\delta \cdot \eps^{v-2}}))$. Thus, for all $y \in N$, $q_{\frac{1}{2} - \Theta(\eps)} (Sy) \geq 1 - \eps$ holds with probability $1 - \Theta(\delta)$ by a union bound.

		Let $X'$ be a matrix whose columns form an Auerbach basis (\cite{MM13}) for the subspace $X$. That is, each column of $X'$ has $\ell_1$ norm 1 and $\lVert z' \rVert_\infty \leq \lVert X' z' \rVert_1$ for all $z'$. Let $\tau = \frac{k^{u+2}}{\delta^2 \cdot \eps^{v+1}}$.

		By Fact \ref{cauchyfact}, each entry of $SX'$ is greater than $\tau$ with probability at most $\tau^{-1}$ because each column of $X'$ has $\ell_1$ norm $1$. A union bound tells us that $\lVert SX' \rVert_\infty \leq \tau$ with probability at least $1 - \Theta(\frac{k^{u+2}}{\eps^{v+1} \cdot \delta \cdot \tau})$ or $1 - \Theta(\delta)$.

		For arbitrary $z \in X$, we can write $z = X'z'$. Thus
		\begin{eqnarray*}
			\lVert Sz\rVert_{\infty}&=&\lVert SX'z'\rVert_{\infty}\leq\lVert SX'\rVert_{\infty}\cdot\lVert z'\rVert_{1}\leq \tau \cdot k\lVert z'\rVert_{\infty}\\&\leq&\frac{k^{u+3}}{\delta^2 \cdot \eps^{v+1}}\cdot\lVert X'z'\rVert_{1}= \frac{k^{u+3}}{\delta^2 \cdot \eps^{v+1}}\cdot\lVert z\rVert_{1}.
		\end{eqnarray*}

		Given any $x$ in the intersection of the unit $\ell_1$ ball and $X$, we can write $x = y + z$ where $y \in N$, $z \in X$, and $\lVert z \rVert_1 \leq \eps'$. By the above argument, we know $\lVert Sz \rVert_\infty \leq \frac{k^{u+3}}{\delta^2 \cdot \eps^{v+1}} \cdot \lVert z \rVert_1 \leq \eps$. Since $Sx = Sy + Sz$, then $(1 - \Theta (\eps)) \leq q_{\frac{1}{2} - \Theta(\eps)} (Sx)$ for any unit $x$. We can scale $x$ and $\eps$ by the appropriate constants to get the desired statement.

		The RHS inequality follows from a similar argument.
	\end{proof}

	We immediately have the following corollary about medians of Cauchy sketches over subspaces.
	\begin{corollary} \label{medembed}
		Let $X \subset \R^n$ be a $k$-dimensional space and $\eps, \delta > 0$. Let $S$ have $\Theta(\frac{1}{\delta} \cdot \text{poly}(\frac{k}{\eps}))$ rows, $n$ columns, and i.i.d. Cauchy entries with scale parameter $\gamma = 1$. With probability at least $1 - \Theta(\delta)$, for all $x \in X$, $$(1 - \eps) \lVert x \rVert_1 \leq \med(Sx) \leq (1 + \eps) \lVert x \rVert_1$$
	\end{corollary}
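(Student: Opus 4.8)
The plan is to derive the statement directly from Lemma~\ref{embedding} by sandwiching the median between two nearby quantiles. Recall that by definition $\med(Sx) = q_{1/2}(Sx)$ for a single vector (and the column-wise sum version for matrices is the one appearing in the corollary). Since $q_{\alpha}(v)$ is nondecreasing in $\alpha$ — increasing the quantile index can only select a larger order statistic of $\lvert v_1\rvert,\dots,\lvert v_m\rvert$ — we have the deterministic chain
\[
q_{\frac12-\eps}(Sx)\ \le\ \med(Sx)\ \le\ q_{\frac12+\eps}(Sx)
\]
for every realization of $S$ and every $x$. So the only randomness we need to control is in the outer two quantiles.

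Next I would apply Lemma~\ref{embedding} to the given $k$-dimensional subspace $X$. With $S$ having $O(\eps^{-2}k\log(k/(\eps\delta)))$ rows of i.i.d.\ standard Cauchy entries, the lemma states that with probability at least $1-\Theta(\delta)$, \emph{simultaneously} for all $x\in X$ one has $(1-\Theta(\eps))\lVert x\rVert_1\le q_{\frac12-\eps}(Sx)$ and $q_{\frac12+\eps}(Sx)\le(1+O(\eps))\lVert x\rVert_1$. Combining this with the deterministic sandwich above immediately yields $(1-\Theta(\eps))\lVert x\rVert_1\le\med(Sx)\le(1+O(\eps))\lVert x\rVert_1$ for all $x\in X$ on that event.

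Finally, to replace the $\Theta(\eps)$ and $O(\eps)$ by a clean $(1\pm\eps)$, I would invoke the above with $\eps':=c\eps$ for a sufficiently small absolute constant $c>0$, so that the hidden constants are absorbed; this rescaling changes the required number of rows only by a constant factor, leaving it $O(\eps^{-2}k\log(k/(\eps\delta)))$ as claimed, and it does not affect the $1-\Theta(\delta)$ success probability. There is essentially no obstacle in this argument — the only things to check are the monotonicity of $q_{\alpha}$ in $\alpha$ and that the constant rescaling of $\eps$ is harmless, both of which are immediate; all of the real work has already been carried out in Lemmas~\ref{quantiles} and~\ref{embedding}.
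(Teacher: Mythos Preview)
Your proposal is correct and is exactly the (trivial) derivation the paper has in mind: the paper states this corollary immediately after Lemma~\ref{embedding} without a separate proof, precisely because $\med(Sx)=q_{1/2}(Sx)$ is sandwiched between $q_{1/2-\eps}(Sx)$ and $q_{1/2+\eps}(Sx)$ by monotonicity of quantiles, and then a constant rescaling of~$\eps$ cleans up the hidden constants.
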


	We can also bound the median and the $(1-\eps/2)$-quantile of a Cauchy sketch of a fixed matrix.

	%FIXED MATRIX SKETCH BOUND
	\begin{lem} \label{fixedbound}
		Let $S$ be an $m \times n$ matrix ($m = \Theta(1 / \text{poly}(\eps))$) with i.i.d. standard Cauchy entries and let $M$ be an $n \times d$ matrix. For $\eps > 0$, with probability $1 - 1/\Omega(1)$, $$(1 - \eps) \lVert M \rVert_1 \leq \med(SM) \leq (1 + \eps) \lVert M \rVert_1 $$
	\end{lem}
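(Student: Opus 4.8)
The plan is to reduce the statement to a column-by-column analysis and then combine the columns using only linearity of expectation and Markov's inequality, so that the (unavoidable) dependence among the columns of $SM$ never has to be confronted. Fix a column index $i$. Since $(SM)_{:,i} = S\,M_{:,i}$, the $1$-stability of Cauchy variables recalled above says that the $m$ entries of $(SM)_{:,i}$ are i.i.d.\ Cauchy variables of scale $\lVert M_{:,i}\rVert_1$, so $\med((SM)_{:,i})$ is distributed as $\lVert M_{:,i}\rVert_1 \cdot Y_i$, where $Y_i$ is the $q_{1/2}$-quantile (``sample median'') of $m$ i.i.d.\ absolute standard Cauchy variables. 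Since the population median of $|C|$ for a standard Cauchy $C$ equals $1$, each $Y_i$ concentrates around $1$, and all $Y_i$ have the same law (though they are not independent across $i$, as $S$ is shared).

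Next I would record quantitative tail bounds for $Y_i$, which are exactly the estimates underlying Lemma~\ref{quantiles}: via Chernoff bounds on the number of coordinates of $(SM)_{:,i}$ below or above a threshold (using Fact~\ref{cauchyfact}) one gets $\pr{Y_i < 1-t}, \pr{Y_i > 1+t} \le \exp(-\Theta(t^2)\,m)$ for $t \in (0,1)$, together with the far upper tail $\pr{Y_i > M} \le \exp(-\Theta(m)\,M)$ for $M > 2$ from Lemma~\ref{quantiles}(iv). Integrating these bounds and using that $m$ is a sufficiently large polynomial in $1/\eps$ yields $\Ex[\max(Y_i-1,0)] = \int_0^\infty \pr{Y_i > 1+t}\,dt = O(\eps)$ and $\Ex[\max(1-Y_i,0)] = \int_0^1 \pr{Y_i < 1-t}\,dt = O(\eps)$, the far-tail bound being what makes the first integral converge and be $O(\eps)$ despite the heavy Cauchy tail. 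Now summing over the $d$ columns by linearity of expectation (no independence of the $(SM)_{:,i}$ is needed), $\Ex\bigl[\sum_{i=1}^d \max(\med((SM)_{:,i}) - \lVert M_{:,i}\rVert_1,\,0)\bigr] = O(\eps)\lVert M\rVert_1$ and likewise $\Ex\bigl[\sum_{i=1}^d \max(\lVert M_{:,i}\rVert_1 - \med((SM)_{:,i}),\,0)\bigr] = O(\eps)\lVert M\rVert_1$. Both random variables are nonnegative, so Markov's inequality shows each is at most $O(\eps)\lVert M\rVert_1$ with probability at least $1-\delta_0$ for an arbitrarily small constant $\delta_0$ of our choosing (enlarging the Markov constant only changes the hidden constant in $O(\eps)$). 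On the intersection of these two events, which occurs with probability $1 - 1/\Omega(1)$, we get $\med(SM) = \sum_i \med((SM)_{:,i}) \le \lVert M\rVert_1 + O(\eps)\lVert M\rVert_1$ and symmetrically $\med(SM) \ge \lVert M\rVert_1 - O(\eps)\lVert M\rVert_1$. Rescaling $\eps$ by a constant (absorbed into the polynomial defining $m$) turns $O(\eps)$ into $\eps$ and finishes the proof.

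The main obstacle is precisely that the columns of $SM$ are not independent — they all use the same sketch $S$ — so one cannot concentrate $\med(SM) = \sum_i \med((SM)_{:,i})$ around its mean via a Chernoff/Bernstein-type argument; this is exactly why the lemma claims only a constant success probability, and why the argument is organized so that we only ever take expectations of nonnegative quantities and finish with Markov. A secondary point needing care is that, since $m$ is only $\poly(1/\eps)$, the per-column estimate $\med((SM)_{:,i}) = (1\pm\eps)\lVert M_{:,i}\rVert_1$ holds merely with constant probability, so one genuinely needs the quantitative tail bounds — including the far upper tail of Lemma~\ref{quantiles}(iv) — to pin $\Ex[\med((SM)_{:,i})]$ down to within a $(1\pm O(\eps))$ factor of $\lVert M_{:,i}\rVert_1$.
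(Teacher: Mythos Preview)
Your proposal is correct and follows essentially the same approach as the paper: both arguments work column by column, use linearity of expectation on nonnegative deviations so that the dependence across columns never matters, and finish with Markov's inequality. The only cosmetic difference is that you compute $\Ex[\max(Y_i-1,0)]$ and $\Ex[\max(1-Y_i,0)]$ directly by integrating the tail bounds of Lemma~\ref{quantiles}, whereas the paper handles the lower tail via a good/bad column split and the upper tail via a discrete level-set (``$k$-large'') decomposition of the overshoot --- these are just two ways of bounding the same first moments, and your continuous packaging is arguably the cleaner of the two.
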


	\begin{proof}
		Lemma \ref{quantiles} tells us that we can choose $m$ so that $\pr{\med (SM_{:,i}) = (1 \pm \eps) \lVert M_{:,i} \rVert_1} \geq 1 - \Theta(\eps)$ for each $i$. Say $i$ is good if $\med (SM_{:,i}) \geq (1 - \eps) \lVert M_{:,i} \rVert_1$ and bad otherwise. Then $\ex{\sum_{\text{bad } i} \lVert M_{:,i} \rVert_1} \leq \eps \lVert M \rVert_1$ so Markov's inequality tells us $\sum_{\text{bad } i} \lVert M_{:,i} \rVert_1 \leq O(\eps) \lVert M \rVert_1$ with probability $1 - 1/\Omega(1)$ and also $\sum_{\text{good } i} \lVert M_{:,i} \rVert_1 \geq (1 - \Theta(\eps)) \lVert M \rVert_1 $.

		This implies that $$\med(SM) \geq \sum_{\text{good } i} \med(SM_{:,i}) \geq (1 - \eps) \sum_{\text{good } i} \lVert M_{:,i} \rVert_1 \geq (1 - \eps)(1 - \Theta(\eps)) \lVert M \rVert_1$$ which gives our first desired inequality.

		Now say that column $i$ is small if $\med (SM_{:,i}) < (1 + \eps) \lVert M_{:,i} \rVert_1$ and (for $k \geq 1$) $k$-large if $$ (k + 1 + \eps) \lVert M_{:,i} \rVert_1 > \med (SM_{:,i}) \geq (k + \eps) \lVert M_{:,i} \rVert_1.$$

		For $k \geq 3$, we can bound
		\begin{align*}
		    \pr{i \text{ is } k\text{-large}} &\leq \pr{\med (SM_{:,i} \geq (k + \eps) \cdot  \lVert M_{:,i} \rVert_1} < \exp (-\Theta(m) \cdot (k + \eps)) \\
		    &= \exp (-\Theta(m) \cdot \eps) \cdot \exp (-\Theta(m) \cdot k) < \eps \cdot \exp (-\Theta(m) \cdot k)
		\end{align*} where the second inequality comes from Lemma \ref{quantiles} and the last inequality comes from choosing $m = \Theta(1/\text{poly}(\eps))$.

		For $k = 1$ or $k = 2$, note that if $i$ is $k$-large, then $\med (SM_{:,i}) \geq (1 + \eps) \lVert M_{:,i} \rVert_1$ which occurs with probability at most $\Theta(\eps)$ as mentioned earlier.

		This lets us bound
		\begin{align*}
			\exx{ \sum_{k\geq1}k\sum_{k\text{-large }i}\lVert M_{:,i}\rVert_{1} }
			&\leq\Theta(\eps)\lVert M\rVert_{1}+2\Theta(\eps)\lVert M\rVert_{1}+\sum_{k\geq3}k\eps\exp(-\Theta(m)k)\lVert M\rVert_{1}&\\&\leq O(\eps)\lVert M\rVert_{1}\sum_{k\geq3}\frac{k}{\exp(\Theta(m)k)}\leq O(\eps)\lVert M\rVert_{1}&
		\end{align*} where the last inequality occurs because the given infinite series converges by the ratio test.

		Therefore \begin{align*} \med(SM) &= \sum_{\text{small }i} \med(SM_{:,i}) + \sum_{k \geq 1} \sum_{k\text{-large } i} \med(SM_{:,i})\\
			&\leq (1 + \eps) \lVert M \rVert_1 + \sum_{k \geq 1} (k + 1 + \eps) \sum_{k\text{-large } i}  \lVert M_{:,i} \rVert_1 \\
			&\leq (1 + \eps) \lVert M \rVert_1 + \sum_{k \geq 1} 3k \sum_{k\text{-large } i}  \lVert M_{:,i} \rVert_1 \\
			&\leq (1 + O(\eps))\cdot \lVert M \rVert_1
		\end{align*} where the first inequality holds by the definition of $k$-large and the third inequality holds with probability $1 - 1/\Omega(1)$ by Markov's inequality.
	\end{proof}

	\paragraph{Chebyshev's inequality.} We record some basic facts.
	Let $Z_1$,$\ldots$,$Z_n$ be independent Bernoulli random variables, with $Z_i \sim \textup{Ber}(p_i)$.
	Let $Z := Z_1 + \ldots + Z_n$ and $\mu := \Ex[Z]$.
	\begin{lem} \label{lem:cheb1}
		For any $\Delta > 0$, we have $\Pr[|Z - \mu| > \Delta] \le \mu/\Delta^2$.
	\end{lem}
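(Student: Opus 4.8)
The plan is to combine Chebyshev's inequality with the elementary variance bound for a sum of independent Bernoullis. First, since the $Z_i$ are independent, variances add: $\Var(Z) = \sum_{i=1}^n \Var(Z_i)$. For a single Bernoulli variable $Z_i \sim \textup{Ber}(p_i)$ we have $\Var(Z_i) = p_i(1-p_i) \le p_i$, so summing over $i$ gives $\Var(Z) \le \sum_{i=1}^n p_i = \Ex[Z] = \mu$, where the last equality is linearity of expectation.

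It then remains to invoke Chebyshev's inequality, which states $\Pr[|Z - \mu| > \Delta] \le \Var(Z)/\Delta^2$ for any $\Delta > 0$; substituting the bound $\Var(Z) \le \mu$ yields $\Pr[|Z - \mu| > \Delta] \le \mu/\Delta^2$, as claimed. There is no genuine obstacle in this argument; the only point to note is that the step $1 - p_i \le 1$ sacrifices a possibly tighter estimate in exchange for a bound that depends only on $\mu$, which is the form convenient for the later applications.
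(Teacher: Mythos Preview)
Your proof is correct and follows essentially the same approach as the paper: bound $\Var(Z) \le \mu$ via $\Var(Z_i) = p_i(1-p_i) \le p_i$ and independence, then apply Chebyshev's inequality.
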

	\begin{proof}
		By independence, we have
		$$\Var(Z) = \sum_{i=1}^n \Var(Z_i) = \sum_{i=1}^n p_i(1-p_i) \le \sum_{i=1}^n p_i = \mu.$$
		By Chebyshev's inequality, for any $\Delta > 0$ we have
		$$\Pr[|Z - \mu| > \Delta] \le \Var(Z) / \Delta^2.$$ With $\Var(Z) \le \mu$
		we thus obtain the claim.
	\end{proof}
	\begin{lem} \label{lem:cheb2}
		For any $\Delta > 0$, we have $\Pr[|Z - \mu| > \Delta] \le \sqrt{n}/\Delta$.
	\end{lem}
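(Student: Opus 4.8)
The plan is to obtain this from Lemma~\ref{lem:cheb1} by a trivial case distinction on the size of $\Delta$. First recall that $\mu = \sum_{i=1}^n p_i \le n$, so Lemma~\ref{lem:cheb1} already yields
\[
\Pr[|Z - \mu| > \Delta] \le \frac{\mu}{\Delta^2} \le \frac{n}{\Delta^2}.
\]
If $\Delta \ge \sqrt{n}$, then $n/\Delta^2 \le \sqrt{n}/\Delta$ (equivalently, $\sqrt{n}/\Delta \le 1$), and the claimed bound follows. If instead $\Delta < \sqrt{n}$, then $\sqrt{n}/\Delta > 1$, so the bound holds trivially since every probability is at most $1$. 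An alternative that avoids citing Lemma~\ref{lem:cheb1} is to apply Chebyshev's inequality directly together with the variance bound $\Var(Z) = \sum_i p_i(1-p_i) \le \mu \le n$, obtaining $\Pr[|Z-\mu|>\Delta] \le \Var(Z)/\Delta^2 \le n/\Delta^2$, and then run the same dichotomy.

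There is no real obstacle here; the only conceptual point is that the two Chebyshev-type estimates $\mu/\Delta^2$ and $\sqrt{n}/\Delta$ are incomparable — the former is the sharper statement in the large-deviation regime $\Delta \gtrsim \sqrt{n}$, while the latter is the (otherwise trivial) bound that remains meaningful when $\Delta$ is of moderate size — so the proof amounts to making this split explicit.
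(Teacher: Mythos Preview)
Your proof is correct and follows essentially the same approach as the paper: both obtain $\Pr[|Z-\mu|>\Delta] \le n/\Delta^2$ from Chebyshev's inequality together with $\Var(Z)\le\mu\le n$, and then split into the cases $\sqrt{n}/\Delta < 1$ (where $n/\Delta^2 \le \sqrt{n}/\Delta$) and $\sqrt{n}/\Delta \ge 1$ (where the bound is trivial).
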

	\begin{proof}
		As in the previous lemma's proof, we have
		$$\Pr[|Z - \mu| > \Delta] \le \Var(Z) / \Delta^2,$$
		where $\Var(Z) \le \mu \le n$, and thus
		$$\Pr[|Z - \mu| > \Delta] \le n / \Delta^2.$$
		The statement follows since if $\sqrt{n}/\Delta < 1$
		we have $n/\Delta^2 \le \sqrt{n}/\Delta$,
		and otherwise the inequality is trivial.
	\end{proof}

	\newpage
	\section{\texorpdfstring{$\ell_p$}{lp}-Approximation Algorithms} \label{sec:upperbound}

	Recall that in the \lprobbf problem (for $0 < p < 2$) we are given an $n \times d$ matrix $A$ with integer entries bounded in absolute value by $\poly(n)$, a positive integer $k$, and we want to output matrices $U\in\mathbb{R}^{n \times k}$ and $V\in\mathbb{R}^{k \times d}$ minimizing $\|A-UV\|_p^p \eqdef \sum_{i=1, \ldots, n, j = 1, \ldots, d} |A_{i,j} - (U \cdot V)_{i,j}|^p$.
	In this section, we prove Theorem~\ref{thm:main1}, restated here for convenience.

	\begingroup
	\addtocounter{thm}{-1}
	\def\thethm{\ref{thm:main1}}
	\begin{thm}[PTAS for $0 < p < 2$]
		Let $p \in (0,2)$ and $\eps \in (0,1)$.
		There is a $(1+\eps)$-approximation algorithm to \lprobbf running
		in $n^{\poly(k/\eps)}$ time.
	\end{thm}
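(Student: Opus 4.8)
\emph{Set-up and reduction to guessing a sketch.} The plan follows the ``guess a sketch'' strategy of the overview. Fix an optimal factorization $A = U^*V^* + E^*$ with $\|E^*\|_p^p = \opt$; replacing $(U^*,V^*)$ by $(U^*T, T^{-1}V^*)$ for a suitable invertible $T$, assume the rows of $V^*$ form an $\ell_p$ Auerbach basis of their span, so $\|V^*\|_\infty \le 1$ and, as the rows of $A$ have magnitude $\poly(n)$, also $\|U^*\|_\infty \le \poly(n)$. Two standard facts allow discretization: either $A$ has rank $\le k$ (detected via $\mathrm{rank}(A)$ and solved exactly), or $\opt \ge n^{-\Theta(k)}$ by the usual gap lower bound on the nonzero $\ell_p$-error of a rank-$k$ fit to a bounded integer matrix; and $\|A\|_\infty \le \poly(n)$. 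Hence, for a fine enough grid $\mathcal G$ (spacing $n^{-\Theta(k)}$, range $O(1)$ for right factors and $n^{O(k)}$ for left factors), rounding $U^*$ and $V^*$ to $\mathcal G$ changes $\|U^*V^* - A\|_p^p$ by at most $\eps\,\opt$. The algorithm is: draw one random sketch $S$ with $r = \poly(k/\eps)$ rows (specified below); enumerate all $M \in \mathcal G^{r \times k}$ --- there are $n^{\poly(k/\eps)}$ of them --- and for each $M$ pick, independently per column $j \in [d]$, a grid vector $\widehat V_{:,j}$ minimizing $\med(M V_{:,j} - S A_{:,j})$; keep the pair $(M,\widehat V)$ minimizing $\med(M\widehat V - SA) = \sum_j \med(M\widehat V_{:,j} - SA_{:,j})$; finally pick each row of $U$ over $\mathcal G$ to minimize $\|U_{i,:}\widehat V - A_{i,:}\|_p^p$ and output $(U, \widehat V)$.

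\emph{The sketch and the easy direction.} For $p=1$ let $S$ have i.i.d.\ standard Cauchy entries and $r = \poly(k/\eps)$ rows (a polylog factor more than Lemma~\ref{embedding} requires), chosen so that a fixed column index $j$ is \emph{good} --- meaning $\med(SU^*v - SA_{:,j}) = (1 \pm \Theta(\eps))\|U^*v - A_{:,j}\|_1$ for all $v \in \R^k$, which follows from Lemma~\ref{embedding} applied to the $(k+1)$-dimensional space $\mathrm{span}(U^*, A_{:,j})$ --- except with probability $\poly(\eps/k)$, for a suitably large power. Then all but a $\poly(\eps/k)$-fraction of columns are good, and by Markov applied to the fixed matrix $E^*$ the good event may additionally be taken to guarantee that the bad columns carry only an $\eps$-fraction of $\opt$ and of $q_{1-\eps/2}(SE^*)$. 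By Lemma~\ref{fixedbound}, $\med(SE^*) \le (1+\eps)\opt$, and by Lemma~\ref{quantilemed}, $q_{1-\eps/2}(SE^*) \le O(\eps^{-1})\opt$; condition on all of these, which hold together with constant probability. On the embedding event $\|SX'\|_\infty$ is $\poly(k)$-bounded for an Auerbach basis $X'$ of the column space of $U^*$, so $SU^*$ has $\poly(n)$-bounded entries and rounding it to $\mathcal G$ costs at most $\eps\,\opt$ in the sketched objective; taking $M$ equal to this rounding and $\widehat V$ the rounded $V^*$ shows the retained pair has sketched cost at most $\med(SE^*) + \eps\,\opt \le (1+O(\eps))\opt$.

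\emph{The hard direction and the main obstacle.} The crux is the reverse inequality: for \emph{every} grid pair $(M,V)$ with $M$ the rounded $SU^*$, writing $E := U^*V - A$, one has $\med(MV - SA) \ge (1 - O(\eps))\|U^*V - A\|_1 - O(\eps)\opt$. On a good column, $\med(SE_{:,j}) \ge (1-\Theta(\eps))\|E_{:,j}\|_1$ directly. A bad column is split as in~\eqref{eq:BadGuys} into \emph{large} (when $\|E_{:,j}\|_1 \ge \frac1\eps(\|E^*_{:,j}\|_1 + \frac1{1-O(\eps)}q_{1-\eps/2}(SE^*_{:,j}))$) and \emph{small}. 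For large bad columns, Claim~\ref{largesketch} --- the additivity argument of~\eqref{eqn:exp1}--\eqref{eq:AvrBadEntries}: decompose $SE_{:,j} = SU^*(V-V^*)_{:,j} + SE^*_{:,j}$, apply the robust embedding to the first summand so that a $(1/2+\eps)$-fraction of its entries exceeds $(1-O(\eps))\|E_{:,j}\|_1 + q_{1-\eps/2}(SE^*_{:,j})$ (using the triangle inequality and the ``large'' condition), while at most an $\eps/2$-fraction of entries of the second summand exceeds $q_{1-\eps/2}(SE^*_{:,j})$ --- gives $\med(SE_{:,j}) \ge (1-O(\eps))\|E_{:,j}\|_1$. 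For small bad columns, only $\med(SE_{:,j}) \ge 0$ is used, so the total loss is at most $\sum_{\text{small bad } j}\|E_{:,j}\|_1 < \frac1\eps\sum_{\text{bad } j}(\|E^*_{:,j}\|_1 + O(q_{1-\eps/2}(SE^*_{:,j}))) \le O(\eps)\opt$ by the bad-column bounds. Summing the three cases proves the reverse inequality, whence the retained $\widehat V$ has $\|U^*\widehat V - A\|_1 \le (1+O(\eps))\opt$; since the rounded true optimum $U^*$ is a grid point of cost $\le (1+O(\eps))\opt$ for this $\widehat V$, the per-row grid search returns a $U$ of cost $\le (1+O(\eps))\opt$, and rescaling $\eps$ finishes $p=1$. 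The main obstacle is precisely this bad-column accounting: one must keep $r = \poly(k/\eps)$ small yet arrange that the bad columns carry only an $\eps$-fraction of both $\opt$ and $q_{1-\eps/2}(SE^*)$, so that the $1/\eps$ blow-up in the ``small'' case is absorbed; this is exactly the step that bypasses the $2$-approximation barrier of~\cite{swz17}, since the output factor $\widehat V$ need not lie in the row space of $A$.

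\emph{General $p \in (0,2)$.} Replace $S$ by a matrix of i.i.d.\ $p$-stable entries, so $(Sx)_i \sim \|x\|_p \cdot Z_p$ for a $p$-stable $Z_p$; the analogues of Fact~\ref{cauchyfact} and Lemmas~\ref{quantiles},~\ref{embedding},~\ref{fixedbound},~\ref{quantilemed} follow from the power-law tail of $Z_p$, the bounds $\opt \ge n^{-\Theta(k)}$ and $\|A\|_\infty \le \poly(n)$ are unchanged, and an $\ell_p$ Auerbach basis exists for every $p > 0$. For $1 \le p < 2$ the argument is verbatim with $\|\cdot\|_1$ replaced by $\|\cdot\|_p$. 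For $0 < p < 1$, $\|\cdot\|_p$ is not a norm, so the two triangle inequalities used --- $\|U^*(V-V^*)_{:,j}\|_p \ge \|E_{:,j}\|_p - \|E^*_{:,j}\|_p$ and additivity of the objective across columns --- are replaced by $\|a+b\|_p^p \le \|a\|_p^p + \|b\|_p^p$ together with the still-valid $\|a+b\|_p \ge \|a\|_p - \|b\|_p$ for the $\ell_p$ functional, at the cost of adjusting the constants hidden in $O(\eps)$; no step breaks. The total running time is $n^{\poly(k/\eps)}$ choices of $M$ times $d \cdot n^{\poly(k/\eps)}$ for the per-column enumeration of $V$ and $n \cdot n^{\poly(k/\eps)}$ for recovering $U$, i.e.\ $n^{\poly(k/\eps)}$, proving Theorem~\ref{thm:main1}.
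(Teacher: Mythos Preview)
Your proposal is correct and follows essentially the same approach as the paper: a median-based $p$-stable sketch with $\poly(k/\eps)$ rows, the good/bad column split with the large/small refinement via the quantile threshold~\eqref{eq:BadGuys}, and the additivity argument of Claim~\ref{largesketch} for large bad columns, exactly as in Theorem~\ref{constantsketch}. Your use of per-row grid enumeration to recover $U$ is a clean simplification over the paper's treatment of $0<p<1$ (Corollary~\ref{PTAS:smallp}, which sketches again from the right because $\ell_p$-regression is nonconvex), since you correctly observe that the rounded $U^*$ is itself a feasible grid point achieving cost $(1+O(\eps))\opt$ against the recovered~$\widehat V$.
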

	\endgroup

	In Subsections~\ref{subsec:p0}, we prove in Corollary~\ref{PTAS:p1} our core algorithm result
	which solves the \lprobbf problem for $p=1$.
	In Subsection~\ref{subsec:bigp}, we give an algorithm for the case when $1 < p < 2$,
	and we prove its correctness in Corollary~\ref{PTAS:bigp}.
	In Subsection~\ref{subsec:smallp}, we prove in Corollary~\ref{PTAS:smallp} the correctness
	of our algorithm for $0 < p < 1$.
	Then, we conclude the proof of Theorem~\ref{thm:main1} by combining Corollary~\ref{PTAS:p1}, Corollary~\ref{PTAS:bigp} and Corollary~\ref{PTAS:smallp}.

	In Subsection~\ref{subsec:pover2}, we give a $(3 + \eps)$-approximation algorithm
	for the \lprobbf problem in the case when $p > 2$.

	Recall that in the \fieldprobbf problem we are given an $n \times d$ matrix $A$ with entries in $\mathbb{F}_q$, a positive integer $k$, and we want to output matrices
	$U\in\mathbb{F}_{q}^{n \times k}$ and $V\in\mathbb{F}_{q}^{k \times d}$ minimizing $\|A-UV\|_0$.
	In Subsection~\ref{subsec:finitefield}, we prove Theorem~\ref{thm:main3}
	and for reader's convenience we restate here our result.

	\begingroup
	\addtocounter{thm}{-1}
	\def\thethm{\ref{thm:main3}}
	\begin{thm}[Alternate $\mathbb{F}_q$ PTAS for $p = 0$]
		For $\eps \in (0,1)$ there is
		a $(1+\eps)$-approximation algorithm to \fieldprobbf running
		in $n \cdot d^{\poly(k/\eps)}$ time.
	\end{thm}
	\endgroup

	\subsection{\texorpdfstring{$\ell_1$}{l1}-Approximation Algorithm} \label{subsec:p0}

	%PTAS for bounded entries

	In this subsection, $A_k$ will denote the rank $k$ matrix closest to $A$ in the entrywise $\ell_1$-norm.
	\noindent We will need a claim adapted from \cite{CW09}.

	\begin{claim} \label{lowerbound}
		If $A$ is $n$ by $d$ and has integer entries bounded by $\gamma = \text{poly}(n)$ and rank $r > k$, then we have $$\min_{\text{rank k } A_k} \lVert A - A_k \rVert_1 \geq \frac{1}{\text{poly}(n)^k}$$
	\end{claim}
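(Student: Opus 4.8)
The goal is a lower bound on the $\ell_1$-distance from $A$ to the nearest rank-$k$ matrix, when $A$ is integer, entrywise bounded by $\gamma=\poly(n)$, and has rank $r>k$. The natural approach is to pass from the continuous quantity $\min_{\mathrm{rank}\;k}\lVert A-A_k\rVert_1$ to a \emph{combinatorial} one, namely the minimum of $|\det M|$ over $(k+1)\times(k+1)$ submatrices $M$ of matrices that are integer, bounded, and of rank $>k$. First I would observe that since $\mathrm{rank}(A)>k$, some $(k+1)\times(k+1)$ submatrix of $A$ has nonzero determinant; as the entries are integers, that determinant is an integer of absolute value at least $1$ and (by Hadamard's inequality) at most $(k+1)^{(k+1)/2}\gamma^{k+1}=\poly(n)^k$. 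So $A$ itself has ``determinantal mass'' at least $1$ in the appropriate submatrix.

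The main step is to argue contrapositively: if $\lVert A-A_k\rVert_1$ were tiny, then $A$ would be very close (entrywise, hence in every submatrix) to the rank-$k$ matrix $A_k$, whose every $(k+1)\times(k+1)$ minor vanishes. Fix the $(k+1)\times(k+1)$ submatrix $M$ of $A$ with $\det M\neq 0$, and let $M'$ be the corresponding submatrix of $A_k$, so $\det M'=0$. Since $\det$ is a polynomial of degree $k+1$ in the $(k+1)^2$ entries, and each entry of $A$ (and of $A_k$, which we may assume is also bounded by $O(\gamma)$ after a truncation argument) has magnitude $O(\gamma)$, the determinant map is Lipschitz on this box with constant $L=(k+1)\cdot(k+1)!\cdot O(\gamma)^k=\poly(n)^k$. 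Therefore
$$1\le|\det M|=|\det M-\det M'|\le L\cdot\lVert M-M'\rVert_1\le L\cdot\lVert A-A_k\rVert_1,$$
which rearranges to $\lVert A-A_k\rVert_1\ge 1/L=1/\poly(n)^k$, as claimed. One should state the Lipschitz estimate carefully — e.g.\ via the cofactor expansion $\frac{\partial\det}{\partial x_{ij}}=\pm(\text{$k\times k$ minor})$, each such minor bounded by Hadamard, then integrate along the segment from $M'$ to $M$.

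The one genuine obstacle is the unboundedness of $A_k$: a priori the optimal rank-$k$ approximant could have huge entries, destroying the Lipschitz bound. I would handle this by a standard preprocessing/truncation argument: since every entry of $A$ lies in $[-\gamma,\gamma]$, clamping each entry of $A_k$ to $[-\gamma,\gamma]$ only decreases each $|A_{i,j}-(A_k)_{i,j}|$, but may raise the rank. One then uses the fact (or re-proves it) that the clamped matrix still has rank at most, say, $2k$ or can be re-approximated; alternatively, and more cleanly, one argues directly that \emph{some} near-optimal rank-$k$ solution has entries of magnitude $\poly(n)^{O(k)}$ — this is where a short argument invoking that the row space can be spanned by $k$ rows of $A$ plus bounded coefficients, or a perturbation argument, is needed. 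Once entries on both sides are $\poly(n)^{O(k)}$, the Lipschitz constant is still $\poly(n)^{O(k)}$ and the conclusion $\lVert A-A_k\rVert_1\ge 1/\poly(n)^{O(k)}$ follows; absorbing the constant into the exponent gives the stated form.
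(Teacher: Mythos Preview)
Your approach via a nonvanishing $(k+1)\times(k+1)$ minor and the Lipschitz continuity of the determinant is correct, and genuinely different from the paper's argument. The paper instead passes through singular values: it notes $\lVert A-A_k\rVert_1 \ge \lVert A-A_k\rVert_F \ge \sigma_{k+1}(A)$ by Eckart--Young, and then lower-bounds $\sigma_{k+1}$ by observing that $\prod_{i=1}^r \sigma_i^2$ appears as an integer coefficient in the characteristic polynomial of $A^TA$ and is therefore $\ge 1$, while each $\sigma_i^2 \le \lVert A\rVert_F^2 \le nd\gamma^2$; dividing out the first $k$ factors gives $\sigma_{k+1} \ge 1/(nd\gamma^2)^k$. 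Your route is more elementary (no spectral theory needed) and more directly combinatorial; the paper's route is shorter once Eckart--Young is available and yields the Frobenius-norm lower bound as a by-product.

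One remark: the ``genuine obstacle'' you flag --- unbounded entries in the optimal $A_k$ --- dissolves far more simply than your proposed fixes. If $\lVert A-A_k\rVert_1 \ge 1$ you are already done, and otherwise every entry of $A_k$ differs from the corresponding entry of $A$ by at most $1$, so $\lVert A_k\rVert_\infty \le \gamma+1$. Then every point on the segment from $M'$ to $M$ has entries bounded by $\gamma+1$, each cofactor is at most $k!(\gamma+1)^k$, and your Lipschitz estimate goes through with $L = \poly(n)^k$ as written. No clamping, truncation, or well-conditioned-basis detour is needed.
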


	\begin{proof}
		Note that it suffices to lower bound $\sigma_{k+1}$, the $k$th singular value of $A$, because $\lVert A - A_k \rVert_1 \geq \lVert A - A_k \rVert_F \geq \sigma_{k+1}$.

		Since $A$ has integer entries, then so does $A^T A$ and its characteristic polynomial has integer coefficients. Now $A^T A$ has eigenvalues $\sigma_i^2$ so its characteristic polynomial's last term is $\prod_{i = 1}^r \sigma_i^2$ which is at least $1$ because it is a positive integer. For any $j$, $\sigma_j^2 \leq \lVert A \rVert^2_F \leq nd \gamma^2$.

		We have $$\sigma_{k+1}^{r-k} \geq \prod_{k < i \leq r} \sigma_i^2 \geq \frac{\prod_{1 \leq i \leq r} \sigma_i^2}{(nd \gamma^2)^{k}} \geq \frac{1}{(nd \gamma^2)^{k}}$$ so $\sigma_{k+1} \geq \frac{1}{(nd\gamma^2)^k}$ because $r - k \geq 1$.
	\end{proof}

	We can now describe our $(1 + \eps)$-approximation algorithm. For the rest of this section, let $U^*$ and $V^*$ be minimizers for $\lVert UV - A \rVert_1$ with $OPT = \lVert U^* V^{*} - A \rVert_1$. The quantities $\theta, \psi$ will be bounded above by $\text{poly}(n)$. The quantity $q$ will be bounded above by $\text{poly}(k)$. The specifics of how these values are chosen will be described in the algorithm's proof of correctness. The validity of the specific sampling described in Step 1 of Algorithm~\ref{alg:p1} will be proved in Corollary~\ref{PTAS:p1}.

	\begin{algorithm}[H]
		\caption{$(1+\eps)$-$\ell_1$ low rank approximation} \label{alg:p1}
		\textbf{Input:} A $n \times d$ matrix $A$ with integer entries bounded by $\gamma = \poly(n)$. An integer $k \in [d]$ and a real $\eps \in (0, 1)$.

		\textbf{Output:} Matrices $U \in \R^{n \times k}$ and $V \in \R^{k \times d}$ satisfying $\| UV - A \|_1 \leq (1 + \eps) OPT$.

		1. \textbf{If} $A$ has rank at most $k$, then \textbf{return} a rank $k$ decomposition $U, V$ of $A$.

		2. \textbf{Sample} an $m \times n$ matrix $S$ satisfying the conditions of Theorem~\ref{approxalg} (e.g. by taking $m = \poly(k / \eps)$ and sampling each entry of $S$ from a standard Cauchy distribution).

		3. \textbf{Round} the entries of $S$ to the nearest multiple of $\frac{\eps^2}{(\theta \psi)^k}$ where $\theta, \psi \leq \poly(n)$ are chosen as described in the proof of Theorem~\ref{approxalg}.

		4. \textbf{Set} $U$ and $V$ to be zero matrices as a default.

		5. Exhaustively \textbf{guess} all possible values of $SU^*$ with entries \textbf{rounded} to the nearest multiple of $\frac{\eps}{qnk \theta^k} \frac{\eps^2}{(\theta \psi)^k}$, where $q \leq \poly(k)$ is chosen as described in the proof of Theorem~\ref{approxalg}.

		$\quad$ 6. For each guessed $SU^*$, \textbf{set} $\tilde{V} = \argmin_V \med(SU^*V - SA)$ s.t. \indent $\lVert V \rVert_\infty \leq \frac{2nd\gamma qnk \theta^k}{\eps}$.

		$\quad$ 7. For each $\tilde{V}$, \textbf{set} $\tilde{U} = \argmin_U \lVert U \tilde{V} - A \rVert_1$.

		$\quad$ 8. \textbf{If} $\|\tilde{U} \tilde{V} - A \|_1 < \|UV - A \|_1$, then \textbf{set} $U = \tilde{U}, V = \tilde{V}$.

		9. \textbf{Return} $U, V$.
	\end{algorithm}

	\begin{thm} \label{approxalg}
		Let $A$ be an $n \times d$ matrix with integer entries such that $\lVert A \rVert_\infty$ is bounded by $\gamma = \text{poly}(n)$. Suppose $S$ is an $m \times n$ random matrix such that with probability $1 - 1/\Omega(1)$, $\med(SU^* V - SA) \geq (1 - \eps) \lVert U^* V - A \rVert_1$ for all $V$ and for a fixed $V^*$, $\med(SU^* V^* - SA) \leq (1 + \eps) \lVert U^* V^* - A \rVert_1$ with probability $1 - 1/\Omega(1)$. Suppose further that $\lVert S \rVert_\infty \leq \text{poly}(n)$. Then Algorithm \ref{alg:p1} is a $(1 + \eps)$-approximation algorithm for rank $k$ low rank approximation in the entrywise $\ell_1$ norm and runs in time $\text{poly}(n)^{mk}$.
	\end{thm}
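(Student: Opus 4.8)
The plan is to condition on the (constant-probability) event that $S$ satisfies the two median inequalities in the hypothesis, show that Algorithm~\ref{alg:p1} then deterministically returns a $(1+O(\eps))$-approximation, and finally rescale $\eps$ by a constant. First dispose of Step~1: if $\mathrm{rank}(A)\le k$ we output an exact rank-$k$ factorization with zero error, so assume $\mathrm{rank}(A)=r>k$. By Claim~\ref{lowerbound}, $\opt=\|U^*V^*-A\|_1\ge \opt_{\min}:=(nd\gamma^2)^{-k}>0$; this strictly positive, instance-independent lower bound is what makes all of the discretization errors negligible compared with $\opt$.

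Next I would normalize the optimum: replacing $V^*$ by an Auerbach basis of its rowspace (padded with zero rows) and absorbing the change of basis into $U^*$, we may assume each row of $V^*$ has $\ell_1$-norm $1$ and $\|x\|_\infty\le\|x^\top V^*\|_1$ for all $x\in\R^k$, where $q=\poly(k)$ absorbs the associated well-conditioning constants. Since $\opt\le\|A\|_1\le nd\gamma$, this gives $\sum_i\|U^*_{i,:}\|_\infty\le\|U^*V^*\|_1\le 2nd\gamma$, hence $\|U^*\|_\infty\le 2nd\gamma$ and, using $\|S\|_\infty\le\poly(n)$, $\|SU^*\|_\infty\le\poly(n)$ — so the enumeration in Step~5 genuinely ranges over a polynomially bounded box. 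The role of the two grids is then purely quantitative. Writing $g_S=\eps^2/(\theta\psi)^k$ for the Step-3 grid and $g_5=\tfrac{\eps}{qnk\theta^k}\cdot\tfrac{\eps^2}{(\theta\psi)^k}$ for the Step-5 grid: (a) since rounding $S$ moves each entry by at most $\tfrac12 g_S$, for any matrix $M$ and column $j$ it changes $\med(SM_{:,j})$ by at most $\tfrac12 g_S\|M_{:,j}\|_1$ (perturbing a vector by $\beta$ in $\ell_\infty$ moves every order statistic of its absolute values by at most $\beta$), so both hypotheses survive for the rounded $S$ with $\eps$ replaced by $O(\eps)$; and (b) for the enumerated $\widehat W$ nearest to $SU^*$ (so $\|\widehat W-SU^*\|_\infty\le g_5/2$) and any $V$ with $\|V\|_\infty\le B_V:=\tfrac{2nd\gamma\,qnk\theta^k}{\eps}$, we get $\|(\widehat W-SU^*)V\|_1\le mk\cdot\tfrac{g_5}{2}\cdot d\cdot B_V$, which is $\le\eps\,\opt_{\min}\le\eps\,\opt$ once $\theta,\psi$ are taken to be large enough polynomials in $n$. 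One also notes $\|V^*\|_\infty\le\|V^*\|_1\le k\le B_V$, so the feasible region in Step~6 contains a genuine optimum.

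With this set up, correctness is a short chain. Fix the guess $\widehat W$ with $\|\widehat W-SU^*\|_\infty\le g_5/2$ and let $\tilde V$ be the corresponding Step-6 minimizer — computed column by column, each column being a $k$-dimensional problem (minimize a quantile of $m$ affine forms over a box), solvable in polynomial time, e.g.\ by guessing which coordinates fall below the median and solving a linear program. Applying the rounded lower-bound hypothesis at $\tilde V$, then (b), then optimality of $\tilde V$ among feasible right factors, then (b) again, then the rounded upper-bound hypothesis at $V^*$,
\begin{align*}
(1-O(\eps))\|U^*\tilde V-A\|_1 &\le \med(SU^*\tilde V-SA)\le \med(\widehat W\tilde V-SA)+\eps\,\opt\\
&\le \med(\widehat W V^*-SA)+\eps\,\opt\le \med(SU^*V^*-SA)+2\eps\,\opt\le (1+O(\eps))\,\opt.
\end{align*}
Hence $\|U^*\tilde V-A\|_1\le(1+O(\eps))\opt$, and the best response $\tilde U$ of Step~7 only improves this, so $\|\tilde U\tilde V-A\|_1\le(1+O(\eps))\opt$; since Step~8 retains the best candidate over all guesses, the returned pair is a $(1+O(\eps))$-approximation. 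For the running time, Step~5 ranges over $\poly(n)^{mk}$ candidate matrices (each of the $mk$ entries over a polynomially sized grid), and for each of them Steps~6--8 cost $\poly(n)$ — $d$ independent low-dimensional quantile-minimizations, plus one $\ell_1$-regression (a linear program) for the best response — for a total of $\poly(n)^{mk}$.

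The main obstacle is item~(b): keeping the enumeration at $\poly(n)^{mk}$ while making $\widehat W$ accurate enough that replacing $SU^*$ by $\widehat W$ perturbs the sketched objective by at most $\eps\,\opt$ \emph{uniformly over the entire constrained domain of $V$}. This forces a careful balance of three parameters — the grid $g_5$ fine enough for accuracy, the box $B_V$ large enough to contain an optimal $V^*$, and $\theta,\psi$ still polynomial so that each entry of $\widehat W$ has only polynomially many values — and it is exactly here that Claim~\ref{lowerbound} is indispensable, since it turns a fixed, instance-independent grid fineness into a genuinely negligible additive error. A secondary subtlety is that all median manipulations must be carried out per column and then summed, using only the elementary order-statistic perturbation bound, consistently with the fact that $\med=q_{1/2}$ refers to quantiles of \emph{absolute values}.
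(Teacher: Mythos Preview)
Your proposal is correct and takes essentially the same approach as the paper: normalize $V^*$ via a well-conditioned (Auerbach) basis to bound $\|U^*\|_\infty$, use Claim~\ref{lowerbound} to lower-bound $\opt$ so that fixed-grid discretizations of $S$ and of the guess for $SU^*$ incur only $O(\eps\,\opt)$ additive error, and then chain the two median hypotheses through the Step-6 minimizer. The only cosmetic difference is that the paper discretizes \emph{both} $S$ and $U^*$ so that one enumerated guess equals $\tilde S\,U^*_{\text{rounded}}$ exactly, whereas you keep $U^*$ unrounded and take $\widehat W$ to be the grid point nearest $SU^*$; either route gives the same $\eps\,\opt$ perturbation budget. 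One small consistency point: since your chain invokes ``the rounded lower-bound hypothesis'' (so you are already working with the rounded sketch $\tilde S$, which is what Step~6 actually uses), item~(b) should likewise be stated with $\widehat W$ nearest to $\tilde S U^*$ rather than $SU^*$; otherwise you are mixing the original and rounded sketches in the same inequality.
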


	\begin{proof}
		First, if $A$ has rank at most $k$, then we can just use Gaussian elimination to deduce that its optimal low rank approximation has value 0. We will assume its rank is greater than $k$.

		We can assume $V^*$ is an $\ell_1$ well-conditioned basis since we can replace $U^*$ and $V^*$ with $U^*R$ and $V^*R^{-1}$ respectively for an invertible $R$. Thus for all $x$ we have $\frac{\lVert x \rVert_1}{q'} \leq \lVert x^T V^* \rVert_1 \leq q \lVert x \rVert_1$ where $q', q = \text{poly}(k)$. Using this well-conditioned basis property we see that each entry of $U^*$ is at most $2nd \gamma q' \leq \text{poly}(n)$ because otherwise $\lVert U^* V^* - A \rVert_1 \geq 2nd \gamma - \lVert A \rVert_1 \geq \lVert A \rVert_1$ and we could improve the $\ell_1$ error by taking $U^* = 0$.

		Claim \ref{lowerbound} says that there exists $\theta \leq \text{poly}(n)$ such that $OPT \geq \frac{1}{\theta^k}$. By using the well-conditioned basis property of $V^*$ and Claim \ref{lowerbound}, we can also assume that each entry of $U^*$ is rounded to nearest multiple of $\frac{\eps}{qnk \theta^k}$ as this will incur an additive error of at most $\eps OPT$. Thus $U^*$ has discretized and bounded entries. Note that there are at most $\eps^{-1} \text{poly}(n)^k$ possible values for each entry of $U^*$.

		Since the entries of $U^*$ are discretized by $\frac{\eps}{qnk \theta^k}$, then the entries of $V^*$ can be bounded above by $\frac{2nd\gamma qnk \theta^k}{\eps}$ because otherwise $\lVert U^* V^* - A \rVert_1 \geq 2nd\gamma - \lVert A \rVert_1 \geq \lVert A \rVert_1$ and we might as well have set $V^* = 0$.

		Let $V$ be arbitrary with $\lVert V \rVert_\infty \leq \frac{2nd\gamma qnk \theta^k}{\eps}$. Then $\lVert U^* V - A \rVert_1 \leq \eps^{-1} \psi^k$ where $\psi \leq \text{poly}(n)$. We will round each entry of $S$ to the nearest multiple of $\frac{\eps^2}{(\theta \psi)^k}$, so we can write $S = \tilde{S} + \Delta$ where $\tilde{S}$ is discretized and $\lVert \Delta \rVert_\infty \leq \frac{\eps^2}{(\theta \psi)^k}$. Note that $\lVert \Delta (U^* V - A) \rVert_1 \leq \frac{\eps}{\theta^k} \leq \eps OPT$.

		Now we will prove the correctness of our algorithm. We can sample $S = \tilde{S} + \Delta$. Note that $\tilde{S}U^*$ will have entries that are multiples of $\frac{\eps}{qnk \theta^k} \frac{\eps^2}{(\theta \psi)^k} \geq \text{poly}(\frac{\eps}{n})^k$ and bounded by $\text{poly}(\frac{n}{\eps})^k$ because $\tilde{S}$ is discretized and bounded. Since $\tilde{S}U^*$ is $m \times n$, then in $\text{poly}(\frac{n}{\eps})^k$ time we can exhaustively search through all possible values of $\tilde{S}U^*$ and one of them will be correct.

		For each guess of $\tilde{S}U^*$ and each $i$ we minimize $\med (\tilde{S}U^* V_{:,i} - \tilde{S}A_{:,i})$ over $\lVert V_{:,i} \rVert_\infty \leq \frac{2nd\gamma qnk \theta^k}{\eps}$ \footnote{Observe that there are at most $m!$ orderings of the entries of $\tilde{S}U^* V_{:,i} - \tilde{S}A_{:,i}$ and we are minimizing a linear function over $V_{:,i}$ subject to a linear constraint. This can be solved with linear programming, so it will be done within the $\poly(n)^{mk}$ runtime.} to get $\tilde{V_{:,i}}$. We have $\med(\tilde{S}U^* \tilde{V} - \tilde{S}A) \leq \med(\tilde{S}U^* V^* - \tilde{S}A)$.

		Now \begin{align*} \med(\tilde{S}U^* V^* - \tilde{S}A) &= \med(S(U^*V^* - A) - \Delta(U^* V^* - A)) \\
			&\leq \med(S(U^* V^* - A)) + \eps OPT \\
			&\leq (1 + \eps) \lVert U^* V^* - A \rVert_1 + \eps OPT \\
			&\leq (1 + O(\eps)) OPT.
		\end{align*}

		We choose $\tilde{U}$ to minimize $\lVert \tilde{U}\tilde{V} - A \rVert_1$, so \begin{align*} \med(\tilde{S}U^* \tilde{V} - \tilde{S}A) &= \med(S(U^*\tilde{V} - A) - \Delta(U^* \tilde{V} - A)) \\
			&\geq \med(S(U^*\tilde{V} - A)) - \eps OPT \\
			&\geq (1 - \eps) \lVert U^* \tilde{V} - A \rVert_1 - \eps OPT \\
			&\geq (1 - \eps) \lVert \tilde{U} \tilde{V} - A \rVert_1 - \eps OPT
		\end{align*}

		It follows that the best $\tilde{U}$ and $\tilde{V}$ will satisfy $\lVert \tilde{U} \tilde{V} - A \rVert_1 \leq (1 + O(\eps)) OPT$.
	\end{proof}

	Note that if $m = \Theta(\text{poly}(\frac{k \log d}{\eps}))$, then the above algorithm is a quasipolynomial time $(1 + \eps)$-approximation scheme (treating $k$ like a constant). This is because we can use Corollary \ref{medembed} (with $\delta = \text{poly}(1/d)$) to see that $$\med(S\left[U^* \ A_{:,i}] \left[V_{:,i} \ 1 \right]^T \right]) = (1 \pm \eps) \lVert \left[U^* \ A_{:,i}\right] \left[V_{:,i} \ 1 \right]^T \rVert_1$$ (when $V_{:,i}$ is arbitrary) with probability at least $1 - \text{poly}(1/d)$ for each $i$. By a union bound, $\med (S(U^*V - A)) = (1 \pm \eps) \lVert U^* V - A \rVert_1$ for arbitrary $V$ with probability $1 - 1/\Omega(1)$. Furthermore, Fact \ref{cauchyfact} tells us that $\pr{S_{i,j} \geq \text{poly}(n)} \leq \text{poly}(n)^{-1}$ so by a union bound, all entries of $S$ are bounded by $\text{poly}(n)$ with probability $1 - 1/\Omega(1)$.

	Of course, if we could reduce $m$ to $\Theta(\text{poly}(\frac{k}{\eps}))$, then we would have a PTAS. With the target bound for $m$, we would still have a $(1 \pm \eps)$-embedding for each column index $i$ with probability $1 - 1/\Omega(1)$, but we need all $d$ embeddings to be valid at once. We accomplish this in the next result which is a variant of Lemma 27 from \cite{cw15b}.

	\begin{thm} \label{constantsketch}
		Let $U \in \R^{n \times k}, A \in \R^{n \times d}$. Let $V^*$ be chosen to minimize $\lVert UV^* - A \rVert_1$. Suppose $S$ is an $m \times n$ matrix satisfying \begin{enumerate}[label=(\roman*)]
			\item $q_{\frac{1}{2} - \eps} (SUx) \geq (1 - \Theta(\eps)) \lVert Ux \rVert_1$
			\item For each $i$ with probability at least $1 - \eps^3$, $\med(S[U \ A_{:,i}]x) \geq (1 - \eps^3) \lVert [U \ A_{:,i}]x \rVert_1$ for all $x$
			\item $\med(SUV^* - SA) \leq (1 + \eps^3) \lVert UV^* - A \rVert_1$
			%\item $ q_{1 - \eps / 2}(S(UV^* - A))) \leq O \left( \frac{1}{\eps} \right) \lVert UV^* - %A \rVert_1 $
			\item For $M > 2$, and each $i$ with probability $\exp(-\Theta(M) \cdot \text{poly} (k / \eps))$, $ q_{1 - \eps / 2}(S(UV^* - A)_{:,i}) > \frac{M}{\eps} \cdot \lVert (UV^* - A)_{:,i} \rVert_1 $
		\end{enumerate}

		Then $\med(SUV - SA) \geq (1 - O(\eps)) \lVert UV - A \rVert_1$ for arbitrary $V$.
	\end{thm}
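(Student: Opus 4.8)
The plan is to bound $\med(SUV-SA)=\sum_{i=1}^{d}\med\big(S(UV-A)_{:,i}\big)$ one column at a time, exploiting linearity of $S$ through the splitting $S(UV-A)_{:,i}=SU(V-V^*)_{:,i}+S(UV^*-A)_{:,i}$. First I would classify the column indices: call $i$ \emph{good} if the event of hypothesis~(ii) holds for it, i.e. $\med\big(S[U\ A_{:,i}]x\big)\ge(1-\eps^3)\lVert[U\ A_{:,i}]x\rVert_1$ for all $x$, and \emph{bad} otherwise; write $\mathcal{B}$ for the set of bad indices. Since (ii) makes each $i$ bad with probability at most $\eps^3$, linearity of expectation together with Markov's inequality (or Chebyshev, Lemma~\ref{lem:cheb1}) gives, with the constant success probability we are allowed, $\sum_{i\in\mathcal{B}}\lVert(UV^*-A)_{:,i}\rVert_1\le O(\eps^2)\lVert UV^*-A\rVert_1$ together with $|\mathcal B|$ being a small fraction of $d$; this is the only place the randomness of $S$ is consumed in this lemma, everything afterwards being deterministic. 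For a good $i$, plugging $x=\big(V_{:,i},\,-1\big)$ into (ii) immediately yields $\med\big(S(UV-A)_{:,i}\big)\ge(1-\eps^3)\lVert(UV-A)_{:,i}\rVert_1$, so good columns incur essentially no loss.

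For the bad columns I would use the dichotomy of~\eqref{eq:BadGuys}: a bad $i$ is \emph{large} if $\lVert(UV-A)_{:,i}\rVert_1\ge\tfrac1\eps\big(\lVert(UV^*-A)_{:,i}\rVert_1+\tfrac1{1-O(\eps)}q_{1-\eps/2}(S(UV^*-A)_{:,i})\big)$ and \emph{small} otherwise. Large bad columns are handled exactly as in Claim~\ref{largesketch}: applying Lemma~\ref{embedding} to the ($\le k$)-dimensional column space of $U$ (hypothesis~(i)) shows that at least a $(\tfrac12+\eps)$-fraction of the coordinates of $SU(V-V^*)_{:,i}$ have absolute value at least $(1-O(\eps))\lVert U(V-V^*)_{:,i}\rVert_1$, which, by the triangle inequality and the ``large'' condition, is at least $(1-O(\eps))\lVert(UV-A)_{:,i}\rVert_1+q_{1-\eps/2}(S(UV^*-A)_{:,i})$; since by definition of the quantile at most an $\eps/2$-fraction of the coordinates of $S(UV^*-A)_{:,i}$ exceed $q_{1-\eps/2}(S(UV^*-A)_{:,i})$, adding the two vectors coordinatewise still leaves at least a $(\tfrac12+\tfrac\eps2)$-fraction of the coordinates of $S(UV-A)_{:,i}$ of absolute value at least $(1-O(\eps))\lVert(UV-A)_{:,i}\rVert_1$, whence $\med\big(S(UV-A)_{:,i}\big)\ge(1-O(\eps))\lVert(UV-A)_{:,i}\rVert_1$.

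It then remains to bound the total residual of the small bad columns, where I would only use $\med\big(S(UV-A)_{:,i}\big)\ge0$. By the ``not large'' inequality, $\sum_{i\text{ small bad}}\lVert(UV-A)_{:,i}\rVert_1\le\tfrac1\eps\sum_{i\in\mathcal{B}}\lVert(UV^*-A)_{:,i}\rVert_1+\tfrac1{\eps(1-O(\eps))}\sum_{i\in\mathcal{B}}q_{1-\eps/2}(S(UV^*-A)_{:,i})$; the first sum is controlled by the Markov bound above, and the second is charged against hypotheses~(iii)--(iv) and the per-column tail bound of Lemma~\ref{quantiles}(iii) restricted to $\mathcal B$, so that altogether $\sum_{i\text{ small bad}}\lVert(UV-A)_{:,i}\rVert_1\le O(\eps)\lVert UV-A\rVert_1$ (using $\lVert UV^*-A\rVert_1\le\lVert UV-A\rVert_1$). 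Summing the three classes gives $\med(SUV-SA)\ge(1-O(\eps))\big(\sum_{\text{good}}+\sum_{\text{large bad}}\big)\lVert(UV-A)_{:,i}\rVert_1\ge(1-O(\eps))\big(\lVert UV-A\rVert_1-O(\eps)\lVert UV-A\rVert_1\big)=(1-O(\eps))\lVert UV-A\rVert_1$, and rescaling $\eps$ finishes. I expect the main obstacle to be precisely this last step: since one cannot union-bound the column-wise embedding over all $V$, the residual of the columns where it may fail must be charged to fixed quantities of the optimum $UV^*-A$, and making the $\eps$-bookkeeping tight --- in particular controlling the quantile term $q_{1-\eps/2}$, which per column is already of order $\lVert\cdot\rVert_1/\eps$ --- is the delicate part that dictates the choice of the exponent in~(ii) and the $\tfrac1\eps$ threshold in~\eqref{eq:BadGuys}; it may additionally require the sharper per-column bound $\med\big(S(UV-A)_{:,i}\big)\ge(1-O(\eps))\lVert U(V-V^*)_{:,i}\rVert_1-q_{1-\eps/2}(S(UV^*-A)_{:,i})$ that the same additivity argument yields even without the ``large'' hypothesis.
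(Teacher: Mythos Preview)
Your proposal is correct and follows essentially the same route as the paper: the same good/bad split via hypothesis~(ii), the same large/small dichotomy for bad columns using the threshold of~\eqref{eq:BadGuys}, the identical additivity argument (Claim~\ref{largesketch}) for large bad columns via hypothesis~(i), and the same charging of small bad columns to the fixed matrix $UV^*-A$ via Markov together with hypothesis~(iv). The only place you diverge slightly is in controlling $\sum_{i\in\mathcal B}q_{1-\eps/2}\big(S(UV^*-A)_{:,i}\big)$: the paper does not invoke the per-column tail bound of Lemma~\ref{quantiles}(iii) here but instead runs a second Markov argument parallel to~\eqref{markovbadbound} (since $\Pr[i\text{ bad}]\le\eps^3$ regardless of the value of $q_{1-\eps/2}(S(UV^*-A)_{:,i})$) to obtain $\sum_{i\in\mathcal B}q_{1-\eps/2}(S(UV^*-A)_{:,i})\le O(\eps^3)\,q_{1-\eps/2}(S(UV^*-A))$ directly, and only then applies~(iv); hypothesis~(iii) is in fact not used in this bound. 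Your final speculation about needing a sharper per-column median bound is unnecessary --- the paper closes exactly as you do, using only $\med\ge 0$ on the small bad columns.
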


	\begin{proof}

		We say a column index $i$ is $\it{good}$ if $$\med(S([U \ A_{:,i}]y)) \geq (1 - \eps^3)\lVert [U A_{:,i}]y \rVert_1$$ for all $y \in \R^{k+1}$, and $\it{bad}$ otherwise. We say a bad column index is $\it{large}$ if $$ \eps \lVert (UV - A)_{:,i} \rVert_1 \geq \frac{1}{1 - \eps}q_{1 - \eps / 2}(S(UV^* - A)_{:,i}) + \lVert (UV^* - A)_{:,i} \rVert_1 $$ and $\it{small}$ otherwise.

		By (ii), we know that $\ex{\sum_{\text{bad } i} \lVert (UV^* - A)_{:,i} \rVert_1} \leq \eps^3 \lVert UV^* - A \rVert_1$. By Markov's inequality, we know that with probability $1 - 1/\Omega(1)$, \begin{equation} \label{markovbadbound} \sum_{\text{bad } i} \lVert (UV^* - A)_{:,i} \rVert_1 \leq O(\eps^3) \cdot \lVert UV^* - A \rVert_1. \end{equation}

		%BOUNDING BAD SKETCHES
		By (iii) \begin{align*}(1 + \eps^3) \lVert UV^* - A \rVert_1
			&\geq \med (S(UV^* - A)) \\
			&\geq (1 - \eps^3) \sum_{\text{good } i} \lVert (UV^* - A)_{:,i} \rVert_1  + \sum_{\text{bad } i} \med (S(UV^* - A)_{:,i}) \\
			&\geq (1 - \eps^3) (1 - \Theta(\eps^3)) \lVert UV^* - A \rVert_1  + \sum_{\text{bad } i} \med (S(UV^* - A)_{:,i}), \\
		\end{align*}

		where the second inequality comes from the definition of good, and the third inequality comes from (\ref{markovbadbound}).

		Thus \begin{equation} \label{sketchbadbound} \sum_{\text{bad } i} \med(S(UV^* - A)_{:,i}) \leq O(\eps^3) \lVert UV^* - A \rVert_1 \end{equation}

		For $M > 2$, we say $i$ is $M$-$\it{large}$ if $$ \frac{M}{\eps} < q_{1 - \eps / 2}(S(UV^* - A)_{:,i}) \leq \frac{M + 1}{\eps} $$ and we say $i$ is $\it{tiny}$ if $q_{1 - \eps / 2}(S(UV^* - A)_{:,i}) \leq \frac{2}{\eps}.$

        Observe that \begin{align} \label{Mlargebound}
            \exx{\sum_{M > 2} \frac{M+1}{\eps} \cdot \sum_{M \text{-large } i} \lVert (UV^* - A)_{:,i} \rVert_1} &= \sum_{M > 2} \frac{M+1}{\eps} \cdot \sum_i \lVert (UV^* - A)_{:,i} \rVert_1 \cdot \pr{\text{$i$ is M-large}} \nonumber  \\
                    &\leq \sum_{M > 2} \frac{M+1}{\eps \cdot \exp(\Theta(M) \cdot \text{poly} (k / \eps))} \cdot \sum_i \lVert (UV^* - A)_{:,i} \rVert_1 \nonumber \\
                    &\leq \frac{1}{\eps \cdot \exp(1 / \eps)} \cdot \lVert (UV^* - A) \rVert_1 \cdot \sum_{M > 2} \frac{M+1}{\exp (\Theta (M))} \nonumber \\
                    &\leq O(\eps^2) \cdot \lVert (UV^* - A) \rVert_1
        \end{align} where the first equality follows from linearity of expectation and the first inequality follows from property (iv).

        We can deduce that with probability $1 - 1/\Omega(1)$, \begin{align} \label{quantilebadbound}
            \sum_{\text{bad } i} q_{1 - \eps / 2} (S(UV^* - A)_{:,i}) &= \sum_{\text{bad, tiny } i} q_{1 - \eps / 2} (S(UV^* - A)_{:,i}) \nonumber \\ &\indent + \sum_{M > 2}  \sum_{\text{bad, $M$-large } i} q_{1 - \eps / 2} (S(UV^* - A)_{:,i}) \nonumber \\
                &\leq \sum_{\text{bad } i} \frac{2}{\eps} \cdot \lVert (UV^* - A)_{:,i} \rVert_1 + \sum_{M > 2} \frac{M+1}{\eps} \cdot \sum_{\text{$M$-large } i} \lVert (UV^* - A)_{:,i} \rVert_1 \nonumber \\
                &\leq O(\eps^2) \cdot \lVert (UV^* - A) \rVert_1
        \end{align} where the first equality and inequality follow from the definitions of tiny and $M$-large, and the last inequality follows from (\ref{markovbadbound}) and (\ref{Mlargebound}) with Markov's inequality.

		%bounding small columns
		We have \begin{align} \label{smallbound}
			\sum_{\text{small } i} \lVert (UV - A)_{:,i} \rVert_1 &\leq \frac{1}{\eps(1 - \eps)} \sum_{\text{small } i} q_{1 - \eps / 2} (S(UV^* - A)_{:,i}) + \frac{1}{\eps} \sum_{\text{small } i} \lVert (UV^* - A)_{:,i} \rVert_1 \nonumber \\
			&\leq \frac{1}{\eps(1 - \eps)} \sum_{\text{bad } i} q_{1 - \eps / 2} (S(UV^* - A)_{:,i}) + O(\eps^2) \lVert UV^* - A \rVert_1 \nonumber \\
			&\leq O(\eps) \lVert UV^* - A \rVert_1  + O(\eps^2) \lVert UV^* - A \rVert_1 \nonumber \\
			&\leq O(\eps)  \lVert UV^* - A \rVert_1
		\end{align} where the first inequality comes from the definition of small, the second inequality comes from (\ref{markovbadbound}) and the fact that small columns are bad columns, and the third inequality comes from (\ref{quantilebadbound}).

		% QUANTILE ARGUMENT
		\begin{claim} \label{largesketch}
			$$\sum_{\text{large } i} \med (S(UV - A)_{:,i}) \geq (1 - O(\eps)) \sum_{\text{large } i} \lVert (UV - A)_{:,i} \rVert_1$$
		\end{claim}

		\begin{proof}
			Let $i$ be large. We can write $S(UV - A)_{:,i} = SU(V - V^*)_{:,i} + S(UV^* - A)_{:,i}$.

			By (i), we know at least $\frac{1}{2} + \eps$ entries of $SU(V - V^*)_{:,i}$ are larger than $(1 - O(\eps)) \lVert U(V - V^*)_{:,i} \rVert_1$ which is at least $$(1 - O(\eps)) (\lVert (UV - A)_{:,i} \rVert_1 - \lVert (UV^* - A)_{:,i} \rVert_1)$$ by the triangle inequality. By the definition of large, this is at least $$(1 - O(\eps)) ((1 - \eps) \lVert (UV - A)_{:,i} \rVert_1 + \left( \frac{1}{1- \eps} \right) q_{1 - \eps / 2}(S(UV^* - A)_{:,i}))$$ or $$ (1 - O(\eps))^2 \lVert (UV - A)_{:,i} \rVert_1 + q_{1 - \eps / 2}(S(UV^* - A)_{:,i}).$$

			By definition, less than an $\eps / 2$ fraction of the entries of $S(UV^* - A)_{:,i}$ have an absolute value greater than $q_{1 - \eps / 2}(S(UV^* - A)_{:,i})$ so at least half of the entries of $S(UV - A)_{:,i}$ are greater than $(1- O(\eps))^2 \lVert (UV - A)_{:,i} \rVert_1$. The result follows.
		\end{proof}

		%FINAL INEQUALITIES
		Finally \begin{align*}
			\med (S(UV - A)) &\geq \sum_{\text{good } i} \med (S(UV - A)_{:,i}) + \sum_{\text{large } i} \med (S (UV - A)_{:,i}) \\
			&\geq (1 - \eps^3) \sum_{\text{good } i} \lVert (UV - A)_{:,i} \rVert_1 + (1 - O(\eps)) \sum_{\text{large } i} \lVert (UV - A)_{:,i} \rVert_1 \\
			&\geq (1 - O(\eps)) \lVert UV - A \rVert_1 - (1 - O(\eps)) \sum_{\text{small } i} \lVert (UV - A)_{:,i} \rVert_1 \\
			&\geq (1 - O(\eps)) \lVert UV - A \rVert_1 - (1 - O(\eps)) O(\eps) \lVert UV^* - A \rVert_1 \\
			&\geq (1 - O(\eps)) \lVert UV - A \rVert_1
		\end{align*} where the first inequality occurs because large $i$ are bad $i$, the second inequality comes from the definition of good and Claim \ref{largesketch}, the third inequality comes from the definition of small, the fourth inequality comes from (\ref{smallbound}), and the last inequality holds because $V^*$ is a minimizer.
	\end{proof}

	\begin{corollary} \label{PTAS:p1}
		Let $A$ be an $n \times d$ matrix with integer entries bounded by $\poly(n)$ and let $k$ be a constant. There is a PTAS for finding the closest rank $k$ matrix to $A$ in the entrywise $\ell_1$ norm.
	\end{corollary}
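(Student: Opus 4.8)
The plan is to run Algorithm~\ref{alg:p1} with $S$ an $m\times n$ matrix of i.i.d.\ standard Cauchy entries where $m=\poly(k/\eps)$, and to show that with constant probability this $S$ satisfies the hypotheses of Theorem~\ref{approxalg}; since $k$ is constant, the running time $\poly(n)^{mk}$ is then $n^{\poly(k/\eps)}$, and the returned $U,V$ have $\|UV-A\|_1\le(1+\eps)\opt$ (after rescaling $\eps$ by a constant). Recall Theorem~\ref{approxalg} asks for: (a) $\med(SU^{*}V-SA)\ge(1-\eps)\|U^{*}V-A\|_{1}$ for \emph{all} $V$ with constant probability; (b) $\med(SU^{*}V^{*}-SA)\le(1+\eps)\|U^{*}V^{*}-A\|_{1}$ for the fixed optimal pair $U^{*},V^{*}$ with constant probability; and (c) $\|S\|_{\infty}\le\poly(n)$. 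Item (c) is immediate from Fact~\ref{cauchyfact} and a union bound over the $mn$ entries, and item (b) is exactly the upper bound of Lemma~\ref{fixedbound} applied to the fixed matrix $M=U^{*}V^{*}-A$, which holds with constant probability once $m=\Theta(1/\poly(\eps))$. Note that $S$ is drawn obliviously, i.e.\ without knowledge of $U^{*},V^{*}$; all these properties hold for a random $S$ regardless of the (fixed, unknown) optimum. The real work is item~(a).

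For (a) I would invoke Theorem~\ref{constantsketch} with $U=U^{*}$ (so that its $V^{*}$ is an optimizer of $\|U^{*}V-A\|_1$, consistent with our optimum), checking its four hypotheses with only $m=\poly(k/\eps)$ rows. Hypothesis~(i), the lower-quantile bound $q_{1/2-\eps}(SU^{*}x)\ge(1-\Theta(\eps))\|U^{*}x\|_{1}$ for all $x$, follows from Lemma~\ref{embedding} applied to the (at most) $k$-dimensional column space of $U^{*}$ with constant failure probability, costing $m=O(\eps^{-2}k\log(k/\eps))$ rows. Hypothesis~(ii) is a distributional statement: for each column index $i$, with probability $\ge1-\eps^{3}$ the median embedding holds for the whole $(k{+}1)$-dimensional span of the columns of $[\,U^{*}\ A_{:,i}\,]$, which is Corollary~\ref{medembed} with accuracy and failure parameters $\Theta(\eps^{3})$, costing $m=O(\eps^{-6}k\log(k/\eps))$ rows. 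Hypotheses (iii) and (iv) bound $\med(S(U^{*}V^{*}-A))$ and $q_{1-\eps/2}(S(U^{*}V^{*}-A))$ for the fixed matrix $M=U^{*}V^{*}-A$, and are precisely Lemma~\ref{fixedbound} and Lemma~\ref{quantilemed}, each holding with constant probability for $m=\Theta(1/\poly(\eps))$. Taking $m$ to be the maximum of these $\poly(k/\eps)$ bounds, a union bound over the constant-probability events (i), (iii), (iv), and $\|S\|_{\infty}\le\poly(n)$, with (ii) holding by construction of the Cauchy distribution, shows that with constant probability $S$ meets all hypotheses of Theorem~\ref{constantsketch}; that theorem then yields $\med(SU^{*}V-SA)\ge(1-O(\eps))\|U^{*}V-A\|_{1}$ for arbitrary $V$, which is (a) after rescaling $\eps$.

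The main obstacle is keeping $m$ independent of the ambient dimension $d$: amplifying the per-column embedding of hypothesis~(ii) to failure probability $1/\poly(d)$ so as to union-bound it over all $d$ columns would force $m=\Omega(\log d)$ and hence only a quasipolynomial $n^{\Omega(\log d)}$ running time in Theorem~\ref{approxalg} (as noted in the remark following that theorem). Theorem~\ref{constantsketch} is the device that avoids this: it only requires a $(1-\eps^{3})$-fraction of columns to be ``good,'' and controls the remaining bad columns by splitting them into \emph{small} and \emph{large} ones, bounding the small ones against the fixed optimum as in~\eqref{smallbound} and the large ones via Claim~\ref{largesketch}, using the additivity $S(U^{*}V-A)_{:,i}=SU^{*}(V-V^{*})_{:,i}+S(U^{*}V^{*}-A)_{:,i}$ together with hypothesis~(i). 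So the entire content of the proof of Corollary~\ref{PTAS:p1} is the (routine, but careful) bookkeeping that Lemmas~\ref{embedding}, \ref{fixedbound}, and \ref{quantilemed} simultaneously certify hypotheses (i)--(iv) of Theorem~\ref{constantsketch} for a $\poly(k/\eps)$-row Cauchy sketch; everything else is a direct appeal to Theorem~\ref{approxalg}, whose correctness in turn rests on Claim~\ref{lowerbound} and the well-conditioned-basis discretization already carried out there. The resulting scheme is randomized and succeeds with constant probability, matching the guarantee stated for Algorithm~\ref{alg:p1}.
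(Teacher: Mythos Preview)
Your proposal is correct and follows essentially the same approach as the paper: invoke Theorem~\ref{approxalg} with a $\poly(k/\eps)$-row i.i.d.\ Cauchy sketch $S$, and verify its lower-bound hypothesis by checking conditions (i)--(iv) of Theorem~\ref{constantsketch} via Lemma~\ref{embedding}, Corollary~\ref{medembed}, Lemma~\ref{fixedbound}, and Lemma~\ref{quantilemed} respectively. Your write-up is more detailed than the paper's (in particular your discussion of why the per-column embedding cannot simply be union-bounded over $d$ columns is exactly the motivation behind Theorem~\ref{constantsketch}), but the logical structure is identical.
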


	\begin{proof}
		Let $U^* \in \R^{n \times k}, V^* \in \R^{k \times d}$ be minimizers for $\lVert U^* V^* - A \rVert_1$. It suffices to prove that an $m \times n$ ($m = \Theta(\text{poly}(\frac{k}{\eps}))$) matrix $S$ with i.i.d. standard Cauchy entries satisfies the conditions of Theorem \ref{constantsketch} with $U = U^*$, then use Theorem \ref{approxalg}.

		Indeed, $S$ satisfies (i) through Lemma \ref{embedding} and (ii) with probability $1 - 1/\Omega(1)$ through Corollary \ref{medembed}. $S$ satisfies (iii) with probability $1 - 1/\Omega(1)$ via Lemma \ref{fixedbound} and (iv) via Lemma \ref{quantiles}.
	\end{proof}

	% big p case
	\subsection{\texorpdfstring{$1 < p < 2$}{1 < p < 2}} \label{subsec:bigp}
	We can extend these $\ell_1$ results to $\ell_p$ for $1 < p < 2$ by using $p$-stable variables (with scale 1) instead of Cauchy variables (or 1-stable variables). These have the property that if $x \in \R^n$ and $Z, Z_i$ are i.i.d $p$-stable variables (for $i = 1, \ldots, n$) then $\sum_{i=1}^n x_i Z_i \sim \lVert x \rVert_p Z$.

	\begin{definition}
		We let $\med_p$ denote the median of the absolute value of a $p$-stable variable.
	\end{definition}

	There is no convenient closed form expression for $\med_p$ unless $p = 1$, in which case $\med_1 = 1$. However, in Appendix A.2 of \cite{KNW10} it is shown that a $1 \pm \eps$ approximation of $\med_p$ can be computed efficiently. Since we are only interested in $\eps$ approximations, then this will suffice for our purposes. Our main sketch will be $\med \left( \frac{(Sx)}{\med_p} \right)$ ($S$ has i.i.d $p$-stable entries with scale 1) which will concentrate around $(1 \pm \eps) \lVert x \rVert_p$.

	We can cite similar concentration / tail bounds for $p$-stable variables like the ones we used for Cauchy variables. We can also state a series of claims analagous to the ones we used in the $\ell_1$ case.

	\begin{fact} \label{pstablefact}
		If $Z$ is a $p$-stable variable with scale $\gamma$, then \begin{enumerate}
			\item For $\tau > 1$, $\pr{\left| Z \right| > \tau \gamma \med_p} \leq \Theta (\frac{1}{\tau^p})$
			\item For small $\eps > 0$, $\pr{\left| Z \right| > (1 + \eps) \gamma \med_p} < \frac{1}{2} - \Theta (\eps)$
			\item For small $\eps > 0$, $\pr{\left| Z \right| < (1 - \eps) \gamma \med_p} < \frac{1}{2} - \Theta (\eps)$
		\end{enumerate}
	\end{fact}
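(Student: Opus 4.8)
The plan is to reduce all three bounds to two classical analytic facts about the density of the standard (scale-$1$) symmetric $p$-stable law. By the scaling property of stable laws, a $p$-stable variable $Z$ with scale $\gamma$ is distributed as $\gamma Z_1$, where $Z_1$ is standard (scale-$1$) symmetric $p$-stable; hence the events $\{|Z| > \tau\gamma\med_p\}$, $\{|Z| > (1+\eps)\gamma\med_p\}$, $\{|Z| < (1-\eps)\gamma\med_p\}$ coincide with the same events for $Z_1$ with the factor $\gamma$ removed, and it suffices to treat $\gamma = 1$. Let $f_p$ denote the density of $Z_1$, which exists for every $p \in (0,2]$ as the Fourier inverse of $e^{-|t|^p}$. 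I will use: (a) $f_p$ is even, continuous, and strictly positive on all of $\R$; and (b) for $p \in (1,2)$ there is a constant $C_p < \infty$ with $f_p(x) \le C_p(1+|x|)^{-(p+1)}$ for all $x$ (the usual power-law tail of stable laws together with boundedness of $f_p$). Since $f_p > 0$, the cdf of $|Z_1|$ is strictly increasing, so by the very definition of $\med_p$ we have $\pr{|Z_1| > \med_p} = \tfrac{1}{2}$ exactly.

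For part~1, I would write $\pr{|Z_1| > s} = 2\int_s^\infty f_p(x)\,dx$ and bound it via (b) by $2C_p\int_s^\infty (1+x)^{-(p+1)}\,dx \le \tfrac{2C_p}{p}\,s^{-p}$, valid for every $s \ge 0$. Substituting $s = \tau\,\med_p$ gives $\pr{|Z| > \tau\gamma\med_p} = \pr{|Z_1| > \tau\,\med_p} \le \tfrac{2C_p}{p\,(\med_p)^p}\,\tau^{-p}$, and since the prefactor depends only on $p$ this is $\Theta(\tau^{-p})$, as claimed.

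For parts~2 and~3 I would combine (a) with the exact median identity. By continuity and strict positivity of $f_p$ at $\med_p$, there are constants $\eps_0 > 0$ and $c > 0$ with $f_p(x) \ge c$ for all $x \in [(1-\eps_0)\med_p,\,(1+\eps_0)\med_p]$. Then for every $\eps \le \eps_0$,
\[ \pr{|Z_1| > (1+\eps)\med_p} = \tfrac{1}{2} - 2\!\int_{\med_p}^{(1+\eps)\med_p}\!\! f_p(x)\,dx \;\le\; \tfrac{1}{2} - 2c\,\eps\,\med_p \;=\; \tfrac{1}{2} - \Theta(\eps), \]
which is part~2, and the identical argument on the interval $[(1-\eps)\med_p,\med_p]$ using $\pr{|Z_1| < (1-\eps)\med_p} = \tfrac{1}{2} - 2\int_{(1-\eps)\med_p}^{\med_p} f_p(x)\,dx$ gives part~3.

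I expect the only step that needs care is making part~1 hold \emph{uniformly} for all $\tau > 1$ rather than only asymptotically as $\tau \to \infty$; this is exactly what the global density estimate~(b) provides, and it is standard, but worth stating since the constant $C_p$ (hence the hidden $\Theta(\cdot)$) degrades as $p \to 2$ while staying finite for each fixed $p \in (1,2)$. As a sanity check, $p = 1$ is the Cauchy density $f_1(x) = 1/(\pi(1+x^2))$ with $\med_1 = 1$, for which all three bounds can be verified directly and recover Fact~\ref{cauchyfact}.
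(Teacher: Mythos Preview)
Your argument is correct. The paper does not prove Fact~\ref{pstablefact} at all; it is stated as a known fact about $p$-stable laws (the direct analogue of Fact~\ref{cauchyfact} for Cauchy variables, which is likewise unproved). Your reduction to scale~$1$, the global density bound $f_p(x)\le C_p(1+|x|)^{-(p+1)}$ for the tail, and the positivity/continuity of $f_p$ at $\med_p$ for the local estimates constitute a clean and standard justification. One remark: the power-law tail bound~(b) in fact holds for all $p\in(0,2)$, not only $p\in(1,2)$, so your proof extends verbatim to the range $0<p<1$ where the paper later reuses these quantile estimates (via Lemma~\ref{bigpquantiles}).
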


	\begin{lem} \label{pquantiles}
		Let $S \in \R^{m \times n}$ have entries that are i.i.d. p-stable variables with scale 1 and let $x \in R^n$. Then \begin{enumerate}
			\item $\pr{q_{\frac{1}{2} - \Theta(\eps)} (Sx) < (1 - \eps) \lVert x \rVert_p \med_p} < \exp (-\Theta(\eps^2) m)$
			\item $\pr{q_{\frac{1}{2} + O (\eps)} (Sx) > (1 + \eps) \lVert x \rVert_p \med_p} < \exp (-\Theta(\eps^2) m)$
			\item For $M > 3$, $\pr{q_{1 - \frac{\eps}{2}} (Sx) > \frac{M}{\eps} \lVert x \rVert_p \med_p} < \exp (- \Theta (\eps) M m)$
			\item For $M > 3$, $\pr{\med (Sx) > M \lVert x \rVert_p \med_p} < \exp (- \Theta(m) M)$
		\end{enumerate}
	\end{lem}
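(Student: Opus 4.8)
The plan is to establish Lemma~\ref{pquantiles} by repeating, essentially verbatim, the proof of Lemma~\ref{quantiles}, with the Cauchy facts replaced by their $p$-stable analogues (Fact~\ref{pstablefact}) and the normalization constant $\med_p$ carried through every inequality. Throughout I take $p \in (1,2)$, as in this subsection.

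First I would fix $x \in \mathbb{R}^n$ and invoke $p$-stability: each coordinate $(Sx)_i = \sum_{j} S_{i,j} x_j$ is distributed as $\lVert x \rVert_p \cdot Z$ for a standard ($\mathrm{scale}=1$) $p$-stable variable $Z$, so $|(Sx)_i|$ has median $\lVert x \rVert_p \med_p$, the coordinates of $Sx$ are i.i.d., and the rescaled variables $Y_i := (Sx)_i / (\lVert x \rVert_p \med_p)$ are i.i.d.\ with $|Y_i|$ of median $1$. Since $q_\alpha$ is positively homogeneous, $q_\alpha(Sx) = \lVert x \rVert_p \med_p \cdot q_\alpha(Y)$, so it suffices to prove the four bounds for $Y$ with the threshold multiplier $\lVert x \rVert_p \med_p$ stripped off. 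From Fact~\ref{pstablefact}(2)--(3) we get $\pr{|Y_i| > 1+\eps}, \pr{|Y_i| < 1-\eps} < \tfrac12 - \Theta(\eps)$, and from Fact~\ref{pstablefact}(1), $\pr{|Y_i| > \tau} \le \Theta(1/\tau^p)$ for every $\tau > 1$; since $p > 1$ and $\tau > 1$ this is at most the corresponding Cauchy tail $\Theta(1/\tau)$. These are precisely the single-coordinate estimates used in the proof of Lemma~\ref{quantiles} for the (normalized) Cauchy sketch.

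Then the four parts follow exactly as there. For (i) the event $\{q_{1/2-\Theta(\eps)}(Sx) < (1-\eps)\lVert x \rVert_p\med_p\}$ is the event that more than a $(\tfrac12-\Theta(\eps))$-fraction of the $Y_i$ satisfy $|Y_i| < 1-\eps$, an event of probability $< \tfrac12 - \Theta(\eps)$ per coordinate, so the same Chernoff-bound computation as in Lemma~\ref{quantiles} gives probability $\exp(-\Theta(\eps^2)m)$; part (ii) is symmetric. For (iii), $\pr{|Y_i| > M/\eps} \le \Theta((\eps/M)^p)$, which for $M > 3$ and small $\eps$ is far below $\eps/2$, so the identical Chernoff estimate bounding the probability that an $\eps/2$-fraction of coordinates exceed $M/\eps$ gives $\exp(-\Theta(\eps)Mm)$; for (iv), $\pr{|Y_i| > M} \le \Theta(1/M^p) < 1/2$ for $M > 3$, and bounding the probability that at least half the coordinates exceed $M$ gives $\exp(-\Theta(m)M)$. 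Rescaling by $\lVert x \rVert_p \med_p$ yields the stated inequalities.

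Because the $p$-stable upper tail $1/\tau^p$ is pointwise dominated by the Cauchy tail $1/\tau$ for $\tau \ge 1$, no genuinely new probabilistic input is needed and I anticipate no real obstacle; the only bookkeeping point is that the constant hidden in $\Theta(1/\tau^p)$ in Fact~\ref{pstablefact}(1) forces the hypothesis ``$M > 3$'' here in place of ``$M > 2$'' in Lemma~\ref{quantiles}, but this is immaterial since Lemma~\ref{pquantiles} is only ever applied with a large constant $M$.
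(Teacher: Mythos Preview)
Your proposal is correct and matches the paper's approach exactly: the paper's proof consists of the single sentence ``The proof follows the same structure as the proof for Lemma~\ref{quantiles}. We use Fact~\ref{pstablefact} in combination with Chernoff bounds,'' and you have faithfully unpacked precisely that argument, including the normalization by $\med_p$, the domination of the $p$-stable tail by the Cauchy tail for $\tau \ge 1$, and the reason for the shift from $M>2$ to $M>3$.
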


	\begin{proof}
		The proof follows the same structure as the proof for Lemma \ref{quantiles}. We use Fact \ref{pstablefact} in combination with Chernoff bounds.
	\end{proof}

	Since $1 < p$, then we can take advantage of Minkowski's inequality and use the triangle inequality with $\lVert \cdot \rVert_p$.

	\begin{lem} \label{pembedding}
		Let $X \subset \R^n$ be a $k$-dimensional space and $\eps, \delta > 0$. Let $S$ have $O(\frac{1}{\eps^2}k \log \frac{k}{\eps \delta})$ rows, $n$ columns, and i.i.d. $p$-stable entries with scale 1. Then with probability at least $1 - \Theta(\delta)$, for all $x \in X$, $$(1 - \Theta(\eps)) \lVert x \rVert_p \leq q_{\frac{1}{2} - \eps} (Sx / \med_p) \leq q_{\frac{1}{2} + \eps} (Sx / \med_p) \leq (1 + O(\eps)) \lVert x \rVert_p $$
	\end{lem}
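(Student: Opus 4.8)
The plan is to mirror the proof of Lemma~\ref{embedding} line for line, substituting the $p$-stable tail estimates of Fact~\ref{pstablefact} and Lemma~\ref{pquantiles} for their Cauchy counterparts, and using that for $p\ge 1$ the quantity $\lVert\cdot\rVert_p$ is a genuine norm. First I would fix an $(\eps\delta/k^{3})$-net $N$ of the intersection of $X$ with the unit $\ell_p$-ball; since $X$ is $k$-dimensional, $\lvert N\rvert=\exp(O(k\log(k/(\eps\delta))))$. For a fixed $y\in N$, part~(i) of Lemma~\ref{pquantiles} gives $q_{1/2-\Theta(\eps)}(Sy/\med_p)\ge(1-\eps)\lVert y\rVert_p$ except with probability $\exp(-\Theta(k\log(k/(\eps\delta))))$, using that $S$ has $m=\Theta(\eps^{-2}k\log(k/(\eps\delta)))$ rows; a union bound over $N$ makes this hold simultaneously for every net point with probability $1-\Theta(\delta)$, and part~(ii) symmetrically gives $q_{1/2+\Theta(\eps)}(Sy/\med_p)\le(1+\eps)\lVert y\rVert_p$ for all $y\in N$ on the same good event.

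Next I would control the sketch off the net. Let $X'$ be an Auerbach basis of $X$ with respect to $\lVert\cdot\rVert_p$ (\cite{MM13}): each column has $\lVert\cdot\rVert_p=1$ and $\lVert z'\rVert_\infty\le\lVert X'z'\rVert_p$ for all $z'$. Each entry of $SX'$ is a $p$-stable variable of scale $1$, so by part~(i) of Fact~\ref{pstablefact} it exceeds $\Theta((k^{2}/\delta)^{1/p})\med_p$ with probability $O(\delta/k^{2})$; a union bound over its $\le mk$ entries gives $\lVert SX'\rVert_\infty\le\poly(k/\delta)$ with probability $1-\Theta(\delta)$ (here $\med_p=\Theta(1)$). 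Then for arbitrary $z\in X$, writing $z=X'z'$ and chaining $\lVert z'\rVert_1\le k\lVert z'\rVert_\infty\le k\lVert X'z'\rVert_p$, we get $\lVert Sz\rVert_\infty\le\lVert SX'\rVert_\infty\lVert z'\rVert_1\le\poly(k/\delta)\cdot\lVert z\rVert_p$. Given any unit-$\ell_p$ vector $x\in X$, decompose $x=y+z$ with $y\in N$ and $\lVert z\rVert_p\le\eps\delta/k^{3}$, so $\lVert Sz/\med_p\rVert_\infty\le O(\eps)$. Since $Sx=Sy+Sz$, every coordinate of $Sx/\med_p$ lies within $O(\eps)$ of the corresponding coordinate of $Sy/\med_p$; hence at least a $(1/2+\Theta(\eps))$-fraction of the coordinates of $Sx/\med_p$ have absolute value at least $1-O(\eps)$, giving $q_{1/2-\Theta(\eps)}(Sx/\med_p)\ge 1-O(\eps)$. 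Rescaling $x$ by $\lVert x\rVert_p$ and $\eps$ by an absolute constant yields the left inequality for all $x\in X$, and the right inequality follows from the symmetric argument using the upper net bound and the same $\lVert Sz\rVert_\infty$ control.

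The step requiring the most care is the off-net estimate: the $p$-stable tail is only polynomially decaying, with the exponent $1/p$ entering the threshold, so one must check that a bare union bound over the entries of $SX'$ still suffices --- it does, because we only need $\lVert SX'\rVert_\infty$ bounded by a polynomial in $k/\delta$, not any moment. The other places $p\ge 1$ is genuinely used are that $\lVert\cdot\rVert_p$ is a norm (so an Auerbach basis exists and the norm-equivalence constant between $\lVert\cdot\rVert_1$ and $\lVert\cdot\rVert_\infty$ on $X'$-coordinates is finite) and that the perturbation argument composes coordinatewise; this is exactly why the case $0<p<1$ is treated separately in Subsection~\ref{subsec:smallp}. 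Finally, since $\med_p$ has no closed form for $p\ne 1$, I would invoke the efficient $(1\pm\eps)$-approximation of $\med_p$ from Appendix~A.2 of \cite{KNW10}; using it in place of $\med_p$ perturbs every estimate above by a further $(1\pm\eps)$ factor, which is absorbed into the $\Theta(\eps)$ slack.
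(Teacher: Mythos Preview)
Your proposal is correct and essentially identical to the paper's approach: the paper simply states that the proof follows the structure of Lemma~\ref{embedding}, using Fact~\ref{pstablefact} and the $\ell_p$ ball in place of their Cauchy/$\ell_1$ counterparts, which is exactly what you do. The only nominal difference is that the paper invokes ``$p$-well conditioned bases'' from \cite{DDHKM09} for the off-net step where you use an Auerbach basis with respect to $\lVert\cdot\rVert_p$; an Auerbach basis is a particular (in fact optimal) well-conditioned basis, so this is not a substantive distinction.
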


	\begin{proof}
		The proof follows the same structure as the proof for Lemma \ref{embedding} except we use Fact \ref{pstablefact} and $p$-well conditioned bases (\cite{DDHKM09}) to bound $\lVert (Sz) / \med_p \rVert_\infty$ for any $z \in X$. We also use the $\ell_p$ ball (which is still convex) instead of the $\ell_1$ ball.
	\end{proof}

	This automatically gives us the following corollary.
	\begin{corollary} \label{pmedembed}
		Let $X \subset \R^n$ be a $k$-dimensional space and $\eps, \delta > 0$. Let $S$ have $O(\frac{1}{\eps^2}k \log \frac{k}{\eps \delta})$ rows, $n$ columns, and i.i.d. $p$-stable entries with scale $1$. With probability at least $1 - \Theta(\delta)$, for all $x \in X$, $$(1 - \eps) \lVert x \rVert_p \leq \med(Sx / \med_p) \leq (1 + \eps) \lVert x \rVert_p$$
	\end{corollary}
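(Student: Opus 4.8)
The plan is to obtain this as an immediate consequence of Lemma~\ref{pembedding}. The key observation is that the median equals the quantile $\med(v) = q_{1/2}(v)$, which sits between the two quantiles that Lemma~\ref{pembedding} already controls: since $\tfrac12 - \eps \le \tfrac12 \le \tfrac12 + \eps$, the definition of $q_\alpha$ (a larger target fraction can only select a value at least as large among $|v_1|,\dots,|v_m|$) gives the sandwich $q_{\frac12 - \eps}(v) \le \med(v) \le q_{\frac12 + \eps}(v)$ for every $v \in \R^m$.

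First I would invoke Lemma~\ref{pembedding} with the same subspace $X$, the same $\eps,\delta$, and the same sketch $S$ (with $O(\eps^{-2} k \log(k/(\eps\delta)))$ rows and i.i.d.\ scale-$1$ $p$-stable entries). This produces an event of probability at least $1 - \Theta(\delta)$ on which, simultaneously for all $x \in X$,
\[
(1 - \Theta(\eps))\, \lVert x \rVert_p \;\le\; q_{\frac12 - \eps}(Sx/\med_p) \quad\text{and}\quad q_{\frac12 + \eps}(Sx/\med_p) \;\le\; (1 + O(\eps))\, \lVert x \rVert_p .
\]
Applying the sandwich with $v = Sx/\med_p$ then yields $(1-\Theta(\eps))\lVert x\rVert_p \le \med(Sx/\med_p) \le (1+O(\eps))\lVert x\rVert_p$ for all $x \in X$ on this event.

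The only remaining point is cosmetic: to replace the $(1-\Theta(\eps))$ and $(1+O(\eps))$ factors by the clean $(1\pm\eps)$ in the statement, I would apply the above with $\eps' = c\eps$ for a small enough absolute constant $c$ dominating the hidden constants; since the row count scales only by $O(1/c^2)$, the bound $O(\eps^{-2}k\log(k/(\eps\delta)))$ is unchanged up to constants and the failure probability is still $\Theta(\delta)$. There is no real obstacle here — the entire content lives in Lemma~\ref{pembedding} (and, underneath it, in the $p$-stable analogues Fact~\ref{pstablefact} and Lemma~\ref{pquantiles}); the only thing worth double-checking is that the definition of $q_\alpha$ makes it nondecreasing in $\alpha$, which is exactly what licenses the sandwich. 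This mirrors how Corollary~\ref{medembed} was deduced from Lemma~\ref{embedding} in the Cauchy case.
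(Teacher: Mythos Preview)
Your proposal is correct and matches the paper's approach exactly: the paper states this corollary immediately after Lemma~\ref{pembedding} with the remark ``This automatically gives us the following corollary,'' providing no further argument. Your sandwich $q_{1/2-\eps} \le \med \le q_{1/2+\eps}$ via monotonicity of quantiles, followed by rescaling $\eps$ to absorb the hidden constants, is precisely the intended one-line deduction.
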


	We also have an analogous version of our bound on fixed matrices. The proof structure is the same as that of Lemma \ref{fixedbound}.
	\begin{lem} \label{pfixedbound}
		Let $S$ be an $m \times n$ matrix ($m = \Theta(1 / \text{poly}(\eps))$) with i.i.d. standard $p$-stable entries and let $M$ be an $n \times d$ matrix. For $\eps > 0$, with probability $1 - 1/\Omega(1)$,
		$$(1 - \eps) \cdot \lVert M \rVert_p \leq \left(\sum_i \med(SM_{:,i})^p\right)^{1 / p} \Big/ \med_p \leq (1 + \eps) \cdot \lVert M \rVert_p $$
	\end{lem}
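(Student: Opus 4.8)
The plan is to follow the proof of Lemma~\ref{fixedbound} essentially verbatim, replacing standard Cauchy variables by standard $p$-stable variables and the tail/quantile estimates of Lemma~\ref{quantiles} by those of Lemma~\ref{pquantiles}, with one genuinely new wrinkle: the quantity of interest is the $\ell_p$-aggregate $\big(\sum_i \med(SM_{:,i})^p\big)^{1/p}/\med_p$, so throughout one must track the $p$-th powers $\med(SM_{:,i})^p$ rather than the raw medians, the target being $\|M\|_p = \big(\sum_i \|M_{:,i}\|_p^p\big)^{1/p}$. Since each $(SM_{:,i})_j$ is $p$-stable of scale $\|M_{:,i}\|_p$, the median of $|(SM_{:,i})_j|/\med_p$ equals $\|M_{:,i}\|_p$, and by Lemma~\ref{pquantiles}(i)--(ii) with $m=\Theta(1/\poly(\eps))$ we get, for each fixed $i$, $\med(SM_{:,i}) = (1\pm\eps)\|M_{:,i}\|_p\,\med_p$ with probability at least $1-\Theta(\eps)$.

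For the lower bound, call $i$ \emph{good} if $\med(SM_{:,i}) \ge (1-\eps)\|M_{:,i}\|_p\,\med_p$. Then $\Ex\big[\sum_{\text{bad }i}\|M_{:,i}\|_p^p\big] \le \Theta(\eps)\|M\|_p^p$, so Markov's inequality gives $\sum_{\text{good }i}\|M_{:,i}\|_p^p \ge (1-\Theta(\eps))\|M\|_p^p$ with probability $1-1/\Omega(1)$, whence $\sum_i \med(SM_{:,i})^p \ge (1-\eps)^p\,\med_p^p\sum_{\text{good }i}\|M_{:,i}\|_p^p \ge (1-\Theta(\eps))\,\med_p^p\|M\|_p^p$; taking $p$-th roots and using concavity of $t\mapsto t^{1/p}$ (so $(1-x)^{1/p}\ge 1-x$) yields the left-hand inequality after dividing by $\med_p$.

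For the upper bound, classify columns exactly as in Lemma~\ref{fixedbound}: $i$ is \emph{small} if $\med(SM_{:,i}) < (1+\eps)\|M_{:,i}\|_p\,\med_p$, and \emph{$k$-large} (for $k\ge 1$) if $(k+1+\eps)\|M_{:,i}\|_p\,\med_p > \med(SM_{:,i}) \ge (k+\eps)\|M_{:,i}\|_p\,\med_p$. The small columns contribute at most $(1+\eps)^p\,\med_p^p\|M\|_p^p$. For $k\ge 3$, Lemma~\ref{pquantiles}(iv) gives $\Pr[i \text{ is } k\text{-large}] \le \exp(-\Theta(m)k)$, so $\Ex\big[\sum_{k\ge 3}(k+1+\eps)^p\sum_{k\text{-large }i}\|M_{:,i}\|_p^p\big] \le \|M\|_p^p\sum_{k\ge 3}(k+1+\eps)^p\exp(-\Theta(m)k)$, and since $p<2$ the polynomial factor is dominated by the exponential, so the series converges and is at most $O(\eps)\|M\|_p^p$ for $m=\Theta(1/\poly(\eps))$. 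For $k\in\{1,2\}$, a $k$-large column satisfies $\med(SM_{:,i}) \ge (1+\eps)\|M_{:,i}\|_p\,\med_p$, an event of probability at most $\Theta(\eps)$, so the same Markov argument bounds the corresponding sum by $O(\eps)\|M\|_p^p$. Adding up, $\sum_i \med(SM_{:,i})^p \le (1+O(\eps))\,\med_p^p\|M\|_p^p$; taking $p$-th roots (using $(1+x)^{1/p}\le 1+x$), dividing by $\med_p$, union bounding over the $O(1)$ Markov events, and rescaling $\eps$ by a constant completes the proof.

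The step I expect to require the most care is the upper bound's treatment of the $k$-large columns: compared with the $\ell_1$ setting of Lemma~\ref{fixedbound}, where a $k$-large column contributes only linearly in $k$, the $p$-th power inflates this to $(k+1+\eps)^p$, so one must verify that the exponentially decaying tail of Lemma~\ref{pquantiles}(iv)---available only when the threshold parameter exceeds $3$, i.e.\ for $k\ge 3$---still dominates this polynomial blow-up, and separately dispatch the $k\in\{1,2\}$ cases through the cruder $\Theta(\eps)$-probability estimate, all while keeping the passage between $p$-th powers and $p$-th roots controlled via the concavity bounds $(1-x)^{1/p}\ge 1-x$ and $(1+x)^{1/p}\le 1+x$ valid for $p\ge 1$.
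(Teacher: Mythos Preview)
Your proposal is correct and follows essentially the same approach the paper intends: the paper does not write out a separate proof for Lemma~\ref{pfixedbound} but simply states that the proof structure is the same as that of Lemma~\ref{fixedbound}, and you have carried out exactly that adaptation, correctly tracking $p$-th powers and invoking Lemma~\ref{pquantiles} in place of Lemma~\ref{quantiles}. One small remark: the domination of $(k+1+\eps)^p$ by $\exp(-\Theta(m)k)$ holds for any fixed $p$, not specifically because $p<2$, so that parenthetical is unnecessary.
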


	Finally, we have an $\ell_p$ form of Theorem \ref{constantsketch} and it is proved analogously.

	\begin{thm}
		Let $U \in \R^{n \times k}, A \in \R^{n \times d}$. Let $V^*$ be chosen to minimize $\lVert UV^* - A \rVert_p$. Suppose $S$ is an $m \times n$ matrix satisfying \begin{enumerate}
			\item $q_{\frac{1}{2} - \eps} (SUx / \med_p) \geq (1 - \Theta(\eps)) \lVert Ux \rVert_p$
			\item For each $i$ with probability at least $1 - \eps^3$, $\med(S[U \ A_{:,i}]x / \med_p) \geq (1 - \eps^3) \lVert [U \ A_{:,i}]x \rVert_p$ for all $x$
			\item $(\sum_i \med(SUV^*_{:,i} - SA_{:,i})^p)^{1 / p} / \med_p \leq (1 + \eps^3) \lVert UV^* - A \rVert_p$
			\item $(\sum_i q_{1 - \eps / 2}(S(UV^* - A)_{:,i})^p)^{1/p} / \med_p \leq O \left( \frac{1}{\eps} \right) \lVert UV^* - A \rVert_p $
		\end{enumerate}

		Then $(\sum_i \med(SUV_{:,i} - SA_{:,i})^p)^{1/p} / \med_p \geq (1 - O(\eps)) \sum_i \lVert UV - A \rVert_p$ for arbitrary $V$.
	\end{thm}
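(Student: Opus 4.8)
The plan is to transcribe the proof of Theorem~\ref{constantsketch} essentially verbatim, replacing every $\ell_1$-tool by its $p$-stable analogue and every appeal to the triangle inequality by Minkowski's inequality, which is available since $p \ge 1$. Concretely, Lemma~\ref{pquantiles} takes the role of Lemma~\ref{quantiles}, Lemma~\ref{pembedding} that of Lemma~\ref{embedding}, Corollary~\ref{pmedembed} that of Corollary~\ref{medembed}, Lemma~\ref{pfixedbound} that of Lemma~\ref{fixedbound}, and Lemma~\ref{pquantilemed} that of Lemma~\ref{quantilemed}; every median or quantile of a $p$-stable sketch of a vector $w$ is compared with $\|w\|_p$ only after dividing by $\med_p$; and column contributions are now aggregated in $p$-th powers, so as to match both $\|M\|_p^p = \sum_i \|M_{:,i}\|_p^p$ and the $\big(\sum_i(\cdot)^p\big)^{1/p}$ shape of the conclusion.

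I would partition the columns exactly as in the $\ell_1$ proof. Call $i$ \emph{good} if $\med(S[U\ A_{:,i}]y/\med_p) \ge (1-\eps^3)\|[U\ A_{:,i}]y\|_p$ for every $y\in\R^{k+1}$, and \emph{bad} otherwise; by hypothesis~(2), $\Pr[i\text{ bad}]\le\eps^3$. Call a bad column $i$ \emph{large} if $\eps\,\|(UV-A)_{:,i}\|_p^p \ge \tfrac{1}{1-\eps}\,q_{1-\eps/2}(S(UV^*-A)_{:,i})^p/\med_p^p + \|(UV^*-A)_{:,i}\|_p^p$, and \emph{small} otherwise. Since the quantities $\|(UV^*-A)_{:,i}\|_p^p$ and $q_{1-\eps/2}(S(UV^*-A)_{:,i})^p$ sum over $i$ to things bounded through hypotheses~(3)--(4), and each $i$ is bad with probability at most $\eps^3$, Markov's inequality yields, with constant probability, $\sum_{\text{bad }i}\|(UV^*-A)_{:,i}\|_p^p \le O(\eps^3)\|UV^*-A\|_p^p$ and $\sum_{\text{bad }i}q_{1-\eps/2}(S(UV^*-A)_{:,i})^p/\med_p^p \le O(\eps^3)\cdot O(\eps^{-p})\|UV^*-A\|_p^p$. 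Feeding these into the small-column inequality, summing, and using that $V^*$ minimizes $\|UV-A\|_p$, I obtain $\sum_{\text{small }i}\|(UV-A)_{:,i}\|_p^p \le O(\eps^{2-p})\|UV-A\|_p^p$.

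For the other columns, the definition of good, applied to the $y\in\R^{k+1}$ whose first $k$ coordinates are $V_{:,i}$ and whose last coordinate is $-1$ (so $[U\ A_{:,i}]y=(UV-A)_{:,i}$), gives directly $\med(S(UV-A)_{:,i}/\med_p)\ge(1-\eps^3)\|(UV-A)_{:,i}\|_p$. For large columns I would prove the $\ell_p$-analogue of Claim~\ref{largesketch}: decompose $S(UV-A)_{:,i}=SU(V-V^*)_{:,i}+S(UV^*-A)_{:,i}$; hypothesis~(1) forces a $(\tfrac12+\eps)$-fraction of entries of $SU(V-V^*)_{:,i}/\med_p$ to be at least $(1-\Theta(\eps))\|U(V-V^*)_{:,i}\|_p\ge(1-\Theta(\eps))\big(\|(UV-A)_{:,i}\|_p-\|(UV^*-A)_{:,i}\|_p\big)$ by the reverse triangle inequality, which the largeness condition upgrades to $(1-\Theta(\eps))\|(UV-A)_{:,i}\|_p+q_{1-\eps/2}(S(UV^*-A)_{:,i})/\med_p$; since at most an $\eps/2$-fraction of entries of $S(UV^*-A)_{:,i}/\med_p$ exceed $q_{1-\eps/2}(S(UV^*-A)_{:,i})/\med_p$, adding the two summands of the decomposition leaves more than half the entries of $S(UV-A)_{:,i}/\med_p$ above $(1-\Theta(\eps))\|(UV-A)_{:,i}\|_p$, i.e.\ $\med(S(UV-A)_{:,i}/\med_p)\ge(1-\Theta(\eps))\|(UV-A)_{:,i}\|_p$. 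Summing the good- and large-column bounds in $p$-th powers, subtracting the small-column slack, invoking $\|UV^*-A\|_p\le\|UV-A\|_p$, and taking a $p$-th root then gives $\big(\sum_i\med(S(UV-A)_{:,i})^p\big)^{1/p}/\med_p\ge(1-O(\eps^{2-p}))\|UV-A\|_p$.

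The step I expect to be the main obstacle is precisely this last $\eps$-accounting for the small columns. In the $\ell_1$ argument the factor $\eps^{-1}$ from the ``small'' threshold and the factor $\eps^{-1}$ from hypothesis~(iv) of Theorem~\ref{constantsketch} were each absorbed by the $\eps^3$ coming from the bad-column bounds; here, because column norms combine in $p$-th powers, hypothesis~(4) contributes a factor $\eps^{-p}$ rather than $\eps^{-1}$, so with the stated exponent $3$ one only lands at a $(1-O(\eps^{2-p}))$-approximation. To reach $(1-O(\eps))$ uniformly over $p\in(1,2)$ one should read the exponent ``$3$'' in hypotheses~(2)--(3) as any fixed $c\ge p+2$ (so $c=4$ works for every $p<2$), and correspondingly invoke the analogues of Lemmas~\ref{pfixedbound} and \ref{pquantilemed} with $m=\Theta(\mathrm{poly}(1/\eps))$ large enough that the relevant events hold with constant probability. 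With that adjustment the rest of the argument is a mechanical transcription of the proof of Theorem~\ref{constantsketch}.
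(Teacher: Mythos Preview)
Your proposal is correct and is exactly the paper's approach: the paper's own proof consists of the single sentence ``it is proved analogously,'' and what you have written is precisely that analogous transcription of Theorem~\ref{constantsketch}.

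One small remark on your choice of the large/small threshold. Since for $1<p<2$ Minkowski gives the genuine triangle inequality for $\|\cdot\|_p$, the most literal transcription keeps the largeness condition in terms of norms rather than $p$-th powers, i.e.\ $\eps\,\|(UV-A)_{:,i}\|_p \ge \tfrac{1}{1-\eps}\,q_{1-\eps/2}(S(UV^*-A)_{:,i})/\med_p + \|(UV^*-A)_{:,i}\|_p$. With that choice Claim~\ref{largesketch} carries over verbatim with clean $(1-O(\eps))$ factors, because the additive substitution $\|(UV-A)_{:,i}\|_p-\|(UV^*-A)_{:,i}\|_p\ge (1-\eps)\|(UV-A)_{:,i}\|_p+\tfrac{1}{1-\eps}q/\med_p$ works exactly as in $\ell_1$; with your $p$-th-power threshold you instead pick up $(1-O(\eps^{1/p}))$ factors in the large-column bound (your line ``the largeness condition upgrades to \ldots'' is a little quick there, since one cannot take $p$-th roots of a difference). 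Either choice is fine after rescaling $\eps$, and your observation that the stated exponent $3$ in hypotheses~(2)--(3) must be read as a sufficiently large constant (depending on $p$) to absorb the extra $\eps^{-p}$ from hypothesis~(4) is correct and indeed sharper than what the paper makes explicit.
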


	It follows that we have a PTAS for rank $k$ $\ell_p$ low rank approximation.

	\begin{corollary} \label{PTAS:bigp}
		Let $A$ be an $n \times d$ matrix with entries bounded by $\text{poly}(n)$ and let $k$ be a constant. There is a PTAS for finding the closest rank $k$ matrix to $A$ in entrywise $\ell_p$ norm for $1 < p < 2$.
	\end{corollary}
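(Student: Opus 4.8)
The plan is to follow the proof of Corollary~\ref{PTAS:p1} almost verbatim, substituting $p$-stable variables for Cauchy variables. Fix minimizers $U^* \in \R^{n \times k}$, $V^* \in \R^{k \times d}$ of $\lVert UV - A\rVert_p$ and let $S$ be an $m \times n$ matrix with i.i.d.\ standard $p$-stable entries, $m = \Theta(\poly(k/\eps))$. First I would check that, with constant probability, $S$ satisfies conditions (i)--(iv) of the $\ell_p$-version of Theorem~\ref{constantsketch} with $U = U^*$: (i) via Lemma~\ref{pembedding} applied to the column space of $U^*$; (ii) via Corollary~\ref{pmedembed} applied, for each column index $i$, to the span of the columns of $[U^*\ A_{:,i}]$ with a $\poly(\eps)$-size failure parameter; (iii) via Lemma~\ref{pfixedbound} with $M = U^* V^* - A$; and (iv) via Lemma~\ref{pquantilemed} with the same $M$. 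A union bound over these four events, each failing with probability $1/\Omega(1)$, makes the $\ell_p$-analogue of Theorem~\ref{constantsketch} applicable, giving $\bigl(\sum_i \med(S U^* V_{:,i} - S A_{:,i})^p\bigr)^{1/p}/\med_p \ge (1 - O(\eps))\lVert U^* V - A\rVert_p$ for every $V$, together with the matching upper bound $(1 + O(\eps))\lVert U^* V^* - A\rVert_p$ at $V = V^*$ from Lemma~\ref{pfixedbound}.

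Second, I would establish the $\ell_p$-analogue of Theorem~\ref{approxalg}, which is where the bulk of the work lies. The rank-$\le k$ case is solved by Gaussian elimination; otherwise I first note that for $p < 2$ one has $\lVert A - A_k\rVert_p \ge \lVert A - A_k\rVert_F \ge \sigma_{k+1}$, so the proof of Claim~\ref{lowerbound} applies unchanged and yields $\mathrm{OPT} := \lVert U^* V^* - A\rVert_p \ge 1/\poly(n)^k$. Using a $p$-well-conditioned basis of $V^*$ (from~\cite{DDHKM09}) in place of the $\ell_1$-well-conditioned basis, the same reasoning bounds the entries of $U^*$ by $\poly(n)$ and lets me round them to integer multiples of $\poly(\eps/n)^k$ at additive cost $\eps\cdot\mathrm{OPT}$, with Minkowski's inequality (valid since $p \ge 1$) playing the role of the triangle inequality. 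Rounding and bounding the entries of $S$ similarly, I write $S = \tilde S + \Delta$ with $\tilde S U^*$ discretized and bounded and $\lVert \Delta(U^* V - A)\rVert_p = O(\eps)\cdot\mathrm{OPT}$ for all relevant $V$; enumerating all $\poly(n/\eps)^{mk}$ candidates for $\tilde S U^*$ finds the correct one. For a fixed candidate the objective $\sum_i \med(\tilde S U^* V_{:,i} - \tilde S A_{:,i})^p$ splits over columns, so for each $i$ I minimize the linear functional $\med(\tilde S U^* V_{:,i} - \tilde S A_{:,i})$ over $\lVert V_{:,i}\rVert_\infty \le \poly(n/\eps)^k$ by linear programming over the (at most $m!$) orderings of the coordinates, then set $\tilde U := \argmin_U \lVert U\tilde V - A\rVert_p$. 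Chaining the lower bound from the $\ell_p$-version of Theorem~\ref{constantsketch} at $V = \tilde V$ with the upper bound from Lemma~\ref{pfixedbound} at $V = V^*$ (using $\med(\tilde S U^*\tilde V - \tilde S A) \le \med(\tilde S U^* V^* - \tilde S A)$ by optimality of $\tilde V$, together with the $O(\eps)\mathrm{OPT}$ bounds on $\lVert \Delta(U^*\tilde V - A)\rVert_p$ and $\lVert\Delta(U^* V^* - A)\rVert_p$) gives $\lVert \tilde U\tilde V - A\rVert_p \le (1 + O(\eps))\mathrm{OPT}$, hence $\lVert \tilde U\tilde V - A\rVert_p^p \le (1 + O(\eps))\mathrm{OPT}^p$ after rescaling $\eps$. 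The running time is $\poly(n)^{mk} = \poly(n)^{\poly(k/\eps)}$, polynomial for constant $k$.

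The main obstacle I expect is bookkeeping rather than a new idea: one must verify that the discretization and perturbation estimates of Theorem~\ref{approxalg} survive the passage to $p$-th powers and the normalization by $\med_p$ --- in particular that $\med_p$ (which has no closed form for $p \ne 1$ but, by~\cite{KNW10}, can be computed to within $1 \pm \eps$) contributes only an $\eps$-level error --- and that the good/large/small column partition inside the $\ell_p$-version of Theorem~\ref{constantsketch} still works with $\med$ and $q_{1-\eps/2}$ replaced by their $p$-normalized, $p$-th-power-summed analogues. Since $p \ge 1$ gives a genuine norm and hence Minkowski's inequality, each of these checks is routine, and no technique beyond replacing Cauchy by $p$-stable sketches is required. (The case $0 < p < 1$, where the triangle inequality fails, is the genuinely different one and is handled separately in Subsection~\ref{subsec:smallp}.)
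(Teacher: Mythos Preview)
Your proposal is correct and follows essentially the same approach as the paper: substitute $p$-stable variables for Cauchy variables, invoke the $\ell_p$ analogues of Lemmas~\ref{pembedding}--\ref{pquantilemed} to verify the hypotheses of the $\ell_p$ version of Theorem~\ref{constantsketch}, and then rerun Theorem~\ref{approxalg} using $p$-well-conditioned bases. The paper's own proof is a two-sentence sketch that singles out just two ingredients---the existence of $\ell_p$ well-conditioned bases and the convexity of $\ell_p$ regression for $p>1$ (so that $\tilde U=\argmin_U\|U\tilde V-A\|_p$ is polynomial-time computable)---whereas you spell out the bookkeeping in full; but the route is the same.
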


	\begin{proof}
		The algorithm is analogous to Algorithm \ref{alg:p1}. Correctness follows from the fact that there exist $\ell_p$ well-conditioned bases and that $\ell_p$ regression is a convex optimization problem.

		Indeed, if $p > 1$ then $\| UV_{:,i} - A_{:,i} \|_p$ is convex over vectors $V_{:,i}$ and we can calculate minima in polynomial time.
	\end{proof}

	% small p case
	\subsection{\texorpdfstring{$0 < p < 1$}{0 < p < 1}} \label{subsec:smallp}
	For $v \in \R^n$ we will denote $v^p$ to mean we raise each entry of $v$ to the $p$th power, i.e. $(v^p)_i = v_i^p$.

	We can extend these results to $\ell_p$ for $0 < p < 1$ as well, but more care needs to be taken for this range of $p$ because among other issues, $\lVert \cdot \rVert_p$ is no longer a norm. However, $\lVert \cdot \rVert_p^p$ satisfies the triangle inequality which will be enough for our purposes. We will prove that $\med \left( \frac{(Sx)^p}{\med_p^p} \right)$ ($S$ has i.i.d $p$-stable entries) will concentrate around $(1 \pm \eps) \lVert x \rVert_p^p$.

	\begin{lem} \label{bigpquantiles}
		Let $S \in \R^{m \times n}$ have entries that are i.i.d. p-stable variables with scale 1 and let $x \in R^n$. Then \begin{enumerate}
			\item $\pr{q_{\frac{1}{2} - \Theta(\eps)} (Sx)^p < (1 - \eps) \lVert x \rVert_p^p \med_p^p} < \exp (-\Theta(\eps^2) m)$
			\item $\pr{q_{\frac{1}{2} + O (\eps)} (Sx)^p > (1 + \eps) \lVert x \rVert_p^p \med_p^p} < \exp (-\Theta(\eps^2) m)$
			\item For $M > 3$, $\pr{q_{1 - \frac{\eps}{2}} (Sx)^p > \frac{M}{\eps} \lVert x \rVert_p^p \med_p^p} < \exp (- \Theta (\eps) M m)$
			\item For $M > 3$, $\pr{\med (Sx)^p > M \lVert x \rVert_p^p \med_p^p} < \exp (- \Theta(m) M)$
		\end{enumerate}
	\end{lem}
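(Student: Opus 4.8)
The plan is to follow exactly the template of the proofs of Lemma~\ref{quantiles} (for Cauchy variables) and Lemma~\ref{pquantiles} (for $p$-stable variables with $1<p<2$): reduce each of the four statements to a per-coordinate tail estimate coming from $p$-stability of $S$, and then apply a Chernoff bound over the $m$ independent rows of $S$. The single new ingredient, compared with Lemma~\ref{pquantiles}, is a monotone change of variables to absorb the extra $p$-th powers.

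First I would record the relevant distributional facts. Since the rows of $S$ are independent and have i.i.d.\ scale-$1$ $p$-stable entries, $p$-stability gives that each coordinate $(Sx)_i$ is distributed as $\lVert x\rVert_p Z$ for a standard $p$-stable $Z$, independently over $i$; hence $|(Sx)_i|^p$ is distributed as $\lVert x\rVert_p^p\,|Z|^p$. Because $t\mapsto t^p$ is strictly increasing on $[0,\infty)$ and $\med_p$ is, by definition, the median of $|Z|$, the value $\med_p^p$ is the median of $|Z|^p$; thus $\lVert x\rVert_p^p\med_p^p$ is the median of $|(Sx)_i|^p$, the common ``target'' in all four parts. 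The same monotonicity shows $q_\alpha(Sx)^p = (q_\alpha(Sx))^p$ equals the corresponding $\alpha$-quantile of the values $|(Sx)_1|^p,\dots,|(Sx)_m|^p$, so each part is a statement about how many of these $m$ independent values fall below (part (i)) or above (parts (ii)--(iv)) an explicit multiple of their common median.

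Next I would translate each threshold back to $|Z|$ and apply Fact~\ref{pstablefact}. For (i), $|(Sx)_i|^p<(1-\eps)\lVert x\rVert_p^p\med_p^p$ is the event $|(Sx)_i|<(1-\eps)^{1/p}\lVert x\rVert_p\med_p$; since $p\in(0,1)$ is fixed, $(1-\eps)^{1/p}=1-\Theta(\eps)$ for small $\eps$, so part~3 of Fact~\ref{pstablefact} bounds its probability by $\tfrac12-\Theta(\eps)$. For (ii), $(1+\eps)^{1/p}=1+\Theta(\eps)$ and part~2 gives probability $<\tfrac12-\Theta(\eps)$. For (iii) and (iv), the events become $|(Sx)_i|>(M/\eps)^{1/p}\lVert x\rVert_p\med_p$ and $|(Sx)_i|>M^{1/p}\lVert x\rVert_p\med_p$; since $M>3$ makes both $(M/\eps)^{1/p}$ and $M^{1/p}$ exceed $1$, part~1 of Fact~\ref{pstablefact} (with $\tau=(M/\eps)^{1/p}$, resp.\ $\tau=M^{1/p}$, so that $\tau^{-p}$ cancels the $1/p$ exponent) bounds their probabilities by $\Theta(\eps/M)$, resp.\ $\Theta(1/M)$. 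All constants hidden in $\Theta(\cdot)$ depend only on the fixed $p$.

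Finally, the $m$ coordinate events are independent, so I would close with the Chernoff argument exactly as in the proof of Lemma~\ref{quantiles}: in (i),(ii) the expected number of coordinates on the ``bad'' side of the threshold is $(\tfrac12-\Theta(\eps))m$, while the stated quantile event forces a $\Theta(\eps)m$ deviation, yielding $\exp(-\Theta(\eps^2)m)$; in (iii) the expected count is $O(\eps m/M)$ while the $(1-\tfrac\eps2)$-quantile event forces it up to $\tfrac\eps2 m$, yielding $\exp(-\Theta(\eps)Mm)$; in (iv) the expected count is $O(m/M)$ while the median event forces it up to $\tfrac m2$, yielding $\exp(-\Theta(m)M)$. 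I do not expect a genuine obstacle here; the only care needed is checking that the substitutions $t\mapsto t^{1/p}$ preserve the $1\pm\Theta(\eps)$ form of the near-$1$ thresholds and the $\tau^{-p}$ cancellation for the large thresholds, and that the Chernoff parameters are chosen to match those used for the Cauchy case in Lemma~\ref{quantiles}.
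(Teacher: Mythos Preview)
Your proposal is correct and follows essentially the same idea as the paper: reduce via the monotone map $t\mapsto t^p$ to the $p$-stable tail bounds and finish with Chernoff. The only cosmetic difference is that the paper cites Lemma~\ref{pquantiles} directly and uses the containments coming from $(1-\eps)^p>1-\eps$ and $(1+\eps)^p<1+\eps$ (and implicitly $M^p<M$, $(M/\eps)^p<M/\eps$ for $0<p<1$, $M>1$), whereas you unroll one level further and redo the Chernoff step from Fact~\ref{pstablefact}; both routes are equivalent.
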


	\begin{proof}
		These results follow from Lemma \ref{pquantiles} and the fact that for $0 < p < 1$, we have $(1 - \eps)^p > 1 - \eps$ and $(1 + \eps)^p < 1 + \eps$.
	\end{proof}

	Using the above quantile results we can prove an embedding result similar to Lemma \ref{pembedding} by using the fact that $\lVert \cdot \rVert_p^p$ satisfies the triangle inequality.

	%NOTE: Need to check that the $\ell_p$ unit ball still has a reasonably-sized net when $p$ is small.

	\begin{lem}
		Let $X \subset \R^n$ be a $k$-dimensional space and $\eps, \delta > 0$. Let $S$ have $O(\frac{1}{\eps^2}k \log \frac{k}{\eps \delta})$ rows, $n$ columns, and i.i.d. $p$-stable entries with scale 1. Then with probability at least $1 - \Theta(\delta)$, for all $x \in X$, $$(1 - \Theta(\eps)) \lVert x \rVert_p^p \leq q_{\frac{1}{2} - \eps} ((Sx)^p / \med_p^p) \leq q_{\frac{1}{2} + \eps} ((Sx)^p / \med_p^p) \leq (1 + O(\eps)) \lVert x \rVert_p^p $$
	\end{lem}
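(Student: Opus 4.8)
The statement is the $0<p<1$ analogue of Lemma~\ref{pembedding}, and the plan is to run the same net argument as in Lemma~\ref{embedding}/\ref{pembedding}, substituting for the triangle inequality of $\lVert\cdot\rVert_p$ the elementary subadditivity bound $\bigl|\,|a|^p-|b|^p\,\bigr|\le|a-b|^p$ (valid since $t\mapsto t^p$ is concave and vanishes at $0$ for $0<p<1$), which is exactly what lets us compare $(Sx)^p$ with $(Sy)^p$ coordinate by coordinate. Two reductions come first: $\lVert\cdot\rVert_p^p$ is $p$-homogeneous and $q_\alpha$ is $1$-homogeneous, so it suffices to prove the two outer inequalities for all $x\in X$ with $\lVert x\rVert_p^p\le1$; and $q_\alpha$ is monotone non-decreasing in $\alpha$, so the middle inequality $q_{1/2-\eps}(\cdot)\le q_{1/2+\eps}(\cdot)$ is automatic. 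We also use freely that $\med_p$ is a fixed constant in $(0,\infty)$, so dividing a vector by $\med_p^p$ rescales additive errors only by a constant.

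First I would fix a net. Let $\eta:=(\eps\delta/k)^{c}$ for an absolute constant $c$ to be chosen, and let $N$ be an $\eta$-net, in the metric $d(x,y)=\lVert x-y\rVert_p^p$, for $\{x\in X:\lVert x\rVert_p^p\le1\}$; a standard volumetric bound gives $|N|=\exp(O(k\log(k/(\eps\delta))))$. By Lemma~\ref{bigpquantiles}(i)--(ii) applied to a single vector $y$, each of $q_{1/2-\Theta(\eps)}((Sy)^p/\med_p^p)\ge(1-\eps)\lVert y\rVert_p^p$ and $q_{1/2+O(\eps)}((Sy)^p/\med_p^p)\le(1+\eps)\lVert y\rVert_p^p$ fails with probability at most $\exp(-\Theta(\eps^2)m)$. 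Choosing the hidden constant in $m=O(\eps^{-2}k\log(k/(\eps\delta)))$ large enough makes this at most $\delta/(2|N|)$, so a union bound gives that both bounds hold for every $y\in N$ simultaneously with probability $1-\Theta(\delta)$.

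Next I would control the error of moving off the net, as in Lemma~\ref{embedding}/\ref{pembedding}. Let $X'$ be an Auerbach basis of $X$ for the $\ell_p$ quasi-norm: each column has $\lVert\cdot\rVert_p=1$ and $\lVert z'\rVert_\infty\le\lVert X'z'\rVert_p$ for all $z'$; such a basis exists for $0<p<1$ too (cf.\ the $p$-well-conditioned bases of~\cite{DDHKM09}, or a compactness argument). Every entry of $SX'$ is a $p$-stable variable of scale $1$, so Fact~\ref{pstablefact}(i) together with a union bound over the $mk$ entries yields $\lVert SX'\rVert_\infty^p\le\poly(mk/\delta)$ with probability $\ge1-\delta/2$; hence for any $z=X'z'\in X$,
\[
\lVert (Sz)^p\rVert_\infty=\lVert Sz\rVert_\infty^p\le(\lVert SX'\rVert_\infty\, k\,\lVert z'\rVert_\infty)^p\le\poly(k,m,1/\delta)\cdot\lVert X'z'\rVert_p^p=\poly(k,m,1/\delta)\cdot\lVert z\rVert_p^p.
\]
Since $m=\poly(k,1/\eps,\log(1/\delta))$, taking $c$ large enough forces $\lVert(Sz)^p\rVert_\infty\le O(\eps)\,\med_p^p$ whenever $\lVert z\rVert_p^p\le\eta$.

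To conclude, fix $x\in X$ with $\lVert x\rVert_p^p\le1$ and write $x=y+z$ with $y\in N$ and $\lVert z\rVert_p^p\le\eta$, so $\lVert y\rVert_p^p=(1\pm O(\eps))\lVert x\rVert_p^p$. For each coordinate $i$ we have $\bigl|((Sx)^p)_i-((Sy)^p)_i\bigr|=\bigl|\,|(Sx)_i|^p-|(Sy)_i|^p\,\bigr|\le|(Sz)_i|^p\le\lVert(Sz)^p\rVert_\infty\le O(\eps)\,\med_p^p$. Because $q_\alpha$ is an order statistic of the coordinate magnitudes and order statistics are $1$-Lipschitz under coordinatewise perturbations, $q_\alpha((Sx)^p/\med_p^p)=q_\alpha((Sy)^p/\med_p^p)\pm O(\eps)$ for every $\alpha$; together with the two net bounds and $\lVert y\rVert_p^p=(1\pm O(\eps))\lVert x\rVert_p^p$ this gives $q_{1/2-\Theta(\eps)}((Sx)^p/\med_p^p)\ge(1-O(\eps))\lVert x\rVert_p^p$ and $q_{1/2+\Theta(\eps)}((Sx)^p/\med_p^p)\le(1+O(\eps))\lVert x\rVert_p^p$. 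Scaling $\eps$ by the appropriate constants (equivalently, running Lemma~\ref{bigpquantiles} from the start with a constant multiple of $\eps$, which changes $m$ and the error only by constant factors) brings the two quantile levels to exactly $1/2\mp\eps$, the middle inequality holds by monotonicity of $q_\alpha$ in $\alpha$, and scaling $x$ by homogeneity removes the normalization; this proves the lemma. The only point that is not a verbatim transcription of the $p\ge1$ proofs is the coordinatewise comparison of the $p$-th powers, and I expect the main obstacle to be the bookkeeping around it: checking that subadditivity of $t\mapsto t^p$ genuinely substitutes for the triangle inequality, that Fact~\ref{pstablefact}(i) --- whose tail is only $\Theta(\tau^{-p})$ --- still yields a $\poly$ bound on $\lVert SX'\rVert_\infty^p$, and that the net granularity $\eta$ can be taken polynomially small without pushing $|N|$ past the concentration budget $\exp(\Theta(\eps^2 m))$.
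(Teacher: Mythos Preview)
Your proposal is correct and follows essentially the same approach as the paper. The paper's own argument is a one-line pointer: repeat the net argument of Lemma~\ref{embedding}/\ref{pembedding} ``by using the fact that $\lVert\cdot\rVert_p^p$ satisfies the triangle inequality.'' You have filled in precisely the details this entails --- the coordinatewise bound $\bigl||a|^p-|b|^p\bigr|\le |a-b|^p$ as the stand-in for the triangle inequality, the Auerbach basis for the $\ell_p$ quasi-norm (whose existence follows from the usual determinant-maximization argument, which uses only compactness of the unit ball and not convexity), and the $1$-Lipschitz stability of order statistics under coordinatewise perturbations --- so there is no meaningful divergence from the paper's intended proof.
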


	This automatically gives us the following corollary.
	\begin{corollary}
		Let $X \subset \R^n$ be a $k$-dimensional space and $\eps, \delta > 0$. Let $S$ have $O(\frac{1}{\eps^2}k \log \frac{k}{\eps \delta})$ rows, $n$ columns, and i.i.d. $p$-stable entries with scale $1$. With probability at least $1 - \Theta(\delta)$, for all $x \in X$, $$(1 - \eps) \lVert x \rVert_p^p \leq \med((Sx)^p / \med_p^p) \leq (1 + \eps) \lVert x \rVert_p^p$$
	\end{corollary}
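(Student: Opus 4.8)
The statement is an immediate consequence of the lemma just proved, mirroring the way Corollary~\ref{pmedembed} was derived from Lemma~\ref{pembedding} in the range $p \ge 1$. The plan is the following. First I would record the elementary fact that the quantile $q_{\alpha}(v)$ is monotonically non-decreasing in $\alpha$: by the definition given above, $q_{\alpha}(v)$ is (essentially) the $(\lceil \alpha m \rceil + 1)$-st smallest among $\lvert v_1\rvert, \dots, \lvert v_m\rvert$, and enlarging $\alpha$ only moves this rank upward. Since $\med(v) = q_{1/2}(v)$ by definition, applying this monotonicity to the vector $(Sx)^p/\med_p^p$ gives, for every $x \in \R^n$,
\[
q_{\frac{1}{2}-\eps}\big((Sx)^p/\med_p^p\big) \;\le\; \med\big((Sx)^p/\med_p^p\big) \;\le\; q_{\frac{1}{2}+\eps}\big((Sx)^p/\med_p^p\big).
\]

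Next I would invoke the preceding lemma with the same matrix $S$ (i.i.d.\ scale-$1$ $p$-stable entries, $O(\eps^{-2} k \log(k/(\eps\delta)))$ rows): with probability at least $1 - \Theta(\delta)$, simultaneously for all $x \in X$ one has $(1-\Theta(\eps))\lVert x\rVert_p^p \le q_{\frac{1}{2}-\eps}((Sx)^p/\med_p^p)$ and $q_{\frac{1}{2}+\eps}((Sx)^p/\med_p^p) \le (1+O(\eps))\lVert x\rVert_p^p$. Chaining these two bounds with the sandwich above yields
\[
(1-\Theta(\eps))\lVert x\rVert_p^p \;\le\; \med\big((Sx)^p/\med_p^p\big) \;\le\; (1+O(\eps))\lVert x\rVert_p^p
\qquad \text{for all } x \in X,
\]
on the same event of probability $1 - \Theta(\delta)$. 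Finally I would absorb the constants: replacing $\eps$ by $c\eps$ for a suitable absolute constant $c$ turns $\Theta(\eps)$ and $O(\eps)$ into $\eps$, which only affects the hidden constant in the row count of $S$ and leaves the failure probability at $\Theta(\delta)$, giving exactly the claimed form.

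There is essentially no obstacle here: the entire content has already been packed into the preceding quantile-embedding lemma (which itself rests on Lemma~\ref{bigpquantiles} and the $\lVert \cdot \rVert_p^p$ triangle inequality). The only points worth stating explicitly are the monotonicity of $q_{\alpha}$ in $\alpha$ and the identity $\med = q_{1/2}$, both immediate from the definitions given above; the rescaling of $\eps$ is the same bookkeeping used throughout this subsection.
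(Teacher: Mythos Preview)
Your proposal is correct and matches the paper's approach exactly: the paper simply states ``This automatically gives us the following corollary'' without further argument, and the details you spell out (monotonicity of $q_\alpha$ in $\alpha$, $\med = q_{1/2}$, then rescaling $\eps$) are precisely the trivial deduction intended.
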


	We also have an analogous version of our bound on fixed matrices. Again, the proof structures is the same as that of Lemma \ref{fixedbound}.
	\begin{lem} \label{smallpfixedbound}
		Let $S$ be an $m \times n$ matrix ($m = \Theta(1 / \text{poly}(\eps))$) with i.i.d. standard $p$-stable entries and let $M$ be an $n \times d$ matrix. For $\eps > 0$, with probability $1 - O(1)$, $$(1 - \eps) \lVert M \rVert_p^p \leq \sum_i \med((SM_{:,i})^p / \med_p^p) \leq (1 + \eps) \lVert M \rVert_p^p $$
	\end{lem}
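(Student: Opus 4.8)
The plan is to follow the proof of Lemma~\ref{fixedbound} (together with its companion Lemma~\ref{quantilemed}) almost verbatim, making three replacements: the Cauchy sketch becomes a standard $p$-stable sketch, every occurrence of $\lVert\cdot\rVert_1$ becomes $\lVert\cdot\rVert_p^p$, and the per-column estimand $\med(SM_{:,i})$ becomes $\med((SM_{:,i})^p/\med_p^p)$. Two structural facts make this transfer work. First, $\lVert\cdot\rVert_p^p$ is additive across columns, so the quantity $\sum_i\med((SM_{:,i})^p/\med_p^p)$ we must control is literally a sum of per-column contributions, each of which we want to pin to $\lVert M_{:,i}\rVert_p^p$, and $\sum_i\lVert M_{:,i}\rVert_p^p=\lVert M\rVert_p^p$. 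Second, by the stability property recalled at the start of Subsection~\ref{subsec:bigp}, each entry $(SM_{:,i})_j$ is distributed as $\lVert M_{:,i}\rVert_p\cdot Z$ for a standard $p$-stable $Z$, so $(SM_{:,i})_j^p/\med_p^p$ has median $\lVert M_{:,i}\rVert_p^p$, and its empirical median over the $m$ rows concentrates around that value by Lemma~\ref{bigpquantiles}.

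For the lower bound, first I would fix $i$ and call it \emph{good} if $\med((SM_{:,i})^p/\med_p^p)\ge(1-\eps)\lVert M_{:,i}\rVert_p^p$ and \emph{bad} otherwise. Part~(i) of Lemma~\ref{bigpquantiles}, with $m=\Theta(1/\poly(\eps))$, makes $\Pr[i\text{ bad}]\le\Theta(\eps)$, so $\ex{\sum_{\text{bad }i}\lVert M_{:,i}\rVert_p^p}\le\Theta(\eps)\lVert M\rVert_p^p$ and Markov's inequality gives $\sum_{\text{bad }i}\lVert M_{:,i}\rVert_p^p\le O(\eps)\lVert M\rVert_p^p$ with constant probability, hence also $\sum_{\text{good }i}\lVert M_{:,i}\rVert_p^p\ge(1-O(\eps))\lVert M\rVert_p^p$. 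Since all the terms $\med((SM_{:,i})^p/\med_p^p)$ are non-negative, restricting to good columns yields $\sum_i\med((SM_{:,i})^p/\med_p^p)\ge(1-\eps)\sum_{\text{good }i}\lVert M_{:,i}\rVert_p^p\ge(1-O(\eps))\lVert M\rVert_p^p$.

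For the upper bound I would reuse the ``$k$-large'' bucketing of Lemma~\ref{fixedbound}: column $i$ is \emph{small} if $\med((SM_{:,i})^p/\med_p^p)<(1+\eps)\lVert M_{:,i}\rVert_p^p$, and \emph{$k$-large} if $(k+\eps)\lVert M_{:,i}\rVert_p^p\le\med((SM_{:,i})^p/\med_p^p)<(k+1+\eps)\lVert M_{:,i}\rVert_p^p$. By part~(ii) of Lemma~\ref{bigpquantiles} the probability of being $1$- or $2$-large is $\Theta(\eps)$, and by part~(iv) the probability of being $k$-large for $k\ge3$ is at most $\exp(-\Theta(m)k)$; summing $k$ times these bounds and taking $m=\Theta(1/\poly(\eps))$ gives $\ex{\sum_{k\ge1}k\sum_{k\text{-large }i}\lVert M_{:,i}\rVert_p^p}\le O(\eps)\lVert M\rVert_p^p$, and one more application of Markov makes this hold with constant probability. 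Then, splitting the sum into small and $k$-large columns and using $k+1+\eps\le O(k)$, $\sum_i\med((SM_{:,i})^p/\med_p^p)\le(1+\eps)\lVert M\rVert_p^p+\sum_{k\ge1}(k+1+\eps)\sum_{k\text{-large }i}\lVert M_{:,i}\rVert_p^p\le(1+O(\eps))\lVert M\rVert_p^p$. A union bound over the two constant-probability events and a constant rescaling of $\eps$ then give the lemma.

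The main (and essentially only) point requiring care in this transfer is the passage from $(1\pm\eps)$-statements about $\lVert\cdot\rVert_p$ to $(1\pm\eps)$-statements about $\lVert\cdot\rVert_p^p$. This causes no loss because Lemma~\ref{bigpquantiles} is already stated directly in terms of $\lVert x\rVert_p^p$ --- it is derived from Lemma~\ref{pquantiles} precisely by absorbing the inequalities $(1-\eps)^p>1-\eps$ and $(1+\eps)^p<1+\eps$, valid for $0<p<1$ --- so after invoking it the computation is identical to that in Lemma~\ref{fixedbound}, including the harmless constant slack between the median $q_{1/2}$ and the quantiles $q_{1/2\pm\Theta(\eps)}$ actually appearing in its statement. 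Notably, the failure of the triangle inequality for $\lVert\cdot\rVert_p$ when $0<p<1$ plays no role here, since the statement concerns a single fixed matrix $M$ and never compares it to another matrix; the triangle inequality for $\lVert\cdot\rVert_p^p$ will only be needed later, in the subspace-embedding and regression steps.
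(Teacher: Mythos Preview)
Your proposal is correct and follows essentially the same approach as the paper, which explicitly states that the proof structure is the same as that of Lemma~\ref{fixedbound} with the Cauchy sketch replaced by a $p$-stable sketch and $\lVert\cdot\rVert_1$ replaced by $\lVert\cdot\rVert_p^p$. Your observation that Lemma~\ref{bigpquantiles} is already phrased in terms of $\lVert x\rVert_p^p$ (so no further loss is incurred when passing from norms to $p$-th powers) and that the triangle inequality plays no role for a fixed matrix are exactly the right remarks to justify the transfer.
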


	As expected, we have an $\ell_p$ form of Theorem \ref{constantsketch} and it is proved analogously.

	\begin{thm}
		Let $U \in \R^{n \times k}, A \in \R^{n \times d}$. Let $V^*$ be chosen to minimize $\lVert UV^* - A \rVert_p^p$. Suppose $S$ is an $m \times n$ matrix satisfying \begin{enumerate}
			\item $q_{\frac{1}{2} - \eps} ((SUx)^p / \med_p^p) \geq (1 - \Theta(\eps)) \lVert Ux \rVert_p^p$
			\item For each $i$ with probability at least $1 - \eps^3$, $\med((S[U \ A_{:,i}]x)^p / \med_p^p) \geq (1 - \eps^3) \lVert [U \ A_{:,i}]x \rVert_p^p$ for all $x$
			\item $\sum_i \med((SUV^* - SA)_{:,i}^p / \med_p^p) \leq (1 + \eps^3) \sum_i \lVert (UV^* - A)_{:,i} \rVert_p^p$
			\item $\sum_i q_{1 - \eps / 2}(S(UV^* - A)_i^p / \med_p^p) \leq O \left( \frac{1}{\eps} \right) \sum_i \lVert (UV^* - A)_i \rVert_p^p) $
		\end{enumerate}

		Then $\sum_i \med((SUV - SA)_{:,i}^p / \med_p^p) \geq (1 - O(\eps)) \lVert (UV - A) \rVert_p^p$ for arbitrary $V$.
	\end{thm}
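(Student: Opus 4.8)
The plan is to reprove this $\ell_p$-analogue by running the proof of Theorem~\ref{constantsketch} almost verbatim, under three uniform substitutions: every $\|\cdot\|_1$ is replaced by $\|\cdot\|_p^p$; every Cauchy-sketch statistic $\med(S\cdot)$ or $q_\alpha(S\cdot)$ is replaced by the normalized $p$-stable statistic $\med((S\cdot)^p/\med_p^p)$ or $q_\alpha((S\cdot)^p)/\med_p^p$; and every invocation of Lemma~\ref{quantiles} is replaced by the corresponding part of Lemma~\ref{bigpquantiles} (hypotheses (3) and (4) of the theorem are precisely the $0<p<1$ analogues of Lemmas~\ref{fixedbound} and~\ref{quantilemed} proved just above). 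The one place the argument is not purely mechanical is that $\|\cdot\|_p$ is no longer a norm; there I use instead the scalar inequality valid for $0<p<1$, namely $|a+b|^p\le|a|^p+|b|^p$, equivalently $(a-b)^p\ge a^p-b^p$ for $a\ge b\ge0$, together with monotonicity of $t\mapsto t^p$ (so that $q_\alpha(v^p)=q_\alpha(v)^p$ and $\med((Sx)^p)=\med(|Sx|)^p$), which is exactly what makes hypotheses (1)--(4) line up with the per-coordinate estimates.

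First I set up the column classification as in the $p=1$ proof. Call column $i$ \emph{good} if $\med((S[U\ A_{:,i}]y)^p/\med_p^p)\ge(1-\eps^3)\|[U\ A_{:,i}]y\|_p^p$ for all $y$, and \emph{bad} otherwise; call a bad column \emph{large} if
\[
\eps\,\|(UV-A)_{:,i}\|_p^p \;\ge\; \frac{1}{1-\eps}\,\frac{q_{1-\eps/2}\big((S(UV^*-A)_{:,i})^p\big)}{\med_p^p} \;+\; \|(UV^*-A)_{:,i}\|_p^p,
\]
and \emph{small} otherwise. Hypothesis (2) plus Markov's inequality give, with probability $1-1/\Omega(1)$, both $\sum_{\text{bad }i}\|(UV^*-A)_{:,i}\|_p^p=O(\eps^3)\|UV^*-A\|_p^p$ and the analogous $O(\eps^3)$-bound for $\sum_{\text{bad }i}q_{1-\eps/2}((S(UV^*-A)_{:,i})^p)/\med_p^p$ relative to $\sum_i q_{1-\eps/2}((S(UV^*-A)_{:,i})^p)/\med_p^p$; combining these with hypothesis (3) and the good-column bound yields $\sum_{\text{bad }i}\med((S(UV^*-A)_{:,i})^p/\med_p^p)=O(\eps^3)\|UV^*-A\|_p^p$, exactly as in~\eqref{markovbadbound}--\eqref{sketchbadbound}. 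The small-column estimate $\sum_{\text{small }i}\|(UV-A)_{:,i}\|_p^p=O(\eps)\|UV^*-A\|_p^p$ then follows word for word from the definition of small, the two bounds just obtained, and hypothesis (4), exactly as in~\eqref{smallbound}.

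The core step is the analogue of Claim~\ref{largesketch}: $\sum_{\text{large }i}\med((S(UV-A)_{:,i})^p/\med_p^p)\ge(1-O(\eps))\sum_{\text{large }i}\|(UV-A)_{:,i}\|_p^p$. Fix a large column $i$ and write $S(UV-A)_{:,i}=SU(V-V^*)_{:,i}+S(UV^*-A)_{:,i}$ coordinatewise. By hypothesis (1), a $(\tfrac12+\eps)$-fraction of coordinates $j$ satisfy $|SU(V-V^*)_{:,i}|_j^p/\med_p^p\ge(1-\Theta(\eps))\|U(V-V^*)_{:,i}\|_p^p$, and for all but an $\eps/2$-fraction of $j$ we have $|S(UV^*-A)_{:,i}|_j^p/\med_p^p\le q_{1-\eps/2}((S(UV^*-A)_{:,i})^p)/\med_p^p$. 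For a coordinate $j$ lying in both sets, the scalar inequality $(a-b)^p\ge a^p-b^p$ gives
\[
\frac{|S(UV-A)_{:,i}|_j^p}{\med_p^p} \;\ge\; (1-\Theta(\eps))\,\|U(V-V^*)_{:,i}\|_p^p \;-\; \frac{q_{1-\eps/2}\big((S(UV^*-A)_{:,i})^p\big)}{\med_p^p};
\]
applying the $\|\cdot\|_p^p$-triangle inequality $\|U(V-V^*)_{:,i}\|_p^p\ge\|(UV-A)_{:,i}\|_p^p-\|(UV^*-A)_{:,i}\|_p^p$ and then the definition of \emph{large} to absorb both $\|(UV^*-A)_{:,i}\|_p^p$ and the quantile term into $\eps\,\|(UV-A)_{:,i}\|_p^p$, this is $\ge(1-O(\eps))\|(UV-A)_{:,i}\|_p^p$. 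Since the two coordinate sets have densities $\ge\tfrac12+\eps$ and $\ge1-\eps/2$, their intersection has density $\ge\tfrac12+\tfrac\eps2$, so $\med((S(UV-A)_{:,i})^p/\med_p^p)\ge(1-O(\eps))\|(UV-A)_{:,i}\|_p^p$; summing over large $i$ proves the claim.

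Finally I assemble the pieces as in the closing display of Theorem~\ref{constantsketch}:
\[
\sum_i \med\!\Big(\frac{(S(UV-A)_{:,i})^p}{\med_p^p}\Big) \;\ge\; (1-\eps^3)\!\!\sum_{\text{good }i}\!\!\|(UV-A)_{:,i}\|_p^p \;+\; (1-O(\eps))\!\!\sum_{\text{large }i}\!\!\|(UV-A)_{:,i}\|_p^p,
\]
which is at least $(1-O(\eps))\big(\|UV-A\|_p^p-\sum_{\text{small }i}\|(UV-A)_{:,i}\|_p^p\big)\ge(1-O(\eps))\big(\|UV-A\|_p^p-O(\eps)\|UV^*-A\|_p^p\big)$, and since $V^*$ minimizes $\|UV^*-A\|_p^p\le\|UV-A\|_p^p$ this is $(1-O(\eps))\|UV-A\|_p^p$, as required. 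I expect the main obstacle to be purely bookkeeping: placing the one genuinely new inequality $|a+b|^p\le|a|^p+|b|^p$ at exactly the coordinatewise step of the large-column claim (where the $p=1$ argument used the ordinary triangle inequality on vectors), and carrying the $\med_p$-normalization and the identity $q_\alpha(v^p)=q_\alpha(v)^p$ through consistently so that Lemma~\ref{bigpquantiles} exactly supplies hypotheses (1)--(4). No new conceptual ingredient beyond the $p=1$ case should be needed.
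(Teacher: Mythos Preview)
Your proposal is correct and takes essentially the same approach as the paper, which in fact gives no proof at all beyond the remark ``it is proved analogously'' to Theorem~\ref{constantsketch}. You have correctly identified the one non-mechanical substitution---replacing the vector triangle inequality at the coordinatewise step of Claim~\ref{largesketch} by the scalar subadditivity $|a+b|^p\le|a|^p+|b|^p$ (equivalently $(a-b)^p\ge a^p-b^p$ for $a\ge b\ge0$) valid when $0<p<1$---and placed it in exactly the right spot; the rest is indeed the mechanical rewriting you describe.
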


	The results above can give us the desired PTAS.

	\begin{corollary} \label{PTAS:smallp}
		Let $A$ be an $n \times d$ matrix with entries bounded by $\text{poly}(n)$ and let $k$ be a constant. There is a PTAS for finding the closest rank $k$ matrix to $A$ in entrywise $\ell_p$ norm when $0 < p < 1$.
	\end{corollary}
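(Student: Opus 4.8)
The plan is to mirror the $\ell_1$ development (Algorithm~\ref{alg:p1}, Theorem~\ref{approxalg}, Theorem~\ref{constantsketch}, Corollary~\ref{PTAS:p1}) and the $1<p<2$ development (Corollary~\ref{PTAS:bigp}), with two systematic substitutions: use a sketch $S$ with i.i.d.\ $p$-stable entries of scale $1$ and replace the estimator $\med(SM)$ everywhere by $\sum_i \med\!\bigl((SM_{:,i})^p/\med_p^p\bigr)$; and replace every use of the triangle inequality for $\|\cdot\|_1$ by the sub-additivity $\|x+y\|_p^p \le \|x\|_p^p + \|y\|_p^p$, which holds for $0<p<1$. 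Concretely, as in the proof of Corollary~\ref{PTAS:p1}, it suffices to check that an $m\times n$ matrix $S$ with $m=\Theta(\poly(k/\eps))$ i.i.d.\ $p$-stable entries satisfies, with probability $1-1/\Omega(1)$, conditions (1)--(4) of the $\ell_p$ version of Theorem~\ref{constantsketch} stated above (taking $U=U^*$, the left factor of an optimal solution), and then to invoke the $\ell_p$ version of Theorem~\ref{approxalg}.

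Verifying (1)--(4) is immediate from the lemmas already established in this subsection: condition (1) is the lower inequality of the $\ell_p$ embedding lemma applied to the column space of $U^*$; condition (2), a per-column bound, follows by applying the corollary of that lemma to the $(k{+}1)$-dimensional span of the columns of $[U^*\ A_{:,i}]$ with failure parameter $\delta=\Theta(\eps^3)$, which keeps $m=\poly(k/\eps)$; and conditions (3) and (4) are exactly Lemma~\ref{smallpfixedbound} and its companion quantile lemma applied to the fixed matrix $M=U^*V^*-A$. Each of these rests on the $p$-stable tail bounds (Fact~\ref{pstablefact}, Lemma~\ref{bigpquantiles}) together with the elementary estimates $(1-\eps)^p>1-\eps$ and $(1+\eps)^p<1+\eps$.

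For the $\ell_p$ version of Theorem~\ref{approxalg} I would retrace its proof with $\med(\cdot)$ replaced by $\sum_i\med((\cdot)_{:,i}^p/\med_p^p)$ and $\|\cdot\|_1$ by $\|\cdot\|_p^p$. The required inputs are: (a) the lower bound $\opt = \|A-A_k\|_p^p \ge \poly(n)^{-k}$, which follows from Claim~\ref{lowerbound} since for any rank-$k$ matrix $A_k$ one has $\|A-A_k\|_p \ge \|A-A_k\|_F \ge \sigma_{k+1}(A)$ (using $\|v\|_p \ge \|v\|_2$ when $p<1$), and then $\|A-A_k\|_p^p \ge \|A-A_k\|_p \ge \poly(n)^{-k}$ because $x^p \ge x$ for $x\in(0,1)$; (b) an $\ell_p$ well-conditioned basis for $V^*$ (existence as in Lemma~\ref{pembedding}, cf.\ \cite{DDHKM09}), which bounds the entries of $U^*$ and $V^*$ by $\poly(n)$ and shows that rounding the entries of $U^*$, and of $S$, to a grid of width $\poly(\eps/n)^k$ perturbs $\|U^*V-A\|_p^p$ by only $O(\eps\,\opt)$ --- here sub-additivity of $\|\cdot\|_p^p$ and the estimate $\|EV^*\|_p^p = \sum_i\|E_{i,:}V^*\|_p^p \le \poly(k)\,\|E\|_p^p$ take the place of the norm bounds in Theorem~\ref{approxalg}, so that exhaustively guessing the discretized $SU^*$ costs $\poly(n)^{mk}=n^{\poly(k/\eps)}$ time; and (c) solvability of the two regression subproblems despite non-convexity. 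For the column subproblem $\min_{V_{:,i}}\med((SU^*V_{:,i}-SA_{:,i})^p)$ over a box, monotonicity of $t\mapsto t^p$ means it has the same minimizer as $\min_{V_{:,i}}\med(|SU^*V_{:,i}-SA_{:,i}|)$, which is solved by enumerating the $O(m!\,2^m)=O(1)$ sign-and-order patterns of the $m$ coordinates and running a linear program on each; for $\min_U\|U\tilde V-A\|_p^p$, the objective splits over the $n$ rows into constant-dimensional subproblems, each solved by searching the $\poly(n/\eps)^k$ grid already in use, at a further additive cost $O(\eps\,\opt)$.

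The main obstacle is precisely the loss of norm and convexity structure for $0<p<1$: essentially every lemma on Cauchy sketches and each regression step of Algorithm~\ref{alg:p1} used that $\|\cdot\|_1$ is a norm. The remedy, largely carried out already in this subsection, is to run the whole argument with $\|\cdot\|_p^p$ in place of $\|\cdot\|_1$ --- using its sub-additivity and the $(1\pm\eps)^p$ estimates in the probabilistic lemmas --- and to defeat the non-convex regressions by brute force, which is affordable only because $m=\poly(k/\eps)$ and $k$ are constants. Chaining the inequalities of (the $\ell_p$ analog of) Theorem~\ref{approxalg} then gives $\|\tilde U\tilde V-A\|_p^p \le (1+O(\eps))\opt$, and rescaling $\eps$ by a constant completes the proof.
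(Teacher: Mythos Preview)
Your overall strategy is sound and closely parallels the paper's, but there is one genuine gap and one genuine difference worth noting.

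\textbf{The gap: $\ell_p$ well-conditioned bases for $p<1$.} You invoke an $\ell_p$ well-conditioned basis for $V^*$ with $\poly(k)$ distortion, citing Lemma~\ref{pembedding} and \cite{DDHKM09}. Both of those are for $p\ge 1$; for $0<p<1$, $\|\cdot\|_p$ is only a quasinorm and the Auerbach/DDHKM machinery does not directly yield $\poly(k)$ conditioning. The paper explicitly flags this and works around it: take an $\ell_1$ well-conditioned basis (distortion $\poly(k)$) and compare $\ell_p$ and $\ell_1$ via $\|v\|_1\le \|v\|_p\le d^{1/p-1}\|v\|_1$, obtaining $\tilde q,\tilde r=\poly(d)$ with $\|x\|_p/\tilde q\le \|x^T V^*\|_p\le \tilde r\,\|x\|_p$. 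These $\poly(d)=\poly(n)$ constants are harmless for the discretization and entry-bound arguments (they only inflate the $\poly(n)$ factors, not the exponent in $k$), so your step (b) survives once you replace $\poly(k)$ by $\poly(n)$ throughout---but as written it appeals to a lemma that does not apply.

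\textbf{The difference: solving for $\tilde U$.} You handle the non-convex regression $\min_U\|U\tilde V-A\|_p^p$ by splitting over rows and grid-searching each $k$-dimensional subproblem; this is legitimate because the box constraint on $\tilde V$ together with $\opt\ge\poly(n)^{-k}$ lets you bound the required grid width by $\poly(\eps/n)^{O(k/p)}$, still polynomial for fixed $p$ and $k$. The paper takes a different route: it sketches \emph{again}, this time from the right, guesses the (discretized) sketched $V^*$, and then solves for $U$ row by row via the same median estimator. Your approach is more elementary and avoids a second sketching analysis; the paper's approach is more uniform (symmetric in $U$ and $V$) and keeps the exponent's dependence on $k$ linear rather than quadratic in the $U$-step. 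Either route yields the claimed $n^{\poly(k/\eps)}$ PTAS.
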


	\begin{proof}
		The algorithm is slightly different from Algorithm \ref{alg:p1}, because $\ell_p$ regression is no longer a convex optimization problem when $0 < p < 1$. Thus after sketching to find a minimizing $V$, we need a different approach to find a minimizing $U$. We accomplish this by sketching $UV - A$ again, but from the right and guessing the sketched $V$. We use the guessed $V$ to solve for $U$.

		Besides the above modification, we rely on the fact that $\lVert \cdot \rVert_p^p$ satisfies the triangle inequality. We also note that for $0 < p < 1$, we may not have a well-conditioned basis. However, we know that an $\ell_1$ well-conditioned basis exists so there exist $q, r = \text{poly}(k)$ such that $\frac{\lVert x \rVert_1}{q} \leq \lVert x^T V^* \rVert_1 \leq r \lVert x \rVert_1 $. By Holder's inequality, we know $\lVert x^T V^* \rVert_p^p \leq d^{1-p} \lVert x^T V^* \rVert_1^p \leq d^{1-p} r^p \lVert x \rVert_p^p$ and $\lVert x^T V^* \rVert_p^p \geq \lVert x^T V^* \rVert_1^p \geq \lVert x \rVert_1^p / q^p \geq d^{p -1} \lVert x \rVert_p^p / q^p$ so we can get a similar well-conditioned basis result saying there exist $\tilde{q}, \tilde{r} = \text{poly}(d)$ such that $\frac{\lVert x \rVert_p}{\tilde{q}} \leq \lVert x^T V^* \rVert_p \leq \tilde{r} \lVert x \rVert_p $ which will suffice for our proof.
	\end{proof}

	\subsection{\texorpdfstring{$p > 2$}{p > 2}} \label{subsec:pover2}

	There are no $p$-stable random variables when $p > 2$ so any $\ell_p$-approximation algorithms in this setting will need to rely on a different technique. Our sketch will be lifted from \cite{DDHKM09}. Rather than a matrix of $p$-stable random variables, we use a sampling matrix that samples $m$ rows of $A$ with each row $i$ having some probability $p_i$ of being sampled. Furthermore, each sampled row is reweighted by $1 / p_i$. The following claim (adapted from Theorem 5 of \cite{DDHKM09}) says we can get a subspace embedding from the right sampling matrix.

	\begin{claim}
		Suppose $U$ is an $n \times k$ matrix. Then there exists a $m \times n$ sampling matrix $S$ with $m = \text{poly}(k / \eps)$ such that $\| SUx \|_p = (1 \pm \eps) \| Ux \|_p$ for all $x$.
	\end{claim}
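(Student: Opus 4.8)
The plan is to follow the $\ell_p$-row-sampling approach of~\cite{DDHKM09}: build an $\ell_p$ well-conditioned basis for the column space of $U$, sample rows with probability proportional to the $p$-th powers of the row norms of this basis, reweight, and prove the subspace embedding via a net argument together with a Chernoff bound. Since $p>2$, $\|\cdot\|_p$ is a genuine norm, so the triangle inequality is available throughout.

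First I would invoke~\cite{DDHKM09} to obtain an $\ell_p$ well-conditioned basis $W\in\R^{n\times k}$ for $\mathrm{colspace}(U)$: a matrix whose columns span $\mathrm{colspace}(U)$ and for which there are parameters $\alpha,\beta = \poly(k)$ (the exponent depending on $p$) with $\|W\|_p \le \alpha$ and $\|z\|_{p^*} \le \beta\,\|Wz\|_p$ for all $z\in\R^k$. Writing $w_i := W_{i,:}$ and $q_i := \|w_i\|_p^p / \|W\|_p^p$ (so $\sum_i q_i = 1$), the sketch $S$ is obtained by drawing $m = \poly(k/\eps)$ independent rows, the $t$-th being row $i$ with probability $q_i$ and scaled by $(m q_i)^{-1/p}$; then for any $y\in\R^n$ one has $\Ex[\|Sy\|_p^p] = \|y\|_p^p$.

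Next, fix $x\in\R^k$ and write $y := Ux = Wz$ for the corresponding $z\in\R^k$. The crucial deterministic estimate is that each sampled coordinate's contribution to $\|Sy\|_p^p$ is small:
\[
\frac{|y_i|^p}{m\,q_i} = \frac{|W_{i,:}z|^p}{m}\cdot\frac{\|W\|_p^p}{\|w_i\|_p^p}
\le \frac{\big(\|w_i\|_p\,\|z\|_{p^*}\big)^p}{m}\cdot\frac{\|W\|_p^p}{\|w_i\|_p^p}
= \frac{\|z\|_{p^*}^p\,\|W\|_p^p}{m} \le \frac{(\alpha\beta)^p}{m}\,\|y\|_p^p,
\]
using H\"older's inequality and the well-conditioned-basis property. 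Hence $\|Sy\|_p^p$ is a sum of $m$ i.i.d.\ nonnegative terms, each at most $\tfrac{1}{m}(\alpha\beta)^p\|y\|_p^p$, with total mean $\|y\|_p^p$; a Chernoff/Bernstein bound then gives $\|Sy\|_p^p = (1\pm\eps)\|y\|_p^p$ except with probability $\exp\big(-\Omega(\eps^2 m/(\alpha\beta)^p)\big)$. The same estimate also yields the worst-case ``no dilation'' bound $\|Sy\|_p^p \le (\alpha\beta)^p\|y\|_p^p$, valid for every $y$ with probability $1$.

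Finally I would run a net argument: take an $\eps'$-net $N$ with $\eps' = \poly(\eps/k)$ of $\{y\in\mathrm{colspace}(U):\|y\|_p=1\}$, of size $|N| = (Ck/\eps)^{O(k)}$, union-bound the concentration statement over $N$ (this forces $m = \poly(k/\eps)$, the exponent depending on $p$ through $(\alpha\beta)^p$), and then upgrade from $N$ to all $y$ by writing an arbitrary unit $y$ as a convergent combination of net points and controlling the tail with the no-dilation bound and the triangle inequality for $\|\cdot\|_p$; rescaling $\eps$ by a constant finishes the claim. The main obstacle is precisely this net-to-everywhere step: because $|N|$ is exponential in $k$, the per-point concentration must hold with failure probability $\exp(-\Theta(k\log(k/\eps)))$, which is what pins down the polynomial degree of $m$, and one must check that the DDHKM well-conditioned-basis parameters $\alpha,\beta$ really are $\poly(k)$ for $p>2$ and that the dilation bound is strong enough to absorb the discretization error; the rest is a routine Chernoff computation.
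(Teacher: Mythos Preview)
Your proposal is correct and is precisely the argument behind the cited result: the paper does not give its own proof of this claim but states it as ``adapted from Theorem 5 of \cite{DDHKM09}'', and what you have written is a faithful reconstruction of the \cite{DDHKM09} proof (well-conditioned basis, leverage-score sampling with reweighting, Bernstein-type concentration for a fixed direction, then a net argument using the deterministic no-dilation bound). The only point worth double-checking, which you already flag, is that for fixed $p>2$ the Auerbach basis gives $\alpha\beta \le k$ and hence $(\alpha\beta)^p = k^p = \poly(k)$, so the required sample size $m = O((\alpha\beta)^p \eps^{-2} k \log(k/\eps))$ is indeed $\poly(k/\eps)$.
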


	\begin{thm}
		If $A$ is an $n \times d$ matrix with entries bounded by $\text{poly}(n)$, then there is a $(3 + \epsilon)$-approximation algorithm running in time $n^{\text{poly}(k / \eps)}$ for finding the closest rank $k$ matrix to $A$ in the entrywise $\ell_p$ norm for $p > 2$.
	\end{thm}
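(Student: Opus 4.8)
The plan is to follow the ``guess the sketch'' template of Theorem~\ref{approxalg}, replacing the $p$-stable sketch (which does not exist for $p>2$) by the $\ell_p$ row-sampling sketch of \cite{DDHKM09} recalled in the preceding claim; the price of this substitution is the weaker factor $3+\eps$, since a sampling sketch of only $\poly(k/\eps)$ rows cannot $(1\pm\eps)$-embed the whole affine family $\{U^*V-A:V\}$. For the preprocessing: if $\mathrm{rank}(A)\le k$ we output an exact rank-$k$ factorization via Gaussian elimination, so assume $\mathrm{rank}(A)>k$; then, exactly as in Claim~\ref{lowerbound} and using $\|M\|_p\ge\|M\|_\infty\ge\|M\|_F/\sqrt{nd}\ge\sigma_{k+1}(A)/\sqrt{nd}$ for $p\ge 1$, we get $\opt\ge 1/\poly(n)^k$. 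Fixing an optimal $U^*,V^*$ and taking $V^*$ to be an $\ell_p$ well-conditioned basis (\cite{DDHKM09}) bounds the entries of $U^*$ by $\poly(n)$ and lets us round them to multiples of $1/\poly(n)^k$ at additive cost $\eps\cdot\opt$, after which the entries of $V^*$ may be assumed bounded by $B:=\poly(n/\eps)^k$.

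For the sketch, we need a single $m\times n$ row-sampling-and-rescaling matrix $S$ with $m=\poly(k/\eps)$ that (a) is a $(1\pm\eps)$ subspace embedding for the column span of $U^*$, and (b) satisfies $\|S(U^*V^*-A)\|_p\le(1+\eps)\opt$ for the one fixed matrix $U^*V^*-A$. Both hold if $S$ samples rows from a distribution dominating the sum of the $\ell_p$ Lewis weights of $U^*$ and the $\ell_p$ row-importances $\|(U^*V^*-A)_{i,:}\|_p^p/\|U^*V^*-A\|_p^p$: (a) is the preceding claim (Theorem~5 of \cite{DDHKM09}), and (b) is a standard importance-sampling variance bound needing only $O(1/\eps^2)$ samples, so summing the two distributions keeps $m=\poly(k/\eps)$. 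We do not know $S$, but $S$ is specified by $m$ sampled indices in $[n]$ and $m$ rescaling coefficients in $[1,\poly(n)]$; rounding the coefficients to multiples of $1/\poly(n)^k$ leaves only $n^{\poly(k/\eps)}$ candidates, and one of them still satisfies (a) and (b) up to an $O(\eps)$ loss.

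For the algorithm, for each candidate $S$ and each candidate value $W$ of $SU^*$ (the entries of $SU^*$ are multiples of $1/\poly(n)^k$ of magnitude at most $\poly(n/\eps)^k$, hence $n^{\poly(k/\eps)}$ choices), we compute $\widetilde V\in\argmin_{\|V\|_\infty\le B}\|WV-SA\|_p$ (a convex $\ell_p$ regression since $p>1$, solved columnwise to sufficient accuracy) and then $\widetilde U\in\argmin_U\|U\widetilde V-A\|_p$ (again convex, solved rowwise), keeping the best pair found. For the correct $S,W$ the discretizations contribute only $O(\eps)\opt$ (this is where $\|V\|_\infty\le B$ and $\opt\ge 1/\poly(n)^k$ are used), and
\begin{align*}
\|\widetilde U\widetilde V-A\|_p
&\le \|U^*\widetilde V-A\|_p \le \|U^*(\widetilde V-V^*)\|_p+\opt\\
&\le (1+\eps)\,\|SU^*(\widetilde V-V^*)\|_p+\opt
\le (1+\eps)\big(\|SU^*\widetilde V-SA\|_p+\|SU^*V^*-SA\|_p\big)+\opt\\
&\le 2(1+\eps)\,\|S(U^*V^*-A)\|_p+\opt \le 2(1+\eps)^2\opt+\opt=(3+O(\eps))\opt,
\end{align*}
using (a) on the second line, the triangle inequality together with the minimality of $\widetilde V$ (with $V^*$ feasible since $\|V^*\|_\infty\le B$) on the third line, and (b) at the end. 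Rescaling $\eps$ by a constant finishes, and the running time is $n^{\poly(k/\eps)}$ candidates times the polynomial cost of the convex programs.

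The main obstacle is property (b): unlike the Cauchy or $p$-stable setting, one cannot make a $\poly(k/\eps)$-row sketch a $(1\pm\eps)$ embedding of every $U^*V-A$ simultaneously (that would need $\Omega(\log d)$ rows and would blow the guess count up to quasi-polynomial), so one must instead fuse a subspace-embedding sampling distribution for $U^*$ with an importance-sampling distribution for the single fixed matrix $U^*V^*-A$ and check that the fusion still has $\poly(k/\eps)$ rows; the two resulting uses of the triangle inequality are exactly what cost the factor $\approx 3$. The remaining work is routine: verifying that rounding $U^*$, the rescaling coefficients of $S$, and hence $SU^*$ and $SA$ perturbs all relevant $\ell_p$ norms by only $O(\eps)\opt$, and that the sampling guarantees survive the rounding of the sampling probabilities.
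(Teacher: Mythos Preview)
Your argument is correct and mirrors the paper's proof: the same preprocessing (rank test, Claim~\ref{lowerbound} lower bound, well-conditioned basis, discretization of $U^*$), the same enumeration of the sketch, and an identical chain of triangle inequalities
\[
\|U^*\widetilde V-A\|_p \le \|U^*(\widetilde V-V^*)\|_p+\opt \le (1+\eps)\|SU^*(\widetilde V-V^*)\|_p+\opt \le 2(1+\eps)\|S(U^*V^*-A)\|_p+\opt.
\]
You are also more explicit than the paper about enumerating the sampling matrix $S$ itself (indices plus rounded rescalings), which is indeed needed so that $SA$ is available.

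The one substantive difference is how property~(b) is obtained. You fuse Lewis-weight sampling for $U^*$ with row-importance sampling for $U^*V^*-A$ so that $\|S(U^*V^*-A)\|_p\le(1+\eps)\opt$ holds with high constant probability. The paper takes a shorter route: it observes that \emph{any} reweighted row-sampling matrix satisfies $\E\big[\|SM\|_p^p\big]=\|M\|_p^p$ for every fixed $M$ (regardless of the sampling distribution), so Markov's inequality already yields $\|S(U^*V^*-A)\|_p\le(1+\eps)\opt$ with probability $\Omega(\eps)$. Since the subspace-embedding event~(a) can be made to hold with probability $1-\eps/2$ by a $\poly(k/\eps)$ number of rows, both events hold simultaneously with positive probability, and because you enumerate over all discretized $S$, existence is all that is needed. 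So the ``fusion'' you flag as the main obstacle is not actually required; the unbiasedness of row sampling does the job. Your version is not wrong, just more elaborate than necessary.
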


	\begin{proof}

		Let $S$ be the sampling matrix of the above claim. Let $\hat{V}$ be a minimizer for the expression $\lVert SU^* \hat{V} - SA \rVert_p$. Again, by a similar argument as that of the proof of Theorem \ref{approxalg}, we can guess $SU^*$ using $\text{poly}(n)$ tries. We can round the sampling probabilities and the entries of $U^*$ to the nearest $1 / \text{poly} (n)$ value.

		We know that \begin{align*} \lVert U^* \hat{V} - A \rVert_p &\leq \lVert U^* (\hat{V} - V^*) \rVert_p + \lVert U^* V^* - A \rVert_p \\
			&\leq (1 + O(\eps)) \lVert SU^* (\hat{V} - V^*) \rVert_p + \lVert U^* V^* - A \rVert_p \\
			&\leq (1 + O(\eps)) \lVert SU^* \hat{V} - SA \rVert_p + (1 + O(\eps)) \lVert SU^* V^* - SA  \rVert_p + \lVert U^* V^* - A \rVert_p  \\
			&\leq 2(1 + O(\eps)) \lVert SU^* V^* - SA  \rVert_p + \lVert U^* V^* - A \rVert_p \\
			&\leq (3 + \epsilon) \lVert U^* V^* - A \rVert_p
		\end{align*} where the second inequality follows from the embedding property of $S$ and the fourth inequality comes from the definition of $\hat{V}$ as a minimizer.

		The final inequality comes from a Markov bound on $S$. More specifically, since $S$ is a sampling matrix, then for an arbitrary matrix $M$, $\ex{SM} = \| M \|_p$. Thus Markov's Inequality says that with probability $1 - O(1)$, we have $SM \leq O(1) \| M \|_p$. This concludes the proof.

	\end{proof}

	\subsection{Finite Fields} \label{subsec:finitefield}

We can also study low rank approximation over finite fields. The $\ell_p$ metrics are not defined over finite fields for $p > 0$, but we can look at low rank approximation over the entrywise $\ell_0$ metric (where $\lVert M \rVert_0 = \lvert \{ (i, j) : M_{i, j} \neq 0 \}\rvert$). For the rest of this section we will work over a finite field $\mathbb{F}_q$, for some prime power $q$.

The structure of the algorithm will be similar to that of the case $0 < p < 2$ but our sketch will be based on hashing rather than $p$-stable random variables. Furthermore, we will be able to sketch in the dimension $d$ row space rather than the dimension $n$ column space and get a running time better than that of the $0 < p < 2$ algorithms. We now describe a $(1 + \eps)$-approximation sketch for the $\ell_0$ metric, where $\eps$ will be sufficiently small. This sketch is inspired by the $L_0$ streaming algorithm in \cite{KNW10}. Throughout this section, we will refer to constants $C$ and $C'$ that are sufficiently large.

Let $S_i$ denote a $n \times n$ matrix where column $i$ is the standard basis column $e_i$ with probability $p_i = \frac{1}{2^{i}}$ or the all zeroes column otherwise. In other words, $S_i$ is a sampling matrix that takes $x$ and preserves each coordinate with probability $p_i$ and otherwise maps the coordinate to $0$. Note that $p_0 = 1$. We can generate our matrices $S_i$ by uniformly sampling $n$ integers between 0 to $n$ and sampling column $j$ in $S_i$ if the leading $1$ in the $j$th integer (written in binary, with indexing starting from $1$) is before the $i$th position.  Observe that under this procedure, our subsampling is nested so that if $S_i$ does not sample entry $j$, then neither will $S_{i'}$ for any $i' > i$.

Note that by this nestedness property, we have $\lVert x \rVert_0 = \lVert S_0 x \rVert_0 \geq \lVert S_1 x \rVert_0 \geq \lVert S_2 x \rVert_0 \geq \cdots \geq \lVert S_{\log n - 1} x \rVert_0$. Let $S$ denote the $n \log n \times n$ block matrix $\begin{bmatrix} S_0 \\ S_1 \\ S_2 \\ \vdots \\ S_{\log n - 1} \end{bmatrix}$.

Let $h$ be a pairwise independent hashing function from $[n]$ to $[\frac{C'}{\eps^8}]$ and let $H_0$ denote a $\frac{C'}{\eps^8} \times n$ hashing matrix where each column equals $e_{h(i)}$. Let $H$ denote the $\frac{C'}{\eps^8} \log n \times n$ block matrix $\begin{bmatrix} H_0 S_0 \\ H_0 S_1 \\ H_0 S_2 \\ \vdots \\ H_0 S_{\log n - 1} \end{bmatrix}$ with $H^{(i)} = H_0 S_i$.

Suppose that $x = \begin{bmatrix} x^{(0)} \\ x^{(1)} \\ \vdots \\ x^{(\log n - 1)} \end{bmatrix}$ is a block vector. Then we let $\widetilde{\NNZ} (x)$ denote $\begin{bmatrix} \| x^{(0)} \|_0 \\ \| x^{(1)} \|_0 \\ \vdots \\ \| x^{(\log n - 1)} \|_0 \end{bmatrix}$.

We will abuse notation and let $\mathcal{C}^S (x) = \widetilde{\NNZ} (Sx)$ and $\mathcal{C} (x) = \widetilde{\NNZ} (HSx)$ with the understanding that $HSx$ and $Sx$ are of different dimensions but have the same number of blocks.

The main idea of the sketch is that if $\lVert x \rVert_0$ is less than a small constant and the coordinates of $x$ are hashed into a number of buckets that is a large constant, then with high probability it will be a perfect hash. Thus the number of non-zero buckets will equal $\lVert x \rVert_0$. If $x$ is subsampled with a low enough probability, then the subsampled vector will have an $\ell_0$ value that is sufficiently small and it can be hashed as we described.

We should note that the hash is needed for dimensionality reduction, not for the sketch to be an accurate estimator. For certain proofs we will analyze properties of the sketch without the hashing step (as in $\mathcal{C}^S (x)$).

So $S$ will sample $x$ with different subsampling probabilities and we will expect that one will be small enough. We can then hash that subsampled vector, count the number of non-zero entries, and rescale by the sampling probability to approximate $\lVert x \rVert_0$. It then suffices to identify a suitably subsampled vector.

To do so, we will let $\tau := \frac{C}{\eps^4}$ and define estimation functions $\est_j: \R^{\log n} \rightarrow \R$, where $\est_j (v) = \frac{v_{j}}{p_{j}}$. If $j^*$ denotes the maximum index such that $v_{j^*} > \gamma$ (for a value of $\gamma$ to be specified later) then $\est(v, \gamma) = \est_{j^*} (v)$. If such an index does not exist, then $\est(v, \gamma) = \est_0 (v)$. We let $\mathcal{E} (x, \gamma) = \est(\mathcal{C} (x), \gamma)$ and $\mathcal{E}_j (x) = \est_j (\mathcal{C} (x))$. We will also let $\mathcal{E}^S (x, \gamma) = \est(\mathcal{C}^S (x), \gamma)$ and $\mathcal{E}^S_j (x) = \est_j (\mathcal{C}^S (x))$.

Note that we can replace all instances of $n$ in the above definitions with $d$ and our algorithm will just sketch the row space rather than the column space. We use $n$ in our discussion just to keep the exposition similar to the case of $0 < p < 2$ and to emphasize the similarities in technique.

For ease of notation in our proofs, we will omit the parameter $\gamma$ in $\mathcal{E} (x)$, $\mathcal{E}^S (x)$, $\mathcal{E}_i (x)$, and  $\mathcal{E}^S_i (x)$ if it is clear that $\gamma = \tau$.

The idea is that past $j^*$ we can be confident that we are subsampling $x$ with so small of a probability that we barely sample any elements. On the other hand, if all the subsampled values are too small, then we can be confident that $\lVert x \rVert_0$ itself was small.

To sketch a vector it is enough to show that at the index $j^*$, a $p_{j^*}$ fraction of $x$ is sampled up to a relative error of $\eps$. For the purposes of our low rank approximation algorithm, we will want a slightly stronger condition that the indices around $j^*$ will be sampled ``as expected'' and that the value of $j^*$ will be approximately $\log (\lVert x \rVert_0 / \gamma)$.

Throughout this section, we will let $L_j$ denote $\lVert S_j x \rVert_0$ so $$\ex{L_j} = p_j \lVert x \rVert_0 \text{ and } \var{L_j} = p_j (1 - p_j) \lVert x \rVert_0 \leq \ex{L_j}.$$

\begin{definition}
		Given a threshold $\gamma$, let $j = \max(0,\lfloor \log_2 (\lVert x \rVert_0 / \gamma) \rfloor)$, so $\gamma \leq \frac{\lVert x \rVert_0}{2^{j}} < 2 \gamma$. Let $j^*$ be the maximum index such that $\mathcal{C}(x)_{j^*} \geq \gamma$, or $0$ if none exists.

		We say that $\mathcal{E} (x, \gamma)$ is a well-behaved sampling if \begin{enumerate}
			\item $j^* = j-1, j,$ or $j+1$
			\item If $\lVert x \rVert_0 \geq \gamma$, then $\mathcal{E}_i(x, \gamma) = (1 \pm \Theta(\eps)) \lVert x \rVert_0$ for $i = j-1, j, j+1$, and $j+2$
			\item If $\lVert x \rVert_0 < \gamma$, then $L_1 < 3 \gamma / 4$
		\end{enumerate}
	\end{definition}

To prove the correctness of our sketch, it will suffice to prove that with high probability our samplings are well-behaved samplings. We will need a folklore fact about pairwise independent hashing (the proof is included for completeness).

\begin{fact} \label{hashfact}
		If $h: [n] \to [m]$ is a pairwise independent hash function and $m \geq \Omega(n^2 / \eps)$, then with probability at least $1 - \Theta(\eps)$, $h$ will perfectly hash $[n]$.
	\end{fact}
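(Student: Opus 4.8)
The plan is to prove this via a straightforward second-moment / union bound argument over pairs, exploiting only the pairwise independence of $h$. Fix any two distinct elements $i \neq j \in [n]$. By pairwise independence of $h$, the pair $(h(i), h(j))$ is uniformly distributed over $[m] \times [m]$, so $\Pr[h(i) = h(j)] = 1/m$. This is the only probabilistic input we need; no higher moments are required.

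Next I would take a union bound over all unordered pairs. There are $\binom{n}{2} \le n^2/2$ of them, so
\[
\Pr[h \text{ is not injective on } [n]] \;=\; \Pr\Big[\exists\, i \neq j : h(i) = h(j)\Big] \;\le\; \binom{n}{2}\cdot \frac{1}{m} \;\le\; \frac{n^2}{2m}.
\]
Now I invoke the hypothesis $m \ge \Omega(n^2/\eps)$; choosing the hidden constant appropriately (say $m \ge n^2/\eps$) gives $\frac{n^2}{2m} \le \eps/2$. Hence with probability at least $1 - \eps/2 = 1 - \Theta(\eps)$ the map $h$ restricted to $[n]$ is injective, i.e.\ it perfectly hashes $[n]$.

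There is essentially no obstacle here: the argument is a textbook collision bound, and the only care needed is bookkeeping the constant hidden in $\Omega(\cdot)$ so that the failure probability comes out as $\Theta(\eps)$ rather than, say, $\eps$ exactly — but since the statement only asks for $1 - \Theta(\eps)$, any constant choice of $m = c\, n^2/\eps$ with $c \ge 1$ suffices. (If one wanted, the same computation with $m \ge 2n^2/\eps$ yields failure probability at most $\eps/4$, etc.) I would simply present the two displayed inequalities above and conclude.
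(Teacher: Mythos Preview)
Your proof is correct and essentially identical to the paper's: the paper phrases it as Markov's inequality on the expected number of collision pairs $I = \sum_{i\ne j} I_{i,j}$, which is exactly your union bound over pairs. The computation and conclusion are the same.
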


\begin{proof}
		For $i \neq j \in [n]$ let $I_{i, j}$ be an indicator variable for the event $h(i) = h(j)$. Then $I = \sum_{i \neq j} I_{i,j}$ is the total number of collisions. We have $\ex{I} = \sum_{i \neq j} \ex{I_{i,j}} = \sum_{i \neq j} \frac{1}{m} \leq \frac{n^2}{m} \leq O(\eps)$. By Markov's Inequality, $\bpr{I \geq 1} \leq O(\eps)$ and the result follows.
	\end{proof}

To make use of this fact we will set $C'$ to be significantly larger than $C^2$. These hash sizes are chosen such that they are at least $\Omega(\gamma^2)$. Thus any subsampling past level $j^*$ will likely result in a perfect hashing.

\begin{lem} \label{wellbehaved}
		If $O(1 / \eps^4) > \gamma > \Omega(1 / \eps^3)$, then with probability at least $1 - \Theta(\eps)$ over the randomness of $S$ and $H$, $\mathcal{E}(x, \gamma)$ is a well-behaved sampling.

		In particular, this holds when $\gamma = \tau$ or $\gamma = \eps \tau$.
	\end{lem}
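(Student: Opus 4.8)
The plan is to show that each of the three conditions in the definition of a well-behaved sampling fails with probability at most $\Theta(\eps)$, and then union bound over these $O(1)$ events. Write $L_j := \|S_j x\|_0$; this is a sum of $\|x\|_0$ independent $\mathrm{Ber}(p_j)$ variables, so $\ex{L_j} = p_j\|x\|_0$, and by the nestedness of the subsampling $L_0 \ge L_1 \ge \cdots \ge L_{\log n - 1}$. Since hashing can only merge or cancel nonzero coordinates, $\mathcal{C}(x)_i = \|H^{(i)} x\|_0 \le L_i$ always, with equality whenever $h$ is injective on the (at most $L_i$) coordinates surviving the $i$-th subsampling. So the two tools are: (a) concentration of the $L_j$ via Chebyshev's inequality (Lemma~\ref{lem:cheb1}), and (b) Fact~\ref{hashfact} applied at the $O(1)$ relevant subsampling levels to force the hash to be injective there. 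Both will give failure probability $\Theta(\eps)$ because $\gamma$ lies in the stated range and the constants $C$ (hence $\tau$) and $C'$ are chosen appropriately, so that $1/(\eps^2\gamma) = O(\eps)$ while $\gamma$ is still small enough relative to the number $C'/\eps^8$ of hash buckets for Fact~\ref{hashfact} to apply.

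First I would treat the main case $\|x\|_0 \ge \gamma$, and set $j := \lfloor\log_2(\|x\|_0/\gamma)\rfloor$, so that $\gamma \le \|x\|_0/2^j < 2\gamma$ and hence $\ex{L_i} = \|x\|_0/2^i \in [\gamma/4,\, 4\gamma]$ for every $i \in \{j-1,j,j+1,j+2\}$ (if $j \le 1$ we clamp these indices to nonnegative values; the conditions referring to negative levels are then vacuous since $j^* \ge 0$ automatically). By Lemma~\ref{lem:cheb1}, $\pr{|L_i - \ex{L_i}| > \eps\ex{L_i}} \le 1/(\eps^2\ex{L_i}) = O(1/(\eps^2\gamma)) \le \Theta(\eps)$, so with probability $1-\Theta(\eps)$ we have $L_i = (1\pm\eps)\,\|x\|_0/2^i$ simultaneously for these four $i$; in particular $L_i \le L_{j-1} < 5\gamma$ for all $i \ge j-1$ by nestedness. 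Applying Fact~\ref{hashfact} at each of the levels $j-1,\ldots,j+2$ — where at most $5\gamma = O(\gamma)$ coordinates survive and the number of buckets satisfies the hypothesis of Fact~\ref{hashfact} for $\gamma$ in the stated range — shows that with probability $1-\Theta(\eps)$ the hash is injective on the surviving support at each such level, so there $\mathcal{C}(x)_i = L_i = (1\pm\eps)\,\|x\|_0/2^i$. This gives $\mathcal{E}_i(x,\gamma) = \mathcal{C}(x)_i/p_i = (1\pm\Theta(\eps))\,\|x\|_0$, which is condition (2). For condition (1): $\mathcal{C}(x)_{j-1} = L_{j-1} \ge (1-\eps)\cdot 2\gamma > \gamma$ forces $j^* \ge j-1$, while $\mathcal{C}(x)_{j+2} \le L_{j+2} \le (1+\eps)\gamma/2 < \gamma$, combined with $\mathcal{C}(x)_i \le L_i \le L_{j+2}$ for $i \ge j+2$ (nestedness), forces $j^* \le j+1$.

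Next, in the case $\|x\|_0 < \gamma$ only condition (3) is required, i.e.\ $L_1 < 3\gamma/4$. Here $\ex{L_1} = \|x\|_0/2 < \gamma/2$, so by Lemma~\ref{lem:cheb1}, $\pr{L_1 \ge 3\gamma/4} \le \pr{|L_1 - \ex{L_1}| \ge \gamma/4} \le \ex{L_1}/(\gamma/4)^2 = O(1/\gamma) \le \Theta(\eps)$, as desired. The ``in particular'' clause then follows once one checks that $\gamma = \tau$ and $\gamma = \eps\tau$ both fall in the admissible window. The one step that needs genuine care — as opposed to the routine Chebyshev/Markov estimates above — is precisely this bookkeeping of constants: one must pick $C$ and $C'$ (hence $\tau$, $\eps\tau$, and the bucket count $C'/\eps^8$) so that $\gamma$ is simultaneously large enough that the Chebyshev bounds on $L_{j-1},\ldots,L_{j+2}$ and on $L_1$ all beat $\Theta(\eps)$, yet small enough that Fact~\ref{hashfact} still yields perfect hashing with failure probability $\Theta(\eps)$ at the $O(1)$ relevant levels (where the surviving support has size $\Theta(\gamma)$). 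This compatibility is exactly what determines the stated range of $\gamma$ and the relation $C' \gg C^2$.
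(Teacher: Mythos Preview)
Your proof is correct and follows essentially the same approach as the paper's: Chebyshev concentration (Lemma~\ref{lem:cheb1}) on $L_{j-1},\ldots,L_{j+2}$, Fact~\ref{hashfact} at those same levels, and nestedness of the $L_i$ to control all deeper levels. The only minor omission is that in the case $\|x\|_0 < \gamma$ condition (1) is still in force, not just condition (3); but this is immediate since $\mathcal{C}(x)_i \le L_i \le L_0 = \|x\|_0 < \gamma$ for every $i$ forces $j^* = 0 = j$.
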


\begin{proof}
		We let $j^*$ and $v$ be as given in the definition of well-behaved. First we consider the case when $\lVert x \rVert_0 \geq \gamma$.

		Note that for $i = j-1, j, j+1,$ or $j+2$, we have $\ex{L_i} \geq \lVert x \rVert_0 / 2^{j+2} \geq \gamma / 4$. Since $\var{L_i} \leq \ex{L_i}$, then by Chebyshev's Inequality, we know $$\bpr{L_i \notin (1 \pm \eps) \ex{L_i}} \leq \left( \frac{\sqrt{\var{L_i}}}{\eps \ex{L_i}} \right)^2 \leq \frac{1}{\eps^2 \ex{L_i}} \leq \frac{4}{\eps^2 \gamma} \leq O(\eps).$$

		For the given values of $i$, we have $\ex{L_i} \leq \lVert x \rVert_0 / 2^{j-1} \leq 4 \gamma$. Since $H_0$ hashes to a range of size $C' / \eps^8 > (4 \gamma)^2$, then by Fact \ref{hashfact}, $H_0$ will perfectly hash the non-zero entries of $S_i x$ for the given values of $i$ with probability at least $1 - \Theta(\eps)$.

		By a union bound, $\mathcal{C}(x)_{i}  = (1 \pm \eps) \ex{L_i}$ for $i = j-1, j, j+1,$ or $j+2$ with probability at least $1 - \Theta(\eps)$. Thus $$\bpr{\mathcal{E}_i(x) = (1 \pm \eps) \lVert x \rVert_0} = \bpr{\mathcal{C}(x)_{i} = (1 \pm \eps) L_i} \geq 1 - \Theta(\eps)$$ which satisfies (ii).

		As we argued above, with probability at least $1 - \Theta(\eps)$ both $\mathcal{C}(x)_{j-1} \geq (1 - \eps) \ex{L_{j-1}} \geq 3\gamma / 2$ and $\mathcal{C}(x)_{j+2} \leq (1 + \eps) \ex{L_{j+2}} \leq 3\gamma / 4$ hold. By the nestedness of our sampling procedure, for any $i > j+2$ we have $\mathcal{C}(x)_{i} \leq 3 \gamma / 4$. Thus $j^* = j-1, j$, or $j+1$ which satisfies (i).

		Now suppose $\lVert x \rVert_0 < \gamma$. This implies $j = 0$ and $j^* = 0$ by definition which satisfies (i). If $\lVert x \rVert_0 \geq \gamma / 2$, then by our reasoning above, $L_1 < 3 \gamma / 4$ with probability at least $1 - \Theta(\eps)$. If $\lVert x \rVert_0 < \gamma / 2$, then $L_1 < \gamma / 2$ by the nestedness property of our sampling procedure. Therefore (iii) is satisfied.
\end{proof}

It follows that for a given $x$, with probability at least $1 - \Theta(\eps)$, $\mathcal{E}(x) = (1 \pm \eps) \lVert x \rVert_0$.

We can also get tail bounds for $\mathcal{E}_j(x)$ $(j = 1, \ldots, \log n)$ and $\mathcal{E}(x)$.

\begin{lem} \label{L0tail}
	Let $M$ be a large constant. Then $\bpr{\mathcal{E}_i(x) > M \lVert x \rVert_0} \leq \frac{1}{M}$ (for arbitrary $i$) and $\bpr{\mathcal{E}(x) > M \lVert x \rVert_0} \leq O \left( \frac{1}{M} + \eps \right)$. Furthermore, $\bpr{\mathcal{E}^S_i(x) > M \lVert x \rVert_0} \leq \frac{1}{M}$ and $\bpr{\mathcal{E}^S(x) > M \lVert x \rVert_0} \leq O \left( \frac{1}{M} + \eps \right)$.
\end{lem}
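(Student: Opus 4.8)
The plan is to dominate all four quantities, for an arbitrary threshold $\gamma$ (in particular $\gamma = \tau$), by the running maximum of a single nonnegative martingale, and then conclude by Markov's and Doob's inequalities; nothing beyond the facts already recorded in this section is needed.

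First I would set up the martingale. Recall $L_j = \|S_j x\|_0$ and $p_j = \min(2^{-j},1)$, and put $M_j := L_j/p_j = 2^j L_j$ for $0 \le j \le \log n - 1$; let $\mathcal{F}_j$ be the $\sigma$-field generated by the randomness defining $S_0,\dots,S_j$. I claim $(M_j)_j$ is a nonnegative martingale with $\ex{M_j} = \|x\|_0$ for every $j$. Nonnegativity is clear, and $M_0 = L_0/p_0 = \|x\|_0$ since $p_0 = 1$ and $S_0 = I$. The martingale property uses the nested construction of the $S_j$: conditioned on $\mathcal{F}_j$, each of the $L_j$ coordinates of $x$ surviving to level $j$ independently survives to level $j+1$ with conditional probability $p_{j+1}/p_j = 1/2$, so $\ex{L_{j+1}\mid\mathcal{F}_j} = L_j/2$ and hence $\ex{M_{j+1}\mid\mathcal{F}_j} = 2^{j+1}\cdot L_j/2 = M_j$. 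Finally $\ex{M_j} = \ex{L_j}/p_j = \|x\|_0$ by the identity $\ex{L_j} = p_j\|x\|_0$ noted above.

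Next I would record the pointwise domination, valid for every outcome (and every realization of the hash). Since $\mathcal{C}^S(x)_j = L_j$, we have $\mathcal{E}^S_j(x) = L_j/p_j = M_j$. Since applying the hash $H_0$ can only decrease the number of nonzero coordinates, $\mathcal{C}(x)_j = \|H^{(j)}x\|_0 \le \|S_j x\|_0 = L_j$, so $\mathcal{E}_j(x) = \mathcal{C}(x)_j/p_j \le M_j$. For the thresholded estimators, the index $j^\star$ selected by $\est(\cdot,\gamma)$ — the largest level whose block exceeds $\gamma$, or $0$ if none does — always lies in $\{0,\dots,\log n-1\}$, hence $\mathcal{E}^S(x) = \mathcal{C}^S(x)_{j^\star}/p_{j^\star} = M_{j^\star} \le \max_{0\le l\le\log n-1} M_l$, and likewise $\mathcal{E}(x) = \mathcal{C}(x)_{j^\star}/p_{j^\star} \le M_{j^\star} \le \max_l M_l$.

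To finish: for the fixed-level estimators I would apply Markov's inequality to the nonnegative $M_j$, getting $\bpr{\mathcal{E}^S_j(x) > M\|x\|_0} = \bpr{M_j > M\|x\|_0} \le \ex{M_j}/(M\|x\|_0) = 1/M$, and the same bound then holds for $\mathcal{E}_j$ since $\mathcal{E}_j(x) \le M_j$. For the thresholded estimators I would apply Doob's maximal (submartingale) inequality to $(M_l)_{l=0}^{\log n-1}$: $\bpr{\max_l M_l \ge M\|x\|_0} \le \ex{M_{\log n-1}}/(M\|x\|_0) = 1/M$, so both $\bpr{\mathcal{E}(x) > M\|x\|_0}$ and $\bpr{\mathcal{E}^S(x) > M\|x\|_0}$ are at most $1/M$. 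The one point needing care is recognizing that $L_j/p_j$ is a martingale under the nested subsampling, so that Doob's maximal inequality applies: a naive union bound over the $\log n$ levels would lose a $\log n$ factor and destroy the clean $1/M$ tail. The rest — that hashing only shrinks supports, and that $j^\star$ is one of finitely many fixed indices — is routine.
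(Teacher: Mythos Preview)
Your proof is correct and takes a genuinely different route for the thresholded estimators $\mathcal{E}(x)$ and $\mathcal{E}^S(x)$. For the fixed-level estimators $\mathcal{E}_j$ and $\mathcal{E}^S_j$ both you and the paper simply apply Markov's inequality to $L_j/p_j$. For the thresholded estimators, however, the paper decomposes over the random index $j^*$, writing
\[
\bpr{\mathcal{E}(x) > M\lVert x\rVert_0} = \sum_i \bpr{\mathcal{C}_i(x)/p_i > M\lVert x\rVert_0}\,\bpr{j^* = i},
\]
and then bounds each first factor by $1/M$. This factorization is not obviously justified, since $j^*$ and $\mathcal{C}_i(x)$ are determined by the same randomness and are not independent; a naive union bound over $i$ would instead lose a $\log n$ factor.

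Your approach sidesteps this entirely: you recognize that under the nested subsampling $M_j = L_j/p_j$ is a nonnegative martingale with constant mean $\lVert x\rVert_0$, and that all four estimators are pointwise dominated either by $M_j$ or by $\max_l M_l$. Doob's maximal inequality then gives the clean $1/M$ tail for the running maximum without any decomposition over $j^*$. This is both more rigorous and more unified than the paper's argument, and the key observation---that the nested geometric subsampling makes $L_j/p_j$ a martingale---is exactly the structural fact needed to avoid the $\log n$ loss.
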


\begin{proof}
	Let $j^*$ be chosen so that $\mathcal{E}(x) = \frac{\mathcal{C}_{j^*} (x)}{p_{j^*}}$.

	By Markov's Inequality, we have

	\begin{align*} \bpr{\mathcal{E}_i(x) > M \lVert x \rVert_0} &= \bpr{\mathcal{C}_i(x) / p_i > M \lVert x \rVert_0} \\
		&\leq \bpr{L_i / p_i > M \lVert x \rVert_0} \\
		&\leq \frac{\ex{L_i}}{p_i M \lVert x \rVert_0} \\
		&= \frac{1}{M}
	\end{align*}

	for an arbitrary index $i$ as desired. Let $W$ denote the event that $\mathcal{E}(x)$ is well-behaved. Then Lemma~\ref{wellbehaved} tells us that

	\begin{align*} \bpr{\mathcal{E}(x) > M \lVert x \rVert_0} &= \bpr{W} \bpr{\mathcal{E}(x) > M \lVert x \rVert_0 \ | \ W} \\
        &\indent + \bpr{\overline{W}} \bpr{\mathcal{E}(x) > M \lVert x \rVert_0 \ | \ \overline{W}} \\
		&\leq \frac{\bpr{\mathcal{E}_i(x) > M \lVert x \rVert_0 \text{ for } i = j-1, j, j+1}}{\bpr{W}} + O(\eps) \\
		&\leq O \left( \frac{1}{M} + \eps \right)
	\end{align*} and the result follows.
\end{proof}

Let $K = \text{poly}(k, 1 / \delta, 1 / \eps)$ for some $\delta > 0$ and $\mathcal{E}^{(1)}, \ldots, \mathcal{E}^{(K)}$ be independent instances of the sketching procedure $\mathcal{E}$. Let $\mathcal{A} (x) = \begin{bmatrix} \mathcal{E}^{(1)} (x) \\ \vdots \\ \mathcal{E}^{(K)} (x) \end{bmatrix}$.

	%Given an index $j$, we can similarly define indepedent instances $\mathcal{E}_j^{(1)}, \ldots, \mathcal{E}_j^{(K)}$ of the sketching procedure $\mathcal{E}_j$, as well as $\mathcal{A}_j (x) = \begin{bmatrix} \mathcal{E}_j^{(1)} (x) \\ \vdots \\ \mathcal{E}_j^{(K)} (x) \end{bmatrix}$.

	For a matrix $M$, we let $\mathcal{A}(M)$ denote the matrix whose $i$th column is $\mathcal{A}(M_{:,i})$. %and we let $\mathcal{A}_j(M)$ denote the matrix whose $i$th column is $\mathcal{A}_j(M_{:,i})$.

	We can also define $\mathcal{A}^S(M)$ the natural way.

	We can study medians and quantiles of $\mathcal{A}(M)$ like we did the medians and quantiles of our sketches based on $p$-stable variables.

\begin{lem} \label{L0quantiles}
	\begin{enumerate}
		\item $\pr{q_{\frac{1}{2} - \Theta(\eps)} (\mathcal{A}(x)) < (1 - \eps) \lVert x \rVert_0} < \exp (-\Theta(\eps^2) K)$
		\item $\pr{q_{\frac{1}{2} + O (\eps)} (\mathcal{A} (x)) > (1 + \eps) \lVert x \rVert_0} < \exp (-\Theta(\eps^2) K)$
		\item For $T > 2$, $\pr{q_{1 - \frac{\eps}{2}} (\mathcal{A} (x)) > \frac{T}{\eps} \lVert x \rVert_0} < \exp (- \Theta (\eps) T K)$
		\item For $T > 2$, $\pr{\med (\mathcal{A} (x)) > T \lVert x \rVert_0} < \exp (- \Theta(T) K)$
		%\item For $T > 2$ and any index $j$, $\pr{q_{1 - \frac{\eps}{2}} (\mathcal{A}_j (x)) > \frac{T}{\eps} \lVert x \rVert_0} < \exp (- \Theta (\eps) T K)$
	\end{enumerate}

	The analagous bounds for $\mathcal{A}^S (x)$ also hold.
\end{lem}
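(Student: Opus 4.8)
The plan is to mirror the proof of Lemma~\ref{quantiles} essentially verbatim, with the Cauchy facts of Fact~\ref{cauchyfact} replaced by the two per-instance guarantees already established for the $\ell_0$ sketch $\mathcal{E}$. The first ingredient is the two-sided estimate recorded just above: as a consequence of Lemma~\ref{wellbehaved}, for any fixed $x$ one has $\Pr[\mathcal{E}(x) = (1\pm\eps)\|x\|_0] \ge 1-\Theta(\eps)$, and hence in particular $\Pr[\mathcal{E}(x) < (1-\eps)\|x\|_0] \le \Theta(\eps)$ and $\Pr[\mathcal{E}(x) > (1+\eps)\|x\|_0] \le \Theta(\eps)$. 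The second ingredient is Lemma~\ref{L0tail}: for every $M > 1$, $\Pr[\mathcal{E}(x) > M\|x\|_0] \le 1/M$ and likewise $\Pr[\mathcal{E}_j(x) > M\|x\|_0] \le 1/M$ for each level $j$. Since $\mathcal{A}(x)$ merely stacks the outputs of $K$ independent copies $\mathcal{E}^{(1)},\dots,\mathcal{E}^{(K)}$, each of the four claims asserts that, with tiny probability, more than a prescribed fraction of these $K$ i.i.d.\ coordinates falls on the ``wrong'' side of a fixed threshold; so in every case the proof reduces to a Chernoff bound on a sum of $K$ independent indicators, exactly as in the proof of Lemma~\ref{quantiles}.

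Carrying this out: for (i), $q_{1/2-\Theta(\eps)}(\mathcal{A}(x)) < (1-\eps)\|x\|_0$ occurs iff more than a $(1/2-\Theta(\eps))$-fraction of the $\mathcal{E}^{(i)}(x)$ lie below $(1-\eps)\|x\|_0$; by the first ingredient each such event has probability at most $\Theta(\eps)$, which is below $1/2-\Theta(\eps)$ once $\eps$ is a sufficiently small constant, so a Chernoff bound gives failure probability $\exp(-\Theta(K)) \le \exp(-\Theta(\eps^2)K)$. Part (ii) is the mirror image, using $\Pr[\mathcal{E}^{(i)}(x) > (1+\eps)\|x\|_0] \le \Theta(\eps)$ together with the fact that $q_{1/2+O(\eps)}(\mathcal{A}(x)) > (1+\eps)\|x\|_0$ means more than a $(1/2-O(\eps))$-fraction of coordinates exceed $(1+\eps)\|x\|_0$. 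For (iii), $q_{1-\eps/2}(\mathcal{A}(x)) > \frac{T}{\eps}\|x\|_0$ means more than an $(\eps/2)$-fraction of the $\mathcal{E}^{(i)}(x)$ exceed $\frac{T}{\eps}\|x\|_0$; Lemma~\ref{L0tail} with $M = T/\eps$ bounds each such probability by $\eps/T < \eps/2$ (for $T>2$), and the Chernoff estimate of the proof of Lemma~\ref{quantiles}(iii), with $m$ replaced by $K$, yields $\exp(-\Theta(\eps)TK)$. Part (iv) is the case $\med = q_{1/2}$: by Lemma~\ref{L0tail} with $M = T$, $\Pr[\mathcal{E}^{(i)}(x) > T\|x\|_0] \le 1/T < 1/2$, and the Chernoff bound on the count of coordinates exceeding $T\|x\|_0$ gives $\exp(-\Theta(T)K)$, exactly as in Lemma~\ref{quantiles}(iv).

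Finally, for the analogous bounds for $\mathcal{A}^S$, I would observe that both per-instance ingredients carry over to the un-hashed sketch $\mathcal{E}^S$: Lemma~\ref{L0tail} states its tail bounds for $\mathcal{E}^S_j$ and $\mathcal{E}^S$ directly, while the estimate $\Pr[\mathcal{E}^S(x) = (1\pm\eps)\|x\|_0] \ge 1-\Theta(\eps)$ follows from the Chebyshev concentration $\Var[L_i] \le \E[L_i] = p_i\|x\|_0$ exactly as in the proof of Lemma~\ref{wellbehaved}, now without the hashing steps since $\mathcal{C}^S(x)_i = \|S_i x\|_0 = L_i$ involves no collisions; the four Chernoff arguments then go through unchanged. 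I do not anticipate a genuine obstacle: all the probabilistic work is already done in Lemmas~\ref{wellbehaved} and~\ref{L0tail}, and what remains is the standard amplification over $K$ independent copies. The only points that need a little care are rescaling the internal $\Theta(\cdot)$ constants so that each per-instance failure probability sits strictly below the relevant quantile fraction, and noting that the $\exp(-\Theta(K))$ bounds obtained in (i) and (ii) comfortably imply the stated $\exp(-\Theta(\eps^2)K)$.
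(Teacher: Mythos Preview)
Your proposal is correct and matches the paper's approach exactly: the paper's proof is a one-line pointer stating that the result follows by combining Chernoff bounds with Lemma~\ref{wellbehaved} and Lemma~\ref{L0tail} in the same way that the proof of Lemma~\ref{quantiles} combined Chernoff bounds with the Cauchy tail facts, which is precisely what you have spelled out. Your additional remarks on $\mathcal{A}^S$ (that Lemma~\ref{L0tail} already covers $\mathcal{E}^S$ and that well-behavedness for $\mathcal{E}^S$ follows from the same Chebyshev step without the hashing analysis) are a correct elaboration of a point the paper leaves implicit.
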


\begin{proof}
	We can use Chernoff bounds, Lemma \ref{wellbehaved}, and Lemma \ref{L0tail} to prove this in a similar way to how the proof of Lemma \ref{quantiles} used Chernoff bounds and the tail bounds on Cauchy sketches.
\end{proof}

We can now deduce a finite field subspace embedding result.
\begin{corollary} \label{L0medembed}
	Let $X \subset \F_q^n$ be a $k$-dimensional space. With probability at least $1 - \Theta(\delta)$, for all $x \in X$, $$(1 - \eps) \lVert x \rVert_0 \leq q_{\frac{1}{2} - \Theta(\eps)}(\mathcal{A}(x)) \leq  q_{\frac{1}{2} + O(\eps)}(\mathcal{A}(x)) \leq (1 + \eps) \lVert x \rVert_0$$ and $$(1 - \eps) \lVert x \rVert_0 \leq q_{\frac{1}{2} - \Theta(\eps)}(\mathcal{A}^S(x)) \leq  q_{\frac{1}{2} + O(\eps)}(\mathcal{A}^S (x)) \leq (1 + \eps) \lVert x \rVert_0$$

\end{corollary}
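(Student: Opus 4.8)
The plan is to exploit the fact that, unlike the real subspace treated in Lemma~\ref{embedding}, the subspace $X \subseteq \F_q^n$ is \emph{finite}: it contains exactly $q^k$ vectors. Consequently no net argument or Auerbach-basis machinery is needed, and a direct union bound over all of $X$ suffices.

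First I would fix a nonzero vector $x \in X$ and invoke parts (i) and (ii) of Lemma~\ref{L0quantiles}, which give
\begin{align*}
\Pr\big[q_{\frac12 - \Theta(\eps)}(\mathcal{A}(x)) < (1-\eps)\lVert x \rVert_0\big] &< \exp(-\Theta(\eps^2) K), \\
\Pr\big[q_{\frac12 + O(\eps)}(\mathcal{A}(x)) > (1+\eps)\lVert x \rVert_0\big] &< \exp(-\Theta(\eps^2) K),
\end{align*}
together with the same two bounds for $\mathcal{A}^S(x)$. (For $x = 0$ every sketch is identically $0$, so all the claimed inequalities hold deterministically.) Taking a union bound over the at most $q^k$ vectors of $X$ and over the $\mathcal{A}$ and $\mathcal{A}^S$ versions bounds the total failure probability by $O\big(q^k \exp(-\Theta(\eps^2) K)\big)$.

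It then remains to check that $K = \poly(k, 1/\delta, 1/\eps)$ can be chosen so that $q^k \exp(-\Theta(\eps^2) K) \le \Theta(\delta)$; since $q$ is a fixed constant, taking $K \ge C \eps^{-2}\big(k + \log(1/\delta)\big)$ for a large enough absolute constant $C$ already suffices, and this is indeed a polynomial in $k$, $1/\delta$, and $1/\eps$. On the complement event, every $x \in X$ simultaneously satisfies
$$(1 - \eps)\lVert x \rVert_0 \le q_{\frac12 - \Theta(\eps)}(\mathcal{A}(x)) \le q_{\frac12 + O(\eps)}(\mathcal{A}(x)) \le (1 + \eps)\lVert x \rVert_0,$$
where the middle inequality is immediate from the definition of the quantile operator (a larger index $\alpha$ forces the selected value to exceed more coordinates, hence can only increase it, and $\frac12 - \Theta(\eps) \le \frac12 + O(\eps)$), and likewise for $\mathcal{A}^S$. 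This is exactly the corollary.

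The only real point to verify is the bookkeeping of the last paragraph: that the per-vector concentration $\exp(-\Theta(\eps^2) K)$ supplied by Lemma~\ref{L0quantiles} is strong enough to absorb the $q^k$ union-bound loss while keeping $K$ polynomial. Because the exponent is linear in $K$ and we only pay an additive $O(k)$ term (for constant $q$) in the exponent, this is comfortably satisfied, so no additional ideas beyond Lemma~\ref{L0quantiles} and a union bound are required.
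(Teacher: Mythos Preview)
Your proposal is correct and follows exactly the same approach as the paper: the paper's proof is a single sentence invoking Lemma~\ref{L0quantiles}, the fact that $|X| = q^k$, and a union bound. You have simply spelled out the bookkeeping (the choice of $K$, the trivial $x=0$ case, and the monotonicity of $q_\alpha$) that the paper leaves implicit.
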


\begin{proof}
	We can use Lemma \ref{L0quantiles} and the fact that $|X| = q^k$ to deduce the result with a union bound.
\end{proof}

We can also bound the median of an $\ell_0$ sketch of a fixed matrix.
%FIXED MATRIX SKETCH BOUND
\begin{lem} \label{L0fixedbound}
	Let $M$ be an $n \times d$ matrix. For $\eps > 0$, with probability $1 - O(1)$, $$(1 - \eps) \lVert M \rVert_0 \leq \med(\mathcal{A}(M)) \leq (1 + \eps) \lVert M \rVert_0 $$ and $$(1 - \eps) \lVert M \rVert_0 \leq \med(\mathcal{A}^S(M)) \leq (1 + \eps) \lVert M \rVert_0 $$
\end{lem}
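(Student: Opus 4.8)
The plan is to mirror the proof of Lemma~\ref{fixedbound} essentially verbatim, replacing the median-of-Cauchy estimator by the $\ell_0$-estimator $\mathcal{A}(\cdot)$ and invoking Lemma~\ref{L0quantiles} wherever Lemma~\ref{quantiles} was used. Since the two claimed inequalities (one for $\mathcal{A}$, one for $\mathcal{A}^S$) have identical proofs, and Lemma~\ref{L0quantiles} supplies the same four tail estimates for both $\mathcal{A}$ and $\mathcal{A}^S$, I only describe the argument for $\mathcal{A}$.

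First I would fix a column index $i \in [d]$ and record, via parts (i) and (ii) of Lemma~\ref{L0quantiles} with $K$ a sufficiently large polynomial in $1/\eps$ (so that $\exp(-\Theta(\eps^2)K) \le \eps$), that $\med(\mathcal{A}(M_{:,i})) = (1\pm\eps)\|M_{:,i}\|_0$ with probability at least $1-\Theta(\eps)$. Call $i$ \emph{good} if $\med(\mathcal{A}(M_{:,i})) \ge (1-\eps)\|M_{:,i}\|_0$ and \emph{bad} otherwise. Then $\Ex[\sum_{\text{bad }i}\|M_{:,i}\|_0] \le \Theta(\eps)\|M\|_0$, so by Markov's inequality, with probability $1-1/\Omega(1)$ we simultaneously have $\sum_{\text{bad }i}\|M_{:,i}\|_0 \le O(\eps)\|M\|_0$ and $\sum_{\text{good }i}\|M_{:,i}\|_0 \ge (1-\Theta(\eps))\|M\|_0$. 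The lower bound then drops out by discarding bad columns:
\[
\med(\mathcal{A}(M)) \ge \sum_{\text{good }i}\med(\mathcal{A}(M_{:,i})) \ge (1-\eps)\sum_{\text{good }i}\|M_{:,i}\|_0 \ge (1-O(\eps))\|M\|_0.
\]

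For the upper bound I would reuse the ``$k$-large'' bucketing: call $i$ \emph{small} if $\med(\mathcal{A}(M_{:,i})) < (1+\eps)\|M_{:,i}\|_0$, and \emph{$k$-large} (for integer $k\ge1$) if $(k+1+\eps)\|M_{:,i}\|_0 > \med(\mathcal{A}(M_{:,i})) \ge (k+\eps)\|M_{:,i}\|_0$. A $k$-large column with $k\in\{1,2\}$ satisfies $\med(\mathcal{A}(M_{:,i})) \ge (1+\eps)\|M_{:,i}\|_0$, which happens with probability $\le \Theta(\eps)$ by Lemma~\ref{L0quantiles}(ii); for $k\ge3$, being $k$-large forces $\med(\mathcal{A}(M_{:,i})) \ge k\|M_{:,i}\|_0$, which by Lemma~\ref{L0quantiles}(iv) has probability at most $\exp(-\Theta(k)K)$. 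Hence $\Ex\big[\sum_{k\ge1}k\sum_{k\text{-large }i}\|M_{:,i}\|_0\big] \le O(\eps)\|M\|_0 + \sum_{k\ge3} k\exp(-\Theta(k)K)\|M\|_0 \le O(\eps)\|M\|_0$, using that $K$ is large enough for the tail series to contribute $O(\eps)$. Markov's inequality makes this sum $O(\eps)\|M\|_0$ with probability $1-1/\Omega(1)$, and then
\[
\med(\mathcal{A}(M)) = \sum_{\text{small }i}\med(\mathcal{A}(M_{:,i})) + \sum_{k\ge1}\sum_{k\text{-large }i}\med(\mathcal{A}(M_{:,i})) \le (1+\eps)\|M\|_0 + \sum_{k\ge1}3k\sum_{k\text{-large }i}\|M_{:,i}\|_0 \le (1+O(\eps))\|M\|_0.
\]
A union bound over the constantly many failure events and a rescaling of $\eps$ by a constant finish the argument, and the identical reasoning applied to $\mathcal{A}^S$ (using the $\mathcal{A}^S$-versions in Lemma~\ref{L0quantiles}) yields the second pair of inequalities.

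The step I expect to require the most care is checking that the overestimation tails decay fast enough: unlike the Cauchy case, where the sketch size $m$ could be tuned directly, here the repetition count $K$ is merely a polynomial in $1/\eps$ (and $k,1/\delta$), so one must confirm the constants hidden in Lemma~\ref{L0quantiles}(iv) make $\sum_{k\ge3}k\exp(-\Theta(k)K) = O(\eps)$. This holds because that sum is at most $\exp(-\Theta(K))$, which is smaller than any fixed power of $\eps$, but the constant in the exponent must be tracked through Lemmas~\ref{wellbehaved} and~\ref{L0tail}. A secondary point is that the per-coordinate failure of ``well-behavedness'' must already be subsumed in the $\Theta(\eps)$ bad-probability supplied by Lemma~\ref{L0quantiles}, which it is, since Lemma~\ref{L0quantiles} is itself proved by combining Lemma~\ref{wellbehaved} with Chernoff bounds over the $K$ independent instances.
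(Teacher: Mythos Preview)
Your proposal is correct and follows essentially the same approach as the paper: the paper's own proof is a one-sentence sketch instructing the reader to mirror Lemma~\ref{fixedbound}, substituting Lemma~\ref{wellbehaved} and Lemma~\ref{L0quantiles} for Fact~\ref{cauchyfact} and Lemma~\ref{quantiles}, which is exactly what you carry out in detail. Your explicit check that $K = \poly(1/\eps)$ suffices to make the overestimation tail $\sum_{k\ge 3} k\exp(-\Theta(k)K)$ negligible is a correct elaboration of a point the paper leaves implicit.
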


\begin{proof}
	The proof follows the same structure as the proof of Lemma \ref{fixedbound} where we bound the expected sum of $\med(\mathcal{A}(M_{:,i}))$ over values of $i$ where $\med(\mathcal{A}(M_{:,i}))$ is large and conclude with Markov's Inequality. Instead of Fact \ref{cauchyfact} and Lemma \ref{quantiles}, we use Lemma \ref{wellbehaved} and Lemma \ref{L0quantiles}.
\end{proof}

\begin{theorem} \label{L0constantsketch}
	Let $U \in \F_q^{n \times k}, A \in \F_q^{n \times d}$. With probability $1 - O(1)$, $$\med(\mathcal{A}(UV - A)) \geq (1 - O(\eps)) \lVert UV - A \rVert_0$$ for arbitrary $V$.
\end{theorem}

\begin{proof}
Let $V^*$ be chosen to minimize $\lVert UV^* - A \rVert_0$.

	For column indices $i$, let $J_i = \max(0, \log(\lVert (UV^* - A)_{:,i} \rVert_0 / \gamma))$

	By Lemmas \ref{wellbehaved} and \ref{L0fixedbound}, the following statements hold with probability $1 - O(1)$: \begin{enumerate}[label = (\roman*)]
		\item $\mathcal{E}$ is well-behaved on $Ux$ for all $x$
		\item For each $i$ with probability at least $1 - \eps^3$, $$\med(\mathcal{A}([U \ A_{:,i}]x)) \geq (1 - \eps^3) \lVert [U \ A_{:,i}]x \rVert_0$$ for all $x$
		\item $\med(\mathcal{A}(UV^* - A)) \leq (1 + \eps^3) \lVert UV^* - A \rVert_0$
		\item For $T > 2$, and each $i$ with probability $\exp(-\Theta(T) \cdot \text{poly} (k / \eps))$, $ q_{1 - \eps / 2}(\mathcal{A}(UV^* - A)_{:,i}) > \frac{T}{\eps} \cdot \lVert (UV^* - A)_{:,i} \rVert_0 $
	\end{enumerate}

	We say a column index $i$ is $\it{good}$ if $$\med(\mathcal{A}([U \ A_{:,i}]y)) \geq (1 - \eps^3)\lVert [U \ A_{:,i}]y \rVert_0$$ for all $y \in \R^{k+1}$, and $\it{bad}$ otherwise. Let $Q_i =  q_{1 - \eps / 2}(\mathcal{A}^S(UV^* - A, \eps \tau)_{:,i}))$. We say a bad column index is $\it{large}$ if $$ \eps \lVert (UV - A)_{:,i} \rVert_0 \geq \frac{2}{1 - \eps}Q_i + \lVert (UV^* - A)_{:,i} \rVert_0$$ and $\it{small}$ otherwise.

	By (ii), we know that $\ex{\sum_{\text{bad } i} \lVert (UV^* - A)_{:,i} \rVert_0} \leq \eps^3 \lVert UV^* - A \rVert_0$. By Markov's inequality, we know that with probability $1 - O(1)$, \begin{equation} \label{L0markovbadbound} \sum_{\text{bad } i} \lVert (UV^* - A)_{:,i} \rVert_0 \leq O(\eps^3) \lVert UV^* - A \rVert_0 \end{equation}

%BOUNDING BAD SKETCHES
		By (iii) \begin{align*}(1 + \eps^3) \lVert UV^* - A \rVert_0
			&\geq \med (\mathcal{A}(UV^* - A)) \\
			&\geq (1 - \eps^3) \sum_{\text{good } i} \lVert (UV^* - A)_{:,i} \rVert_0  + \sum_{\text{bad } i} \med (\mathcal{A}(UV^* - A)_{:,i}) \\
			&\geq (1 - \eps^3) (1 - \Theta(\eps^3)) \lVert UV^* - A \rVert_0  + \sum_{\text{bad } i} \med (\mathcal{A}(UV^* - A)_{:,i}), \\
		\end{align*}

	where the second inequality comes from the definition of good, and the third inequality comes from (\ref{L0markovbadbound}).

	Thus \begin{equation} \label{L0sketchbadbound} \sum_{\text{bad } i} \med(\mathcal{A}(UV^* - A)_{:,i}) \leq O(\eps^3) \lVert UV^* - A \rVert_0 \end{equation}

%bounding small columns
We can also define notions of $T$-$\it{large}$ and $\it{tiny}$ analogous to those in the proof of  Theorem~\ref{constantsketch}. Using similar arguments to those in the proof, we can derive  \begin{align} \label{L0smallbound}
	\sum_{\text{small } i} \lVert (UV - A)_{:,i} \rVert_0 &\leq O(\eps)  \lVert UV^* - A \rVert_0.
\end{align}

% QUANTILE ARGUMENT
	\begin{claim} \label{L0largesketch}
		$$\sum_{\text{large } i} \med (\mathcal{A}(UV - A)_{:,i}) \geq (1 - O(\eps)) \sum_{\text{large } i} \lVert (UV - A)_{:,i} \rVert_0$$
	\end{claim}

% NOTE: May need to modify the definition of large so \lVert UV^* - A \rVert has a constant in front of it ...
\begin{proof}
	Let column $i$ be large. We have $H(UV-A)_{:,i} = HU(V - V^*)_{:,i} + H(UV^* - A)_{:,i}$.

	By the triangle inequality, we have \begin{align*} &(1 - \Theta(\eps)) \lVert U(V - V^*)_{:,i} \rVert_0 \\
		\geq &(1 - \Theta(\eps)) (\lVert (UV - A)_{:,i} \rVert_0 - \lVert (UV^* - A)_{:,i} \rVert_0) \\
		\geq &(1 - \Theta(\eps)) ((1 - \eps) \lVert (UV - A)_{:,i} \rVert_0 + \frac{2}{1 - \eps} Q_i) \\
		\geq &(1 - \Theta(\eps)) \lVert (UV - A)_{:,i} \rVert_0 + Q_i \\
		\geq &(1 - \Theta(\eps)) \lVert (UV - A)_{:,i} \rVert_0
	\end{align*} where the second inequality follows from the definition of large.

	Since $ \lVert U(V - V^*)_{:,i} \rVert_0 \geq (1 - \Theta(\eps)) \lVert (UV - A)_{:,i} \rVert_0$ and $\eps  \lVert (UV - A)_{:,i} \rVert_0 \geq Q_i$, then $Q_i / \eps \leq  \lVert U(V - V^*)_{:,i} \rVert_0$.

	If we run $K$ independent instances of $H$, then by (i), we know that at least $\frac{1}{2} + \eps$ of those instances will have estimations $\mathcal{E}(U(V - V^*)_{:,i})$ that are well-behaved and satisfy $\mathcal{E}(U(V - V^*)_{:,i}) \geq (1 - \Theta(\eps)) \lVert U(V - V^*)_{:,i} \rVert_0.$

	At least $1 - \eps / 2$ of those instances will satisfy $Q_i > \est^S ((UV^* - A)_{:,i}, \eps \tau)$. In each of these instances, there is some index $t$ which is the maximum index where $\mathcal{C}^S ((UV^* - A)_{:,i}) > \eps \tau$. This index $t$ satisfies $\est ((UV^* - A)_{:,i}, \eps \tau) \geq 2^t \eps \tau$ which implies that $$t \leq \log_2 \left( \frac{\est^S ((UV^* - A)_{:,i}, \eps \tau)}{\eps \tau} \right) < \log_2 \left( \frac{Q_i}{\eps \tau} \right) \leq \log_2 \left( \frac{\lVert U(V - V^*)_{:,i} \rVert_0}{\tau} \right) \leq J_i$$ and by the nestedness property of $S$, for every index $l \geq J_i - 1$ we have $\mathcal{C}^S ((UV^* - A)_{:,i})_l < \eps \tau$. Furthermore, $\mathcal{C} ((UV^* - A)_{:,i})_l < \eps \tau$ because $\lVert H_0 y \rVert_0 \leq \lVert y \rVert_0$ for all $y$.

	Thus, for at least $\frac{1}{2} + \eps / 2$ instances of $H$, it is true that $\mathcal{E}(U(V - V^*)_{:,i})$ is well-behaved and for every index $l \geq J_i - 1$ we have $\mathcal{C} ((UV^* - A)_{:,i})_l < \eps \tau$. We first consider the case that $\lVert U(V - V^*)_{:,i} \rVert_0 > \tau$.

	We know that for $l = J_i -1, J_i,$ or $J_i + 1$, one of those values will be the maximum value such that the $l$th block of $HU(V - V^*)_{:,i}$ has at least $\tau$ non-zero entries, and all the later blocks will have at most $3 \tau / 4$ non-zero entries. Each block of $H(UV^* - A)_{:,i}$ after the $J_i - 1$th one will have fewer than $\eps \tau$ non-zero entries. By well-behavedness, it follows that $\mathcal{E} (UV - A)_{:,i} = \mathcal{E} (U(V- V^*)_{:,i} + (UV^* - A)_{:,i}) \geq (1 - \Theta(\eps)) \lVert U(V-V^*)_{:,i} \rVert_0$ because the salient blocks of $H(UV - A)_{:,i}$ will have a number of non-zero entries differing from those blocks of $HU(V - V^*)_{:,i}$ by an additive $\Theta(\eps)$ error.

	If $\lVert U(V - V^*)_{:,i} \rVert_0 \leq \tau$, then by well-behavedness we know that block $1$ of $HU(V- V^*)_{:,i}$ will have fewer than $3 \tau / 4$ non-zero entries. In this case all blocks of $H(UV^* - A)_{:,i}$ will have fewer than $\eps \tau$ non-zero entries so all blocks of $H(UV - A)_{:,i}$ besides the zeroth block will have fewer than $\tau$ non-zero entries. Thus, $\mathcal{E}(UV - A)_{:,i} \geq (1 - \Theta(\eps)) \geq \lVert U(V-V^*)_{:,i} \rVert_0$.

	Therefore in a majority of the instances of $H$, we have $\mathcal{E} (UV - A)_{:,i} \geq (1 - \Theta(\eps)) \lVert U(V - V^*)_{:,i} \rVert_0 \geq (1 - \Theta(\eps)) \lVert (UV - A)_{:,i} \rVert_0$ and the result follows.
\end{proof}

%FINAL INEQUALITIES
Finally, \begin{align*} \med (\mathcal{A}(UV - A)) &\geq \sum_{\text{good } i} \med (\mathcal{A}(UV - A)_{:,i}) + \sum_{\text{large } i} \med (\mathcal{A} (UV - A)_{:,i}) \\
	&\geq (1 - \eps^3) \sum_{\text{good } i} \lVert (UV - A)_{:,i} \rVert_0 + (1 - O(\eps)) \sum_{\text{large } i} \lVert (UV - A)_{:,i} \rVert_0 \\
	&\geq (1 - O(\eps)) \lVert UV - A \rVert_0 - (1 - O(\eps)) \sum_{\text{small } i} \lVert (UV - A)_{:,i} \rVert_0 \\
	&\geq (1 - O(\eps)) \lVert UV - A \rVert_0 - (1 - O(\eps)) O(\eps) \lVert UV^* - A \rVert_0 \\
	&\geq (1 - O(\eps)) \lVert UV - A \rVert_0
\end{align*} where the first inequality occurs because large $i$ are bad $i$, the second inequality comes from the definition of good and Claim \ref{L0largesketch}, the third inequality comes from the definition of small, the fourth inequality comes from (\ref{L0smallbound}), and the last inequality holds because $V^*$ is a minimizer.
\end{proof}

%\begingroup
%\addtocounter{thm}{-1}
%\def\thethm{\ref{thm:main3}}
\begin{theorem}[$\mathbb{F}_q$ PTAS for $p = 0$]
	For $\eps \in (0,1)$ there is
        a $(1+\eps)$-approximation algorithm to \fieldprobbf running
        in $n \cdot d^{\poly(k/\eps)}$ time.
\end{theorem}
%\endgroup

\begin{proof}
	Suppose $U^*$ and $V^*$ ($n \times k$ and $k \times d$ respectively) are minimizers for $\lVert UV - A \rVert_0$. By Theorem \ref{L0constantsketch}, $\med(\mathcal{A}(U^* V - A)) = (1 \pm \eps) \lVert U^* V - A \rVert_0$. Since $H$ is a $\frac{C'}{\eps^4} \log n \times n$ block matrix, then $HU^*$ has $\frac{C'}{\eps^4} k \cdot \log n$ entries and we need $K$ instances of $HU^*$ for a total of $(\log n) \cdot \text{poly}(k / \eps)$ entries each having $q$ possible values. Thus we can exhaustively guess all possible values of $HU^*$ in $n^{\poly(k / \eps)}$ time.

	For each guess of $HU^*$ and each column $i$, we can try all $q^k$ possible vectors $V_i$ and choose the minimizer. Once a $V$ has been identified, we can solve for its optimal $U$ and throughout this whole process keep the best $U$ and $V$ that minimize $\lVert UV - A \rVert_0$. Since there are $d$ rows, the algorithm will have a total runtime of $d \cdot n^{\poly(k / \eps)}$.

	As we stated in the opening exposition of this section, we could have sketched over the dimension $d$ row space instead. In this case we would be guessing values for $H(V^*)^T$, a $\frac{C'}{\eps^4} \log d \times k$ matrix, which would take $d^{\poly(k / \eps)}$ time. We would then minimize over each of the $n$ rows of $U$ for a total runtime of $n \cdot d^{\poly(k / \eps)}$.
\end{proof}

	\newpage
	\section{Generalized Binary Approximation} \label{sec:combinatorial}

	Given a matrix $A\in\{0,1\}^{n\times d}$ with $n\geq d$, an integer $k$,
	and an inner product function
	$\langle .,. \rangle \colon \{0,1\}^k \times \{0,1\}^k \to \mathbb{R}$,
	the \genprobk problem asks to
	find matrices $U\in\{0,1\}^{n \times k}$ and $V\in\{0,1\}^{k \times d}$
	minimizing $\|A - U \cdot V\|_0$,
	where the product $U \cdot V$ is the $n \times d$ matrix $B$
	with $B_{i,j} = \langle U_{i,:}, V_{:,j} \rangle$.
	An algorithm for the \genprobk problem is an $\alpha$-approximation,
	if it outputs matrices $U\in\{0,1\}^{n \times k}$ and $V\in\{0,1\}^{k \times d}$
	satisfying
	\[
	\|A-U\cdot V\|_0 \leq \alpha\cdot
	\min_{U' \in \{0,1\}^{n \times k}, V' \in \{0,1\}^{k \times d}} \|A-U'\cdot V'\|_0.
	\]
	Choosing an appropriate inner product function $\langle .,. \rangle$
	which also runs in time $O(k)$, we obtain the \probk problem over
	the reals, $\mathbb{F}_2$, and the Boolean semiring.
	We assume that the function $\langle .,. \rangle$ can be evaluated in time $2^{O(k)}$,
	in order to simplify our running time bounds.

	In this section, we prove Theorem~\ref{thm:main2}, restated here for convenience.
	\begingroup
	\addtocounter{thm}{-1}
	\def\thethm{\ref{thm:main2}}
	\begin{thm}[PTAS for $p = 0$]
		For any $\eps \in (0,\tfrac{1}{2})$, there is a $(1+\eps)$-approximation algorithm
		for the \genprobk problem running in time
		$(2/\eps)^{2^{O(k)}/\eps^2} \cdot nd^{1+o(1)}$
		and succeeds with constant probability~\footnote{
			The success probability can be further amplified to $1-\delta$
			for any $\delta>0$ by running $O(\log(1/\delta))$ independent
			trials of the preceding algorithm.},
		where $o(1)$ hides a factor $\left(\log\log d\right)^{1.1}/\log d$.
	\end{thm}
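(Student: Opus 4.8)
The plan is to prove Theorem~\ref{thm:main2} by first recasting \genprobk as a clustering problem with constrained centers, as in~\eqref{eq:defClusteringFormulation}: choosing $U$ amounts to fixing the feasible center set $S_U = \{U y : y \in \{0,1\}^k\}$, choosing a column $V_{:,j}$ amounts to assigning $A_{:,j}$ to one of these centers, and once a \emph{clustering} $C = (\Cy)_{y\in\{0,1\}^k}$ of the columns is fixed, the optimal $U$ decomposes over its $n$ rows and can be found by brute force in time $2^{O(k)}\cdot nd$. The core idea is that an optimal clustering need not be known exactly: it suffices to draw $t = \poly(2^k/\eps)$ uniform samples $\tCy$ from each optimal cluster $\Cy$, form the unbiased estimator $\widetilde E = \sum_y \frac{|\Cy|}{|\tCy|}\sum_{j\in\tCy}\|A_{:,j} - U\cdot y\|_0$, and let $U(\tC)$ be any minimizer of $\widetilde E$. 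First I would establish the \textbf{sampling theorem}~\eqref{eq:samplingtheorem}, namely $\Ex_{\tC}[\|A - U(\tC)\cdot V\|_0] \le (1+\eps)\min_{U'}\|A - U'\cdot V\|_0$ for every fixed $V$, which is what makes the whole approach work despite $t$ being independent of $d$.

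The proof of~\eqref{eq:samplingtheorem} is where I expect the \textbf{main technical obstacle} to lie. The starting point is the probabilistic argument of Alon and Sudakov~\cite{AlonS99}, which gives an additive $\pm\eps nd$ guarantee for the \emph{maximization} version of Hypercube Segmentation with \emph{unconstrained} centers, and is multiplicative only because the maximum is $\ge nd/2$. I would need to (i) generalize that concentration argument so that it handles centers forced to satisfy a linear system (the rows of $U$ are coupled only through $y$, which actually helps: the estimator splits over rows, so I can analyze each coordinate $i\in[n]$ separately, bounding the probability that the sampled best-response bit differs from the true best-response bit), and (ii) upgrade from an additive to a \emph{multiplicative} $(1+\eps)$ bound for the \emph{minimization} objective, which is the genuinely harder direction since the optimum can be much smaller than $nd$. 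The key quantities to control are, per row and per cluster $y$, the fraction of columns in $\Cy$ on which the optimal bit is wrong; a Chebyshev/Chernoff estimate on the sampled fractions, combined with the observation that a coordinate is "misclassified" only when its true fraction is close to $1/2$, should yield the claim, but making the error telescope correctly across all $2^k$ clusters and $n$ rows simultaneously is delicate.

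Given~\eqref{eq:samplingtheorem}, I would assemble an inefficient deterministic PTAS: guess the (approximate) sizes $|\Cy|$ — it suffices to know them up to a $(1\pm\poly(\eps/2^k))$ factor, costing a $(t+\eps^{-1}\log d)^{2^k}$ overhead — then enumerate all $d^{2^k t}$ candidate sample families $(\tCy)_y$, for each compute $U(\tC)$ by the row-wise minimization in $2^{O(k)}nd$ time, take its best-response $V(\tC)$ with $\|A-U(\tC)V(\tC)\|_0\le\|A-U(\tC)V\|_0$, and keep the best solution found; since one enumerated family is as good as a random draw, this gives a $(1+\eps)$-approximation in time $n\cdot d^{\poly(2^k/\eps)}$.

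Finally, to reach almost-linear time I would replace the brute-force enumeration of samples by a recursive sampling procedure adapting the $k$-means paradigm of Kumar et al.~\cite{kumar2004simple,ACMN2010}: as long as some cluster is large ($|\Cy|\ge d/2^k$), draw $t$ columns uniformly from all of $[d]$, so with probability $\ge(1/2^k)^t$ they all land in that cluster and form a uniform sample of it; the twist forced by the constrained centers is that we cannot fix an approximate center from those samples, so instead we record a uniformly random column $r^{(y)}\in[d]$ as a \emph{representative} of $\Cy$, and when only small clusters remain we delete the $d/2$ columns closest to the current representatives and recurse on an instance of half the size, repeating $O(\log d)$ times. I would then show that, although the $r^{(y)}$ can be far from the true (constrained) optimal centers, this process still produces genuine uniform samples from every optimal cluster with probability at least $(\eps/t)^{2^{O(k)}t}$, and that — crucially — we postpone cost estimation to the end, running the linear-time computation of $U(\tC)$ and $V(\tC)$ once on the original instance rather than using the representatives. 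Combining the success probability, the size-guessing overhead $(2^k/\eps)^{2^{O(k)}}\cdot(\log d)^{(\log\log d)^{0.1}}$, and the $2^{O(k)}nd$ reconstruction step gives the claimed running time $(2/\eps)^{2^{O(k)}/\eps^2}\cdot nd^{1+o(1)}$ with $o(1)=(\log\log d)^{1.1}/\log d$; the success probability is a constant, amplifiable by independent repetition. The second obstacle, after~\eqref{eq:samplingtheorem}, is arguing that the deletion step driven by representatives rather than by true centers does not corrupt the recursion — i.e., that the removed columns' contribution is still faithfully estimated by the eventual $U(\tC)$.
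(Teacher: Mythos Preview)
Your plan matches the paper's proof essentially step for step: the clustering reformulation, the sampling theorem (proved via a row-wise Chebyshev argument that splits candidate rows $x$ by how suboptimal $D_{i,x}=E_{i,x}-E_i$ is), the simple enumeration PTAS, and the Kumar-et-al.-style recursive sampling with column representatives followed by a single best-response computation on the full instance are all exactly the paper's approach. The one concrete tool you have not named---needed to resolve your second obstacle, that deletion driven by representatives never discards a column of an as-yet-unsampled cluster---is a preprocessing lemma that replaces the optimal $V$ by a nearby $(U,V,\eps)$-\emph{clusterable} matrix $W$ whose distinct used centers satisfy $\|Uy-Uz\|_0 > \eps\,2^{-k}\,\|A-UV\|_0/\min\{|\CWy|,|\CWz|\}$; this enforced separation between centers is precisely what makes the pruning-invariant argument (every unsampled-cluster column is strictly farther from all representatives than some surviving sampled-cluster column) go through.
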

	\endgroup

	Our algorithm achieves a substantial generalization of the standard clustering approach
	and applies to the situation with constrained centers.
	This yields the first randomized almost-linear time approximation scheme (PTAS)
	for the \genprobk problem.
	The time complexity of the algorithm in Theorem~\ref{thm:main2} is close to optimal,
	in the sense that the running time of any PTAS for the \genprobk problem
	must depend exponentially on $1/\eps$ and doubly exponentially on $k$,
	assuming the Exponential Time Hypothesis.
	For reader's convenience, we restate our result.

	\begingroup
	\addtocounter{thm}{-1}
	\def\thethm{\ref{thm:hard0}}
	\begin{thm}[Hardness for \genprobk]
		Assuming the Exponential Time Hypothesis, \genprobk has no $(1+\eps)$-approximation algorithm
		in time $2^{1/\eps^{o(1)}} \cdot 2^{n^{o(1)}}$. Further, for any $\eps \geq 0$, \genprobk has no
		$(1+\eps)$-approximation algorithm in time $2^{2^{o(k)}} \cdot 2^{n^{o(1)}}$.
	\end{thm}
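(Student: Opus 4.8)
The plan is to prove both statements by polynomial‑time reductions from $3$‑SAT, combined with the Exponential Time Hypothesis (no $2^{o(N)}$‑time algorithm for $3$‑SAT on $N$ variables), after first observing that, for suitable choices of the inner product $\langle\cdot,\cdot\rangle$, the \genprobk problem specializes to two known NP‑hard problems: \probone (take $k=1$ and $\langle x,y\rangle = x\cdot y$), and exact Boolean‑rank testing / minimum biclique cover (take $\langle\cdot,\cdot\rangle$ to be the Boolean‑semiring inner product), the latter being NP‑hard by \cite{Pauli09}. The only new content is parameter bookkeeping — tracking how $n$, $k$, and the gap transfer from the $3$‑SAT instance to the constructed \genprobk instance.

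For the $\eps$‑dependence: \probone is NP‑hard \cite{GV15,dajw15}, and the reduction from $3$‑SAT on $N$ variables produces an instance $A$ of size $n=\poly(N)$, so $N=n^{\Omega(1)}$. Since the optimum is a nonnegative integer bounded by $nd \le \poly(n)$, any $(1+\eps)$‑approximation with $\eps$ a sufficiently small inverse polynomial in $n$ — say $\eps = 1/n^{2}$ — computes the optimum exactly (the case $\opt = 0$ is polynomial‑time decidable anyway) and hence solves $3$‑SAT. Under ETH this rules out running time $2^{o(N)}$; since $1/\eps = n^{2}$ here, we have $2^{1/\eps^{o(1)}}\cdot 2^{n^{o(1)}} = 2^{n^{o(1)}} = 2^{o(N)}$, which gives the first claim.

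For the $k$‑dependence (for every $\eps \ge 0$): with $\langle\cdot,\cdot\rangle$ the Boolean‑semiring inner product, a \genprobk instance has $\opt = 0$ exactly when the bipartite graph encoded by $A$ has biclique cover number at most $k$, which is NP‑hard to decide \cite{Pauli09}. The crucial point is that such a reduction from $3$‑SAT on $N$ variables can be arranged to have $n,d = \poly(N)$ but $k = \Theta(\log N)$: if $A = U\cdot V$ then $A$ has at most $2^{k}$ distinct rows and at most $2^{k}$ distinct columns, so a hard instance carries only $2^{\Theta(k)}$ bits of information, and the natural encoding uses $2^{k} = \poly(N)$ labels (e.g.\ two labels per variable plus a few gadget labels, with $\langle\cdot,\cdot\rangle$ programmed to verify clauses). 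In this reduction YES‑instances have $\opt = 0$ while NO‑instances have $\opt \ge 1$, so \emph{any} $(1+\eps)$‑approximation algorithm, for \emph{any} $\eps \ge 0$, distinguishes the two cases and therefore solves $3$‑SAT. Substituting $N = 2^{\Theta(k)}$ and $n = \poly(N) = 2^{O(k)}$, an algorithm running in time $2^{2^{o(k)}}\cdot 2^{n^{o(1)}} = 2^{o(N)}$ would contradict ETH; this is the second claim, and it shows that the doubly‑exponential dependence on $k$ of the PTAS in Theorem~\ref{thm:main2} is essentially unavoidable.

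The step I expect to be the main obstacle is the $k$‑dependence reduction: one must produce (or extract from \cite{Pauli09}, with a mild modification) a reduction that simultaneously keeps $n,d$ polynomial while forcing $k = O(\log N)$, \emph{and} is sound in the ``only if'' direction — whenever some labeling of rows and columns yields an exact factorization, a satisfying assignment must be readable off it — which is delicate because a single column constrains every row only through a function of its own label. Granting such a reduction, the parameter translations in both parts, and the observation that an $\opt = 0$ versus $\opt \ge 1$ gap makes the $k$‑hardness insensitive to the approximation factor, are routine.
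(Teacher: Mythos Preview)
Your treatment of the $\eps$-dependence is essentially identical to the paper's: use the NP-hardness of \probone from \cite{GV15,dajw15}, observe that for $\eps < 1/n^2$ a $(1+\eps)$-approximation is exact, and conclude that an algorithm in time $2^{1/\eps^{o(1)}}\cdot 2^{n^{o(1)}}$ would run in time $2^{n^{o(1)}}$, contradicting ETH. This part is fine.

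For the $k$-dependence you have correctly identified both the mechanism (the $\opt=0$ vs.\ $\opt\ge 1$ gap over the Boolean semiring makes the hardness insensitive to $\eps$) and the crucial parameter requirement ($k=\Theta(\log N)$ with $n=\poly(N)$), but you have not supplied the reduction that achieves it, and you acknowledge this as the main obstacle. The NP-hardness of biclique cover from \cite{Pauli09} does not by itself give $k=O(\log N)$; standard reductions for Boolean rank have $k$ polynomial in $N$, which would only rule out time $2^{k^{o(1)}}$, not $2^{2^{o(k)}}$. Your sketch of ``programming $\langle\cdot,\cdot\rangle$ to verify clauses'' is a different route (and a potentially interesting one, since the inner product is part of the input), but it conflicts with your earlier commitment to the fixed Boolean-semiring product, and in any case it is not worked out. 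The paper closes exactly this gap by citing Chandran, Issac, and Karrenbauer~\cite{CIK17}, who prove directly that biclique cover with $k=O(\log n)$ has no $2^{2^{o(k)}}$-time algorithm under ETH; once you invoke that result, the remaining parameter translation is the routine step you already described.
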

	\endgroup

	Regarding the dependence on $\eps$, assume w.l.o.g. that $n \ge d$,
	and thus the input size is $O(n)$. Even for $k=1$ the problem is known to be
	NP-hard~\cite{GV15,dajw15}.
	Under ETH, no NP-hard problem has a $2^{n^{o(1)}}$-time algorithm~\footnote{
		ETH postulates that 3-SAT is not in time $2^{o(n)}$.
		Here we only need the weaker hypothesis that 3-SAT is not in time $2^{n^{o(1)}}$.}.
	We can restrict to $\eps \ge 1/n^2$, since any better approximation already yields
	an optimal solution. It follows for $k=1$ that the \genprobk problem has no $(1+\eps)$-approximation
	algorithm in time $2^{1/\eps^{o(1)}} \cdot 2^{n^{o(1)}}$. In other words, in order to improve our exponential dependence on $1/\eps$ to subexponential, we would need to pay an exponential factor in $n$.

	Regarding the dependence on $k$, note that for any $\eps$ a $(1+\eps)$-approximation
	algorithm for our problem decides whether the answer is $0$ or larger. In particular,
	over the Boolean semiring it solves the problem whether a given bipartite graph can be
	covered with $k$ bicliques. For this problem, Chandran et al.~\cite{CIK17} proved that
	even for $k = O(\log n)$ there is no $2^{2^{o(k)}}$-time algorithm, unless ETH fails.
	It follows that for any $\eps \ge 0$, \genprobk has no $(1+\eps)$-approximation algorithm
	in time $2^{2^{o(k)}} \cdot 2^{o(n)}$.
	In other words, in order to improve our doubly exponential dependence on $k$,
	we would need to pay an exponential factor in $n$.

	\paragraph{Organization}
	In Subsection~\ref{sec:samplingtheorem}, we state our core sampling result.
	In Subsection~\ref{subsec:SimpleAlgo},
	we give a simple but inefficient deterministic PTAS for the \genprobk problem,
	which serves as a blueprint for our efficient randomized PTAS.
	We present first the deterministic PTAS as it is conceptually simple and exhibits
	the main algorithmic challenge, namely, to design an efficient sampling procedure.
	In Subsection~\ref{subsec:ProofOfTheoremMainTech}, we prove our core sampling result
	by extending the analysis of Alon and Sudakov~\cite{AlonS99}
	to clustering problems with constrained centers,
	and by further strengthening an additive $\pm\eps mn$ approximation guarantee
	to a multiplicative factor $(1+\eps)$-approximation.
	In Subsection~\ref{sec:sampling}, we design an efficient sampling procedure,
	and this yields our efficient randomized PTAS.
	Our approach uses ideas from clustering algorithms pioneered by
	Kumar et al.~\cite{kumar2004simple} and refined in~\cite{kumar2005linear,ACMN2010}.

	\subsection{Setup - A Sampling Theorem}\label{sec:samplingtheorem}

	We denote the optimal value of \genprobk by
	\[
	\opt = \optk \Def \min_{U \in \{0,1\}^{n \times k},\, V \in \{0,1\}^{k \times d}}
	\nnz{A - U \cdot V}.
	\]
	Further, for a fixed matrix $V\in\{0,1\}^{k \times d}$ we let
	\[
	\optV \Def \min_{U\in\{0,1\}^{n \times k}}\nnz{ A - U\cdot V },
	\]
	and we say that a matrix $U \in \{0,1\}^{n \times k}$ is a \emph{best response} to $V$,
	if $\nnz{ A - U\cdot V }=\optV$.
	\smallskip

	Given a matrix $A \in \{0,1\}^{n \times d}$, a positive integer $k$, and
	an inner product function $\langle .,. \rangle \colon \{0,1\}^k \times \{0,1\}^k \to \R$,
	let $V \in \{0,1\}^{k \times d}$ be arbitrary and $U \in \{0,1\}^{n \times k}$ be
	a \emph{best response} to $V$.
	Partition the columns of $V$ (equivalently the columns of $A$) into
	\[
	\CVy := \{j \mid V_{:,j} = y\},
	\]
	for $y \in \{0,1\}^k$.
	For any row $i$, vector $y \in \{0,1\}^k$, and $c \in \{0,1\}$ we consider
	\[ Z_{i,y,c}:=|\{ j\in \CVy\, |\, A_{ij}=c\}| \quad\text{and}\quad Z_{i,y,\neq c}:=|\{ j\in \CVy\, |\, A_{ij}\neq c\}|. \]
	We define the \emph{exact cost} of a row $i$ for any vector $x \in \{0,1\}^k$ as
	\begin{equation}\label{eq:exactCostEix}
		E_{i,x} := \nnz{A_{i,:} - x^T \cdot V} = \sum_{y\in\{0,1\}^k} Z_{i,y,\neq\langle x,y\rangle}.
	\end{equation}
	Observe that $U_{i,:} \in \{0,1\}^k$ is a vector $x$ minimizing $E_{i,x}$
	(this follows from $U$ being a best response to $V$), and let $E_i:=E_{i,U_{i,:}}$.

	We do not know the partitioning $\CVy$, however, as we will see later we can assume that (1) we can sample elements from each $\CVy$ and (2) we know approximations of the sizes~$|\CVy|$.

	For (1), to set up notation let $\tC = (\tCy)_{y \in \{0,1\}^k}$ be a family, where $\tCy$ is a random multiset with elements from $\CVy$. Specifically, we will work with the following \emph{distribution}~$\D{V,t}$ for some $t \in \N$:
	For any $y \in \{0,1\}^k$, if $|\CVy| < t$ let $\tCy = \CVy$,
	otherwise sample $t$ elements from $\CVy$ with replacement and let the resulting multiset
	be $\tCy$.

	For (2), we say that a sequence $\alpha=(\ah{y})_{y \in \{0,1\}^k}$ is a sequence of \emph{$\delta$-approximate cluster sizes} if for any $y \in \{0,1\}^k$ with $|\CVy| < t$ we have $\alpha_y = |\CVy|$, and for the remaining $y \in \{0,1\}^k$ we have
	\[
	|\CVy| \le \alpha_y \le (1+\delta) |\CVy|.
	\]

	Then corresponding to $Z_{i,y,c}$ and $Z_{i,y,\neq c}$ we have random variables
	\[ \tZ_{i,y,c} := |\{j \in \tCy \mid A_{i,j} = c\}| \quad\text{and}\quad \tZ_{i,y,\neq c} := |\{j \in \tCy \mid A_{i,j} \neq c\}|. \]
	Given $\tC$ and $\alpha$, we define the \emph{estimated cost} of row $i$ and vector $x \in \{0,1\}^k$ as
	\begin{equation}\label{eq:estCostEix}
		\tE_{i,x} := \sum_{y\in\{0,1\}^k} \frac{\ah{y} }{|\tCy|} \tZ_{i,y,\neq\langle x,y\rangle}.
	\end{equation}
	If $\CVy=\emptyset$ for some $y\in\{0,1\}^k$, then $\tZ_{i,y,\neq\langle x,y\rangle}=0$ and we define the corresponding summand in~\eqref{eq:estCostEix} to be 0.
	Observe that if the approximation $\alpha_y$ is exact, i.e., $\alpha_y = |\CVy|$, then $\tE_{i,x}$ is an unbiased estimator for the exact cost $E_{i,x}$.

	We now simplify the problem to optimizing the estimated cost instead of the exact cost. Specifically, we construct a matrix $\tU \in \{0,1\}^{n \times k}$ by picking for each row $i$ any
	$$ \tU_{i,:} \in \textup{argmin} \{ \tE_{i,x} \mid x \in \{0,1\}^k\}. $$
	Note that matrix $\tU$ depends on the input $(A,k,\langle .,. \rangle)$,
	on the sequence $\alpha$,
	and on the sampled multisets $\tC = (\tCy)_{y \in \{0,1\}^k}$.
	When it is clear from the context, we suppress the dependence on
	$A,k,\langle .,. \rangle$, and write $\tU = \tU(\tC,\alpha)$.
	We show that this matrix yields
	a good approximation to the optimal cost.

	\begin{thm} \label{thm:maintech}
		For any matrix $V\in\{0,1\}^{k \times d}$,
		let $\alpha$ be a sequence of $\tfrac \eps 6$-approximate cluster sizes and draw $\tC$ according to distribution $\D{V,t}$ for $t = t(k,\eps) := 2^{4k+14} / \eps^2$.
		Then we have
		\[ \Ex_{\tC}\big[ \nnzs{A - \tU(\tC,\alpha) \cdot V}\big] \le (1+\eps) \optV. \]
	\end{thm}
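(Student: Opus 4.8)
The plan is to prove the stronger \emph{per-row} estimate: for every row $i$ we will show
\[
\Ex_{\tC}\big[E_{i,\tU_{i,:}}\big]\ \le\ (1+\eps)\,E_i ,
\]
where $\tU=\tU(\tC,\alpha)$. Summing this over $i$ and using $\nnz{A-\tU\cdot V}=\sum_i E_{i,\tU_{i,:}}$ together with $\optV=\sum_i E_i$ (which holds because a best response optimizes each row independently) then gives the theorem; only linearity of expectation is needed, so it is harmless that the rows of $\tU$ are correlated through the shared sample $\tC$. Fix $i$ and a best response $U$ to $V$, so $E_i=E_{i,U_{i,:}}$. Since $\tU_{i,:}$ minimizes $\tE_{i,\cdot}$, we have $\mathbf{1}[\tU_{i,:}=x]\le\mathbf{1}[\tE_{i,x}\le\tE_{i,U_{i,:}}]$ for every $x$, and
\[
\Ex_{\tC}[E_{i,\tU_{i,:}}]-E_i=\sum_{x\in\{0,1\}^k}(E_{i,x}-E_i)\,\Pr[\tU_{i,:}=x].
\]

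I will split the vectors $x\in\{0,1\}^k$ into those that are \emph{close}, meaning $E_{i,x}-E_i\le\tfrac{\eps}{3}E_i$, and those that are \emph{far}. The close vectors contribute at most $\tfrac{\eps}{3}E_i\sum_x\Pr[\tU_{i,:}=x]=\tfrac{\eps}{3}E_i$, so everything hinges on the far vectors, of which there are at most $2^k$. For a far $x$, set $D_x:=\tE_{i,x}-\tE_{i,U_{i,:}}$; this is a sum over $y\in\{0,1\}^k$ of terms depending only on the mutually independent samples $\tCy$, and only the clusters in $Y_x:=\{y:\langle x,y\rangle\neq\langle U_{i,:},y\rangle\}$ contribute. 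Since $\alpha_y/|\CVy|\in[1,1+\tfrac{\eps}{6}]$ (and equals $1$ for clusters of size $<t$, which are sampled exactly), one has $\Ex[\tE_{i,x'}]=\sum_y\tfrac{\alpha_y}{|\CVy|}Z_{i,y,\neq\langle x',y\rangle}$ for all $x'$, whence $\Ex[D_x]\ge(E_{i,x}-E_i)-\tfrac{\eps}{6}\sum_y Z_{i,y,\neq\langle U_{i,:},y\rangle}=(E_{i,x}-E_i)-\tfrac{\eps}{6}E_i=:\mu_x$; as $x$ is far, $\tfrac{\eps}{6}E_i\le\tfrac12(E_{i,x}-E_i)$, so $\mu_x\ge\tfrac12(E_{i,x}-E_i)>0$.

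The heart of the argument is a \emph{dimension-free} bound on $\Var(D_x)$. Writing $z_y:=Z_{i,y,1}$, for a cluster $y\in Y_x$ of size $\ge t$ the $y$-th term of $D_x$ equals $\pm\tfrac{\alpha_y}{t}(t-2\tZ_{i,y,1})$ with $\tZ_{i,y,1}\sim\mathrm{Bin}(t,z_y/|\CVy|)$, so its variance is at most $\tfrac{4\alpha_y^2}{t}\cdot\tfrac{z_y(|\CVy|-z_y)}{|\CVy|^2}\le\tfrac{8}{t}z_y(|\CVy|-z_y)$, and the same bound (in fact $0$) holds for clusters of size $<t$ and when $\langle x,y\rangle$ or $\langle U_{i,:},y\rangle$ lies outside $\{0,1\}$. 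The key point is the (in)equality $z_y(|\CVy|-z_y)\le Z_{i,y,\neq\langle x,y\rangle}\cdot Z_{i,y,\neq\langle U_{i,:},y\rangle}$ (equality when both inner products lie in $\{0,1\}$): bounding the first factor by $E_{i,x}$ and summing the second over $y\in Y_x$ by $E_i$, and using independence across clusters, yields $\Var(D_x)\le\tfrac{8}{t}E_{i,x}E_i$. By Chebyshev, $\Pr[\tU_{i,:}=x]\le\Pr[D_x\le0]\le\Var(D_x)/\mu_x^2\le\tfrac{32}{t}\cdot\tfrac{E_{i,x}E_i}{(E_{i,x}-E_i)^2}$, hence $(E_{i,x}-E_i)\Pr[\tU_{i,:}=x]\le\tfrac{32E_i}{t}\big(1+\tfrac{E_i}{E_{i,x}-E_i}\big)\le\tfrac{128E_i}{t\eps}$ using $E_{i,x}-E_i>\tfrac{\eps}{3}E_i$; summing over the at most $2^k$ far vectors bounds their total contribution by $\tfrac{128\cdot2^k}{t\eps}E_i$.

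Combining the two parts, $\Ex_{\tC}[E_{i,\tU_{i,:}}]-E_i\le\tfrac{\eps}{3}E_i+\tfrac{128\cdot2^k}{t\eps}E_i$, and with $t=2^{4k+14}/\eps^2$ there is ample slack: the second term is at most $\tfrac{128\eps}{2^{3k+14}}E_i\le\tfrac{\eps}{128}E_i$, so $\Ex_{\tC}[E_{i,\tU_{i,:}}]\le(1+\eps)E_i$ and the theorem follows. The step I expect to be the main obstacle is exactly the dimension-free variance bound $\Var(D_x)\le\tfrac{8}{t}E_{i,x}E_i$: a crude estimate of $\Var(D_x)$ carries a factor of the cluster size, hence of $d$, which is fatal because $\optV$ can be arbitrarily small (so only a multiplicative guarantee will do). Resolving this is precisely where one must go beyond the additive $\pm\eps mn$ guarantee of Alon and Sudakov and exploit the structure of the constrained clustering, via the factorization of $z_y(|\CVy|-z_y)$ and the fact that $U$ is a \emph{global} best response, so that $\sum_y Z_{i,y,\neq\langle U_{i,:},y\rangle}=E_i$.
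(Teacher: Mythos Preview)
Your argument is correct and takes a genuinely different route from the paper. The paper splits the suboptimal pairs $(i,x)$ into three regimes $L_0,L_1,L_2$ (according to whether $D_{i,x}\le\tfrac{\eps}{3}E_i$, $\tfrac{\eps}{3}E_i<D_{i,x}\le E_i$, or $D_{i,x}>E_i$), applies Chebyshev \emph{coordinate-wise} to each $\tZ_{i,y,c}$ via an event $\mathcal{W}_{i,x}$ and a union bound over $y$, and uses two different Chebyshev variants ($\sqrt{n}/\Delta$ for $L_1$, $\mu/\Delta^2$ for $L_2$); the $L_1$ case further needs a global size bound $\sum_{(i,x)\in L_1}\sum_{y\in Y_{i,x}}|\CVy|\le 2^{k+2}\optV$, so the paper cannot work per row. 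Your key new idea is to apply Chebyshev directly to the aggregate $D_x$ and control $\Var(D_x)$ via the factorization $\Var(\text{$y$-term})\le\tfrac{8}{t}\,Z_{i,y,\neq\langle x,y\rangle}\,Z_{i,y,\neq\langle U_{i,:},y\rangle}$, which summed yields $\Var(D_x)\le\tfrac{8}{t}E_{i,x}E_i$; this single estimate handles both the $L_1$ and $L_2$ regimes at once, eliminates the size-bound claim and the three-way split, and upgrades the conclusion to a clean per-row inequality. One minor imprecision: your parenthetical ``(in fact $0$)'' is not right when \emph{exactly one} of $\langle x,y\rangle,\langle U_{i,:},y\rangle$ lies outside $\{0,1\}$ (then the $y$-term is $\tfrac{\alpha_y}{t}(t-\tZ_{i,y,\neq\langle U_{i,:},y\rangle})$ or its negative, with nonzero variance), but the stated bound $\tfrac{8}{t}z_y(|\CVy|-z_y)$ and your key inequality $z_y(|\CVy|-z_y)\le Z_{i,y,\neq\langle x,y\rangle}\,Z_{i,y,\neq\langle U_{i,:},y\rangle}$ still hold in that case, so the proof goes through unchanged.
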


	We defer the proof of Theorem~\ref{thm:maintech} to Section~\ref{subsec:ProofOfTheoremMainTech},
	and first show how it yields a simple but inefficient deterministic PTAS for the \genprobk problem
	running in time $n\cdot d^{\poly(2^k/\eps)}$, see Section~\ref{subsec:SimpleAlgo}.
	Then, in Section~\ref{sec:sampling}, we design a sampling procedure that improves
	the running time to $(2/\eps)^{2^{O(k)}/\eps^2} \cdot mn^{1+o(1)}$,
	where $o(1)$ hides a factor $\left(\log\log d\right)^{1.1}/\log d$.

	\subsection{Simple PTAS}\label{subsec:SimpleAlgo}

	In this subsection, we show how Theorem~\ref{thm:maintech} leads to
	a simple but inefficient deterministic PTAS, see Algorithm~\ref{alg:ptasSimplySlow},
	for the \genprobk problem.

	A basic, but crucial property used in our analysis is that
	given a matrix $A\in\{0,1\}^{n\times d}$, an integer $k$ and a matrix~$V$,
	we can compute a \emph{best response} matrix $U$ minimizing $\nnz{A - U \cdot V}$
	in time $2^{O(k)} nd$.
	Indeed, we can split
	$\nnz{A - U \cdot V} = \sum_{i=1}^d \nnz{A_{i,:} - U_{i,:} \cdot V}$ and
	brute-force the optimal solution $U_{i,:} \in \{0,1\}^k$
	minimizing the $i$-th summand $\nnz{A_{i,:} - U_{i,:} \cdot V}$.
	Symmetrically, given $U$ we can compute a best response $V$ in time $2^{O(k)} nd$.
	In particular, if $(U,V)$ is an optimal solution then $U$ is a best response for $V$,
	and $V$ is a best response for $U$.

	We now present the pseudocode of Algorithm~\ref{alg:ptasSimplySlow}.

	\begin{algorithm}[H]
		\textbf{Input:} A matrix $A \in \{0,1\}^{n \times d}$, an integer $k$,
		an inner product $\langle .,. \rangle$, and $\eps \in (0,1)$.

		\textbf{Output:} Matrices $\tU \in \{0,1\}^{n \times k}$, $\tV \in \{0,1\}^{k \times d}$
		such that $\nnzs{A - \tU \cdot \tV} \le (1+\eps) \cdot \opt_k$.
		\medbreak

		1. \emph{(Guess column set sizes)} Let $U,V$ be an optimal solution. Exhaustively guess all sizes $|\CVy| =: \alpha_y$ for $y \in \{0,1\}^k$. There are $d^{2^k}$ possibilities.
		\smallskip\smallskip

		2. \emph{(Guess column multisets)}
		Theorem~\ref{thm:maintech} implies existence of a family $\tC = (\tCy)_{y \in \{0,1\}^k}$ such that $\nnzs{A - \tU(\tC,\alpha) \cdot V} \le (1+\eps) \optk$, where each $\tCy$ is a multiset consisting of at most $t$ indices in $\{1,\ldots,d\}$. Exhaustively guess such a family $\tC$.
		There are $d^{O(t \cdot 2^k)}$ possibilities.
		\smallskip\smallskip

		3. \emph{(Compute $\tU$)}
		Now we know $A,k,\langle .,. \rangle,|\CVy|$ for all $y \in \{0,1\}^k$, and $\tC$, thus we can compute the matrix $\tU = \tU(\tC,\alpha)$, where row $\tU_{i,:}$ is any vector $x$ minimizing the estimated cost $\tE_{i,x}$. Since each row $\tU_{i,:} \in \{0,1\}^k$ can be optimized independently, this takes time $2^{O(k)} nd$.
		If we guessed correctly, we have $\nnzs{A - \tU \cdot V} \le (1+\eps) \opt_k$.
		\smallskip\smallskip

		4. \emph{(Compute $\tV$)}
		Compute $\tV$ as a best response to $\tU$. This takes time $2^{O(k)} nd$.
		If we guessed correctly, by best-response and Step 3, we have
		$$\nnzs{A - \tU \cdot \tV} \le \nnzs{A - \tU \cdot V} \le (1+\eps) \opt_k.$$

		5. \textbf{Return} the pair $(\tU,\tV)$ minimizing $\nnzs{A - \tU \cdot \tV}$ over all exhaustive guesses.

		\caption{(PTAS for \genprobk)}
		\label{alg:ptasSimplySlow}
	\end{algorithm}

	The correctness of Algorithm~\ref{alg:ptasSimplySlow} immediately follows
	from Theorem~\ref{thm:maintech}.
	The running time is dominated by the exhaustive guessing in Step 2,
	so we obtain time $m\cdot n^{\poly(2^k/\eps)}$.

	\subsection{Proof of the Sampling Theorem~\ref{thm:maintech}}\label{subsec:ProofOfTheoremMainTech}

	We follow the notation in Section~\ref{sec:samplingtheorem}, in particular $V \in \{0,1\}^{k \times d}$ is an arbitrary matrix and $U \in \{0,1\}^{n \times k}$ is a best response to $V$.
	We define $D_{i,x}$ as the difference of the cost of row $i$ w.r.t.\ a vector $x$ and
	the cost of row $i$ w.r.t.\ the optimal vector $U_{i,:}$, i.e.,
	\begin{align}\label{eq:Dix}
		D_{i,x} := E_{i,x} - E_{i}
		&= \| A_{i,:} - x^T \cdot V \|_0 - \| A_{i,:} - U_{i,:} \cdot V \|_0 \\
		&= \sum_{y\in\{0,1\}^k} Z_{i,y,\neq\langle x,y\rangle}
		- Z_{i,y,\neq\langle U_{i,:},y\rangle}. \notag
	\end{align}
	Note that a vector $x$ is suboptimal for a row $i$ if and only if $D_{i,x} > 0$.
	By a straightforward splitting of the expectation, we obtain the following.

	\begin{claim} \label{cla:firstclaim}
		For every $V\in\{0,1\}^{k \times d}$, we have
		$$\Ex_{\tC}\big[ \nnzs{A - \tU \cdot V} \big] = \optV + \sum_{i=1}^n \sum_{\substack{x \in \{0,1\}^k \\ D_{i,x} > 0}} \Pr\big[ \tU_{i,:} = x \big] \cdot D_{i,x}.$$
	\end{claim}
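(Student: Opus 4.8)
The plan is to reduce everything to a row-wise splitting of the Hamming distance together with linearity of expectation, so that the ``straightforward splitting'' in the claim's preamble is literal. First I would write the Hamming distance as a sum over rows: by the definition~\eqref{eq:exactCostEix} of the exact cost,
\[
\nnz{A - \tU\cdot V} = \sum_{i=1}^n \nnz{A_{i,:} - \tU_{i,:}\cdot V} = \sum_{i=1}^n E_{i,\tU_{i,:}},
\]
where the only randomness sits in $\tC$, which determines $\tU = \tU(\tC,\alpha)$ and in particular each row $\tU_{i,:} \in \{0,1\}^k$.

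Next I would invoke the decomposition~\eqref{eq:Dix} with $x = \tU_{i,:}$, namely $E_{i,\tU_{i,:}} = E_i + D_{i,\tU_{i,:}}$, and sum over $i$. Since $U$ is a best response to $V$, summing the row costs at the optimal vectors gives $\sum_{i=1}^n E_i = \nnz{A - U\cdot V} = \optV$, hence
\[
\nnz{A - \tU\cdot V} = \optV + \sum_{i=1}^n D_{i,\tU_{i,:}}.
\]
As $\optV$ is deterministic, taking $\Ex_{\tC}[\cdot]$ and using linearity leaves me to identify $\Ex_{\tC}[D_{i,\tU_{i,:}}]$ for each fixed $i$.

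Finally I would expand this expectation over the finitely many values of the random vector $\tU_{i,:}$, obtaining $\Ex_{\tC}[D_{i,\tU_{i,:}}] = \sum_{x \in \{0,1\}^k} \Pr[\tU_{i,:} = x]\,D_{i,x}$. Because $U_{i,:}$ minimizes $E_{i,\cdot}$ (best response, applied row by row), we have $D_{i,x} = E_{i,x} - E_i \ge 0$ for every $x$, so the summands with $D_{i,x} = 0$ contribute nothing and may be dropped; what remains is exactly $\sum_{x : D_{i,x} > 0} \Pr[\tU_{i,:} = x]\,D_{i,x}$, which together with the previous display is the asserted identity.

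I do not expect any genuine obstacle: this is pure bookkeeping. The only points needing a little care are (a) keeping the random row $\tU_{i,:}$ (which minimizes the \emph{estimated} cost $\tE_{i,x}$) distinct from the deterministic best-response row $U_{i,:}$ (which minimizes the \emph{exact} cost $E_{i,x}$), and (b) noting that when $\argmin_x \tE_{i,x}$ is not a singleton the identity still holds for whatever fixed tie-breaking rule defines $\tU$, since nothing in the derivation uses uniqueness. A useful sanity check along the way is that the second term on the right-hand side is manifestly nonnegative, matching the intuition $\Ex_{\tC}[\nnz{A - \tU\cdot V}] \ge \optV$.
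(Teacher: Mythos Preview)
Your proposal is correct and follows essentially the same approach as the paper: a row-wise splitting of $\nnz{A-\tU V}$, the identity $E_{i,x}=E_i+D_{i,x}$, linearity of expectation, and dropping the $D_{i,x}=0$ terms using $D_{i,x}\ge 0$ from the best-response property. The only cosmetic difference is that you apply the decomposition $E_{i,\tU_{i,:}}=E_i+D_{i,\tU_{i,:}}$ before taking the expectation, whereas the paper first expands the expectation over the values of $\tU_{i,:}$ and then substitutes; the computations are identical.
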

	\begin{proof}
		We split $\nnzs{A - \tU \cdot V} = \sum_{i=1}^n \nnzs{A_{i,:} - \tU_{i,:} \cdot V}$. This yields
		\begin{align*}
			\Ex_{\tC}\big[ \nnzs{A - \tU \cdot V} \big] &= \sum_{i=1}^n \Ex_{\tC}\big[ \nnzs{A_{i,:} - \tU_{i,:} \cdot V} \big]
			= \sum_{i=1}^n \sum_{x \in \{0,1\}^k} \Pr\big[ \tU_{i,:} = x \big] \cdot \nnzs{A_{i,:} - x^T \cdot V}.
		\end{align*}
		By definition of $D_{i,x}$, we have
		\begin{align*}
			\Ex_{\tC}\big[ \nnzs{A - \tU \cdot V} \big] &= \sum_{i=1}^n \sum_{x \in \{0,1\}^k} \Pr\big[ \tU_{i,:} = x \big] \cdot (\nnzs{A_{i,:} - U_{i,:} \cdot V} + D_{i,x}) \\
			&= \sum_{i=1}^n \Big( \nnzs{A_{i,:} - U_{i,:} \cdot V} + \sum_{x \in \{0,1\}^k} \Pr\big[ \tU_{i,:} = x \big] \cdot D_{i,x} \Big)  \\
			\LV{&= \optV + \sum_{i=1}^n \sum_{x \in \{0,1\}^k} \Pr\big[ \tU_{i,:} = x \big] \cdot D_{i,x}  \\}
			&= \optV + \sum_{i=1}^n \sum_{\substack{x \in \{0,1\}^k\\ D_{i,x} > 0}} \Pr\big[ \tU_{i,:} = x \big] \cdot D_{i,x}. \qedhere
		\end{align*}
	\end{proof}

	Similarly to $D_{i,x}$, we define an estimator
	\begin{align}\label{eq:tildeDix}
		\tD_{i,x} := \tE_{i,x} - \tE_{i,U_{i,:}}
		&= \sum_{y\in\{0,1\}^k} \frac{\ah{y} }{|\tCy|} \cdot \Big(\tZ_{i,y,\neq\langle x,y\rangle}
		- \tZ_{i,y,\neq\langle U_{i,:},y\rangle}\Big).
	\end{align}
	Note that $\tU_{i,:}$ is chosen among the vectors $x \in \{0,1\}^k$ minimizing $\tD_{i,x}$. Hence, our goal is to show that significantly suboptimal vectors (with $D_{i,x} > \tfrac \eps 3 \cdot E_{i}$) satisfy $\tD_{i,x} > 0$ with good probability, and thus these vectors are not picked in $\tU$.

	To this end, we split the rows $i$ and suboptimal vectors $x$ into:
	\begin{align*}
		L_0 &:= \{(i,x) \mid 0 < D_{i,x} \le \tfrac \eps 3 \cdot E_{i} \},  \\
		L_1 &:= \{(i,x) \mid \tfrac \eps 3 \cdot E_{i} < D_{i,x} \le E_{i} \},  \\
		L_2 &:= \{(i,x) \mid D_{i,x} > E_{i}\}.
	\end{align*}
	Observe that $\sum_{(x,i)\in L_0}\Pr\big[ \tU_{i,:} = x \big] \cdot D_{i,x}\leq \tfrac \eps 3 \cdot\optV$.
	By Claim~\ref{cla:firstclaim}, we can ignore all tuples $(i,x)\in L_0$,
	since
	\begin{equation}\label{eq:sumZ1Z2}
		\Ex_{\tC}\big[ \nnzs{A - \tU \cdot V} \big]
		\leq (1+\tfrac \eps 3)\optV  +
		\sum_{(i,x)\in L_1 \cup L_2}\Pr\big[ \tU_{i,:} = x \big] \cdot D_{i,x}.
	\end{equation}
	Hence, our goal
	is to upper bound the summation
	$\sum_{(i,x)\in L_1\cup L_2}\Pr\big[ \tU_{i,:} = x \big] \cdot D_{i,x}$.

	We next establish a sufficient condition for $\tU_{i,:} \ne x$, for any suboptimal vector $x$.
	Note that by definition of $D_{i,x}$ we have
	\begin{equation}\label{eq:DixZ}
		D_{i,x} = \sum_{y\in \{0,1\}^k} Z_{i,y,\neq\langle x,y\rangle}
		-Z_{i,y,\neq\langle U_{i,:},y\rangle} = \sum_{y\in \hYix} Z_{i,y,\neq\langle x,y\rangle}
		-Z_{i,y,\neq\langle U_{i,:},y\rangle},
	\end{equation}
	where
	$\hYix := \{ y \in \{0,1\}^k \mid \langle x,y \rangle \ne \langle U_{i,:},y \rangle\}$.
	Similarly, for the estimator we have
	\begin{equation}\label{eq:tDixZ}
		\tD_{i,x} = \sum_{y\in\{0,1\}^k} \frac{\ah{y} }{|\tCy|} \cdot \Big(\tZ_{i,y,\neq\langle x,y\rangle}
		- \tZ_{i,y,\neq\langle U_{i,:},y\rangle}\Big)
		= \sum_{y\in \hYix} \frac{\ah{y} }{|\tCy|} \cdot \Big(\tZ_{i,y,\neq\langle x,y\rangle}
		- \tZ_{i,y,\neq\langle U_{i,:},y\rangle}\Big).
	\end{equation}

	Let $\Wix$ be the event
	that for every $y\in \Yix  := \{ y \in \hYix \mid |\tCy| = t\}$ and every $c\in\{0,1\}$, we have
	\[
	\left| \tZ_{i,y,c} - \frac{|\tCy|}{|\CVy|}\cdot Z_{i,y,c} \right| \leq \Delta_y,
	\quad \text{ where } \quad \Delta_y:= \frac{t \cdot D_{i,x}}{2^{k+2} \cdot \ah{y}}.
	\]
	We now show that conditioned on the event $\Wix$, we have $\tD_{i,x} > 0$ for any $(i,x) \in L_1 \cup L_2$, and thus $\tU_{i,:} \ne x$.

	\begin{lem}\label{lem:condition}
		For any vector $x\in\{0,1\}^k$ and row $i\in[m]$,
		if event $\Wix$ occurs then we have
		$\tD_{i,x}\geq \tfrac 12 \cdot D_{i,x}- \tfrac \eps 6\cdot E_{i}$.
		In particular, if additionally $D_{i,x}> \tfrac \eps 3\cdot E_{i}$ then
		$\tD_{i,x}>0$.
	\end{lem}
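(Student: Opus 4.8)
The plan is to prove the bound $\tD_{i,x}\geq \tfrac12 D_{i,x}-\tfrac\eps6 E_i$ coordinate by coordinate over $y\in\hYix$, exploiting that both $D_{i,x}$ and $\tD_{i,x}$ already decompose as sums over $\hYix$ (equations~\eqref{eq:DixZ} and~\eqref{eq:tDixZ}). Write $d_y := Z_{i,y,\neq\langle x,y\rangle}-Z_{i,y,\neq\langle U_{i,:},y\rangle}$ for the contribution of coordinate $y$ to $D_{i,x}$, and let $\tilde d_y$ denote its contribution to $\tD_{i,x}$. First I would split $\hYix$ into $\Yix$ (those $y$ with $|\tCy|=t$) and its complement. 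For $y\in\hYix\setminus\Yix$ we have $|\tCy|<t$, which by the definition of $\D{V,t}$ forces $\tCy=\CVy$, hence $\tZ_{i,y,c}=Z_{i,y,c}$ and $\ah y=|\CVy|=|\tCy|$; thus $\tilde d_y=d_y$ exactly, contributing no error. It then remains to control the coordinates in $\Yix$.

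For $y\in\Yix$ the summand is $\tilde d_y=\tfrac{\ah y}{t}\big(\tZ_{i,y,\neq\langle x,y\rangle}-\tZ_{i,y,\neq\langle U_{i,:},y\rangle}\big)$, and the key step is to compare it with $\tfrac{\ah y}{|\CVy|}d_y$. Since $A_{ij}\in\{0,1\}$, for any real $c$ one has $Z_{i,y,\neq c}=|\CVy|-g_y(c)$ and $\tZ_{i,y,\neq c}=|\tCy|-\tilde g_y(c)$, where I set $g_y(c):=Z_{i,y,c}$ and $\tilde g_y(c):=\tZ_{i,y,c}$ when $c\in\{0,1\}$, and $g_y(c)=\tilde g_y(c):=0$ when $c\notin\{0,1\}$; hence $d_y=g_y(\langle U_{i,:},y\rangle)-g_y(\langle x,y\rangle)$ and $\tilde d_y=\tfrac{\ah y}{t}\big(\tilde g_y(\langle U_{i,:},y\rangle)-\tilde g_y(\langle x,y\rangle)\big)$. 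The event $\Wix$, together with the trivial equality for $c\notin\{0,1\}$, gives $\big|\tilde g_y(c)-\tfrac{t}{|\CVy|}g_y(c)\big|\le \Delta_y=\tfrac{t\,D_{i,x}}{2^{k+2}\ah y}$ for $c\in\{\langle x,y\rangle,\langle U_{i,:},y\rangle\}$, so by the triangle inequality $\big|\tilde d_y-\tfrac{\ah y}{|\CVy|}d_y\big|\le \tfrac{\ah y}{t}\cdot 2\Delta_y=\tfrac{D_{i,x}}{2^{k+1}}$. Since $|\Yix|\le|\hYix|\le 2^k$, summing over $\Yix$ introduces total error at most $\tfrac{D_{i,x}}{2}$.

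Finally I would pass from $\tfrac{\ah y}{|\CVy|}d_y$ back to $d_y$. Writing $\tfrac{\ah y}{|\CVy|}=1+\beta_y$ with $\beta_y\in[0,\tfrac\eps6]$ (the definition of $\tfrac\eps6$-approximate cluster sizes), we get $\tfrac{\ah y}{|\CVy|}d_y\ge d_y$ when $d_y\ge 0$ and $\tfrac{\ah y}{|\CVy|}d_y\ge d_y-\tfrac\eps6|d_y|$ when $d_y<0$; so it suffices to show $\sum_{y\in\hYix:\,d_y<0}|d_y|\le E_i$. For this, one checks from $d_y=g_y(\langle U_{i,:},y\rangle)-g_y(\langle x,y\rangle)$ that whenever $d_y<0$, $|d_y|=g_y(\langle x,y\rangle)-g_y(\langle U_{i,:},y\rangle)\le |\CVy|-g_y(\langle U_{i,:},y\rangle)=Z_{i,y,\neq\langle U_{i,:},y\rangle}$ (using $g_y(\langle x,y\rangle)\le|\CVy|$), and these are nonnegative summands of $E_i=\sum_{y}Z_{i,y,\neq\langle U_{i,:},y\rangle}$. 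Combining the three pieces gives $\tD_{i,x}\ge D_{i,x}-\tfrac\eps6 E_i-\tfrac{D_{i,x}}{2}=\tfrac12 D_{i,x}-\tfrac\eps6 E_i$, and the ``in particular'' claim is immediate: if $D_{i,x}>\tfrac\eps3 E_i$ then $\tfrac12 D_{i,x}>\tfrac\eps6 E_i$, so $\tD_{i,x}>0$.

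The main obstacle is the bookkeeping forced by the inner product $\langle\cdot,\cdot\rangle$ taking arbitrary real values rather than values in $\{0,1\}$: one must track that $Z_{i,y,\neq c}$ and $\tZ_{i,y,\neq c}$ still equal $|\CVy|$ resp.\ $|\tCy|$ when $c\notin\{0,1\}$, and in particular that the crucial domination $|d_y|\le Z_{i,y,\neq\langle U_{i,:},y\rangle}$ survives in every case where $d_y<0$. Introducing the clipped counts $g_y,\tilde g_y$ collapses all the case analysis into one identity, after which the rest is a routine triangle-inequality estimate and a union over the at most $2^k$ coordinates in $\Yix$.
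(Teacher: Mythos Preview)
Your proof is correct and follows essentially the same route as the paper: split $\hYix$ into $\Yix$ and its complement, use the event $\Wix$ and $|\Yix|\le 2^k$ to bound the sampling error by $\tfrac12 D_{i,x}$, and then use the $\tfrac\eps6$-approximate cluster sizes together with $|d_y|\le Z_{i,y,\neq\langle U_{i,:},y\rangle}$ to control the remaining error by $\tfrac\eps6 E_i$. The only cosmetic difference is that the paper handles the case $\langle x,y\rangle\notin\{0,1\}$ by observing $\tZ_{i,y,\neq c}\in\{\tZ_{i,y,0},\,\tZ_{i,y,1},\,\tZ_{i,y,0}+\tZ_{i,y,1}\}$, whereas you introduce the clipped counts $g_y,\tilde g_y$ to the same effect.
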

	\begin{proof}
		Observe that
		$\tZ_{i,y,\neq c}\in\{\tZ_{i,y,0},\ \tZ_{i,y,1},\:\tZ_{i,y,0}+\tZ_{i,y,1}\}$ for any $i,y,c$.
		Since $\Ex[\tZ_{i,y,0}+\tZ_{i,y,1}]=|\tCy|=\tZ_{i,y,0}+\tZ_{i,y,1}$,
		conditioned on the event $\Wix$ for any $y\in \Yix$ all three random variables $\tZ_{i,y,0},\ \tZ_{i,y,1},\:\tZ_{i,y,0}+\tZ_{i,y,1}$
		differ from their expectation by at most
		$\Delta_y$. Hence, we have
		\begin{equation*}%\label{eq:SpotOn}
			\left|\frac{\ah{y}}{|\tCy|}\tZ_{i,y,\neq\langle x,y\rangle}-
			\frac{\ah{y}}{|\CVy|}Z_{i,y,\neq\langle x,y\rangle}\right|
			\leq\frac{D_{i,x}}{2^{k+2}}.
		\end{equation*}
		The same inequality also holds for $y \in \hYix \setminus \Yix$, since then $\tZ_{i,y,\neq\langle x,y\rangle} = Z_{i,y,\neq\langle x,y\rangle}$ and $|\tCy| = |\CVy|$ (by definition of the distribution $\D{V,t}$). In combination with (\ref{eq:tDixZ}) we obtain
		\begin{equation} \label{eq:SpotOn}
			\tD_{i,x} \ge - \frac{D_{i,x}}{2} +\sum_{y\in \hYix} \frac{\ah{y} }{|\CVy|} \cdot \Big(Z_{i,y,\neq\langle x,y\rangle}
			- Z_{i,y,\neq\langle U_{i,:},y\rangle}\Big).
		\end{equation}

		Let $\ah{y}=(1 + \gamma_{y})|\CVy|$ with $0\leq\gamma_{y}\leq \tfrac \eps 6$
		for any $y\in\{ 0,1\} ^{k}$.
		By~\eqref{eq:DixZ}, and since $\ah{y}=|\CVy|=|\tCy|$ for every $y \in \hYix \setminus \Yix$ (by definition of distribution $\D{V,t}$), we have
		\begin{eqnarray*}
			\sum_{y\in\hYix}\frac{\ah{y}}{|\CVy|}\left(Z_{i,y,\neq\langle x,y\rangle}-Z_{i,y,\neq\langle U_{i,:},y\rangle}\right)\nonumber
			&=&D_{i,x}+\sum_{y\in\Yix}\gamma_{y}\left(Z_{i,y,\neq\langle x,y\rangle}-Z_{i,y,\neq\langle U_{i,:},y\rangle}\right)  \\
			&\geq& D_{i,x} - \sum_{y\in\Yix}\gamma_{y}Z_{i,y,\neq\langle U_{i,:},y\rangle}\nonumber \\
			&\geq&D_{i,x} - \frac{\eps}{6}\sum_{y\in\{0,1\}^{k}}Z_{i,y,\neq\langle U_{i,:},y\rangle}=D_{i,x} - \frac{\eps}{6}E_{i}.
		\end{eqnarray*}
		Together with (\ref{eq:SpotOn}), we have $\tD_{i,x} \ge \tfrac 12 D_{i,x} - \tfrac \eps 6 E_i$.
	\end{proof}

	We next upper bound the probability of picking a suboptimal vector $x$.

	\begin{claim} \label{cla:usethis}
		For any $x \in \{0,1\}^k$ with $D_{i,x} > \tfrac \eps 3\cdot E_{i}$, we have
		\[
		\Pr[\tU_{i,:} = x] \le \sum_{y \in \Yix } \min_{c \in \{0,1\}}
		\Pr\Big[ |\tZ_{i,y,c} - \Ex[\tZ_{i,y,c}]| > \Delta_y \Big].
		\]
	\end{claim}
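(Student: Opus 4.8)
The plan is to combine Lemma~\ref{lem:condition} with the fact that $\tU_{i,:}$ is chosen to minimize the estimated cost (equivalently, the estimated deficit $\tD_{i,x}$). First I would observe that, by the definition in~\eqref{eq:tildeDix}, $\tD_{i,U_{i,:}} = \tE_{i,U_{i,:}} - \tE_{i,U_{i,:}} = 0$, and since $\tU_{i,:} \in \textup{argmin}_{x} \tE_{i,x} = \textup{argmin}_{x} \tD_{i,x}$ we have $\tD_{i,\tU_{i,:}} \le 0$. Hence, whenever $\tD_{i,x} > 0$, the vector $x$ cannot be a minimizer of $\tD_{i,\cdot}$ and so $\tU_{i,:} \ne x$. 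Now fix $x$ with $D_{i,x} > \tfrac{\eps}{3} E_i$. By Lemma~\ref{lem:condition}, if the event $\Wix$ occurs then $\tD_{i,x} > 0$, and therefore $\tU_{i,:} \ne x$. Taking contrapositives, $\{\tU_{i,:} = x\} \subseteq \overline{\Wix}$, so $\Pr[\tU_{i,:} = x] \le \Pr[\overline{\Wix}]$.

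Next I would bound $\Pr[\overline{\Wix}]$ by a union bound. By definition, $\overline{\Wix}$ is the event that there exist $y \in \Yix$ and $c \in \{0,1\}$ with $\big| \tZ_{i,y,c} - \tfrac{|\tCy|}{|\CVy|} Z_{i,y,c} \big| > \Delta_y$. For $y \in \Yix$ we have $|\tCy| = t$ and $\tCy$ consists of $t$ samples drawn uniformly (with replacement) from $\CVy$, so $\Ex[\tZ_{i,y,c}] = \tfrac{|\tCy|}{|\CVy|} Z_{i,y,c}$, and this is exactly the event $|\tZ_{i,y,c} - \Ex[\tZ_{i,y,c}]| > \Delta_y$. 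A union bound over $y \in \Yix$ then gives
\[
\Pr[\overline{\Wix}] \le \sum_{y \in \Yix} \Pr\Big[ \exists\, c \in \{0,1\}: |\tZ_{i,y,c} - \Ex[\tZ_{i,y,c}]| > \Delta_y \Big].
\]

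The key remaining observation is that, for each fixed $y \in \Yix$, the two deviation events for $c = 0$ and $c = 1$ coincide. Indeed, $\tZ_{i,y,0} + \tZ_{i,y,1} = |\tCy| = t$ is deterministic, hence $\Ex[\tZ_{i,y,0}] + \Ex[\tZ_{i,y,1}] = t$ as well, so $\tZ_{i,y,0} - \Ex[\tZ_{i,y,0}] = -\big(\tZ_{i,y,1} - \Ex[\tZ_{i,y,1}]\big)$, and in particular $|\tZ_{i,y,0} - \Ex[\tZ_{i,y,0}]| = |\tZ_{i,y,1} - \Ex[\tZ_{i,y,1}]|$. Thus the event $\{\exists\, c: |\tZ_{i,y,c} - \Ex[\tZ_{i,y,c}]| > \Delta_y\}$ equals $\{|\tZ_{i,y,c} - \Ex[\tZ_{i,y,c}]| > \Delta_y\}$ for either choice of $c$, so its probability equals $\min_{c \in \{0,1\}} \Pr[|\tZ_{i,y,c} - \Ex[\tZ_{i,y,c}]| > \Delta_y]$. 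Substituting into the union bound above yields the claimed inequality.

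I do not expect any genuine obstacle here; the argument is short. The only subtlety worth flagging is the last step: one must remember that the $c=0$ and $c=1$ events are literally the same event, which holds precisely because $|\tCy|$ is not random for $y \in \Yix$, and it is this collapse that lets us replace the sum over $c \in \{0,1\}$ by a minimum.
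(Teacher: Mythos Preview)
Your proof is correct and follows essentially the same approach as the paper: use Lemma~\ref{lem:condition} to deduce that $\tU_{i,:}=x$ forces $\overline{\Wix}$, then union bound over $y\in\Yix$, and collapse the two values of $c$ using $\tZ_{i,y,0}+\tZ_{i,y,1}=|\tCy|$ deterministically. Your write-up is in fact a bit more explicit than the paper's (you spell out why $\tU_{i,:}=x$ implies $\tD_{i,x}\le 0$, and why $\Ex[\tZ_{i,y,c}]=\tfrac{|\tCy|}{|\CVy|}Z_{i,y,c}$ for $y\in\Yix$), but the argument is the same.
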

	\begin{proof}
		For any $y \in \{0,1\}^k$, we have
		$\tZ_{i,y,0} + \tZ_{i,y,1} = |\tCy| = \Ex[\tZ_{i,y,0}] + \Ex[\tZ_{i,y,1}]$.
		Further, it holds that $|\tZ_{i,y,0} - \Ex[\tZ_{i,y,0}]| = |\tZ_{i,y,1} - \Ex[\tZ_{i,y,1}]|$,
		and thus
		\LV{\begin{align*}
				&\Pr\Big[ |\tZ_{i,y,0} - \Ex[\tZ_{i,y,0}]| \le \Delta_y \Big]
				= \Pr\Big[ |\tZ_{i,y,1} - \Ex[\tZ_{i,y,1}]| \le \Delta_y \Big] \\
				&= \Pr\Big[ |\tZ_{i,y,0} - \Ex[\tZ_{i,y,0}]| \le \Delta_y \text{ and } |\tZ_{i,y,1} - \Ex[\tZ_{i,y,1}]| \le \Delta_y \Big].
			\end{align*}
		}
		\SV{$\Pr\big[ |\tZ_{i,y,0} - \Ex[\tZ_{i,y,0}]| \le \Delta_y \big]
			= \Pr\big[ |\tZ_{i,y,1} - \Ex[\tZ_{i,y,1}]| \le \Delta_y \big] = \Pr\big[ |\tZ_{i,y,0} - \Ex[\tZ_{i,y,0}]| \le \Delta_y \text{ and } |\tZ_{i,y,1} - \Ex[\tZ_{i,y,1}]| \le \Delta_y \big]$.
		}
		Since $\tU_{i,:} = x$ can only hold if $\tD_{i,x} \le 0$, the claim follows by Lemma~\ref{lem:condition} and a union bound over $y \in \Yix$.
	\end{proof}

	In the following subsections, we bound the summation in~\eqref{eq:sumZ1Z2}
	over the sets $L_1$ and $L_2$.

	\subsubsection{Case 1: Small Difference}

	We show first that $|L_1|$ is small (see Claim~\ref{cla:sizebound}).
	Then, we use a simple bound for $\Pr[\tU_{i,:} = x]$ which is based on Lemma~\ref{lem:cheb2}
	(see Claim~\ref{cla:firstcase}).

	\begin{claim} \label{cla:sizebound}
		We have $\sum_{(i,x) \in L_1} \sum_{y \in \Yix } |\CVy| \le 2^{k+2}\cdot \optV$.
	\end{claim}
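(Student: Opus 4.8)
The plan is to bound $\sum_{y\in\Yix}|\CVy|$ for a \emph{single} pair $(i,x)\in L_1$ by $3E_i$, and then to sum over rows $i$, using that for each fixed $i$ there are at most $2^k$ vectors $x$ and that $\optV = \sum_{i=1}^n E_i$.

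The first and only non-routine step is a purely combinatorial covering inequality. Fix a row $i$ and a vector $y\in\hYix$, i.e.\ $\langle x,y\rangle\ne\langle U_{i,:},y\rangle$. Consider the sets $X_a:=\{j\in\CVy : A_{ij}\ne\langle x,y\rangle\}$ and $X_b:=\{j\in\CVy : A_{ij}\ne\langle U_{i,:},y\rangle\}$. If some $j\in\CVy$ were in neither set, then $A_{ij}$ would equal both of the distinct reals $\langle x,y\rangle$ and $\langle U_{i,:},y\rangle$, which is impossible; hence $X_a\cup X_b=\CVy$, and inclusion--exclusion gives
\[
Z_{i,y,\neq\langle x,y\rangle} + Z_{i,y,\neq\langle U_{i,:},y\rangle} \;=\; |X_a|+|X_b| \;\ge\; |X_a\cup X_b| \;=\; |\CVy|.
\]
(Note it is only an inequality, not an equality, because the inner product function may take values outside $\{0,1\}$; this is all that is needed.) Summing over $y\in\hYix$, dropping the remaining nonnegative terms via $\Yix\subseteq\hYix$, and using $\sum_{y\in\hYix} Z_{i,y,\neq\langle x,y\rangle}\le E_{i,x}$ and $\sum_{y\in\hYix} Z_{i,y,\neq\langle U_{i,:},y\rangle}\le E_i$, I get
\[
\sum_{y\in\Yix}|\CVy| \;\le\; E_{i,x} + E_i \;=\; D_{i,x} + 2E_i,
\]
where the last equality is the definition $D_{i,x}=E_{i,x}-E_i$. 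Since $(i,x)\in L_1$ means $D_{i,x}\le E_i$, the right-hand side is at most $3E_i$.

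Finally I sum over all $(i,x)\in L_1$. For each fixed $i$ there are at most $2^k$ choices of $x\in\{0,1\}^k$, so
\[
\sum_{(i,x)\in L_1}\sum_{y\in\Yix}|\CVy| \;\le\; \sum_{i=1}^n 2^k\cdot 3E_i \;=\; 3\cdot 2^k\cdot\optV \;\le\; 2^{k+2}\cdot\optV,
\]
using $\optV=\sum_{i=1}^n E_i$ and $3\le 4$. There is no genuine obstacle; the only point requiring a little care is recognizing that the covering bound must be an inequality (because $\langle\cdot,\cdot\rangle$ is real-valued), after which everything is bookkeeping.
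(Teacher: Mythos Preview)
Your proof is correct and follows essentially the same approach as the paper: the same covering argument $|\CVy|\le Z_{i,y,\neq\langle x,y\rangle}+Z_{i,y,\neq\langle U_{i,:},y\rangle}$ for $y\in\hYix$, the same bound $\sum_{y\in\Yix}|\CVy|\le E_{i,x}+E_i\le 3E_i$ using $D_{i,x}\le E_i$, and the same final summation picking up the factor $2^k$. Your remark that the covering step is only an inequality because $\langle\cdot,\cdot\rangle$ may take non-binary values matches the paper's observation that the union need not be disjoint.
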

	\begin{proof}
		Fix $(i,x) \in L_1$ and let $y \in \Yix $. Note that since $\langle x,y \rangle \ne \langle U_{i,:},y \rangle$ we have
		\[
		\{ j \in \CVy \mid A_{i,j} \ne \langle x,y \rangle \} \cup \{ j \in \CVy \mid A_{i,j} \ne \langle U_{i,:},y \rangle\} = \CVy.
		\]
		Note that this union is not necessarily disjoint, e.g., if $\langle x,y \rangle \not\in \{0,1\}$.
		Since $E_{i,x} = D_{i,x} + E_{i}$ (by \eqref{eq:Dix}) and $D_{i,x} \le E_{i}$ (by definition of $L_1$), we have
		\begin{equation}\label{eq:YixVyUB}
			\sum_{y \in \Yix } |\CVy| \leq
			\sum_{y \in \Yix } Z_{i,y,\neq\langle x,y \rangle} + Z_{i,y,\neq\langle U_{i,:},y \rangle}
			\leq E_{i,x} + E_{i} \leq 3 E_{i}.
		\end{equation}
		Fixing $x$ and summing over all $i$ with $(i,x) \in L_1$,
		the term $E_{i}$ sums to at most $\optV$.
		Also summing over all $x\in\{0,1\}^k$ yields another factor $2^k$.
		Therefore, the claim follows.
	\end{proof}

	\begin{claim} \label{cla:firstcase}
		We have $ \sum_{(i,x) \in L_1} \Pr[\tU_{i,:} = x] \cdot D_{i,x} \le \tfrac \eps 3 \cdot \optV$.
	\end{claim}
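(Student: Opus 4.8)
The plan is to bound each summand $\Pr[\tU_{i,:}=x]\cdot D_{i,x}$ with $(i,x)\in L_1$ via Claim~\ref{cla:usethis} together with the Chebyshev-type tail bound of Lemma~\ref{lem:cheb2}, and then to sum the resulting estimates using the counting bound of Claim~\ref{cla:sizebound}. First I would fix $(i,x)\in L_1$; since $D_{i,x}>\tfrac{\eps}{3}E_i$ holds by definition of $L_1$, Claim~\ref{cla:usethis} applies and gives $\Pr[\tU_{i,:}=x]\le\sum_{y\in\Yix}\min_{c\in\{0,1\}}\Pr\big[\,|\tZ_{i,y,c}-\Ex[\tZ_{i,y,c}]|>\Delta_y\,\big]$, where $\Delta_y=\tfrac{t\,D_{i,x}}{2^{k+2}\ah{y}}$. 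For $y\in\Yix$ we have $|\tCy|=t$, so by definition of the distribution $\D{V,t}$ the multiset $\tCy$ is formed by $t$ independent draws with replacement from $\CVy$; hence $\tZ_{i,y,c}$ is a sum of $t$ i.i.d.\ Bernoulli variables. Since $D_{i,x}>0$ and $\ah{y}\ge|\CVy|\ge t>0$ we have $\Delta_y>0$, so Lemma~\ref{lem:cheb2} with $n=t$ bounds each term in the minimum by $\sqrt{t}/\Delta_y=\tfrac{2^{k+2}\ah{y}}{\sqrt{t}\,D_{i,x}}$.

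Plugging this in and using $\ah{y}\le(1+\tfrac\eps6)|\CVy|\le 2|\CVy|$, I would obtain $\Pr[\tU_{i,:}=x]\le\tfrac{2^{k+3}}{\sqrt{t}\,D_{i,x}}\sum_{y\in\Yix}|\CVy|$, and hence $\Pr[\tU_{i,:}=x]\cdot D_{i,x}\le\tfrac{2^{k+3}}{\sqrt{t}}\sum_{y\in\Yix}|\CVy|$. Summing over all $(i,x)\in L_1$ and invoking Claim~\ref{cla:sizebound} (which gives $\sum_{(i,x)\in L_1}\sum_{y\in\Yix}|\CVy|\le 2^{k+2}\optV$) yields $\sum_{(i,x)\in L_1}\Pr[\tU_{i,:}=x]\cdot D_{i,x}\le\tfrac{2^{2k+5}}{\sqrt{t}}\optV$. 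Substituting $t=2^{4k+14}/\eps^2$, so that $\sqrt{t}=2^{2k+7}/\eps$, makes the right-hand side equal to $\tfrac{\eps}{4}\optV\le\tfrac{\eps}{3}\optV$, as desired.

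The argument is essentially mechanical; the only point needing a little care is verifying that for $y\in\Yix$ the variable $\tZ_{i,y,c}$ really is a sum of exactly $t$ independent Bernoulli trials, which is precisely why $\Yix$ was defined as the set of $y\in\hYix$ with $|\tCy|=t$ (the clusters large enough to be subsampled with replacement), and being careful with the powers of $2$ so that the final constant lands below $\eps/3$.
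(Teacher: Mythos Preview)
Your proof is correct and follows essentially the same approach as the paper: apply Claim~\ref{cla:usethis}, bound each deviation probability via Lemma~\ref{lem:cheb2} with $n=t$, use $\ah{y}\le 2|\CVy|$, and then sum using Claim~\ref{cla:sizebound} to obtain $\tfrac{2^{2k+5}}{\sqrt{t}}\optV\le\tfrac{\eps}{3}\optV$. Your added justification that $|\tCy|=t$ forces $|\CVy|\ge t$ (so that $\tZ_{i,y,c}$ is genuinely a sum of $t$ independent Bernoullis and $\Delta_y>0$) is a welcome clarification that the paper leaves implicit.
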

	\begin{proof}
		Note that for any row $i$, vector $y \in \Yix $, and $c \in \{0,1\}$, the random variable $\tZ_{i,y,c}$ is a sum of independent Bernoulli random variables, since the $t$ samples from $\CVy$ forming $\tCy$ are independent, and each sample contributes either 0 or 1 to $\tZ_{i,y,c}$. Hence, our instantiations of Chebyshev's inequality, Lemmas~\ref{lem:cheb1} and \ref{lem:cheb2}, are applicable.
		We use Lemma~\ref{lem:cheb2} to bound
		\LV{
			\begin{align*}
				\Pr\Big[ |\tZ_{i,y,c} - \Ex[\tZ_{i,y,c}]| > \Delta_y \Big] \le \frac{\sqrt{t}}{\Delta_y}.
			\end{align*}
		}
		\SV{
			$\Pr\big[ |\tZ_{i,y,c} - \Ex[\tZ_{i,y,c}]| > \Delta_y \big] \le \sqrt{t}/\Delta_y$.%
		}
		Since $\Delta_y=t \cdot D_{i,x} / (2^{k+2} \cdot \ah{y})$ and $\ah{y}\leq(1+\tfrac \eps 6)|\CVy|<2|\CVy|$, we have
		\LV{
			\begin{align*}
				\Pr\Big[ |\tZ_{i,y,c} - \Ex[\tZ_{i,y,c}]| > \Delta_y \Big] \le \frac{2^{k+3} |\CVy|}{\sqrt{t} \cdot D_{i,x}},
			\end{align*}
		}
		\SV{%
			$\Pr\big[ |\tZ_{i,y,c} - \Ex[\tZ_{i,y,c}]| > \Delta_y \big] \le 2^{k}2^3 |\CVy|/(\sqrt{t} \cdot D_{i,x})$,%
		}
		and thus by Claim~\ref{cla:usethis}, we obtain
		\LV{
			\begin{align*}
				\Pr[\tU_{i,:} = x] \le \frac{2^{k+3}}{\sqrt{t} \cdot D_{i,x}} \sum_{y \in \Yix } |\CVy|.
			\end{align*}
		}%
		\SV{%
			$\Pr[\tU_{i,:} = x] \le 2^{k+3}/(\sqrt{t}  D_{i,x}))  \cdot \sum_{y \in \Yix } |\CVy|$.%
		}
		Claim~\ref{cla:sizebound} now yields
		\begin{align*}
			\sum_{(i,x) \in L_1} \Pr[\tU_{i,:} = x] \cdot D_{i,x} \le \frac{2^{k+3}}{\sqrt{t}} \sum_{(i,x) \in L_1} \sum_{y \in \Yix } |\CVy| \le \frac{2^{2k+5}}{\sqrt{t}} \optV.
		\end{align*}
		Since we chose $t \geq 2^{4k+14} / \eps^2$, see Theorem~\ref{thm:maintech}, we obtain the upper bound $\tfrac \eps 3 \optV$.
	\end{proof}

	\subsubsection{Case 2: Large Difference}

	We use here the stronger instantiation of Chebyshev's inequality,
	Lemma~\ref{lem:cheb1}, and charge $\mu = \Ex[\tZ_{i,y,c}]$ against $\optV$.

	\begin{claim} \label{cla:secondcase}
		We have $\sum_{(i,x) \in L_2} \Pr[\tU_{i,:} = x] \cdot D_{i,x} \le \tfrac  \eps 3 \cdot \optV$.
	\end{claim}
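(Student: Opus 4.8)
The plan is to run the same outline as the proof of Claim~\ref{cla:firstcase}, but to replace the weak Chebyshev estimate (Lemma~\ref{lem:cheb2}) with the stronger one (Lemma~\ref{lem:cheb1}, variance at most the mean) and to charge the resulting means $\Ex[\tZ_{i,y,c}]$ against $\optV$ rather than against cluster sizes. First I would note that every $(i,x)\in L_2$ satisfies $D_{i,x}>E_i\ge\tfrac{\eps}{3}E_i$, so Claim~\ref{cla:usethis} applies and bounds $\Pr[\tU_{i,:}=x]$ by $\sum_{y\in\Yix}\min_{c\in\{0,1\}}\Pr[\,|\tZ_{i,y,c}-\Ex[\tZ_{i,y,c}]|>\Delta_y\,]$, with $\Delta_y=tD_{i,x}/(2^{k+2}\ah{y})$. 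Since each $\tZ_{i,y,c}$ is a sum of $t$ independent Bernoulli variables of mean $\Ex[\tZ_{i,y,c}]=t\,Z_{i,y,c}/|\CVy|$, Lemma~\ref{lem:cheb1} bounds each probability by $\Ex[\tZ_{i,y,c}]/\Delta_y^2$. The crucial use of the inner $\min$ over $c$ is that it keeps the smaller of $Z_{i,y,0},Z_{i,y,1}$, and for $y\in\Yix$ (so $\langle x,y\rangle\ne\langle U_{i,:},y\rangle$) that minimum is at most $Z_{i,y,\neq\langle U_{i,:},y\rangle}$, the cost the optimal vector $U_{i,:}$ itself pays at block $y$ — precisely the quantity that sums to $E_i$ over $y$, and then to $\optV$ over $i$.

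Plugging in $\Delta_y$ and using $\ah{y}\le(1+\eps/6)|\CVy|<2|\CVy|$, the bound for a single triple $(i,x,y)$ takes the form $2^{O(k)}\,|\CVy|\,Z_{i,y,\neq\langle U_{i,:},y\rangle}\big/(t\,D_{i,x}^2)$. The step I expect to be the real obstacle is removing the factor $|\CVy|$, which can be as large as $d$: here I would use that, since $\langle x,y\rangle\ne\langle U_{i,:},y\rangle$, the two disagreement sets cover $\CVy$, so $|\CVy|\le Z_{i,y,\neq\langle x,y\rangle}+Z_{i,y,\neq\langle U_{i,:},y\rangle}\le E_{i,x}+E_i=D_{i,x}+2E_i$, and on $L_2$ we have $E_i<D_{i,x}$, hence $|\CVy|<3D_{i,x}$. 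This is exactly where the defining inequality $D_{i,x}>E_i$ of $L_2$ is used essentially: it cancels one power of $D_{i,x}$ from the denominator and eliminates the dependence on the cluster sizes. After this substitution and the union bound over $y\in\Yix$ from Claim~\ref{cla:usethis}, I obtain $\Pr[\tU_{i,:}=x]\cdot D_{i,x}\le\frac{2^{O(k)}}{t}\sum_{y\in\Yix}Z_{i,y,\neq\langle U_{i,:},y\rangle}\le\frac{2^{O(k)}}{t}E_i$.

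Finally I would sum over $(i,x)\in L_2$: for each fixed row $i$ there are at most $2^k$ vectors $x$, and $\sum_i E_i=\optV$, so $\sum_{(i,x)\in L_2}\Pr[\tU_{i,:}=x]\cdot D_{i,x}\le\frac{2^{O(k)}}{t}\optV$; a careful bookkeeping of the constants shows the $2^{O(k)}$ factor is at most $9\cdot 2^{3k+6}$, and with the choice $t=2^{4k+14}/\eps^2$ from Theorem~\ref{thm:maintech} and $\eps<1$ this is at most $\tfrac{\eps}{3}\optV$, which is the claim. Together with Claim~\ref{cla:firstcase} and inequality~\eqref{eq:sumZ1Z2}, this also completes the proof of Theorem~\ref{thm:maintech}. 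Everything besides the covering inequality for $|\CVy|$ and the $\min_c$ replacement is routine, and the stated $t$ leaves ample margin — indeed a merely linear dependence on $1/\eps$ would suffice for this case alone.
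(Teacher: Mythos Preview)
Your proposal is correct and follows essentially the same route as the paper's proof: apply Claim~\ref{cla:usethis}, use Lemma~\ref{lem:cheb1} together with $\Ex[\tZ_{i,y,c}]=tZ_{i,y,c}/|\CVy|$ and the observation $\min_{c\in\{0,1\}}Z_{i,y,c}\le Z_{i,y,\neq\langle U_{i,:},y\rangle}$, then eliminate $|\CVy|$ via the covering inequality $|\CVy|\le E_{i,x}+E_i<3D_{i,x}$ on $L_2$, and finally sum over $y\in\Yix$, $x\in\{0,1\}^k$, and $i$. Your constants are slightly looser than the paper's $2^{3k+8}/t$ but well within the margin afforded by $t=2^{4k+14}/\eps^2$, and your closing remark that a linear dependence on $1/\eps$ suffices here matches the paper's observation that $t\ge 2^{3k+10}/\eps$ is already enough for this case.
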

	\begin{proof}
		Fix $(i,x) \in L_2$ and let $y \in \Yix $. As in the proof of Claim~\ref{cla:firstcase}, we see that our instantiation of Chebyshev's inequality, Lemma~\ref{lem:cheb1}, is applicable to $\tZ_{i,y,c}$ for any $c \in \{0,1\}$. We obtain%
		\LV{
			\begin{align*}
				\Pr\Big[ |\tZ_{i,y,c} - \Ex[\tZ_{i,y,c}]| > \Delta_y \Big] \le \Ex[\tZ_{i,y,c}] / \Delta_y^2.
			\end{align*}
		}
		\SV{%
			$\Pr\big[ |\tZ_{i,y,c} - \Ex[\tZ_{i,y,c}]| > \Delta_y \big] \le \Ex[\tZ_{i,y,c}] / \Delta_y^2$.
		}
		Note that $\Ex[\tZ_{i,y,c}] = Z_{i,y,c} \cdot t/|\CVy|$, since $|\tCy| = t$. Using
		$\min_{c\in\{0,1\}}Z_{i,y,c}\le Z_{i,y,\neq\langle U_{i,:},y\rangle}$,
		we have
		\begin{align*}
			\min_{c \in \{0,1\}} \Pr\Big[ |\tZ_{i,y,c} - \Ex[\tZ_{i,y,c}]| > \Delta_y \Big]
			\le
			\frac{t}{|\CVy| \Delta_y^2} \cdot Z_{i,y,\neq\langle U_{i,:},y\rangle}.
		\end{align*}
		Since $\Delta_y=t \cdot D_{i,x} / (2^{k+2} \cdot \ah{y})$ and $\ah{y}\leq(1+\eps/6)|\CVy|<2|\CVy|$, we have
		\begin{align*}
			\min_{c\in\{0,1\}}\Pr\Big[|\tZ_{i,y,c}-\Ex[\tZ_{i,y,c}]|>\Delta_{y}\Big]\le
			\frac{2^{2k+6}\cdot|\CVy|}{t\cdot(D_{i,x})^{2}}
			\cdot Z_{i,y,\neq\langle U_{i,:},y\rangle}.
		\end{align*}
		Summing over all $y \in \Yix $, Claim~\ref{cla:usethis} yields
		\begin{align} \label{eq:five}
			\Pr[\tU_{i,:}=x]\le
			\sum_{y\in\Yix}\frac{2^{2k+6}\cdot|\CVy|}{t\cdot (D_{i,x})^{2}}
			\cdot Z_{i,y,\neq\langle U_{i,:},y\rangle}.
		\end{align}
		We again use inequality (\ref{eq:YixVyUB}), i.e., $\sum_{y \in \Yix } |\CVy|\leq E_{i,x} + E_{i}$.
		Since $E_{i,x} = D_{i,x} + E_{i}$ (by \eqref{eq:Dix}) and $E_{i}<D_{i,x}$ (by definition of $L_2$), we have $|\CVy|\leq 3 D_{i,x}$ for any $y \in \Yix $.
		Together with \eqref{eq:five}, and then using the definition of $E_i$, we have
		\begin{align*}
			\Pr[\tU_{i,:}=x]\cdot D_{i,x} \le \frac{2^{2k+8}}{t}
			\sum_{y\in\Yix}Z_{i,y,\neq\langle U_{i,:},y\rangle}
			\leq \frac{2^{2k+8}}{t} E_{i},
		\end{align*}
		Fixing $x$ and summing over all $i$ with $(i,x) \in L_2$, the term $E_{i}$ sums to at most $\optV$. Also summing over all $x \in \{0,1\}^k$ yields another factor $2^k$. Thus, it follows that
		\begin{align*}
			\sum_{(i,x) \in L_2} \Pr[\tU_{i,:} = x] \cdot D_{i,x} \le  \frac{2^{3k+8}}{t} \optV.
		\end{align*}
		Since we chose $t \ge 2^{3k+10}/\eps$, see Theorem~\ref{thm:maintech}, we obtain the upper bound $\tfrac \eps 3 \cdot \optV$.
	\end{proof}

	\subsubsection{Finishing the Proof}

	Taken together, Claims~\ref{cla:firstclaim}, ~\ref{cla:firstcase}, and ~\ref{cla:secondcase} prove Theorem~\ref{thm:maintech}.

	\begin{proof}[Proof of Theorem~\ref{thm:maintech}]
		Using Claim~\ref{cla:firstclaim}, splitting into $L_0$, $L_1$ and $L_2$, and using Claims~\ref{cla:firstcase} and~\ref{cla:secondcase}, we obtain for any $\eps\in(0,1)$
		and $t = 2^{4k+12} / \eps^2$ that
		\begin{align*}
			\Ex_\tV\big[ \nnzs{A - \tU \cdot V} \big]
			&\leq (1+ \tfrac \eps 3)\optV + \sum_{(i,x) \in L_1} \Pr\big[ \tU_{i,:} = x \big] \cdot D_{i,x} + \sum_{(i,x) \in L_2} \Pr\big[ \tU_{i,:} = x \big] \cdot D_{i,x}  \\
			&\leq (1+\eps) \optV.
		\end{align*}
		This completes the proof.
	\end{proof}

	\subsection{Efficient Sampling Algorithm}\label{sec:sampling}

	The conceptually simple PTAS in Section~\ref{subsec:SimpleAlgo} has two
	running time bottlenecks, due to the exhaustive enumeration in Step 1 and Step 2.
	Namely, Step 1 guesses exactly the sizes $|\CVy|$ for each $y\in\{0,1\}^k$,
	and there are $d^{O(2^k)}$ possibilities; and
	Step 2 guesses among all columns of matrix $A$ the multiset family $\tC$,
	guaranteed to exist by Theorem~\ref{thm:maintech} and there are $d^{O(t\cdot 2^k)}$
	possibilities.

	Since Theorem~\ref{thm:maintech} needs only approximate cluster sizes,
	it suffices in Step 1 to guess numbers $\alpha_y$ with
	$|\CVy| \le \alpha_y \le (1 + \tfrac \eps 6) |\CVy|$ if $|\CVy| \ge t$,
	and $\alpha_y = |\CVy|$ otherwise, where $t = 2^{4k+12} / \eps^2$.
	Hence, the runtime overhead for Step 1 can be easily improved to
	$(t + \eps^{-1} \log d)^{2^k}$.

	To reduce the exhaustive enumeration in Step 2, we design an efficient sampling procedure,
	see Algorithm~\ref{alg:Sampling}, that uses ideas from clustering algorithms pioneered
	by Kumar et al.~\cite{kumar2004simple} and refined in~\cite{kumar2005linear,ACMN2010}.
	Our algorithm reduces the total exhaustive enumeration in Step 2 and
	the guessing overhead for the approximate cluster sizes in Step 1 to
	$(2^k/\eps)^{2^{O(k)}}\cdot (\log d)^{(\log\log d)^{0.1}}$ possibilities.

	This section is structured as follows. We first replace an optimal solution $(U,V)$
	by a ``well-clusterable'' solution $(U,W)$, which will help in our correctness proof.
	In Subsection~\ref{subsec:Pseudocode} we present pseudocode for our sampling algorithm.
	We then prove its correctness in Subsection~\ref{subsec:correcntesssamplelaog} and
	analyze its running time in Subsection~\ref{subsec:runnigntimesamplingalgo}.
	Finally, we show how to use the sampling algorithm designed in Subsection~\ref{subsec:Pseudocode}
	together with the ideas of the simple PTAS from Subsection~\ref{subsec:SimpleAlgo}
	to prove Theorem~\ref{thm:main2}, see Subsection~\ref{subsec:completePTASfinally}.

	\subsubsection{Existence of a \texorpdfstring{$(U,V,\eps)$}{(U, V, eps)}-Clusterable Solution}

	For a matrix $B \in \{0,1\}^{n \times d}$ we denote by $\colsupp(B)$ the set of unique columns of $B$.
	Note that if the columns of $U$ are linearly independent then $U \cdot \colsupp(V)$ denotes the set of distinct columns of $U \cdot V$.
	In the clustering formulation of the \genprobk problem as discussed in the introduction, the set $U \cdot \colsupp(V)$ corresponds to the set of cluster centers.

	Given matrices $U,V$, we will first replace $V$ by a related matrix $W$ in a way that makes all elements of $U \cdot \colsupp(W)$ sufficiently different without increasing the cost too much, as formalized in the following.

	\begin{lem}\label{lem:CR}
		For any $U\in\{0,1\}^{n \times k}$, $V\in\{0,1\}^{k \times d}$ and $\eps\in(0,1)$,
		there exists a matrix $W\in\{0,1\}^{k\times n}$ such that
		$\| A - U \cdot W \|_0 \le (1+\eps) \| A - U \cdot V \|_0$
		and for any distinct $y, z\in \colsupp(W)$ we have
		\begin{enumerate}
			\item[(i)] $\nnz{ Uy - Uz } > \eps \cdot 2^{-k} \cdot \| A - U \cdot V \|_0 / \min\{|\CWy|,|\CWz|\}$, and

			\item[(ii)] $\nnz{A_{:,j}-Uy}\leq\nnz{A_{:,j}-Uz}$ for every $j\in\CWy$.
		\end{enumerate}
		We say that such a matrix $W$ is \emph{$(U,V,\eps)$-clusterable}.
	\end{lem}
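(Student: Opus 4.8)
The plan is to build $W$ from $V$ by an iterative \emph{merge-and-reassign} process on the clustering that $V$ induces on the columns. Write $L := \nnz{A - U\cdot V}$, and for a candidate $W' \in \{0,1\}^{k\times d}$ view its columns as assigning each $j\in[d]$ to the ``center'' $UW'_{:,j}$, so that $\nnz{A-U\cdot W'} = \sum_j \nnz{A_{:,j}-UW'_{:,j}}$. Starting from $W:=V$, I first run a \emph{reassignment-to-stable} step: while there is a column $j$ and a vector $z\in\colsupp(W)$ with $\nnz{A_{:,j}-Uz} < \nnz{A_{:,j}-UW_{:,j}}$, set $W_{:,j}:=z$. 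Each such move strictly decreases the nonnegative integer cost $\nnz{A-U\cdot W}$, so this terminates; afterwards $\nnz{A-U\cdot W}\le L$, property (ii) holds (no column strictly prefers another present center), and $|\colsupp(W)|\le 2^k$ (columns only move to already-present centers, and trivially there are at most $2^k$ vectors in $\{0,1\}^k$).

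Next I loop: while there are distinct $y,z\in\colsupp(W)$ \emph{violating} (i), i.e. with $\nnz{Uy-Uz}\le \eps\, 2^{-k} L / \min\{|\CWy|,|\CWz|\}$, say $|\CWy|\le|\CWz|$, I \emph{merge} $y$ into $z$ by setting $W_{:,j}:=z$ for all $j\in\CWy$, and then re-run reassignment-to-stable. A single merge changes the cost by $\sum_{j\in\CWy}\bigl(\nnz{A_{:,j}-Uz}-\nnz{A_{:,j}-Uy}\bigr)$, and the triangle inequality for Hamming distance, $\nnz{A_{:,j}-Uz}\le\nnz{A_{:,j}-Uy}+\nnz{Uy-Uz}$, bounds this by $|\CWy|\cdot\nnz{Uy-Uz}\le |\CWy|\cdot\eps\,2^{-k}L/|\CWy| = \eps\,2^{-k}L$; the following reassignment-to-stable can only decrease the cost. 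Each merge empties the cluster $\CWy$ and reassignment-to-stable never introduces a new center, so $|\colsupp(W)|$ strictly decreases at every pass of the loop; since it starts at $\le 2^k$ and stays $\ge 1$, there are at most $2^k-1$ merges in total. Hence the final cost is at most $L + (2^k-1)\cdot\eps\,2^{-k}L < (1+\eps)L$. At termination no pair violates (i), so (i) holds for all distinct $y,z\in\colsupp(W)$; and the last operation performed was a reassignment-to-stable (or none at all, if the loop was never entered), so (ii) holds as well. Finally every column of $W$ lies in $\{0,1\}^k$, so $W\in\{0,1\}^{k\times d}$, which completes the construction.

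I would also record that (i) in particular gives $\nnz{Uy-Uz}>0$ for distinct $y,z\in\colsupp(W)$, so distinct columns of $W$ map to genuinely distinct (``sufficiently different'') centers, as the lemma intends. The one real subtlety, and the step I expect to need the most care, is the \emph{interplay} of the two operations: a merge changes cluster sizes and may create new violations of (i), while a reassignment can shrink a cluster and thereby \emph{break} (i) that held a moment earlier; this is exactly why the two steps must be interleaved rather than applied in sequence, and why the termination argument must hinge on the monotone quantity $|\colsupp(W)|$ instead of on the cost, which oscillates. The second point to be careful about is the accounting: the per-merge overhead $\eps\,2^{-k}L$ is measured against the \emph{fixed} baseline $L=\nnz{A-U\cdot V}$, and the number of merges is at most $2^k$, so the overheads telescope to at most $\eps L$ --- it is precisely the factor $\eps\,2^{-k}$ in the separation threshold of (i) that makes this balance out to $(1+\eps)L$.
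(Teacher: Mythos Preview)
Your proposal is correct and follows essentially the same merge-and-reassign construction as the paper's proof: initialize $W:=V$, alternately reassign columns to their best available center (enforcing (ii) without increasing cost) and merge a violating pair (enforcing (i) at an additive cost of at most $\eps\,2^{-k}L$ per merge via the triangle inequality), with termination and the $(1+\eps)L$ bound both following from the strict decrease of $|\colsupp(W)|\le 2^k$. Your discussion of the interplay between the two operations and the accounting against the fixed baseline $L$ is exactly the right emphasis.
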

	\begin{proof}
		The proof is by construction of $W$. We initialize $W := V$ and then iteratively resolve violations of \emph{(i)} and \emph{(ii)}. In each step, resulting in a matrix $W'$, we ensure that $\colsupp(W') \subseteq \colsupp(W)$. We call this \emph{support-monotonicity}.

		We can resolve all violations of \emph{(ii)} at once by iterating over all columns $j \in [d]$ and replacing $W_{:,j}$ by the vector $z \in \colsupp(W)$ minimizing $\|A_{:,j} - U z \|_0$. This does not increase the cost $\| A - U \cdot W \|_0$ and results in a matrix $W'$ without any violations of \emph{(ii)}.

		So assume that there is a violation of \emph{(i)}. That is, for distinct
		$y, z\in \colsupp(W)$, where we can assume without loss of generality that
		$|\CWy| \le |\CWz|$, we have
		$$\nnz{ Uy - Uz } \le \eps \cdot 2^{-k} \cdot \| A - U \cdot V \|_0 / |\CWy|.$$
		We change the matrix $W$ by replacing for every $j \in \CWy$ the column $W_{:,j} = y$ by $z$.
		Call the resulting matrix $W'$. Note that the cost of any replaced column $j$ changes to
		\begin{align*}
			\|A_{:,j} - U \cdot W'_{:,j} \|_0
			= \|A_{:,j} - U z \|_0 &\le \|A_{:,j} - U y \|_0 + \|U y - U z\|_0  \\
			&\le \| A_{:,j} - U \cdot W_{:,j} \|_0 + \eps \cdot 2^{-k} \cdot \| A - U \cdot V \|_0 / |\CWy|.
		\end{align*}
		Since the number of replaced columns is $|\CWy|$, it follows that
		the overall cost increases by at most $\eps \cdot 2^{-k} \cdot \| A - U \cdot V \|_0$.
		Note that after this step the size of $\colsupp(W)$ is reduced by 1, since we removed any occurrence of column $y$. By support-monotonicity, the number of such steps is bounded by $2^k$. Since resolving violations of \emph{(ii)} does not increase the cost, the final cost is bounded by $(1+\eps) \| A - U \cdot V \|_0$.

		After at most $2^k$ times resolving a violation of \emph{(i)} and then all violations of \emph{(ii)}, we end up with a matrix $W$ without violations and the claimed cost bound.
	\end{proof}

	\subsubsection{The Algorithm Sample}\label{subsec:Pseudocode}

	Given $A \in \{0,1\}^{n \times d}$, $k \in \N$, $\eps \in (0,1)$, and $t \in \N$, fix any optimal solution $U,V$, that is $\nnz{ A- U \cdot V }=\optk$.
	Our proof will use the additional structure provided by well clusterable solutions. Therefore, fix any $(U,V,\eps)$-clusterable matrix $W$ as in Lemma~\ref{lem:CR}. Since $\nnz{ A- U \cdot W } \le (1+ \eps) \nnz{ A- U \cdot V }$, we can restrict to matrix $W$. Specifically, we fix the optimal partitioning $\CW$ of $[d]$ for the purpose of the analysis and for the guessing steps of the algorithm.
	Our goal is to sample from the distribution~$\D{W,t}$.

	Pseudocode of our sampling algorithm $\textbf{Sample}_{A,k,\eps,t}(M, \fN, \tR, \tC, \alpha)$ is given below.
	The arguments of this procedure are as follows.
	Matrix $M$ is the current submatrix of $A$ (initialized as the full matrix $A$).
	Set $\fN \subseteq \{0,1\}^k$ is the set of clusters that we did not yet sample from (initialized to $\{0,1\}^k$). The sequence $\tR$ stores ``representatives'' of the clusters that we already sampled from (initialized to undefined entries $(\perp,\ldots,\perp)$). The sequence $\tC$ contains our samples, so in the end we want $\tC$ to be drawn according to $\D{W,t}$ ($\tC$ is initialized such that $\tCy = \emptyset$ for all $y \in \{0,1\}^k$). Finally, $\alpha$ contains guesses for the sizes of the clusters that we already sampled from, so in the end we want it to be a sequence of $\tfrac \eps 6$-approximate cluster sizes ($\alpha$ is initialized such that $\alpha_y = 0$ for all $y \in \{0,1\}^k$).
	This algorithm is closely related to algorithm ``Irred-$k$-means'' by Kumar et al.~\cite{kumar2004simple}, see the introduction for a discussion.

	In this algorithm, at the base case we call $\text{\textbf{EstimateBestResponse}}_{A,k}(\tC, \alpha)$,
	which computes matrix $\tU = \tU(\tC,\alpha)$ and a best response $\tV$ to $\tU$.
	Apart from the base case, there are three phases of algorithm \textbf{Sample}.
	In the \emph{sampling} phase, we first guess some $y \in \fN$ and an approximation $\alpha_y$
	of $|\CWy|$. Then, from the current matrix $M$ of dimension $n\times \dM$,
	we sample $\min\{t,\alpha_y\}$ columns to form a multiset $\tCy$,
	and we sample one column from $M$ to form $\tR_y$. We make a recursive call with $y$ removed
	from $\fN$ and updated $\tR, \tC, \alpha$ by the values $\tR_y, \tCy, \alpha_y$.
	As an intermediate solution, we let $U^{(1)}, V^{(1)}$ be the best solution returned
	by the recursive calls over all exhaustive guesses.
	In the \emph{pruning} phase, we delete the $\dM/2$ columns of $M$ that are closest to $\tR$,
	and we make a recursive call with the resulting matrix $M'$, not changing the remaining arguments.
	Denote the returned solution by $U^{(2)}, V^{(2)}$.
	Finally, in the \emph{decision} phase we return the better solution between $U^{(1)}, V^{(1)}$
	and $U^{(2)}, V^{(2)}$.

	\begin{algorithm}[H]
		$\text{\textbf{Sample}}_{A,k,\eps,t}(M,\ \fN,\ \tR,\ \tC, \ \alpha)$

		$\quad$ let $\dM$ be the number of columns of $M$

		$\quad$ set $\nu:= (\eps /2^{k+4})^{2^k+2 - |\fN|}$

		\smallskip\smallskip

		$\,\,\,$1. \textbf{if} $\fN=\emptyset$ \textbf{or} $\dM=0$:
		\textbf{return} $(\tU,\tV)=\text{\textbf{EstimateBestResponse}}_{A,k}(\tC, \alpha)$

		\smallskip\smallskip

		$\quad$ \textbf{*} \textbf{Sampling phase} {*}

		$\,\,\,$2. \textbf{guess} $y\in\fN$

		$\,\,\,$3. \textbf{guess} whether $|\CWy| < t$:

		$\,\,\,$4. $\quad$ \textbf{if} $|\CWy| < t$: $\;$\textbf{guess} $\ah{y}:=|\CWy|$ exactly, i.e.
		$\ah{y}\in\{0,1,\dots,t-1\}$

		$\,\,\,$5. $\quad$ \textbf{otherwise}:$\;$ \textbf{guess}
		$\nu \cdot \dM\leq\ah{y}\leq\dM$ such that
		$|\CWy|\leq\ah{y}\leq(1+ \frac \eps 6)|\CWy|$

		$\,\,\,$6. \textbf{if} $\alpha_y=0$:
		$(U^{(y,\alpha_y)},V^{(y,\alpha_y)})=\text{\textbf{EstimateBestResponse}}_{A,k}(\tC, \alpha)$

		$\,\,\,$7. \textbf{else}

		$\,\,\,$8. $\quad$ sample u.a.r.\ $\min\{t,\alpha_y\}$ columns from $M$; let $\tCy$ be the resulting multiset\footnotemark

		$\,\,\,$9. $\quad$ sample u.a.r.\ one column from $M$; call it $\tR_y$

		10. $\quad$ $(U^{(y,\alpha_y)},V^{(y,\alpha_y)})=\text{\textbf{Sample}}_{A,k,\eps,t}
		(M,\ \fN\backslash\{y\},\ \tR \cup \{\tR_y\},\ \tC \cup \{\tC_y\}, \ \alpha \cup \{\alpha_y\})$

		11. let $(U^{(1)},V^{(1)})$ be the pair minimizing $\nnz{A-U^{(y,\alpha_y)}V^{(y,\alpha_y)}}$
		over all guesses $y$ and $\alpha_y$

		\smallskip\smallskip

		$\quad$ \textbf{*} \textbf{Pruning phase }{*}

		12. let $M'$ be matrix $M$ after the deleting $\dM/2$ closest columns to $\tR$,

		$\quad\;\,$ i.e., the $\dM/2$ columns $M_{:,j}$ with smallest values $\min_{y \in \{0,1\}^k \setminus \fN}\nnz{ M_{:,j} - \tR_y}$

		13. $(U^{(2)},\ V^{(2)})=\text{\textbf{Sample}}_{A,k,\eps,t}(M',\ \fN,\ \tR,\ \tC, \ \alpha)$

		\smallskip\smallskip

		$\quad$ \textbf{*} \textbf{Decision} {*}

		14. \textbf{return} $(U^{(\ell)},V^{(\ell)})$ with the minimal value
		$\nnz{ A - U^{(\ell)} V^{(\ell)} }$ over $\ell\in\{1,2\}$.

		\caption{Estimating Best Response}
		\label{alg:Sampling}
	\end{algorithm}
	\footnotetext{\label{fn:multiset}
		Given a submatrix $M$ of $A$, and $t$ columns sampled u.a.r.\ from $M$,
		we denote by $\tCy$ the resulting multiset of column indices with respect to the
		original matrix $A$.}

	\begin{algorithm}[H]

		$\text{\textbf{EstimateBestResponse}}_{A,k}(\tC, \alpha)$

		1. \emph{(Compute $\tU$)}
		Compute a matrix $\tU = \tU(\tC,\alpha)$, where row $\tU_{i,:}$ is any vector $x$ minimizing the estimated cost $\tE_{i,x}$. Note that each row $\tU_{i,:} \in \{0,1\}^k$ can be optimized independently.
		\smallskip

		2. \emph{(Compute $\tV$)}
		Compute $\tV$ as a best response to $\tU$.
		\smallskip

		3. \textbf{Return} $(\tU,\tV)$

		\caption{Estimating Best Response}
		\label{alg:EstBestResp}
	\end{algorithm}

	\subsubsection{Correctness of Algorithm Sample} \label{subsec:correcntesssamplelaog}

	With notation as above, we now prove correctness of algorithm \textbf{Sample}.

	\begin{thm}\label{thm:RecInv}
		Algorithm \emph{\textbf{Sample}}$_{A,k,\eps,t}$ generates a recursion tree
		which with probability at least $(\tfrac{\eps}{2t})^{2^{O(k)} \cdot t}$
		has a leaf calling $\text{\emph{\textbf{EstimateBestResponse}}}_{A,k}(\tC, \alpha)$
		such that
		\begin{compactenum}
			\item[(i)] $\alpha$ is a sequence of $\tfrac \eps 6$-approximate cluster sizes (w.r.t.\ the fixed matrix $W$), and
			\item[(ii)] $\tC$ is drawn according to distribution $\D{W,t}$.
		\end{compactenum}
	\end{thm}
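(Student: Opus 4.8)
The plan is to analyze the recursion tree of \textbf{Sample} by following a single "good" root-to-leaf path, chosen so that at each recursive call we either (a) sample from a large cluster, or (b) prune away columns that belong to clusters we have already sampled from, and argue that with the claimed probability this path terminates at a leaf satisfying (i) and (ii). First I would set up the invariant along the path: after having entered a subtree with parameter $\fN$, the clusters in $\{0,1\}^k \setminus \fN$ have already been sampled (their multisets $\tCy$ and representatives $\tR_y$ are fixed), and the current matrix $M$ consists of columns all of whose optimal cluster (w.r.t.\ the fixed $(U,V,\eps)$-clusterable matrix $W$) lies in $\fN$ — more precisely, of the columns of $A$ that survive the prunings done so far. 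The base case of the recursion (Step 1) is reached exactly when $\fN = \emptyset$ (every cluster sampled) or $\dM = 0$ (no columns left), and in either case the accumulated $\tC$ and $\alpha$ are what we must certify.

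The heart of the argument is a potential/size-decrease lemma governing the two phases. In the \textbf{pruning phase} we delete the $\dM/2$ columns of $M$ closest to the representatives $\tR$; I would argue, using property (i) of the clusterable matrix $W$ (which makes the centers $U \cdot \colsupp(W)$ well-separated relative to cluster sizes) together with a Markov-type bound, that with constant probability a uniformly random column $r^{(y)} \in [d]$ is a "good representative" of $\CWy$ in the sense that columns of $\CWy$ are, on average, much closer to $r^{(y)}$ than columns of any not-yet-sampled cluster. Consequently the pruned half $M \setminus M'$ consists almost entirely of columns from already-sampled clusters, so the not-yet-sampled clusters still intersected by $M'$ have lost at most a small fraction of their columns; in particular, the number of columns of $M'$ belonging to $\fN$ drops by at least a constant factor whenever some cluster in $\fN$ is "small" relative to $\dM$. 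In the \textbf{sampling phase}, if some cluster $\CWy$ with $y \in \fN$ is "large" (say $|\CWy| \ge \nu \cdot \dM$ for the threshold $\nu := (\eps/2^{k+4})^{2^k+2-|\fN|}$ used in Step~5), then a single batch of $t$ uniform samples from $M$ lands entirely in $\CWy$ with probability at least $\nu^t$, yielding a uniform sample from $\CWy$; and the guessing steps (Steps~2--5) recover the correct $y$ and an $\tfrac\eps6$-approximation $\alpha_y$ of $|\CWy|$ with overhead absorbed into the "number of guesses," not the success probability. Alternating pruning (which halves $\dM$, so after $O(\log d)$ steps all surviving clusters are large) with sampling (which removes one cluster from $\fN$), the path has depth $2^{O(k)} \cdot O(\log d)$ — but the sampling successes are what cost probability, and there are at most $2^k$ of them, each contributing a factor $\nu^t \ge (\eps/t)^{2^{O(k)} t}$ after bounding $\nu$; multiplying these $2^k$ factors together with the constantly-many-per-level pruning successes gives the stated bound $(\tfrac{\eps}{2t})^{2^{O(k)} \cdot t}$. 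Once the path reaches a leaf with $\fN = \emptyset$, every cluster has been sampled by a uniform batch of size $\min\{t,\alpha_y\}$ drawn from the appropriate surviving submatrix; since the prunings only removed columns outside the remaining clusters, restricting a uniform sample of $M$ to the columns of $\CWy$ is exactly a uniform sample of $\CWy$, which is what $\D{W,t}$ prescribes — this establishes (ii), while (i) follows from the guessing in Steps~4--5.

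The main obstacle I anticipate is the pruning-correctness step: showing that deleting the $\dM/2$ columns closest to the \emph{representatives} $\tR_y$ (rather than to the true, constraint-bound cluster centers) does not damage the not-yet-sampled clusters. Unlike standard $k$-means sampling, where the sample determines a center, here the centers satisfy global linear constraints and the $\tR_y$ can be far from the optimal centers; the key is that we do not need $\tR_y$ to be a good center, only a good \emph{separator} — a random column of a well-separated cluster is, with constant probability, a witness that columns of that cluster are collectively closer to it than columns of other unsampled clusters, which is precisely where property~(i) of Lemma~\ref{lem:CR} is used. Making this quantitative, and tracking how the constant in "constant probability" and the shrinking threshold $\nu$ interact across the $2^{O(k)}$ levels of recursion to yield the clean final exponent, will require the bulk of the careful bookkeeping; everything else is a fairly routine union bound over the (polynomially or quasi-polynomially) many guesses and an application of the size-halving recursion.
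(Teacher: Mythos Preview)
Your high-level strategy matches the paper's --- follow a good root-to-leaf path, alternating sampling (when some unsampled cluster is large in the current matrix) with pruning (when none is) --- and you correctly locate the crux at pruning correctness. But there is a genuine gap in how you state and maintain the invariant, and the missing pieces are exactly what make the pruning argument go through.

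First, the invariant is stated in the wrong direction. You write that ``the current matrix $M$ consists of columns all of whose optimal cluster \dots\ lies in $\fN$'' and later that unsampled clusters ``have lost at most a small fraction of their columns.'' The paper instead maintains that \emph{every} column of every unsampled cluster survives in $M$ (invariant I1: $P^{(\tau)}_y = \CWy$ for all $y \in \fN^{(\tau)}$), while $M$ may also contain columns from already-sampled clusters. Exact preservation is necessary: if even a small fraction of $\CWy$ is removed before you sample from it, then the $t$ uniform samples from $M$ restricted to $\CWy$ are \emph{not} uniform over the original $\CWy$, and $\tC$ is not drawn from $\D{W,t}$, so (ii) fails.

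Second, to get exact preservation the paper uses two further invariants you do not mention. One is an \emph{ordering} invariant (I2): clusters are sampled in decreasing order of size, so that any already-sampled cluster is at least as large as any unsampled one; this is enforced by always guessing $y = y_{|\fN|}$, the largest remaining cluster. The other is a \emph{representative-quality} invariant (I5): the uniformly chosen column $\tR_y$ satisfies $\nnzs{\tR_y - Uy} \le 2\nnzs{A-UW}/|\CWy|$, established via Markov at the sampling step (cost a factor $\nu/2$ in probability). With I2, I5, and the decreasing thresholds $\nu_i = (\eps/2^{k+4})^{2^k+2-i}$, the paper proves deterministically that in Case~2 (all unsampled clusters small) the $\dM/2$ columns closest to $\tR$ contain \emph{no} column of any unsampled cluster: some surviving column of a sampled cluster has distance at most $2^{k+4}\nnzs{A-UW}/d^{(\tau)}$ to its representative (Claim~\ref{clm:ExistYP}), while every column of an unsampled cluster has strictly larger distance to every representative (Claim~\ref{clm:everyonehaslargedistance}). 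The latter claim uses \emph{both} properties of $(U,V,\eps)$-clusterable --- property~(ii) (best-response within clusters) to get $\nnzs{Uy-Uz} \le 2\nnzs{A_{:,j}-Uy}$, and property~(i) (center separation) to lower-bound $\nnzs{Uy-Uz}$ --- together with I2 and the threshold structure (Claim~\ref{clm:arbXcxUx}) to compare $|\CWy|$ against $\nu^{(\tau)} d^{(\tau)}$. Your proposal invokes only property~(i) and treats the separator quality as a ``constant probability'' event \emph{per pruning step}; in fact pruning is deterministic once I1--I5 hold, and all the probability (the $(\nu_0/t)^{t+1}$ per cluster) is paid in the sampling step.
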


	The rest of this section is devoted to proving Theorem~\ref{thm:RecInv}.
	Similarly as in the algorithm, we define parameters
	\[
	\gamma:=\eps / 2^{k+4} \quad\quad\text{and}\quad\quad \nu_i:=\gamma^{2^k + 2 - i}.
	\]
	Sort $\{0,1\}^k = \{y_1,\ldots,y_{2^k}\}$ such that $|\CW_{y_1}| \le \ldots \le |\CW_{y_{2^k}}|$.
	We construct the leaf guaranteed by the theorem inductively. In each depth $\tau=0,1,\ldots$ we consider one recursive call
	$$\text{\textbf{Sample}}_{A,k,\eps,t}(M^{(\tau)},\ \fN^{(\tau)},\ \tR^{(\tau)},\ \tC^{(\tau)}, \ \alpha^{(\tau)}).$$
	We consider the partitioning $P^{(\tau)}:=\{P^{(\tau)}_y\}_{y\in\{0,1\}^k}$ induced by the partitioning~$\CW$ on~$M^{(\tau)}$, i.e., $P^{(\tau)}_y$ is the set $\CWy$ restricted to the columns of $A$ that appear in the submatrix~$M^{(\tau)}$.
	We claim that we can find a root-to-leaf path such that the following inductive invariants hold with probability at least $(\nu_0 / t)^{(2^k-|\fN^{(\tau)}|)(t+1)}$:
	\begin{enumerate}
		\item[I1.] $P^{(\tau)}_y = \CWy$ for all $y \in \fN^{(\tau)}$, i.e., no column of an unsampled cluster has been removed,
		\item[I2.] $\fN^{(\tau)} = \{y_1,\ldots,y_{|\fN^{(\tau)}|}\}$, i.e., the remaining clusters are the $|\fN^{(\tau)}|$ smallest clusters,
		\item[I3.] For any $y \in \{0,1\}^k \setminus \fN^{(\tau)}$ the value $\alpha^{(\tau)}_y$ is an $\tfrac \eps 6$-approximate cluster size, i.e., if $|\CWy| < t$ we have $\alpha^{(\tau)}_y = |\CWy|$, and otherwise $|\CWy| \le \alpha^{(\tau)}_y \le (1+\tfrac \eps 6) |\CWy|$,
		\item[I4.] For any $y \in \{0,1\}^k \setminus \fN^{(\tau)}$ the multiset $\tCy^{(\tau)}$ is sampled according to distribution $\D{W,t}$, i.e., if $|\CWy| < t$ then $\tCy^{(\tau)} = \CWy$ and otherwise $\tCy^{(\tau)}$ consists of $t$ uniformly random samples from $\CWy$ with replacement, and
		\item[I5.] For any $y \in \{0,1\}^k \setminus \fN^{(\tau)}$ the vector $\tR^{(\tau)}_y$ satisfies $\nnz{ \tR^{(\tau)}_y - U y } \le 2 \nnz{A - U W} / |\CWy|$.
	\end{enumerate}

	For shorthand, we set $d^{(\tau)} := d_{M^{(\tau)}}$ and $\nu^{(\tau)} := \nu_{|\fN^{(\tau)}|}$.

	\paragraph{Base Case:}
	Note that the recursion may stop in Step 1 with $\fN^{(\tau)} = \emptyset$ or $d^{(\tau)} = 0$, or in Step 6 with $\alpha^{(\tau)}_y = 0$ for some guessed $y \in \fN$.
	Since we only want to show existence of a leaf of the recursion tree, in the latter case we can assume that we guessed $y = y_{|\fN^{(\tau)}|}$ and $\alpha^{(\tau)}_y = |\CWy|$, and thus we have $|\CWy| = 0$. Hence, in all three cases we have $|\CWy| = 0$ for all $y \in \fN^{(\tau)}$, by invariant I2 and sortedness of $y_1,\ldots,y_{2^k}$. Since we initialize $\tCy^{(0)} = \emptyset$ and $\alpha^{(0)}_y = 0$, we are done for all $y \in \fN^{(\tau)}$. By invariants I3 and I4, we are also done for all $y \in \{0,1\}^k \setminus \fN^{(\tau)}$.
	The total success probability is at least
	\[
	\left(\frac{\nu_0}{t}\right)^{2^k(t+1)}
	= \left(\frac{\eps}{2^{k+4} t}\right)^{2^k(2^k+2)(t+1)}
	= \left(\frac{\eps}{2t}\right)^{2^{O(k)} \cdot t}.
	\]

	The proof of the inductive step proceeds by case distinction.

	\paragraph*{Case 1 (Sampling):}
	Suppose $|P^{(\tau)}_y|\ge\nu^{(\tau)}\cdot d^{(\tau)}$ for some $y\in\fN^{(\tau)}$.
	Since $P^{(\tau)}_y = \CWy$ (by invariant I1) and sortedness, we have $|\CWy| \ge \nu^{(\tau)} d^{(\tau)}$ for $y := y_{|\fN^{(\tau)}|}$.
	We may assume that we guess $y = y_{|\fN^{(\tau)}|}$ in Step 2, since we only want to prove existence of a leaf of the recursion tree.
	Note that there is a number
	$$\nu^{(\tau)} d^{(\tau)} \le \alpha_y \le d^{(\tau)}\quad\text{with}\quad
	|\CWy| \le \alpha_y \le (1+\tfrac \eps 6)|\CWy|$$
	(in particular $\alpha_y = |\CWy|$ would work), so we can guess such a number in Step 5. Together with Steps 3 and 4, we can assume that $\alpha^{(\tau+1)}$ satisfies invariant I3.

	In Step 8 we sample a multiset $\tCy$ of $\min\{t,\alpha_y\}$ columns from $M$.
	If $|\CWy| \ge t$, we condition on the event that all these columns lie in $\CWy$. Then $\tCy$ forms a uniform sample from $\CWy$ of size $t$. Since $|\CWy| \ge \nu^{(\tau)} d^{(\tau)}$, this event has probability at least $(\nu^{(\tau)})^t$.
	Otherwise, if $|\CWy| = \alpha_y < t$, we condition on the event that all $\alpha_y$ samples lie in $\CWy$ and are distinct. Then $\tCy = \CWy$. The probability of this event is at least $$(1/d^{(\tau)})^{\alpha_y} \ge (\nu^{(\tau)} / \alpha_y)^{\alpha_y} \ge (\nu^{(\tau)} / t)^t.$$
	In total, $\tC^{(\tau+1)}$ satisfies invariant I4 with probability at least $(\nu^{(\tau)} / t)^t.$

	In Step 9 we sample one column $\tR_y$ uniformly at random from $M$. With probability at least $\nu^{(\tau)}$, $\tR_y$ belongs to $\CWy$, and conditioned on this event $\mathcal{E}_y$ we have
	\[
	\Ex_{\tR_y}\left[\nnz{ \tR_y - Uy }\ \Big|\ \mathcal{E}_y\right]
	=\frac{1}{|\CWy|}\sum_{j\in \CWy}\nnz{ A_{:,j}  - Uy }
	\leq\frac{\nnz{A - UW}}{|\CWy|}.
	\]
	By Markov's inequality, with probability at least $\nu^{(\tau)}/2$
	we have  $\nnz{ \tR_y - Uy }\leq2 \nnz{A - UW} / |\CWy|$, and thus invariant I5 holds for $\tR^{(\tau+1)}$.

	Finally, since we did not change $M^{(\tau)}$, invariant I1 is maintained.
	We conditioned on events that hold with combined probability at least
	$$(\nu^{(\tau)} / t)^t \cdot \nu^{(\tau)}/2 \ge (\nu_0 / t)^{t+1}.$$
	Since we decrement $|\fN^{(\tau)}|$ by removing $y = y_{|\fN^{(\tau)}|}$ from $\fN^{(\tau)}$,
	we maintain invariant I2, and we obtain total probability at least
	$$(\nu_0 / t)^{(2^k-|\fN^{(\tau+1)}|)(t+1)}.$$

	\paragraph*{Case 2 (Pruning):} Suppose $|P^{(\tau)}_y| < \nu^{(\tau)}\cdot d^{(\tau)}$ for every $y\in\fN^{(\tau)}$. (Note that cases 1 and~2 are complete.)
	In this case, we remove the $d^{(\tau)}/2$ columns of $M^{(\tau)}$ that are closest to~$\tR^{(\tau)}$, resulting in a matrix $M^{(\tau+1)}$, and then start a recursive call on $M^{(\tau+1)}$. Since we do not change $\fN^{(\tau)}, \tR^{(\tau)}, \tC^{(\tau)}$, and $\alpha^{(\tau)}$, invariants I2-I5 are maintained.

	Invariant I1 is much more difficult to verify, as we need to check that the $d^{(\tau)}/2$ deleted columns do not contain any column from an unsampled cluster.
	We first show that \emph{some} column of a cluster we already sampled from \emph{survives} to depth
	$\tau+1$ and has \emph{small} distance to $\tR^{(\tau)}$ (see Claim~\ref{clm:ExistYP}). Then we show that \emph{every} column of a cluster that we did not yet sample from has \emph{large} distance to $\tR^{(\tau)}$ (see Claim~\ref{clm:everyonehaslargedistance}). Since we delete the $d^{(\tau)}/2$ closest columns to~$\tR^{(\tau)}$, it follows that every column of a cluster that we did not yet sample from \emph{survives}.

	\begin{claim}\label{clm:ExistYP}
		There exists $x \in \{0,1\}^k \setminus \fN^{(\tau)}$ and column $j \in P_x^{(\tau+1)}$ with $$\nnzs{A_{:,j} - \tR^{(\tau)}_x} \le 2^{k+4} \nnzs{A - UW} / d^{(\tau)}.$$
	\end{claim}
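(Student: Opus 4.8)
The plan is to prove Claim~\ref{clm:ExistYP} by an averaging argument over the columns of $M^{(\tau)}$ that belong to clusters we have already sampled from. Since we are in the pruning case, $|P^{(\tau)}_y| < \nu^{(\tau)} d^{(\tau)}$ for every $y \in \fN^{(\tau)}$; and because the exponent $2^k+2-|\fN^{(\tau)}|\ge 2$ and $\gamma=\eps/2^{k+4}<1$, we have $\nu^{(\tau)}\le\gamma$ and hence $2^k\nu^{(\tau)}\le 2^k\gamma=\eps/16$. Therefore the columns of $M^{(\tau)}$ lying in sampled clusters (i.e.\ in $P^{(\tau)}_x$ for some $x\in\{0,1\}^k\setminus\fN^{(\tau)}$) number more than $(1-\eps/16)\,d^{(\tau)}$.

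Next I would bound the total Hamming distance of these columns to the current representatives. For a sampled cluster $x$ and a column $j\in P^{(\tau)}_x\subseteq C^W_x$, the triangle inequality for the Hamming distance together with invariant I5 (which gives $\nnz{\tR^{(\tau)}_x - Ux}\le 2\nnz{A-U\cdot W}/|C^W_x|$) yields
\[
\nnz{A_{:,j}-\tR^{(\tau)}_x}\le \nnz{A_{:,j}-Ux}+\frac{2\nnz{A-U\cdot W}}{|C^W_x|}.
\]
Summing over $j\in P^{(\tau)}_x$, using $|P^{(\tau)}_x|\le|C^W_x|$, then over all $x\notin\fN^{(\tau)}$, and finally observing $\sum_{x\notin\fN^{(\tau)}}\sum_{j\in C^W_x}\nnz{A_{:,j}-Ux}\le\nnz{A-U\cdot W}$ (since $\nnz{A-U\cdot W}=\sum_{y}\sum_{j\in C^W_y}\nnz{A_{:,j}-Uy}$), we obtain the overall bound $(1+2^{k+1})\nnz{A-U\cdot W}\le 2^{k+2}\nnz{A-U\cdot W}$.

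By Markov's inequality, fewer than $d^{(\tau)}/4$ of the sampled-cluster columns $j$ can satisfy $\nnz{A_{:,j}-\tR^{(\tau)}_x}>2^{k+4}\nnz{A-U\cdot W}/d^{(\tau)}$, where $x$ denotes the cluster containing $j$. Combined with the count from the first paragraph, strictly more than $(1-\eps/16)d^{(\tau)}-d^{(\tau)}/4>d^{(\tau)}/2$ sampled-cluster columns lie within Hamming distance $2^{k+4}\nnz{A-U\cdot W}/d^{(\tau)}$ of their representative. Since the pruning step deletes exactly $d^{(\tau)}/2$ columns of $M^{(\tau)}$, at least one such column $j$ must survive into $M^{(\tau+1)}$; it then lies in $P^{(\tau+1)}_x$ for its cluster $x\in\{0,1\}^k\setminus\fN^{(\tau)}$, which is precisely the claim.

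The main obstacle I anticipate is purely the constant-bookkeeping: I must check that the pruning size $d^{(\tau)}/2$, the threshold $2^{k+4}$, and the scale $\nu^{(\tau)}\le\gamma=\eps/2^{k+4}$ are calibrated so that \emph{strictly} more than half the columns of $M^{(\tau)}$ are simultaneously in a sampled cluster and close to their representative, as otherwise no survivor is guaranteed. A minor care point is that for sampled clusters we only know $P^{(\tau)}_x\subseteq C^W_x$ (earlier pruning steps may have discarded some of its columns), but every inequality used above is monotone in the favorable direction, so this causes no difficulty.
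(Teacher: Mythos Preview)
Your argument is correct. It differs from the paper's proof in its organization, though both rest on the same ingredients (the Case~2 bound on unsampled clusters, invariant~I5 plus the triangle inequality, and an averaging step).

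The paper first passes to $M^{(\tau+1)}$: from $\sum_{y\in\fN^{(\tau)}}|P_y^{(\tau)}|<d^{(\tau)}/4$ and $\sum_y |P_y^{(\tau+1)}|=d^{(\tau)}/2$ it deduces that some \emph{single} sampled cluster $x$ has $|P_x^{(\tau+1)}|\ge 2^{-k-2}d^{(\tau)}$, and then applies the minimum--arithmetic-mean inequality inside that one cluster (together with I5) to find the desired column $j$. You instead stay in $M^{(\tau)}$, sum the distances $\nnz{A_{:,j}-\tR_x^{(\tau)}}$ over \emph{all} sampled-cluster columns at once to get the bound $(1+2^{k+1})\nnz{A-UW}$, apply Markov to show strictly more than $d^{(\tau)}/2$ such columns are within the threshold, and finally use pigeonhole against the $d^{(\tau)}/2$ deletions to guarantee a survivor. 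Your route is arguably more direct and in fact yields the stronger intermediate statement that a strict majority of the columns of $M^{(\tau)}$ are already close to their representatives; the paper's route avoids the global Markov step but needs the extra pigeonhole to single out one large surviving cluster. Both are short and the constants match.
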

	\begin{proof}
		By Case 2, we have $|P^{(\tau)}_y| < \nu^{(\tau)}\cdot d^{(\tau)}$ for every $y\in\fN^{(\tau)}$,
		and since $\nu^{(\tau)} \le \nu_{2^k} \le 2^{-k-2}$ it follows that
		\[
		\sum_{y \in \fN^{(\tau)}} |P^{(\tau)}_y| < 2^k \nu^{(\tau)} d^{(\tau)} \le d^{(\tau)} / 4.
		\]
		Combining $|P_y^{(\tau)}| \ge |P_y^{(\tau+1)}|$ and
		$\sum_{y \in \{0,1\}^k} |P_y^{(\tau+1)}| = d^{(\tau)}/2$, yields
		$$\sum_{y \in \{0,1\}^k \setminus \fN^{(\tau)}} |P_y^{(\tau+1)}| \ge d^{(\tau)} / 4.$$
		Hence, there is $x \in \{0,1\}^k \setminus \fN^{(\tau)}$ such that
		\begin{equation} \label{eq:myveryfirstequation}
			|P_x^{(\tau+1)}| \ge 2^{-k-2} d^{(\tau)}.
		\end{equation}

		By the minimum-arithmetic-mean inequality, some $j \in P_x^{(\tau+1)}$ satisfies
		\begin{align*}
			\nnzs{A_{:,j} - \tR^{(\tau)}_x}
			&\le \frac 1{|P_x^{(\tau+1)}|} \cdot \sum_{j' \in P_x^{(\tau+1)}} \nnzs{A_{:,j'} - \tR^{(\tau)}_x}  \\
			&\le \nnzs{\tR^{(\tau)}_x - Ux} + \frac 1{|P_x^{(\tau+1)}|} \cdot \sum_{j' \in P_x^{(\tau+1)}} \nnzs{A_{:,j'} - U x},
		\end{align*}
		where the last step uses the triangle inequality.
		For the first summand we use invariant I5, and for the second we use that $P_x^{(\tau+1)}$ is by definition part of an induced partitioning of $\CW$ on a smaller matrix, and thus the summation is bounded by $\nnzs{A - UW}$. This yields
		\[ \nnzs{A_{:,j} - \tR^{(\tau)}_x} \le \bigg( \frac{2}{|\CWx|} + \frac 1{|P_x^{(\tau+1)}|} \bigg)\cdot \nnzs{A - UW}. \]
		By $P_x^{(\tau+1)} \subseteq \CWx$ and by (\ref{eq:myveryfirstequation}), we obtain the claimed bound.
	\end{proof}

	\begin{claim}\label{clm:arbXcxUx}
		For any $y \in \{0,1\}^k \setminus \fN^{(\tau)}$ we have $|\CWy| \ge \nu^{(\tau)} d^{(\tau)} / \gamma$.
	\end{claim}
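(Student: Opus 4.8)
The plan is to trace, along the fixed root-to-leaf path being constructed in the proof of Theorem~\ref{thm:RecInv}, the depth at which the cluster $y\in\{0,1\}^k\setminus\fN^{(\tau)}$ was removed from the ``unsampled'' set. Since $y\notin\fN^{(\tau)}$, there is a unique depth $\tau'<\tau$ at which a Sampling step (Case~1) removed $y$ from $\fN$. By the construction in Case~1 we guessed $y=y_{|\fN^{(\tau')}|}$, the largest remaining cluster, under the hypothesis $|P^{(\tau')}_y|\ge\nu^{(\tau')}d^{(\tau')}$; invariant I1 at depth $\tau'$ gives $P^{(\tau')}_y=\CWy$, so $|\CWy|\ge\nu^{(\tau')}d^{(\tau')}$. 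That inequality is the only fact about the path we will need, apart from the evolution rules below.

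Next I would record how $\nu^{(\cdot)}$ and $d^{(\cdot)}$ change along the path. A Sampling step leaves $M$ (hence $d^{(\cdot)}$) unchanged but decrements $|\fN|$ by one, which multiplies $\nu_{|\fN|}=\gamma^{2^k+2-|\fN|}$ by $\gamma<1$; a Pruning step leaves $\fN$ (hence $\nu^{(\cdot)}$) unchanged but halves $d^{(\cdot)}$. Consequently both $\nu^{(\cdot)}$ and $d^{(\cdot)}$ are non-increasing along the path, and the single step from depth $\tau'$ to $\tau'+1$ (the Sampling step that removes $y$) multiplies the product $\nu^{(\cdot)}d^{(\cdot)}$ exactly by $\gamma$. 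Chaining these facts, $\nu^{(\tau)}d^{(\tau)}\le\nu^{(\tau'+1)}d^{(\tau'+1)}=\gamma\,\nu^{(\tau')}d^{(\tau')}\le\gamma\,|\CWy|$, which rearranges to $|\CWy|\ge\nu^{(\tau)}d^{(\tau)}/\gamma$, as claimed.

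I expect the only real difficulty to be bookkeeping rather than any genuine obstacle: being precise about which invariant justifies $P^{(\tau')}_y=\CWy$ (invariant I1), and which one — together with the sortedness $|\CW_{y_1}|\le\dots\le|\CW_{y_{2^k}}|$ — ensures that the cluster sampled in Case~1 is exactly the one satisfying $|\CWy|\ge\nu^{(\tau')}d^{(\tau')}$ (invariant I2); and in verifying that $d^{(\cdot)}$ truly never increases, since it only changes in Pruning steps, where it halves, while Sampling steps leave $M$ untouched. Once these are pinned down, the inequality is immediate.
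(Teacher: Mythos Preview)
Your proposal is correct and follows essentially the same approach as the paper: both locate the depth $\tau'<\tau$ at which $y$ was removed from $\fN$ via a Case~1 (Sampling) step, use invariant I1 to identify $|\CWy|=|P^{(\tau')}_y|\ge\nu^{(\tau')}d^{(\tau')}$, and then use the monotonicity of $d^{(\cdot)}$ together with $|\fN^{(\tau')}|\ge|\fN^{(\tau)}|+1$ (which gives $\nu^{(\tau')}\ge\nu^{(\tau)}/\gamma$) to conclude. Your framing in terms of tracking the product $\nu^{(\cdot)}d^{(\cdot)}$ along the path is a slightly more explicit bookkeeping of the same computation the paper does in one displayed chain of inequalities.
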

	\begin{proof}
		Since $y \not\in \fN$, we sampled from this cluster in some depth $\tau' < \tau$. In the call corresponding to $\tau'$, we had $\fN^{(\tau')} \supseteq \fN^{(\tau)} \cup \{y\}$ and thus $|\fN^{(\tau')}| \ge |\fN^{(\tau)}| +1$, and we had $d^{(\tau')} \ge d^{(\tau)}$. Since we sampled from $\CWy$ in depth $\tau'$, Case 1 was applicable, it follows that
		\begin{eqnarray*}
			\,\,\,\qquad\qquad\qquad|\CWy| & \stackrel{\text{(I1)}}{=} & |P_y^{(\tau')}| \ge \nu^{(\tau')} \cdot d^{(\tau')} = \nu_{|\fN^{(\tau')}|} \cdot d^{(\tau')} \\
			& \ge & \nu_{|\fN^{(\tau)}|+1} \cdot d^{(\tau)} = \nu_{|\fN^{(\tau)}|} \cdot d^{(\tau)} / \gamma = \nu^{(\tau)} \cdot d^{(\tau)} / \gamma.\qquad\qquad\qquad
		\end{eqnarray*}
	\end{proof}

	\begin{claim}\label{clm:everyonehaslargedistance}
		For any $y \in \{0,1\}^k \setminus \fN^{(\tau)}$, $z \in \fN^{(\tau)}$, and $j \in P^{(\tau)}_z$ we have $$\nnzs{A_{:,j} - \tR^{(\tau)}_y} > 2^{k+4} \nnzs{A - UW} / d^{(\tau)}.$$
	\end{claim}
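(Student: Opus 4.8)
The plan is to lower-bound $\nnzs{A_{:,j}-\tR^{(\tau)}_y}$ by reducing to two facts: the centers $Uy$ and $Uz$ are well separated (clusterability of $W$, Lemma~\ref{lem:CR}(i)), and the representative $\tR^{(\tau)}_y$ is close to $Uy$ (invariant I5). Throughout I assume $d^{(\tau)}>0$, since otherwise we are in the base case, and I dispatch the degenerate case $\opt=0$ at the end. First I would place $y,z$ in $\colsupp(W)$ so that Lemma~\ref{lem:CR}(i)--(ii) apply to the pair $(y,z)$: since $z\in\fN^{(\tau)}$, invariant I1 gives $j\in P^{(\tau)}_z=\CWz$, hence $\CWz\neq\emptyset$ and $z\in\colsupp(W)$; Claim~\ref{clm:arbXcxUx} gives $|\CWy|\ge\nu^{(\tau)}d^{(\tau)}/\gamma>0$, hence $y\in\colsupp(W)$; and $y\neq z$ since one is in $\fN^{(\tau)}$ and the other is not.

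Next I would estimate the two relevant distances. Applying Lemma~\ref{lem:CR}(ii) to $j\in\CWz$ gives $\nnzs{A_{:,j}-Uz}\le\nnzs{A_{:,j}-Uy}$, so the triangle inequality of the Hamming metric yields $\nnzs{Uy-Uz}\le 2\nnzs{A_{:,j}-Uy}$. The Case~2 hypothesis together with invariant I1 gives $|\CWz|=|P^{(\tau)}_z|<\nu^{(\tau)}d^{(\tau)}$, so $\min\{|\CWy|,|\CWz|\}<\nu^{(\tau)}d^{(\tau)}$, and Lemma~\ref{lem:CR}(i) gives
\[
\nnzs{A_{:,j}-Uy}\ \ge\ \tfrac12\nnzs{Uy-Uz}\ >\ \frac{\eps}{2^{k+1}}\cdot\frac{\nnzs{A-UV}}{\nu^{(\tau)}d^{(\tau)}}.
\]
For the representative, invariant I5 together with $|\CWy|\ge\nu^{(\tau)}d^{(\tau)}/\gamma$, the bound $\nnzs{A-UW}\le(1+\eps)\nnzs{A-UV}\le 2\nnzs{A-UV}$, and $\gamma=\eps/2^{k+4}$ give
\[
\nnzs{Uy-\tR^{(\tau)}_y}\ \le\ \frac{2\nnzs{A-UW}}{|\CWy|}\ \le\ \frac{4\gamma\,\nnzs{A-UV}}{\nu^{(\tau)}d^{(\tau)}}\ =\ \frac{\eps\,\nnzs{A-UV}}{2^{k+2}\,\nu^{(\tau)}d^{(\tau)}},
\]
and subtracting via the triangle inequality, $\nnzs{A_{:,j}-\tR^{(\tau)}_y}\ge\nnzs{A_{:,j}-Uy}-\nnzs{Uy-\tR^{(\tau)}_y}>\frac{\eps\,\nnzs{A-UV}}{2^{k+2}\,\nu^{(\tau)}d^{(\tau)}}$.

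Finally I would close the constants: since $z\in\fN^{(\tau)}$ forces $|\fN^{(\tau)}|\ge1$, we get $\nu^{(\tau)}=\gamma^{2^k+2-|\fN^{(\tau)}|}\le\gamma^2=\eps^2/2^{2k+8}$, and combined with $\nnzs{A-UV}\ge\nnzs{A-UW}/2$ and $\eps<1$ this makes the last right-hand side at least $\tfrac{2^{k+5}}{\eps}\cdot\nnzs{A-UW}/d^{(\tau)}>2^{k+4}\nnzs{A-UW}/d^{(\tau)}$, which is the claim. For $\opt=0$ one has $\nnzs{A-UW}=0$, so I5 forces $\tR^{(\tau)}_y=Uy$ while Lemma~\ref{lem:CR}(i) still gives $\nnzs{A_{:,j}-\tR^{(\tau)}_y}=\nnzs{Uz-Uy}>0$. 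I do not expect a real obstacle; the only delicate point is this last step — verifying that $\nu^{(\tau)}$ is small enough that the representative error is dominated by half the center separation and that the leftover still beats the target $2^{k+4}\nnzs{A-UW}/d^{(\tau)}$.
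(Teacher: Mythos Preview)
Your proof is correct and follows essentially the same route as the paper: bound $\nnzs{A_{:,j}-\tR^{(\tau)}_y}$ from below by $\tfrac12\nnzs{Uy-Uz}-\nnzs{\tR^{(\tau)}_y-Uy}$, control the first term via Lemma~\ref{lem:CR}(i) together with $|\CWz|<\nu^{(\tau)}d^{(\tau)}$, the second via invariant I5 together with Claim~\ref{clm:arbXcxUx}, and close using $\nu^{(\tau)}\le\gamma^2$. One small slip: your justification ``$z\in\fN^{(\tau)}$ forces $|\fN^{(\tau)}|\ge1$'' points the wrong way --- what you actually need (and what the paper uses) is $|\fN^{(\tau)}|\le 2^k$, so that the exponent $2^k+2-|\fN^{(\tau)}|\ge 2$ and hence $\nu^{(\tau)}\le\gamma^2$; the inequality itself is fine.
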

	\begin{proof}
		By triangle inequality, we have
		$$\nnzs{Uy - Uz} \le \nnzs{A_{:,j} - Uy} + \nnzs{A_{:,j} - Uz}.$$
		Since $j \in P^{(\tau)}_z = \CWz$ and by property \emph{(ii)} of $(U,V,\eps)$-clustered (see Lemma~\ref{lem:CR}), the first summand is at least as large as the second, and we obtain
		\[
		\nnzs{Uy - Uz} \le 2 \nnzs{A_{:,j} - Uy}.
		\]
		We use this and the triangle inequality to obtain
		\begin{align*}
			\nnzs{A_{:,j} - \tR^{(\tau)}_y} & \ge \nnzs{A_{:,j} - Uy} - \nnzs{\tR^{(\tau)}_y - Uy} \\
			& \ge \frac 12\nnzs{Uy - Uz} - \nnzs{\tR^{(\tau)}_y - Uy}.
		\end{align*}
		For the first summand we use property \emph{(i)} of $(U,V,\eps)$-clustered (see Lemma~\ref{lem:CR}), for the second we use invariant I5. This yields
		\[ \nnzs{A_{:,j} - \tR^{(\tau)}_y} > \frac \eps {2^{k+1}} \cdot \frac{\| A - U V \|_0}{|\CWz|} - \frac{2 \nnz{A - U W}}{|\CWy|}. \]
		Since $y \in \{0,1\}^k \setminus \fN^{(\tau)}$, Claim~\ref{clm:arbXcxUx} yields $|\CWy| \ge \nu^{(\tau)} d^{(\tau)} / \gamma$. Since $z \in \fN^{(\tau)}$, by invariant I1,
		and since we are in Case~2, we have
		$$|\CWz| = |P^{(\tau)}_z| < \nu^{(\tau)}\cdot d^{(\tau)}.$$
		Moreover, by the properties of $(U,V,\eps)$-clustered (see Lemma~\ref{lem:CR}),
		it follows that $\nnzs{A - UW} \le (1+\eps) \nnzs{A - UV}$ and thus
		$\nnzs{A - UV} \ge \tfrac 12 \nnzs{A - UW}$. Together, this yields
		\begin{align*}
			\nnzs{A_{:,j} - \tR^{(\tau)}_y} & >
			\bigg(\frac \eps {2^{k+2}} - 2\gamma \bigg) \cdot \frac{\| A - U W \|_0}{\nu^{(\tau)} d^{(\tau)}} \\
			& = \frac \eps {2^{k+3} \nu^{(\tau)}}\cdot \frac{\| A - U W \|_0}{d^{(\tau)}}  \\
			& \ge 2^{k+4} \cdot \frac{\| A - U W \|_0}{d^{(\tau)}},
		\end{align*}
		since $\gamma = \eps /2^{k+4}$ and
		$$\nu^{(\tau)} \le \nu_{2^k} = \gamma^2 \le \eps /2^{2k+7}.$$
	\end{proof}

	Together, Claims~\ref{clm:ExistYP} and \ref{clm:everyonehaslargedistance} prove that no column $j \in P^{(\tau)}_y$ with $y \in \fN^{(\tau)}$ is removed. Indeed, we remove the columns with smallest distance to $\tR^{(\tau)}$, some of the columns in distance $2^{k+4} \nnzs{A - UW} / d^{(\tau)}$ survives, and any column $j \in P^{(\tau)}_y$ with $y \in \fN$ has larger distance to $\tR^{(\tau)}$. It follows that invariant I1 is maintained, completing our proof of correctness.

	\subsubsection{Running Time Analysis of Algorithm Sample} \label{subsec:runnigntimesamplingalgo}

	We now analyze the running time of Algorithm~\ref{alg:EstBestResp}.

	\begin{lem}\label{lem:BR}
		Algorithm \emph{\textbf{EstimateBestResponse}} runs in time $2^{O(k)} nd$.
	\end{lem}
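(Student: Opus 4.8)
The plan is to prove Lemma~\ref{lem:BR} by bounding the two non-trivial steps of Algorithm~\ref{alg:EstBestResp} separately, using in each case that the relevant minimization decouples into many independent sub-problems over the $2^k$-element domain $\{0,1\}^k$. Throughout I will use the standing assumption that $\langle\cdot,\cdot\rangle$ can be evaluated in time $2^{O(k)}$, together with the fact that in every invocation the sampled family has total multiplicity $\sum_{y\in\{0,1\}^k}|\tCy|\le d$: each $\tCy$ is a multiset of at most $\min\{t,|\CWy|\}$ indices from the cluster $\CWy$, and since the $\CWy$ partition $[d]$ we get $\sum_y|\tCy|\le\sum_y\min\{t,|\CWy|\}\le\sum_y|\CWy|=d$ (this is exactly the structure produced in the analysis of Algorithm~\ref{alg:Sampling}).

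\textbf{Step 1 (computing $\tU$).} First precompute the table $\big(\langle x,y\rangle\big)_{x,y\in\{0,1\}^k}$ in time $4^k\cdot 2^{O(k)}=2^{O(k)}$. Next, for every row $i\in[n]$, compute the counts $\tZ_{i,y,0},\tZ_{i,y,1}$ for all $y\in\{0,1\}^k$ by a single scan over the sampled columns: for each $y$ and each index $j$ occurring in $\tCy$, read $A_{i,j}$ and increment the appropriate counter. By the size bound above this scan touches $O(d)$ entries per row, hence $O(nd)$ in total — crucially with no spurious factor of $t$. Finally, for each row $i$ select $\tU_{i,:}\in\argmin_{x\in\{0,1\}^k}\tE_{i,x}$ by evaluating $\tE_{i,x}=\sum_{y}\tfrac{\ah{y}}{|\tCy|}\,\tZ_{i,y,\neq\langle x,y\rangle}$ for each of the $2^k$ candidates $x$; using the precomputed inner-product table each evaluation costs $O(2^k)$, so this is $O(4^k)$ per row and $2^{O(k)}n$ overall. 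Hence Step~1 runs in $2^{O(k)}nd$ time.

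\textbf{Step 2 (computing $\tV$).} This is the best-response computation for a fixed $\tU$, which, exactly as argued in Subsection~\ref{subsec:SimpleAlgo}, splits columnwise: $\nnz{A-\tU\cdot\tV}=\sum_{j=1}^{d}\nnz{A_{:,j}-\tU\cdot\tV_{:,j}}$, and each column $\tV_{:,j}\in\{0,1\}^k$ can be optimized independently. Precompute the $n$-vectors $\tU z$ for all $z\in\{0,1\}^k$ — entry $(\tU z)_i=\langle\tU_{i,:},z\rangle$ costs $2^{O(k)}$, so all $2^k$ of these vectors cost $2^{O(k)}n$. Then, for each column $j\in[d]$ and each $z\in\{0,1\}^k$, compute the Hamming distance $\nnz{A_{:,j}-\tU z}$ in $O(n)$ time and keep the minimizer; this is $2^{O(k)}n$ per column and $2^{O(k)}nd$ in total. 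Step~3 is constant time, so the whole algorithm runs in $2^{O(k)}nd$ time, as claimed.

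The one place requiring care — and the main obstacle — is the count-extraction in Step~1: done naïvely, scanning the multisets $\tCy$ once per row would cost $\sum_y|\tCy|$ per row, and bounding this only by the trivial $2^k t=2^{O(k)}/\eps^2$ would leave a parasitic $\poly(1/\eps)$ factor and break the claimed $2^{O(k)}nd$ bound. The remedy is precisely the observation above that the sampled multisets sit inside the optimal partition $\CW$, so that their total cardinality is $O(d)$; for this to be exploitable one should store each $\tCy$ explicitly as a list of column indices (with multiplicities), so that the scan is literally linear in $\sum_y|\tCy|$. Everything else is routine bookkeeping over the $2^k$-sized domain.
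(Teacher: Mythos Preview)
Your proof follows the same strategy as the paper's, but there is one genuine gap. You assert as a ``fact'' that \emph{in every invocation} of \textbf{EstimateBestResponse} the total multiplicity satisfies $\sum_y |\tCy| \le d$, justified by $|\tCy| \le \min\{t, |\CWy|\}$ with $\tCy \subseteq \CWy$. But this holds only when the random sampling in Algorithm~\ref{alg:Sampling} actually lands in the intended clusters---i.e., in the ``correct'' leaf of the recursion tree whose existence is argued in Theorem~\ref{thm:RecInv}. In all other leaves (wrong guesses of $y$ or $\alpha_y$, or unlucky samples in Step~8), the multisets $\tCy$ are arbitrary size-$\min\{t,\alpha_y\}$ samples from the current submatrix $M$, need not lie in $\CWy$, and their total size is bounded only by $2^k t$---exactly the parasitic $\poly(1/\eps)$ factor you yourself warn against in your final paragraph. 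Since Lemma~\ref{lem:samplingRT} invokes the $2^{O(k)}nd$ bound for \emph{every} call to \textbf{EstimateBestResponse}, not just the correct one, this matters.

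The paper's fix is a single sentence: since the size bound \emph{does} hold whenever $\tC$ is drawn from $\D{W,t}$, simply have \textbf{EstimateBestResponse} first check whether $\sum_y |\tCy| \le d$ and return an arbitrary pair otherwise. This preserves the $2^{O(k)}nd$ bound on all calls and does not affect correctness, because the leaf guaranteed by Theorem~\ref{thm:RecInv} always passes the check. With this one-line amendment your argument is complete, and your more explicit bookkeeping (precomputing the inner-product table and the per-row counts $\tZ_{i,y,c}$) is a clean elaboration of the paper's terser accounting.
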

	\begin{proof}
		Note that if $\tC$ is drawn according to distribution $\D{W,t}$, then its total size
		$\sum_{y \in \{0,1\}^k} |\tCy|$ is at most $n$. Hence, we can ignore all calls violating
		this inequality.
		We can thus evaluate the estimated cost $\tE_{i,x}$ in time $2^{O(k)} d$.
		Optimizing over all $x \in \{0,1\}^k$ costs another factor $2^k$,
		and iterating over all rows $i$ adds a factor $n$.
		Thus, Step 1 runs in time $2^{O(k)}nd$.
		Further, Step 2 finds a best response matrix, which can be computed in the same running time.
	\end{proof}

	We proceed by analyzing the time complexity of Algorithm~\ref{alg:Sampling}.

	\begin{lem}\label{lem:samplingRT}
		For any $t = \poly(2^k/\eps)$,
		Algorithm \emph{\textbf{Sample}}$_{A,k,\eps,t}$ runs in time
		$(2/\eps)^{2^{O(k)}} \cdot nd^{1+o(1)}$,
		where $o(1)$ hides a factor $\left(\log\log d\right)^{1.1}/\log d$.
	\end{lem}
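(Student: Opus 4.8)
The plan is to bound the number of nodes in the recursion tree of $\textbf{Sample}_{A,k,\eps,t}$ and the work performed at each node, and multiply. For the branching structure: along any root-to-leaf path there are at most $2^k$ \emph{sampling} steps --- each execution of Step~10 deletes one element from $\fN$, and once $\fN=\emptyset$ the recursion stops in Step~1 --- and at most $\log_2 d$ \emph{pruning} steps, since each execution of Step~13 halves $\dM$ and the recursion stops once $\dM=0$; in particular every root-to-leaf path has length at most $2^k+\log_2 d$. At a sampling step the branching factor is $G:=2^k\cdot\bigl(t+O(\eps^{-1}\log(1/\nu))\bigr)$: there are at most $2^k$ choices of $y$ in Step~2, and for $\alpha_y$ there are at most $t$ choices when $|\CWy|<t$ (Step~4), while when $|\CWy|\ge t$ it suffices to enumerate the $O(\eps^{-1}\log(1/\nu))$ candidates of the form $\lceil(1+\eps/12)^j\rceil$ lying in $[\nu\dM,\dM]$, at least one of which is an $\tfrac\eps6$-approximate cluster size (using $|\CWy|\ge t\gg 12/\eps$; here the lower end $\nu\dM$ of the search range --- which is exactly what Case~1 of the correctness proof guarantees --- is what keeps this count independent of~$d$). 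Since $\log(1/\nu)=(2^k+2-|\fN|)\log(2^{k+4}/\eps)=2^{O(k)}\log(2^k/\eps)$ and $t=\poly(2^k/\eps)$ by Theorem~\ref{thm:maintech}, we get $G=\poly(2^k/\eps)$, uniformly over the recursion tree. A root-to-leaf path is determined by the interleaving of its (at most $2^k$) sampling steps among its (at most $\log_2 d$) pruning steps together with one of $G$ guesses at each sampling step, so the number of leaves is at most $\binom{2^k+\log_2 d}{2^k}\cdot\poly(2^k,\log d)\cdot G^{2^k}$; mapping every node to some leaf in its subtree (each leaf hit at most $2^k+\log_2 d$ times, once per ancestor) bounds the total number of nodes by $2^k+\log_2 d$ times this.

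Next I would bound the work per node using Lemma~\ref{lem:BR}, which states that \textbf{EstimateBestResponse} runs in time $2^{O(k)}nd$. A base-case node (Step~1) runs \textbf{EstimateBestResponse} once and evaluates $\nnz{A-\tU\tV}$ once, for $2^{O(k)}nd$ total. An internal node performs: at most $G$ local calls to \textbf{EstimateBestResponse} (one per guess with $\alpha_y=0$, Step~6) together with an $\nnz{\cdot}$ evaluation for each, costing $G\cdot 2^{O(k)}nd$; the pruning computation of Step~12, which for each of the $\dM\le d$ columns evaluates at most $2^k$ Hamming distances of length-$n$ vectors and then selects the $\dM/2$ smallest, costing $O(2^k nd)$; and, provided each recursive call returns its solution together with the precomputed value $\nnz{A-UV}$, the minimisations in Steps~11 and~14 amount to comparing at most $G+1$ numbers, costing $O(G)$. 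Thus every node does $G\cdot 2^{O(k)}nd=2^{O(k)}\poly(1/\eps)\cdot nd$ work.

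Multiplying the node count by the per-node work, the total running time is at most
\[
\binom{2^k+\log_2 d}{2^k}\cdot G^{2^k}\cdot 2^{O(k)}\poly(1/\eps)\cdot\poly(2^k,\log d)\cdot nd .
\]
Here $G^{2^k}=\bigl(\poly(2^k/\eps)\bigr)^{2^k}=2^{2^{O(k)}}\cdot\eps^{-O(2^k)}\le(2/\eps)^{2^{O(k)}}$, which also absorbs the $2^{O(k)}\poly(1/\eps)$ factor, and it remains to absorb $\binom{2^k+\log_2 d}{2^k}\le(2^k+\log_2 d)^{2^k}$ together with the leading $\poly(2^k,\log d)$: this is the origin of the $d^{o(1)}$ term, since once $d$ is large enough relative to~$k$ that $2^k\le\log_2 d$ and $2^k\le(\log\log d)^{0.1}$, it is at most $(\log d)^{(\log\log d)^{0.1}}=2^{(\log\log d)^{1.1}}=d^{(\log\log d)^{1.1}/\log d}=d^{o(1)}$, whereas for the remaining bounded range of~$d$ the entire running time is $O_{k,\eps}(1)$. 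This yields the claimed bound $(2/\eps)^{2^{O(k)}}\cdot nd^{1+o(1)}$.

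The main obstacle is the recursion-tree bound: a careless analysis would observe that a node has $G+1$ children and a path has length up to $2^k+\log_2 d$, giving $(G+1)^{2^k+\log_2 d}\approx d^{\Theta(G)}$ leaves, which is exponentially too large. The essential observation is that the $G$-way branching occurs only at the $\le 2^k$ sampling steps, while each of the $\le\log_2 d$ pruning steps contributes a single child; this is exactly what forces the exponent of $\log d$ down to $O(2^k)$ rather than $G$, and is why the dependence on $d$ collapses to $d^{o(1)}$.
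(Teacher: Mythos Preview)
Your argument is correct and essentially identical to the paper's: both recognize that along any root-to-leaf path there are at most $2^k$ sampling steps (each with $\poly(2^k/\eps)$-way branching) and at most $\log d$ pruning steps (no branching), leading to a bound of the form $\poly(2^k/\eps)^{2^k}\cdot(\log d)^{O(2^k)}\cdot nd$, and both absorb the $(\log d)^{O(2^k)}$ factor via the same case split on whether $2^k\lesssim(\log\log d)^{0.1}$. The only difference is cosmetic --- the paper sets up the recurrence $T(a,b)\le X+Y\cdot T(a-1,b)+T(a,b-1)$ and solves it by induction to get $T(a,b)\le X(2Y(b+2))^a$, whereas you count leaves directly as $\binom{2^k+\log d}{2^k}G^{2^k}$; these yield the same bound. (One minor slip: in your final clause, the running time in the bounded-$d$ regime is $O_{k,\eps}(nd)$, not $O_{k,\eps}(1)$ --- it is the factor in front of $nd$ that becomes $O_{k,\eps}(1)$.)
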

	\begin{proof}
		Consider any recursive call $\text{\textbf{Sample}}_{A,k,\eps,t}(M,\ \fN,\ \tR,\ \tC, \ \alpha)$. We express its running time as $T(a,b)$ where $a := |\fN|$ and $b := \log(\dM)$. For notational convenience, we let $\log(0) =: -1$ and assume that $\dM$ is a power of 2.

		If we make a call to algorithm \textbf{EstimateBestResponse} then this takes time $2^{O(k)} nd$ by the preceding lemma. Note that here we indeed have the size $d$ of the original matrix and not the size $\dM$ of the current submatrix, since we need to determine the cost with respect to the original matrix.

		In the sampling phase, in Step 2 we guess $y$, with $|\fN| \le 2^k$ possibilities. Moreover, in Steps 3,4,5 we guess either $\alpha_y \in \{0,1,\ldots,t-1\}$ or $\nuN \cdot \dM\leq\ah{y}\leq\dM$ such that $|\CWy|\leq\ah{y}\leq(1+ \frac \eps 6)|\CWy|$.
		Note that there are $O(\log (1/\nuN) / \log(1+ \eps/6)) = \poly(2^k/\eps)$ possibilities for the latter, and thus $\poly(2^k/\eps)$ possibilities in total. For each such guess we make one recursive call with a decremented $a$ and we evaluate the cost of the returned solution in time $2^{O(k)} nd$.

		In the pruning phase, we delete the $\dM/2$ columns that are closest to $\tR$, which can be performed in time $2^{O(k)}n\dM$ (using median-finding in linear time). We then make one recursive call with a decremented $b$.

		Together, we obtain the recursion
		\[ T(a,b) \le \poly(2^k/\eps) nd + \poly(2^k/\eps) \cdot T(a-1,b) + T(a,b-1), \]
		with base cases $T(0,b) = T(a,-1) = 2^{O(k)} nd$. The goal is to upper bound $T(2^k, \log d)$.

		Let $Y = \poly(2^k/\eps)$ and $X = Y \cdot nd$ such that
		\[ T(a,b) \le X + Y \cdot T(a-1,b) + T(a,b-1), \]
		and $T(0,b), T(a,-1) \le X$.
		We prove by induction that $T(a,b) \le X \cdot (2Y(b+2))^a$.
		This works in the base cases where $a=0$ or $b = -1$.
		Inductively, for $a > 0$ and $b \ge 0$ we bound~\footnote{
			Using similar arguments, for any $\alpha\in[0,1]$ the recurrence
			$T(a,b)\leq\left(1+\alpha\right)^{b}\cdot X+Y\cdot T(a-1,b)+T(a,b-1)$
			is upper bounded by $X\cdot\left(2Y\right)^{a}\cdot\left(b^{1-\alpha}+2\right)^{a}\cdot\left(1+\alpha\right)^{b}$.
			In particular, we obtain the following upper bound
			$T(a,b)\leq X\cdot\left(2Y\right)^{a}\cdot\min\left\{ \left(b+2\right)^{a},2^{a+b}\right\}$ and thus
			$T(2^k,\log d)\leq (2/\eps)^{2^{O(k)}} \cdot nd \cdot \min\big\{ (\log d)^{2^k}, \,d \big\}$.
		}
		\begin{align}
			T(a,b) &\le X + Y \cdot X \cdot (2Y(b+2))^{a-1} + X \cdot (2Y(b+1))^a  \nonumber\\
			&= X \cdot (2Y(b+2))^a \cdot \bigg( \frac 1 {(2Y(b+2))^a} + \frac 1 {2(b+2)}
			+ \bigg( \frac{b+1}{b+2} \bigg)^a \bigg)  \nonumber\\
			&\le X \cdot (2Y(b+2))^a \cdot
			\bigg( \frac 1 {2(b+2)} + \frac 1 {2(b+2)} + \frac{b+1}{b+2} \bigg)  \nonumber\\
			&= X \cdot (2Y(b+2))^a.\label{eq:recBound}
		\end{align}
		Let $C$ be a constant to be determined soon. Using~\eqref{eq:recBound},
		the total running time is bounded by
		\[
		T(2^k, \log d)
		\leq X \cdot (2Y(\log(d)+2))^{2^k}
		\leq (2/\eps)^{2^{(C+1)\cdot k}} \cdot nd \cdot \log^{2^k} d
		\leq (2/\eps)^{2^{(C+1)\cdot k}} \cdot nd^{1+o(1)},
		\]
		where the last inequality follows by noting that
		$\log^{2^{k}}d>(2/\eps)^{2^{(C+1)\cdot k}}$ iff
		$k<\log\left(\frac{\log\log d}{\log(2/\eps)}\right)^{1/C}$ and
		in this case
		\[
		\left(\log d\right)^{2^{k}}\leq
		\left(\log d\right)^{\left(\frac{\log\log d}{\log(2/\eps)}\right)^{1/C}}
		\leq n^{o(1)},
		\]
		where $o(1)$ hides a factor $\left(\log\log d\right)^{1+1/C}/\log d$.
		The statement follows for any $C\geq10$.
	\end{proof}

	\subsubsection{The Complete PTAS} \label{subsec:completePTASfinally}

	Finally, we use Algorithm \textbf{Sample} to obtain an efficient PTAS
	for the \genprobk problem. Given $A,k,\eps$, we call \textbf{Sample}$_{A,k,\eps/4,t}$ with
	$$t = t(k,\tfrac \eps 4) := 2^{4k+16}/\eps^2.$$
	(This means that we replace all occurrences of $\eps$ by $\eps/4$,
	in particular we also assume that $W$ is $(U,V,\tfrac \eps 4)$-clusterable.)
	By Theorem~\ref{thm:RecInv}, with probability at least
	$$(\tfrac \eps{2t})^{2^{O(k)} \cdot t} = (\eps/2)^{2^{O(k)} / \eps^2}$$
	at least one leaf of the recursion tree calls
	$\text{\textbf{EstimateBestResponse}}_{A,k}(\tC, \alpha)$ with proper $\tC$
	and $\alpha$ such that the Sampling Theorem~\ref{thm:maintech} is applicable.
	By choice of $t = t(k,\tfrac \eps 4)$, this yields
	\[
	\Ex\big[ \nnzs{A - \tU(\tC,\alpha) \cdot W}\big]
	\le (1+\tfrac \eps 4) \optW\leq(1+ \tfrac \eps 4)^2\optk,
	\]
	where we used that $W$ is $(U,V,\tfrac \eps 4)$-clusterable in the second step (see Lemma~\ref{lem:CR}). The algorithm \textbf{EstimateBestResponse} computes the matrix $\tU = \tU(\tC,\alpha)$ and a best response $\tV$ to $\tU$. This yields
	\[ \Ex\big[ \nnzs{A - \tU \cdot \tV} \big] \le \Ex\big[ \nnzs{A - \tU \cdot W}\big] \le (1+ \tfrac \eps 4)^2\optk. \]
	By Markov's inequality, with probability at least $1 - 1/(1 + \eps/4) \ge \tfrac \eps 5$ we have
	\[
	\nnzs{A - \tU \cdot \tV} \le (1+ \tfrac \eps 4) \cdot
	\Ex\big[ \nnzs{A - \tU \cdot \tV} \big] \le (1+ \tfrac \eps 4)^3\optk \le (1+ \eps)\optk.
	\]
	Hence, with probability at least $p = (\eps/2)^{2^{O(k)} / \eps^2}$ at least one solution $\tU,\tV$ generated by our algorithm is a $(1+\eps)$-approximation. Since we return the best of the generated solutions, we obtain a PTAS, but its success probability $p$ is very low.

	The success probability can be boosted to a constant by running
	$O(1/p) = (2/\eps)^{2^{O(k)} / \eps^2}$ independent trials of Algorithm~\textbf{Sample}.
	By Lemma~\ref{lem:samplingRT}, each call runs in time
	$(2/\eps)^{2^{O(k)}} \cdot nd^{1+o(1)}$,
	where $o(1)$ hides a factor $\left(\log\log d\right)^{1.1}/\log d$,
	yielding a total running time of
	$(2/\eps)^{2^{O(k)} / \eps^2} \cdot nd^{1+o(1)}$.
	This finishes the proof of Theorem~\ref{thm:main2}.
	The success probability can be further amplified to $1-\delta$ for any $\delta>0$,
	by running $O(\log(1/\delta))$ independent trials of the preceding algorithm.

	\newpage
	\section{Hardness} \label{sec:hardness}

	In this section, we prove hardness of approximately computing the best rank $k$-approximation of a given $n \times d$ matrix $A$, where $n \geq d$.
	Indeed all hardness results in this section hold when $k = d - 1$, indicating that reducing the rank by $1$ is indeed hard to even approximate.
	This complements our efficient approximation schemes when $k = O(1)$.

	Our results for $p \in (1, 2)$ assume the Small Set Expansion Hypothesis.
	Originally conjectured by Raghavendra and Stuerer~\cite{RS10}, it is still the only assumption that implies strong hardness results for various graph problems such as
	Uniform Sparsest Cut~\cite{RST12} and Bipartite Clique~\cite{Manurangsi18}.
	Assuming this hypothesis, we prove even stronger results than above that rules out {\em any} constant factor approximation in $\poly(n, k)$.
	The following theorem immediately implies Theorem~\ref{thm:hardp} in the introduction.

	\begin{thm}
		Fix $p \in (1, 2)$ and $r > 1$.
		Assuming the Small Set Expansion Hypothesis, there is no $r$-approximation algorithm for   rank $k$ approximation of a matrix $A \in \R^{n \times d}$ with $n \geq d$ and $k = d - 1$
		in the entrywise $\ell_p$ norm that runs in time $poly(n)$.

		Consequently, additionally assuming the Exponential Time Hypothesis, there exists $\delta := \delta(p, r) > 0$ such that
		there is no $r$-approximation algorithm for rank $k$ approximation of a matrix $A \in \R^{n \times d}$ with $n \geq d$ and $k = d - 1$
		in the entrywise $\ell_p$ norm that runs in time $2^{n^{\delta}}$.
		\label{thm:hardness_small_p}
	\end{thm}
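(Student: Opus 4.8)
The plan is to establish the theorem by a chain of approximation-preserving reductions that ends at the Small Set Expansion hardness of the $2\to q$ norm, arguing by contradiction. Suppose there were a $\poly(n)$-time $r$-approximation algorithm $\mathcal{M}$ for rank-$(d-1)$ entrywise $\ell_p$ low rank approximation of real matrices with $n\ge d$. The first step is Lemma~\ref{lem:lp_to_minpq}: for any matrix $B$ with at least as many rows as columns and $k=\dim(B)-1$ one has $\min_{U,V}\norm{p}{UV-B}=\minpq{p^*}{p}{B}$, so the matrices returned by $\mathcal{M}$ on input $B$ have cost between $\minpq{p^*}{p}{B}$ and $r\cdot\minpq{p^*}{p}{B}$, and this cost is computable in polynomial time. (If $B$ has fewer rows than columns, pad it with zero rows: this changes neither $\minpq{p^*}{p}{B}$ nor $\min_{U,V}\norm{p}{UV-B}$ and is a polynomial blowup; and approximating $\norm{p}{\cdot}$ and $\norm{p}{\cdot}^p$ within a constant factor are interchangeable up to replacing $r$ by $r^{1/p}$, which is harmless since $r>1$ is arbitrary.) Thus $\mathcal{M}$ yields a polynomial-time $r$-approximation for $\minpq{p^*}{p}{\cdot}$ on all matrices.

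The second step composes this with Lemma~\ref{lem:minpq_to_maxpq}: for an arbitrary nonzero matrix $A$ and small constant $\eta>0$ we can, in polynomial time, produce $B$ with $(1-\eta)\norm{2}{p^*}{A}^{-2}\le\minpq{p^*}{p}{B}\le(1+\eta)\norm{2}{p^*}{A}^{-2}$. Running the $\minpq{p^*}{p}{\cdot}$-approximator on $B$, then inverting and taking a square root, gives a polynomial-time algorithm that approximates $\norm{2}{p^*}{A}$ to within the constant factor $\sqrt{r(1+\eta)/(1-\eta)}$. Since $p\in(1,2)$ forces $q:=p^*=p/(p-1)\in(2,\infty)$, and since $\sqrt{r(1+\eta)/(1-\eta)}>1$ for every $\eta>0$ because $r>1$, this contradicts Theorem~\ref{thm:hardness:SSE}, by which (under the Small Set Expansion Hypothesis) approximating $\norm{2}{q}{\cdot}$ within any constant factor greater than $1$ is NP-hard. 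This gives the first assertion.

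For the ETH statement I would rerun the chain while tracking instance sizes. Theorem~\ref{thm:hardness:SSE} comes from a polynomial-time, polynomial-size-blowup reduction out of an NP-hard problem, and Lemma~\ref{lem:minpq_to_maxpq} (polynomial time and size) and Lemma~\ref{lem:lp_to_minpq} (an exact identity plus at most zero-row padding) increase the size only polynomially. Hence there is a $\poly(m)$-time reduction from $3$-SAT on $m$ variables to an $r$-gap instance of rank-$(d-1)$ $\ell_p$ low rank approximation of size $n=m^{c}$ for some constant $c=c(p,r)$. A $2^{n^{\delta}}$-time $r$-approximation for the latter would solve $3$-SAT in time $2^{m^{c\delta}}$, which is $2^{o(m)}$ once $\delta<1/c$, contradicting the Exponential Time Hypothesis; taking $\delta:=\delta(p,r)$ smaller than $1/c$ yields the stated bound.

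All the substantive ingredients are already available from earlier in the paper, so within this proof the only place needing genuine care is the factor arithmetic: the Banach-space identity $\norm{p\to p^*}{AA^T}=\norm{2\to p^*}{A}^2$ underlying Lemma~\ref{lem:minpq_to_maxpq} introduces a square root, so the inherited gap degrades from $r$ to roughly $\sqrt r$ — still a constant strictly above $1$, which is all Theorem~\ref{thm:hardness:SSE} needs, so the loss is immaterial. I expect this sanity check, together with the size bookkeeping in the ETH step, to be the only subtle points; the rest is routine composition of reductions.
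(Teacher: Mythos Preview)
Your proposal is correct and follows essentially the same approach as the paper: assemble Lemma~\ref{lem:lp_to_minpq}, Lemma~\ref{lem:minpq_to_maxpq}, and Theorem~\ref{thm:hardness:SSE} into a chain of reductions, then track polynomial blowup for the ETH consequence. The paper does not isolate a separate proof of Theorem~\ref{thm:hardness_small_p} beyond stating that these three ingredients finish it, and your explicit factor arithmetic (the $\sqrt{r}$ degradation and the $r$ versus $r^{1/p}$ point) just spells out details the paper leaves implicit.
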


	For $p \in (2, \infty)$, we do not rely on the Small Set Expansion Hypothesis though the hardness factor is bounded by a constant. Recall that $\gamma_p := \E_{g} [|g|^p]^{1/p}$ where $g$ is a standard Gaussian,
	which is strictly greater than $1$ for $p > 2$.

	\begin{thm}
		Fix $p \in (2, \infty)$ and $\eps > 0$.
		Assuming $\classP \neq \NP$,
		there is no $(\gamma_p^p - \eps)$-approximation algorithm for rank $k$ approximation of a matrix $A \in \R^{n \times d}$ with $n \geq d$ and $k = d - 1$
		in the entrywise $\ell_p$ norm that runs in time $poly(n)$.

		Consequently, assuming the Exponential Time Hypothesis, there exists $\delta := \delta(p, \eps) > 0$ such that
		there is no $(\gamma_p^p - \eps)$-approximation algorithm for rank $k$ approximation of a matrix $A \in \R^{n \times d}$ with $n \geq d$ and $k = d - 1$
		in the entrywise $\ell_p$ norm that runs in time $2^{n^{\delta}}$.
		\label{thm:hardness_big_p}
	\end{thm}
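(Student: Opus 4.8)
The plan is to reduce the hardness of approximating $\minpq{p^*}{p}{\cdot}$ provided by Theorem~\ref{thm:hardness:NP} to \lprobbf with $k = d-1$, using Lemma~\ref{lem:lp_to_minpq}. The one subtlety worth flagging at the outset is that the objective of \lprobbf is the \emph{$p$-th power} $\|A-UV\|_p^p$, whereas Lemma~\ref{lem:lp_to_minpq} identifies the un-powered quantity $\min_{U,V}\|UV-A\|_p$ with $\minpq{p^*}{p}{A}$. Consequently an $\alpha$-approximation in the powered objective is only an $\alpha^{1/p}$-approximation in the norm, which is precisely why $\gamma_p^p$ (rather than $\gamma_p$) is the correct hardness constant here.

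Concretely, I would make the reduction explicit as follows. Given a matrix $A\in\R^{n\times d}$ arising as an instance of $\minpq{p^*}{p}{\cdot}$ from Theorem~\ref{thm:hardness:NP} (pad with zero rows if needed so that $n\ge d$, which affects neither $\|Ax\|_p$ nor $\minpq{p^*}{p}{A}$), set $k=d-1$ and run a putative $(\gamma_p^p-\eps)$-approximation algorithm for \lprobbf, obtaining $U,V$ with
\[
\|A-UV\|_p^p \;\le\; (\gamma_p^p-\eps)\cdot\min_{U',V'}\|A-U'V'\|_p^p \;=\; (\gamma_p^p-\eps)\,\big(\minpq{p^*}{p}{A}\big)^p,
\]
where the equality is Lemma~\ref{lem:lp_to_minpq}. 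Now compute any nonzero vector $x$ orthogonal to $\rowspace(V)$ (possible since $\dim\rowspace(V)\le d-1<d$; this is routine linear algebra). Exactly as in the paragraph following Lemma~\ref{lem:lp_to_minpq}, \Holder's inequality applied row-wise gives $|\langle x,a_i\rangle|=|\langle x,a_i-(UV)_{i,:}\rangle|\le\|x\|_{p^*}\,\|a_i-(UV)_{i,:}\|_p$ for each row $a_i$ of $A$, so that $\|Ax\|_p/\|x\|_{p^*}\le\|A-UV\|_p$. Combining,
\[
\frac{\|Ax\|_p}{\|x\|_{p^*}} \;\le\; (\gamma_p^p-\eps)^{1/p}\cdot\minpq{p^*}{p}{A} \;\le\; (\gamma_p-\eps')\cdot\minpq{p^*}{p}{A},
\]
where $\eps':=\eps/(p\gamma_p^{p-1})>0$, using concavity of $t\mapsto t^{1/p}$. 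Thus $x/\|x\|_{p^*}$ is a $(\gamma_p-\eps')$-approximate solution to $\minpq{p^*}{p}{A}$, produced in polynomial time. (If $\minpq{p^*}{p}{A}=0$ then $A$ has a nontrivial kernel, i.e.\ $\mathrm{rank}(A)\le d-1$, which is detectable in polynomial time; the hard instances of Theorem~\ref{thm:hardness:NP} may be taken to have strictly positive optimum, so this degenerate case does not interfere.)

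Invoking Theorem~\ref{thm:hardness:NP} with parameter $\eps'$, a polynomial-time $(\gamma_p^p-\eps)$-approximation for \lprobbf with $k=d-1$ would yield a polynomial-time $(\gamma_p-\eps')$-approximation for $\minpq{p^*}{p}{\cdot}$, contradicting $\classP\neq\NP$; this establishes the first assertion. For the ETH consequence I would track the instance size in the reduction underlying Theorem~\ref{thm:hardness:NP} (which goes through \cite{GRSW16}): from a 3-SAT instance on $N$ variables it produces, in polynomial time, an $n\times d$ matrix with $n=N^{O(1)}$. A $2^{n^\delta}$-time $(\gamma_p^p-\eps)$-approximation algorithm for \lprobbf, composed with our polynomial-time reduction, would then decide 3-SAT in time $2^{N^{O(\delta)}}$, which is $2^{o(N)}$ once $\delta$ is chosen as a small enough constant depending on the exponent in $n=N^{O(1)}$ (hence on $p$ and $\eps$), contradicting ETH.

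No new ideas beyond Lemma~\ref{lem:lp_to_minpq} and Theorem~\ref{thm:hardness:NP} are needed; the steps that require care are the bookkeeping of the approximation factor through the $p$-th power (checking that the resulting constant is $\gamma_p^p-\eps$ and that $\eps'>0$), and confirming that the reduction behind Theorem~\ref{thm:hardness:NP} blows the instance size up only polynomially, so that the subexponential lower bound transfers. I expect the latter verification to be the main (mild) obstacle.
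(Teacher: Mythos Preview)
Your proposal is correct and follows the same route as the paper: combine Lemma~\ref{lem:lp_to_minpq} (identifying $\min_{U,V}\|A-UV\|_p$ with $\minpq{p^*}{p}{A}$ when $k=d-1$) with Theorem~\ref{thm:hardness:NP} (NP-hardness of $(\gamma_p-\eps')$-approximating $\minpq{p^*}{p}{\cdot}$), tracking the $p$-th power to convert the norm-level factor $\gamma_p$ into the objective-level factor $\gamma_p^p$. The paper's own treatment is terse---it simply states that Section~\ref{subsec:grsw} ``finishes the proof of Theorem~\ref{thm:hardness_big_p}'' via Theorem~\ref{thm:hardness:NP}---and you have supplied precisely the bookkeeping it leaves implicit, including the explicit extraction of a witness vector $x$ from $\rowspace(V)^\perp$ (which is slightly more than necessary, since approximating the value suffices) and the polynomial blow-up check for the ETH consequence.
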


	We also prove similar hardness results for $\ell_0$-low rank approximation in finite fields. The following theorem immediately implies Theorem~\ref{thm:hard0field}  in the introduction.

	\begin{thm}
		Fix a finite field $\F$ and $r > 1$.
		Assuming $\classP \neq \NP$,
		there is no $r$-approximation algorithm for rank $k$ approximation of a matrix $A \in \F^{n \times d}$ with $n \geq d$ and $k = d - 1$
		in the entrywise $\ell_0$ metric that runs in time $poly(n)$.

		Consequently, assuming the Exponential Time Hypothesis, there exists $\delta := \delta(r) > 0$ such that
		there is no $r$-approximation algorithm for rank $k$ approximation of a matrix $A \in \F^{n \times d}$ with $n \geq d$ and $k = d - 1$
		in the entrywise $\ell_0$ metric that runs in time $2^{n^{\delta}}$.
		\label{thm:hardness_fields}
	\end{thm}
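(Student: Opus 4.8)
The plan is to assemble Theorem~\ref{thm:hardness_fields} from two ingredients already in place: the structural identity of Lemma~\ref{lem:hardness_fields}, which says that for $A \in \F^{n \times d}$ with $n \ge d$ and $k = d-1$ we have $\min_{U \in \F^{n\times k},V \in \F^{k\times d}}\|UV-A\|_0 = \min_{x \in \F^d,\, x \neq 0}\|Ax\|_0$, and the inapproximability of the minimum distance of a linear code, Theorem~\ref{thm:dms}.

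First I would observe, exactly as remarked after Lemma~\ref{lem:hardness_fields}, that the right-hand side $\min_{x \neq 0}\|Ax\|_0$ is the minimum distance of the code generated by $A^T$: given a generator matrix $G \in \F^{d \times n}$ of a code of length $n$ and dimension $d$ (w.l.o.g.\ $G$ has full row rank, so $n \ge d$ and $x \neq 0 \iff x^T G \neq 0$), put $A := G^T$; then $Ax$ has the same entries as $x^T G$ and hence ranges over exactly the nonzero codewords as $x$ ranges over $\F^d \setminus \{0\}$. Consequently any $r$-approximation algorithm for rank-$(d-1)$ entrywise-$\ell_0$ approximation of a matrix $A \in \F^{n \times d}$ with $n \ge d$ yields, after forming $A = G^T$ in time $O(nd)$, an $r$-approximation for the minimum distance problem in essentially the same running time. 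Combined with Theorem~\ref{thm:dms}, this rules out a polynomial-time $r$-approximation for any constant $r > 1$ unless $\classP = \NP$, which is the first assertion (and, as the excerpt notes, this already implies Theorem~\ref{thm:hard0field}).

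For the ETH consequence I would trace the instance-size blowup of the reduction behind Theorem~\ref{thm:dms}. That reduction is a polynomial-time reduction from an \NP-hard problem, so composing it with a standard polynomial-time reduction from \textsf{3SAT} turns a formula on $N$ variables into a linear code of block length $n \le N^{c}$ for some constant $c = c(r)$, with a factor-$r$ gap in its minimum distance; via the identification above this is a matrix $A \in \F^{n \times d}$ with $n \ge d$ and $k = d-1$ on which an $r$-approximation algorithm running in time $2^{n^{\delta}}$ would decide \textsf{3SAT} in time $2^{N^{c\delta}}$. Choosing $\delta := \delta(r) := 1/(2c)$ makes this $2^{o(N)}$, contradicting the ETH.

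I do not anticipate a genuine obstacle, since the hard content is already contained in Lemma~\ref{lem:hardness_fields} and in the cited hardness of code distance; the only items needing care are the routine bookkeeping that $A = G^T$ preserves $n \ge d$ and $k = d-1$ (immediate once $G$ is full rank, which can be arranged by discarding dependent rows), and verifying that the blowup in the \cite{AK11} reduction is polynomial with a constant exponent, so that a valid $\delta > 0$ exists.
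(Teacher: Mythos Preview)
Your proposal is correct and follows essentially the same approach as the paper: reduce rank-$(d-1)$ $\ell_0$-low-rank approximation to the minimum distance problem via Lemma~\ref{lem:hardness_fields} (identifying $\min_{x\neq 0}\|Ax\|_0$ with the minimum distance of the code generated by $A^T$), then invoke the Austrin--Khot inapproximability result (Theorem~\ref{thm:dms}). The paper is in fact terser than you are---it simply states that Theorem~\ref{thm:hardness_fields} follows immediately from these two ingredients and does not spell out the ETH bookkeeping, which you handle correctly.
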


	Section~\ref{subsec:lp_to_minpq} proves Lemma~\ref{lem:lp_to_minpq}, showing that
	computing $\minpq{p^*}{p}{A}$ is equivalent to finding the best rank $k$ approximation of $A \in \R^{n \times d}$ when $n \geq d$ and $k = d - 1$.
	Section~\ref{subsec:minpq_to_maxpq} proves Lemma~\ref{lem:minpq_to_maxpq}, reducing
	$\norm{2}{p^*}{\cdot}$ to $\minpq{p^*}{p}{\cdot}$.
	Section~\ref{subsec:bbhksz} presents the Barak et al.~\cite{BBHKSZ12}'s proof of hardness of $\norm{2}{p^*}{\cdot}$ with modifications for all $q > 2$, finishing the proof of Theorem~\ref{thm:hardness_small_p} for $p \in (1, 2)$.
	Section~\ref{subsec:grsw} proves the hardness of $\minpq{p^*}{p}{\cdot}$ for $p > 2$, using the result of~\cite{GRSW16}, and finishes the proof of Theorem~\ref{thm:hardness_big_p}.
	Finally, Theorem~\ref{thm:hardness_fields} is proved in Section~\ref{subsec:hardness_fields}.

	\paragraph{Numerical issues.}
	In the proofs of Theorem~\ref{thm:hardness_small_p} and Theorem~\ref{thm:hardness_big_p},
	we consider our matrices as having real entries for simplicity, but our results will hold even when all entries are rescaled to polynomially bounded integers.
	The instance in Theorem~\ref{thm:hardness_big_p} is explicitly constructed and it can be easily checked that all entries are polynomially bounded integers.
	For Theorem~\ref{thm:hardness_small_p}, our hard instance $B$ for $\norm{p}{p^*}{\cdot}$ is simply a projection matrix and
	the final instance $A$ is obtained by $(\eps I + B)^{-1}$, so by ensuring that $\eps \geq 1/\poly(n)$, we can ensure that eigenvalues of $A$ are within $[1, \poly(n)]$.

	\newpage
	\subsection{\texorpdfstring{$\ell_p$}{lp}-Low Rank Approximation and \texorpdfstring{$\minpq{p^*}{p}{A}$}{minpq}}
	\label{subsec:lp_to_minpq}
	In this subsection, we prove the following lemma showing that computing $\minpq{p^*}{p}{A}$ is equivalent to finding the best rank $k$ approximation of $A \in \R^{n \times d}$ when $n \geq d$ and $k = d - 1$.

	\begin{lem} [Restatement of Lemma~\ref{lem:lp_to_minpq}]
		Let $p \in (1, \infty)$.
		Let $A \in \R^{n \times d}$ with $n \geq d$ and $k = d - 1$.
		Then \[
		\min_{U \in \R^{n \times k}, V \in \R^{k \times d}} \norm{p}{U V - A} = \min_{x \in \R^d, \norm{p^*}{x} = 1} \norm{p}{Ax}.
		\]
		%\label{lem:lp_to_minpq}
	\end{lem}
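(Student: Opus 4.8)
The plan is to establish the identity via the geometric reformulation sketched right after the statement, in three moves.

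\textbf{Move 1 (reduction to hyperplanes).} For $k = d-1$, the matrices of the form $UV$ with $U \in \R^{n\times k}$, $V \in \R^{k\times d}$ are exactly the matrices of rank at most $d-1$, i.e.\ those whose $n$ rows all lie in a common subspace $S \subseteq \R^d$ with $\dim S \le d-1$: the row space of $UV$ is contained in that of $V$ (dimension $\le d-1$), and conversely, fixing any such $S$ and letting the rows of $V$ be a basis of $S$, every matrix with rows in $S$ arises as $UV$. Writing $a_1,\dots,a_n$ for the rows of $A$ and $\rho_i(S) := \min_{y\in S}\|y - a_i\|_p$, the entrywise $\ell_p^p$ error splits over rows, so $\min_{U,V}\|UV - A\|_p = \min_{\dim S \le d-1}\bigl(\sum_{i=1}^n \rho_i(S)^p\bigr)^{1/p}$; since $\rho_i$ is non-increasing under enlarging $S$, the minimum is attained at a subspace of dimension exactly $d-1$, i.e.\ a hyperplane through the origin.

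\textbf{Move 2 (exact distance to a hyperplane).} A $(d-1)$-dimensional subspace of $\R^d$ is exactly $x^\perp := \{y : \langle x,y\rangle = 0\}$ for some $x \ne 0$, and by homogeneity we may take $\|x\|_{p^*} = 1$. Writing $y = a_i + z$, membership $y \in x^\perp$ is the single linear constraint $\langle x,z\rangle = -\langle x,a_i\rangle$, so $\rho_i(x^\perp) = \min\{\|z\|_p : \langle x,z\rangle = -\langle x,a_i\rangle\}$. \Holder's inequality gives the lower bound $|\langle x,z\rangle| \le \|x\|_{p^*}\|z\|_p = \|z\|_p$, hence $\rho_i(x^\perp)\ge |\langle x,a_i\rangle|$. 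For the matching upper bound I will write down the extremal (``\Holder dual'') vector: set $\tilde z_j := -\sgn(\langle x,a_i\rangle)\,\sgn(x_j)\,|x_j|^{p^*-1}$; then $\langle x,\tilde z\rangle = -\sgn(\langle x,a_i\rangle)\sum_j |x_j|^{p^*} = -\sgn(\langle x,a_i\rangle)$ and, using the identity $(p^*-1)p = p^*$, $\|\tilde z\|_p^p = \sum_j |x_j|^{(p^*-1)p} = \sum_j |x_j|^{p^*} = 1$. Scaling $z := |\langle x,a_i\rangle|\cdot\tilde z$ then satisfies the constraint with $\|z\|_p = |\langle x,a_i\rangle|$, so $\rho_i(x^\perp) = |\langle x,a_i\rangle| = |\langle x,a_i\rangle|/\|x\|_{p^*}$.

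\textbf{Move 3 (assemble).} Since $(Ax)_i = \langle a_i,x\rangle$, substituting the formula for $\rho_i$ gives $\bigl(\sum_i \rho_i(x^\perp)^p\bigr)^{1/p} = \|Ax\|_p$, and ranging over all hyperplanes (equivalently, all $x$ with $\|x\|_{p^*}=1$) yields $\min_{U,V}\|UV - A\|_p = \min_{\|x\|_{p^*}=1}\|Ax\|_p = \minpq{p^*}{p}{A}$. I expect the main obstacle to be the tightness of the \Holder bound --- constructing the dual vector and checking the exponent arithmetic --- together with a handful of degeneracies to dispatch cleanly: coordinates with $x_j = 0$ (they contribute nothing to $\tilde z$ and are harmless), rows with $\langle x,a_i\rangle = 0$ (then $a_i\in x^\perp$ and $\rho_i = 0$, consistent with the formula), attainment of the minima (continuity of $x \mapsto \|Ax\|_p$ on the compact unit $\ell_{p^*}$-sphere), and the trivial case $A = 0$. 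The rank-to-hyperplane translation in Move 1 is elementary linear algebra but should be spelled out explicitly.
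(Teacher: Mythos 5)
Your proposal is correct and follows essentially the same route as the paper's proof: identify the rank-$(d-1)$ approximation with a choice of hyperplane $x^\perp$, compute each row's $\ell_p$-distance exactly via \Holder's inequality with the explicit dual vector (your $|x_j|^{p^*-1}$ is the paper's $|x_j|^{p^*/p}$), and sum over rows to get $\norm{p}{Ax}$. Your Move 1 and the attainment/degeneracy remarks just spell out explicitly what the paper handles by assuming $\mathrm{rank}(A)=d$ and taking an optimal $V^*$ of rank $d-1$ without loss of generality.
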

	\begin{proof}
		Assume that the rank of $A$ is $d$; otherwise the lemma becomes trivial.
		We first prove $(\geq)$.
		Given $V^* \in \R^{k \times d}$ that achieves the best rank $k$ approximation, assume without loss of generality that the rank of $V^*$ is $k = d - 1$.
		Let $x \in \R^d$ be the unique vector (up to sign) that is orthogonal to the rowspace of $V^*$ and $\norm{p^*}{x} = 1$.
		Let $a_1, \dots, a_n$ be the rows of $A$.
		For fixed $V^*$, for $i \in [n]$, the $i$th row $u^*_i \in \R^{k}$ of $U^*$ must be obtained by computing
		\[
		\min_{u^*_i \in \R^{k}} \norm{p}{u^*_i V^* - a_i} =
		\min_{y \in \rowspace(V^*)} \norm{p}{y - a_i} =
		\min_{z \in \R^d : \langle x, z \rangle = -\langle x, a_i \rangle} \norm{p}{z}.
		\]
		Note that by \Holder 's inequality, the last quantity is at least $|\langle x, z \rangle| / \norm{p^*}{x} = |\langle x, a_i \rangle| / \norm{p^*}{x} = |\langle x, a_i \rangle|$.
		Indeed, taking $z \in \R^d$ with $z_j := (-\langle x, a_i \rangle) \cdot (\sgn(x_j)|x_j|^{p^* / p})$ for each $j \in [d]$ implies
		\[
		\langle x, z \rangle = -\langle x, a_i \rangle \cdot \sum_{j \in [d]} \sgn(x_j) |x_j|^{p^* / p} \cdot x_j =
		-\langle x, a_i \rangle \cdot \norm{p^*}{x}^{p^*} =  -\langle x, a_i \rangle,
		\]
		and
		\[
		\norm{p}{z} =|\langle x, a_i \rangle| \cdot (\sum_{j \in [d]} |x_j|^{p^*})^{1/p} = |\langle x, a_i \rangle| \cdot \norm{p^*}{x}^{p^*/p} = |\langle x, a_i \rangle|,
		\]
		so we can conclude $\norm{p}{u^*_i V^* - a_i} = \min_{z \in \R^d : \langle x, z \rangle = -\langle x, a_i \rangle} \norm{p}{z} = |\langle x, a_i \rangle|$.
		Summing over $i \in [n]$,
		\[
		\norm{p}{U^*V^* - A}
		= \big( \sum_{i \in [n]} \norm{p}{u^*_i V^* - a_i}^p \big) ^{1/p}
		= \big( \sum_{i \in [n]} |\langle x, a_i \rangle|^p \big) ^{1/p}
		= \norm{p}{Ax}.
		\]
		This proves that $\min_{U \in \R^{n \times k}, V \in \R^{k \times d}} \norm{p}{U V - A} \geq \min_{x \in \R^d, \norm{p^*}{x} = 1} \norm{p}{Ax}$.
		For the other direction, given $x \in \R^d$ with $\norm{p^*}{x} = 1$, let $V^* \in \R^{k \times d}$ be a matrix whose rowspace is a $k$-dimensional subspace orthogonal to $x$, and compute $U^*$ as above.
		The above analysis shows that $\norm{p}{U^*V^* - A}  = \norm{p}{Ax}$, which completes the proof.
	\end{proof}

	\subsection{Reducing \texorpdfstring{$\|\cdot \|_{2\rightarrow p^*}$}{|.|2->p*} \texorpdfstring{$\minpq{p^*}{p}{\cdot}$}{minpq}}
	\label{subsec:minpq_to_maxpq}
	In this subsection, we show that computing $\minpq{p^*}{p}{\cdot}$ is as hard as computing $\norm{2}{p^*}{\cdot}$, proving the following lemma.
	\begin{lem} [Restatement of Lemma~\ref{lem:minpq_to_maxpq}]
		For any $\eps >0, p\in (1,\infty)$, there is an algorithm that runs in $\poly(n, \log (1/\eps))$ and on a non-zero input matrix $A$,
		computes a matrix $B$ satisfying
		\[
		(1-\eps)\norm{2}{p^*}{A}^{-2} \leq \minpq{p^*}{p}{B} \leq (1+\eps)\norm{2}{p^*}{A}^{-2}.
		\]
		%\label{lem:minpq_to_maxpq}
	\end{lem}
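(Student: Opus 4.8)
The plan is to chain the three reductions sketched in the discussion preceding the statement. First I would record the Banach--space identity $\norm{p}{p^*}{AA^T}=\norm{2}{p^*}{A}^2$. This follows by writing, for the symmetric positive semidefinite matrix $M:=AA^T$, the quantity $\norm{p}{p^*}{M}=\max_{\norm{p}{x}=\norm{p}{y}=1}\langle y,Mx\rangle$ using $\ell_p$--$\ell_{p^*}$ duality (since $(p^*)^*=p$), and then observing via Cauchy--Schwarz that $\langle y,Mx\rangle=\langle A^Ty,A^Tx\rangle\le\sqrt{\langle y,My\rangle}\sqrt{\langle x,Mx\rangle}$, so the bilinear maximum is attained on the diagonal $x=y$; hence $\norm{p}{p^*}{M}=\max_{\norm{p}{x}=1}\langle x,Mx\rangle=\max_{\norm{p}{x}=1}\norm{2}{A^Tx}^2=\norm{2}{p^*}{A}^2$, the last step again by duality. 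So it suffices to produce a matrix $B$ with $\minpq{p^*}{p}{B}=(1\pm\eps)\,\norm{p}{p^*}{AA^T}^{-1}$.

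Second, I would invoke the inversion identity already derived in the text: for an invertible matrix $M$ one has $\minpq{p^*}{p}{M^{-1}}=\norm{p}{p^*}{M}^{-1}$ (the computation $\min_{x\neq0}\norm{p}{M^{-1}x}/\norm{p^*}{x}=(\max_{y\neq0}\norm{p^*}{My}/\norm{p}{y})^{-1}$ after the substitution $x=My$). Applying this with $M=AA^T$ fails because $AA^T$ need not be invertible, so the crucial third step is a small perturbation: set $M_\eta:=\eta I+AA^T$ for a small rational $\eta>0$, which is positive definite (as $A\neq0$ forces $AA^T\succeq0$ and $\eta I$ makes it strictly positive), hence invertible, and output $B:=M_\eta^{-1}=(\eta I+AA^T)^{-1}$, computable in polynomial time by exact rational matrix inversion. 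Then $\minpq{p^*}{p}{B}=\norm{p}{p^*}{\eta I+AA^T}^{-1}$, and it remains to compare $\norm{p}{p^*}{\eta I+AA^T}$ with $\norm{p}{p^*}{AA^T}$. By the triangle inequality for the operator norm $\norm{p}{p^*}{\cdot}$ on matrices, $\bigl|\norm{p}{p^*}{\eta I+AA^T}-\norm{p}{p^*}{AA^T}\bigr|\le\eta\cdot\norm{p}{p^*}{I}$, where $\norm{p}{p^*}{I}=\max_{\norm{p}{x}=1}\norm{p^*}{x}\le\poly(n)$. Since $A\neq0$ we have $\norm{p}{p^*}{AA^T}=\norm{2}{p^*}{A}^2>0$ with an explicit lower bound in terms of the bit-length of $A$, so choosing $\eta$ small enough that $\eta\cdot\norm{p}{p^*}{I}\le\eps\cdot\norm{p}{p^*}{AA^T}$ (which needs only a polynomial number of bits in $n$, $\log(1/\eps)$, and $\mathrm{size}(A)$) gives $\norm{p}{p^*}{\eta I+AA^T}=(1\pm\eps)\,\norm{p}{p^*}{AA^T}$; inverting and rescaling $\eps$ by a constant yields $\minpq{p^*}{p}{B}=(1\pm\eps)\,\norm{2}{p^*}{A}^{-2}$, as claimed. (In the general reduction from $\norm{p}{p^*}{\cdot}$ of an arbitrary rectangular matrix one additionally pads with zero rows and columns to make the matrix square before perturbing; here $AA^T$ is already square, so only the perturbation is needed.)

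I expect the main obstacle to be the quantitative side of the perturbation step: one must certify that a polynomially bounded precision for $\eta$ (and hence for the entries of $B=(\eta I+AA^T)^{-1}$) already controls the relative error, which hinges on a not-too-small lower bound for $\norm{p}{p^*}{AA^T}=\norm{2}{p^*}{A}^2$ as a function of the input size, together with a check that the exact rational inversion producing $B$ stays within polynomial bit complexity. The remaining ingredients --- the duality manipulations behind $\norm{p}{p^*}{AA^T}=\norm{2}{p^*}{A}^2$ and the elementary triangle-inequality estimate --- are routine.
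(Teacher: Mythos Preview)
Your proposal is correct and follows essentially the same approach as the paper: establish the factorization identity $\norm{p}{p^*}{AA^T}=\norm{2}{p^*}{A}^2$, use the inversion identity $\minpq{p^*}{p}{M^{-1}}=\norm{p}{p^*}{M}^{-1}$ for invertible $M$, and handle non-invertibility by the additive perturbation $M_\eta=AA^T+\eta I$ controlled via the triangle inequality for $\norm{p}{p^*}{\cdot}$. The only cosmetic differences are that the paper proves the factorization identity via two operator-norm inequalities rather than your Cauchy--Schwarz-on-the-bilinear-form argument, and it states the perturbation step as a general claim for rectangular matrices (pad to square, then add $\eps' I$) while you apply it directly to the already-square $AA^T$; the paper also makes the lower bound on $\norm{p}{p^*}{AA^T}$ explicit by testing a standard basis vector at the column of the largest-magnitude entry, which cleanly resolves the quantitative concern you flag at the end.
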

	The lemma is proved in the following two steps.

	\paragraph{Reducing $\|\cdot\|_{2\rightarrow p^*}$ to $\|\cdot \|_{p\rightarrow p^*}$.}
	We first prove the following claim.
	This follows from standard tools from Banach space theory that {\em factor} an operator from $\ell_p$ to $\ell_p^*$ via $\ell_2$.

	\begin{claim}
		\label{psd:p:to:p*}
		$\norm{p}{p^*}{A A^T} = \norm{2}{p^*}{A}^2$.
	\end{claim}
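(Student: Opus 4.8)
The plan is to establish the identity by rewriting the $p \to p^*$ operator norm of the positive semidefinite matrix $AA^T$ as a supremum of a bilinear form, applying Cauchy--Schwarz, and then matching the two sides via the elementary operator-norm duality $\norm{p}{2}{A^T} = \norm{2}{p^*}{A}$. Conceptually this is exactly the ``factorization through $\ell_2$'' statement: $AA^T$, regarded as an operator $\ell_p^n \to \ell_{p^*}^n$, factors as $A^T \colon \ell_p^n \to \ell_2^d$ followed by $A \colon \ell_2^d \to \ell_{p^*}^n$, so submultiplicativity of operator norms immediately gives $\norm{p}{p^*}{AA^T} \le \norm{2}{p^*}{A}\cdot\norm{p}{2}{A^T}$. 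The content of the claim is that equality holds and that the second factor equals $\norm{2}{p^*}{A}$.

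First I would record the duality $\norm{p}{2}{A^T} = \norm{2}{p^*}{A}$. Since $\ell_2^d$ is self-dual and the dual of $\ell_p^n$ is $\ell_{p^*}^n$ (as $1/p + 1/p^* = 1$), working in finite dimensions we may write
\[
\norm{p}{2}{A^T} = \sup_{\norm{p}{u}=1}\ \sup_{\norm{2}{v}=1}\ \langle v, A^T u\rangle = \sup_{\norm{2}{v}=1}\ \sup_{\norm{p}{u}=1}\ \langle Av, u\rangle = \sup_{\norm{2}{v}=1}\norm{p^*}{Av} = \norm{2}{p^*}{A}.
\]
Next I would expand the left-hand side of the claim in the same way. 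Using that the dual of $\ell_{p^*}^n$ is $\ell_p^n$ together with $\langle w, AA^T x\rangle = \langle A^T w, A^T x\rangle$,
\[
\norm{p}{p^*}{AA^T} = \sup_{\norm{p}{x}=1}\norm{p^*}{AA^T x} = \sup_{\norm{p}{x}=\norm{p}{w}=1}\langle A^T w,\, A^T x\rangle .
\]
By Cauchy--Schwarz this last quantity is at most $\big(\sup_{\norm{p}{x}=1}\norm{2}{A^T x}\big)^2 = \norm{p}{2}{A^T}^2$; and choosing $w = x = x^*$, where $x^*$ is a maximizer of $\norm{2}{A^T x}$ over $\norm{p}{x}=1$ (which exists by compactness), makes the bilinear form equal to $\norm{2}{A^T x^*}^2 = \norm{p}{2}{A^T}^2$. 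Hence $\norm{p}{p^*}{AA^T} = \norm{p}{2}{A^T}^2$, and combining with the first step completes the proof.

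I do not expect a serious obstacle here: the only points that require care are the operator-norm duality identity above and the observation that, because all the relevant feasible sets are compact in finite dimensions, the optimizing vectors genuinely exist — in particular the extremal vectors implicitly used to identify $\sup_{\norm{p}{u}=1}\langle z,u\rangle$ with $\norm{p^*}{z}$. The positive semidefiniteness of $AA^T$ is precisely what allows the two Cauchy--Schwarz factors to coincide (by taking $w=x$), turning the inequality into an equality; without it one would only get the factorization upper bound.
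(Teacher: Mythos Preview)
Your proof is correct and follows essentially the same approach as the paper: both establish the duality $\norm{p}{2}{A^T} = \norm{2}{p^*}{A}$, rewrite $\norm{p}{p^*}{AA^T}$ as $\sup_{\norm{p}{x}=\norm{p}{y}=1}\langle A^T y, A^T x\rangle$, and obtain the lower bound by setting $y=x$. The only cosmetic difference is that the paper's upper bound uses operator-norm submultiplicativity directly (bounding $\norm{p^*}{AA^T x}\le \norm{2}{p^*}{A}\,\norm{2}{A^T x}$) rather than Cauchy--Schwarz on the bilinear form, but these are equivalent.
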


	\begin{proof}
		By the definitions of $p \to q$ norms,
		\[
		\norm{p}{p^*}{A A^T}
		= \sup_x \frac{\norm{p^*}{AA^T x}}{\norm{p}{x}}
		\leq \sup_x \frac{\norm{2}{p^*}{A} \norm{2}{A^T x}}{\norm{p}{x}}
		\leq \norm{2}{p^*}{A} \norm{p}{2}{A^T}
		= \norm{2}{p^*}{A}^2,
		\]
		where the last line follows from the fact that
		\[
		\norm{2}{p^*}{A} =\sup_{\norm{p}{y}=1}\,\sup_{\norm{2}{x}=1}\mysmalldot{y}{Ax}
		= \sup_{\norm{2}{x}=1}\,\sup_{\norm{p}{y}=1}\mysmalldot{A^T y}{x}  = \norm{p}{2}{A^T}.
		\]
		For the other direction,
		\begin{align*}
			\norm{p}{p^*}{AA^T}
			~&=~
			\sup_{\norm{p}{x}=1}\,\sup_{\norm{p}{y}=1}\mysmalldot{y}{AA^Tx}
			~=~
			\sup_{\norm{p}{x}=1}\,\sup_{\norm{p}{y}=1}\mysmalldot{A^Ty}{A^Tx} \\
			~&\geq~
			\sup_{\norm{p}{x}=1}\norm{2}{A^Tx}^2
			~=~
			\norm{p}{2}{A^T}^2
			~=~
			\norm{2}{p^*}{A}^2,
		\end{align*}
		which completes the proof.
	\end{proof}

	\paragraph{Reducing $\|\cdot \|_{p\rightarrow p^*}$ to $\minpq{p^*}{p}{\cdot}$.}
	We now relate two quantities $\| A \|_{p\rightarrow p^*}$ and $\minpq{p^*}{p}{B}$ for two related matrices $A$ and $B$.
	%We now prove that there is an approximation preserving reduction from $\|\cdot \|_{p\rightarrow p^*}$ to $\minpq{p^*}{p}{\cdot}$.
	If $A$ is invertible, this can be seen easily.
	\begin{fact}
		\label{min:and:max}
		If $A$ is an invertible matrix, then $\minpq{p}{q}{A^{-1}} = (\norm{q}{p}{A})^{-1}$
	\end{fact}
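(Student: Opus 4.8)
The plan is to rewrite both quantities as optima of ratios and then pass from $A^{-1}$ to $A$ by a linear change of variables. Unfolding the definitions, $\minpq{p}{q}{A^{-1}} = \min_{x \neq 0} \norm{q}{A^{-1}x}/\norm{p}{x}$ and $\norm{q}{p}{A} = \max_{x \neq 0} \norm{p}{Ax}/\norm{q}{x}$; both extrema are attained because the relevant objective is continuous on the compact unit sphere of the input norm, and, since $A$ is invertible, $Ax \neq 0$ whenever $x \neq 0$, so no ratio degenerates.

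First I would substitute $y := A^{-1}x$, equivalently $x = Ay$. Invertibility of $A$ makes $x \mapsto A^{-1}x$ a bijection of $\R^d \setminus \{0\}$ onto itself, so the minimum over nonzero $x$ equals the minimum over nonzero $y$ after the substitution, which turns the objective $\norm{q}{A^{-1}x}/\norm{p}{x}$ into $\norm{q}{y}/\norm{p}{Ay}$. Hence $\minpq{p}{q}{A^{-1}} = \min_{y \neq 0} \norm{q}{y}/\norm{p}{Ay}$.

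Then I would note that the minimum of a positive ratio is the reciprocal of the maximum of its inverse, so $\minpq{p}{q}{A^{-1}} = \big(\max_{y \neq 0} \norm{p}{Ay}/\norm{q}{y}\big)^{-1} = \norm{q}{p}{A}^{-1}$, which is the claim. This is precisely the chain of equalities already displayed in Section~\ref{subsec:minpq_to_maxpq} in the special case $q = p^*$ relevant to Lemma~\ref{lem:minpq_to_maxpq}, here recorded in its general form.

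There is essentially no obstacle. The only points meriting a word of care are that the substitution is legitimate — the map $x \mapsto A^{-1}x$ is a bijection of the nonzero vectors and preserves nonvanishing — and that every ratio involved is strictly positive and finite, both immediate from $A$ being invertible; together these make the reciprocal manipulation in the last step valid.
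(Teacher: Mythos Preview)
Your proof is correct and follows essentially the same approach as the paper: both arguments unfold the definitions as ratios over nonzero vectors, use the bijection $x \leftrightarrow A^{-1}x$ to change variables, and invoke that the infimum of a positive ratio equals the reciprocal of the supremum of its inverse. The only cosmetic difference is the order of the two steps (you substitute first and then take reciprocals, the paper does the reverse), and you add a brief remark on attainment of the extrema which the paper leaves implicit.
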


	\begin{proof}
		First observe that the condition $A^{-1}x\neq 0$ is equivalent to the condition $x\neq 0$ since $A$ is invertible.
		Then we have,
		\[
		\inf_{x\neq 0} \frac{\norm{q}{A^{-1}x}}{\norm{p}{x}}
		~=~
		\inf_{Ax\neq 0} \frac{\norm{q}{A^{-1}x}}{\norm{p}{x}}
		~=~
		\bigg( \sup_{A^{-1}x\neq 0} \frac{\norm{p}{x}}{\norm{q}{A^{-1}x}} \bigg)^{-1}
		~=~
		\bigg( \sup_{y\neq 0} \frac{\norm{p}{A^{-1}y}}{\norm{q}{y}} \bigg)^{-1}. \]
		The leftmost quantity is $\minpq{p}{q}{A^{-1}}$ and the rightmost quantity is $(\norm{q}{p}{A})^{-1}$.
	\end{proof}

	Even if $A$ is not invertible, there is an invertible matrix $B$ whose $p \to q$ norm is close to that of $A$ for any $p$ and $q$.

	\begin{claim}
		Let $A$ be a non-zero $n \times d$ matrix. For any $p, q \in (1, \infty)$ and any $\eps>0$,
		there is an invertible and polynomial time computable $\max(n,d) \times \max(n,d)$ matrix $B$ such that
		$(1-\eps)\norm{p}{q}{A}\leq \norm{p}{q}{B} \leq (1+\eps)\norm{p}{q}{A}$.

	\end{claim}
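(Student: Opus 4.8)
The plan is to reduce to the square, non‑degenerate case by zero‑padding, and then repair invertibility with a diagonal perturbation that is small relative to an explicit lower bound on $\norm{p}{q}{A}$. First I would pad $A$ with $|n-d|$ zero rows (if $d>n$) or zero columns (if $n>d$) to obtain an $N\times N$ matrix $A'$, where $N:=\max(n,d)$. Appending a zero row does not change any product $A'x$, and appending a zero column only introduces coordinates of $x$ that can be taken to be $0$ in the defining supremum; hence $\norm{p}{q}{A'}=\norm{p}{q}{A}$, and $A'\ne 0$.

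Next I would record two estimates. Since $A\ne 0$, set $M:=\max_{i,j}|A_{ij}|>0$ and let $j$ be a column of $A'$ containing an entry of absolute value $M$; then $\norm{p}{q}{A'}\ge \norm{q}{A'_{:,j}}/\norm{p}{e_j}=\norm{q}{A'_{:,j}}\ge M$, and $M$ is computable in time $O(nd)$. In the other direction, by the standard comparison $\norm{q}{x}\le\norm{1}{x}\le N^{1-1/p}\norm{p}{x}\le N\norm{p}{x}$ (valid since $q>1$ and $p>1$), the identity matrix satisfies $\norm{p}{q}{I_N}\le N$.

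Then I would perturb: for $\ell=1,\dots,N+1$ put $\delta_\ell:=\ell\cdot\eta$ with $\eta:=\eps M/(N(N+1))$, and set $B:=A'+\delta_{\ell^*}I_N$, where $\ell^*$ is the least index with $\det(A'+\delta_{\ell^*}I_N)\ne 0$. Such $\ell^*$ exists because $\delta\mapsto\det(A'+\delta I_N)$ is a monic polynomial of degree $N$, hence has at most $N$ real roots, so at least one of the $N+1$ candidates avoids all of them; and $\ell^*$ is found in polynomial time by evaluating $N+1$ determinants of matrices with polynomially bounded rational entries. Since $\norm{p}{q}{\cdot}$ is a norm on matrices,
\[
\big|\,\norm{p}{q}{B}-\norm{p}{q}{A'}\,\big|\;\le\;\norm{p}{q}{B-A'}\;=\;\delta_{\ell^*}\,\norm{p}{q}{I_N}\;\le\;\tfrac{\eps M}{N}\cdot N\;=\;\eps M\;\le\;\eps\,\norm{p}{q}{A'},
\]
using $\delta_{\ell^*}\le\delta_{N+1}=\eps M/N$, the bound $\norm{p}{q}{I_N}\le N$, and $\norm{p}{q}{A'}\ge M$. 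As $\norm{p}{q}{A'}=\norm{p}{q}{A}$, this gives $(1-\eps)\norm{p}{q}{A}\le\norm{p}{q}{B}\le(1+\eps)\norm{p}{q}{A}$, with all bit‑lengths polynomial in the input size and $\log(1/\eps)$, as required.

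This is a routine argument, and I expect no genuine obstacle; the only point that needs care is that a \emph{single} shift $A'+\delta I_N$ need not be invertible — a small $\delta$ may equal minus a small real eigenvalue of $A'$ — which is why one searches over a short list of candidate shifts instead of committing to one value, while keeping every $\delta_\ell$ below the explicit threshold $\eps M/N$ so that both the multiplicative error and the numerical size of $B$ stay controlled.
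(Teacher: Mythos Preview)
Your proof is correct and follows essentially the same approach as the paper: zero-pad to a square matrix with the same $p\to q$ norm, add a small multiple of the identity, and use the triangle inequality together with the max-entry lower bound $\norm{p}{q}{A}\ge M$. In fact your version is slightly more careful than the paper's, which simply sets $B=A'+\eps' I$ for a single value $\eps'=\eps M/\norm{p}{q}{I}$ without addressing the possibility that $-\eps'$ is an eigenvalue of $A'$; your pigeonhole over $N+1$ candidate shifts cleanly handles that (and is unnecessary in the paper's actual application, where $A'$ arises as $AA^T$ and is positive semidefinite).
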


	\begin{proof}
		Let $\oplus$ denote vector concatenation.
		We start by exhibiting a square matrix with the same norm. If $d\geq n$, we pad $0$'s to the bottom
		of $A$ to obtain an $d\times d$ matrix $A'$. Now for any $x\in \R^d$,  $\norm{q}{A'x} =
		\norm{q}{Ax \oplus 0^{d-n}} = \norm{q}{Ax}$. So $\norm{p}{q}{A} = \norm{p}{q}{A'}$.

		If $d\leq n$, we pad $0$'s to the right of $A$ to obtain an $n\times n$ matrix $A'$.
		Consider any $y\in \R^n$ and let $x\in \R^d$, $z\in \R^{n-d}$ be such that $y = x\oplus z$.
		Then we have $\norm{q}{A'y} = \norm{q}{Ax}$. Now since $\norm{p}{y}\geq \norm{p}{x}$,
		we have $\norm{p}{q}{A} \geq \norm{p}{q}{A'}$. On the other hand, $\norm{p}{q}{A} \leq
		\norm{p}{q}{A'}$ since $\norm{q}{A'(x\oplus 0^{n-d})} = \norm{q}{Ax}$ and $\norm{p}{x\oplus 0^{n-d}}
		= \norm{p}{x}$.

		\smallskip

		Next to obtain an invertible matrix, we set $B := A' + \eps'\cdot \mathrm{I}$ where
		$\eps' := \eps \cdot M/ \norm{p}{q}{\mathrm{I}}$ and $M$ is the max magnitude of an entry
		of $A$ which must be non-zero since $A$ is non-zero. First we observe that $\norm{p}{q}{A}\geq M$
		since one can substitute $x=e_i$ where $i$ is the index of the column containing the max magnitude entry.
		Lastly, applying triangle inequality (since $\norm{p}{q}{\cdot}$ is a norm) implies the claim.
	\end{proof}

	\subsection{Hardness of \texorpdfstring{$2 \to q$}{2 -> q} norm for all \texorpdfstring{$q \in (2, \infty)$}{q in (2, infinity)}}
	\label{subsec:bbhksz}
	In this subsection, we prove Theorem~\ref{thm:hardness:SSE} for hardness of $\norm{2 \to q}{\cdot}$ for $q \in (2, \infty)$.
	Barak et al.~\cite{BBHKSZ12} proved that under the Small Set Expansion Hypothesis, for any $r > 1$ and an even integer $q \geq 4$, it is NP-hard to approximate the $2 \to q$ norm problem within a factor $r$. The same proof essentially works for all $q \in (2, \infty)$ with slight modifications.
	For completeness, we present their proof here, with additional remarks when we generalize an even integer $q \geq 4$ to all $q \in (2, \infty)$.

	\paragraph{Preliminaries for Small Set Expansion.}
	For a vector $x \in \R^d$, every $p$-norm in this subsection denotes the expectation norm
	defined as $\enorm{p}{x} := (\E_{i \in [d]}  [|x_i|^p])^{1/p}$.
	For a regular graph $G = (V, E)$ and a subset $S \subseteq V$, we define the measure of $S$ to be $\mu(S) = |S| / |V|$ and we define $G(S)$ to be the distribution obtained by picking a random $x \in S$ and then outputting a random neighbor $y$ of $x$. We define the expansion of $S$ to be $$\Phi_G(S) = \Pr_{y \in G(S)}[y \notin S].$$
	For $\delta \in (0, 1)$, we define $\Phi_G(\delta) = \min_{S \subseteq V : \mu(S) \leq \delta}\Phi_G(S)$.
	We identify $G$ with its normalized adjacency matrix. For every $\lambda \in [-1, 1]$, we denote by $V_{\geq \lambda}(G)$ the subspace spanned by the eigenvectors of $G$ with eigenvalue at least $\lambda$. The projector into this subspace is denoted $P_{\geq \lambda}(G)$. For a distribution $D$, we let $\cp(D)$ denote the collision probability of $D$ (the probability that two independent samples from $D$ are identical). The Small Set Expansion Hypothesis, posed by Raghavendra and Steurer~\cite{RS10} states the following.

	\begin{hypothesis}
		For any $\eps > 0$, there exists $\delta > 0$ such that it is NP-hard to decide whether $\Phi_G (\delta) \leq \eps$ or $\Phi_G (\delta) \geq 1-\eps$.
	\end{hypothesis}
	This implies strong hardness results for various graph problems such as Uniform Sparsest Cut~\cite{RST12} and Bipartite Clique~\cite{Manurangsi18}.
	The main theorem of this subsection is the following, which corresponds to Theorem 2.4 of~\cite{BBHKSZ12}.

	\begin{thm}
		For every regular graph $G, \lambda \in (0, 1),$ and $q \in (2, \infty)$,
		\begin{enumerate}
			\item For all $\delta > 0, \eps > 0$, $\enorm{2}{q}{P_{\geq \lambda}(G)} \leq \eps/\delta^{(q-2)/2q}$ implies that $\Phi_G(\delta) \geq 1 - \lambda - \eps^2$.
			\item There is a constant $a = a(q)$ such that for all $\delta > 0$, $\Phi_G(\delta) > 1 - a \lambda^{2q}$ implies $$\enorm{2}{q}{P_{\geq \lambda}(G)} \leq 2/\sqrt{\delta}.$$
		\end{enumerate}
		\label{thm:bbh:main}
	\end{thm}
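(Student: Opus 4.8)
The plan is to prove the two parts of Theorem~\ref{thm:bbh:main} by contraposition, in each case relating the operator norm $\enorm{2}{q}{P_{\geq\lambda}(G)}$ (taken, as throughout this subsection, with respect to the expectation norms $\enorm{p}{\cdot}$) to the small-set-expansion profile $\Phi_G$ via the normalized Rayleigh quotient of $G$. The two directions are of quite different character: part (1) is a short spectral calculation, while part (2) is the delicate ``a large $2\to q$ norm of the top-eigenspace projector yields a small non-expanding set'' argument of Barak et al.~\cite{BBHKSZ12}.

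\textbf{Part (1).} Suppose $\Phi_G(\delta) < 1-\lambda-\eps^2$, and fix a witnessing set $S$ with $\mu(S)\leq\delta$ and $\Phi_G(S) < 1-\lambda-\eps^2$. With $f := 1_S$ and the expectation inner product, the definition of $\Phi_G$ gives $\langle f, Gf\rangle / \langle f,f\rangle = 1-\Phi_G(S) > \lambda+\eps^2$. Decomposing $f = P_{\geq\lambda}(G)f + P_{<\lambda}(G)f$ into its orthogonal eigenspace components and bounding the Rayleigh quotients on the two components by $1$ and $\lambda$ respectively, a one-line rearrangement of $\langle f, Gf\rangle > (\lambda+\eps^2)\langle f,f\rangle$ yields $\enorm{2}{P_{\geq\lambda}(G)f} > \eps\,\enorm{2}{f}$. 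Since $P_{\geq\lambda}(G)$ is self-adjoint, $\enorm{2}{q}{P_{\geq\lambda}(G)}$ equals the operator norm of $P_{\geq\lambda}(G)$ from $L_{q^{*}}$ to $L_2$, where $q^{*} := q/(q-1)$; testing this norm on $f$ and using $\enorm{2}{f} = \mu(S)^{1/2}$, $\enorm{q^{*}}{f} = \mu(S)^{1/q^{*}}$, I get
\[ \enorm{2}{q}{P_{\geq\lambda}(G)} \;\geq\; \frac{\enorm{2}{P_{\geq\lambda}(G)f}}{\enorm{q^{*}}{f}} \;>\; \eps\,\mu(S)^{1/2 - 1/q^{*}} \;=\; \eps\,\mu(S)^{-(q-2)/(2q)} \;\geq\; \eps/\delta^{(q-2)/(2q)}, \]
contradicting the hypothesis $\enorm{2}{q}{P_{\geq\lambda}(G)} \leq \eps/\delta^{(q-2)/(2q)}$. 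Integrality of $q$ plays no role here.

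\textbf{Part (2).} Suppose $\enorm{2}{q}{P_{\geq\lambda}(G)} > 2/\sqrt{\delta}$; I must produce a set $S$ with $\mu(S)\leq\delta$ and $\Phi_G(S)\leq 1 - a\lambda^{2q}$. Taking a near-optimal test vector and replacing it by its projection onto $V_{\geq\lambda}(G)$ --- which only increases the ratio $\enorm{q}{P_{\geq\lambda}(G)x}/\enorm{2}{x}$, since $P_{\geq\lambda}(G)$ is idempotent and a contraction in $L_2$ --- I may assume $x\in V_{\geq\lambda}(G)$ with $\enorm{2}{x}=1$ and $\enorm{q}{x} > 2/\sqrt{\delta}$. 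Membership in $V_{\geq\lambda}(G)$ gives $\langle x, G^{t}x\rangle \geq \lambda^{t}$ for every $t\geq1$, i.e.\ the random walk contracts $x$ only mildly. I would then take $S$ to be a super-level set $\{\,i : |x_i| \geq \theta\,\}$ of $x$: because $x$ is peaky ($\enorm{q}{x}$ large while $\enorm{2}{x}=1$), choosing $\theta$ suitably makes $\mu(S)\leq\delta$ via a moment/Markov estimate on the large coordinates of $x$, while the mild contraction of $x$ under $G$ forces most of the walk-mass leaving the coordinates of $S$ to land back in $S$, bounding $\Phi_G(S)$ by $1 - a\lambda^{2q}$ for an appropriate constant $a=a(q)$. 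The only place where \cite{BBHKSZ12} uses that $q$ is an even integer is in manipulating monomials such as $x^{q/2}$ when comparing $\langle 1_S, G 1_S\rangle$ with Rayleigh-quotient quantities built from $x$; for general real $q\in(2,\infty)$ I would replace those polynomial identities by H\"older's inequality and log-convexity of the $L_p$-norms, which is all those estimates actually require.

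\textbf{Main obstacle.} The crux is part (2): choosing the threshold $\theta$ and pushing the constants through the level-set argument so that the extracted $S$ simultaneously satisfies $\mu(S)\leq\delta$ and $\Phi_G(S)\leq1-a\lambda^{2q}$ with the correct power of $\lambda$, and replacing the even-$q$ monomial computations of \cite{BBHKSZ12} by norm inequalities valid for all $q\in(2,\infty)$. Part (1), by contrast, reduces to self-adjointness of $P_{\geq\lambda}(G)$ together with the duality between its $L_2\to L_q$ and $L_{q^{*}}\to L_2$ operator norms, and requires no new ideas.
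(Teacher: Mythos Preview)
Your Part~(1) is correct and is exactly the paper's argument (their Lemma~\ref{thm:bbh:first}): contrapose, decompose the indicator of a non-expanding set into its $V_{\geq\lambda}$ and $V_{<\lambda}$ components, bound the Rayleigh quotient, and use the $L_{q^*}\to L_2$ / $L_2\to L_q$ duality for the self-adjoint projector.

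Your Part~(2), however, is not a proof sketch of the paper's argument, and as written it has a genuine gap. You propose to take a peaky $x\in V_{\geq\lambda}$, pick a super-level set $S$, and argue directly that $\langle x, G^t x\rangle\geq\lambda^t$ forces $\Phi_G(S)\leq 1-a\lambda^{2q}$. But the Rayleigh quotient of $x$ says nothing direct about the expansion of a level set of $x$; bridging that gap is precisely the hard content of the lemma, and your ``most of the walk-mass lands back in $S$'' is a restatement of the conclusion, not an argument for it. The paper's route (their Lemma~\ref{lem:bbh:second}) is structurally different: it never directly bounds $\Phi_G(S)$ for a level set of the maximizer $f$. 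Instead it (i)~invokes Steurer's \emph{local Cheeger bound} to convert the hypothesis $\Phi_G(\delta)>1-a\lambda^{2q}$ into a collision-probability bound $\cp(G(S))\leq 1/(e|S|)$ for every small $S$; (ii)~defines $g$ with $Gg=f$ and argues that the collision-probability bound, applied to a dyadic level-set decomposition $U_0,\ldots,U_I$ of $\{|f|\geq 1/\sqrt{\delta}\}$, forces each $U_i$ to spread out to a much larger set $T_i$ on which $g$ retains large $L_2$ mass (Claim~\ref{claim:bbh:collision}); and (iii)~assembles these pieces into a lower bound $\enorm{q}{g}\geq 5\enorm{q}{f}/\lambda$, contradicting that $f$ maximizes $\enorm{q}{\cdot}/\enorm{2}{\cdot}$ on $V_{\geq\lambda}$. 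The exponent $2q$ on $\lambda$ comes out of Steurer's bound, which you never invoke; without it there is no mechanism in your sketch producing that specific power.

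Your remark about where \cite{BBHKSZ12} uses evenness of $q$ is also off: in this lemma they do not compare $\langle 1_S, G 1_S\rangle$ with polynomial quantities in $x$, and the modifications the paper makes for general $q\in(2,\infty)$ are in the constants of the dyadic decomposition and in using $\E[|g_j|^q]\geq(\E[g_j^2])^{q/2}$ (valid for all $q>2$), not in replacing monomial identities by H\"older.
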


	Given this theorem, the hardness of $2 \to q$ norm can be proved as follows. This corresponds to Corollary 8.1 of~\cite{BBHKSZ12}.
	\begin{proof}[Proof of Theorem~\ref{thm:hardness:SSE}]
		Using~\cite{RST10}, the Small Set Expansion Hypothesis implies that
		for any sufficiently small numbers $0 < \delta \leq \delta'$,
		there is no polynomial time algorithm that can distinguish between the following cases for a given graph $G$:
		\begin{itemize}
			\item Yes case: $\Phi_G(\delta) < 0.1$.
			\item No case: $\Phi_G(\delta') > 1 - 2^{-a' \log(1/\delta')}$. ($a'$ is a fixed universal constant.)
		\end{itemize}
		In particular, for all $\eta > 0$, if we let $\delta' = \delta^{(q-2)/8q}$ and make $\delta$ small enough, then in the No case $\Phi_G(\delta^{(q-2)/8q}) > 1 - \eta$.
		(Since $q > 2$, $\delta' \to 0$ as $\delta \to 0$.)

		Using Theorem~\ref{thm:bbh:main}, in the Yes case we know $\enorm{2}{q}{P_{\geq 1/2}} \geq 1 / (10\delta^{(q-2)/2q})$,
		while in the No case, if we choose $\delta$ sufficiently small so that $\eta$ is smaller than $a (1/2)^{2q}$, then we know that $\enorm{2}{q}{P_{\geq 1/2}} \leq 2/\sqrt{\delta'}= 2 / \delta^{(q-2)/4q}$.
		The gap between the Yes case and the No case is at least $\delta^{-(q-2) / 4q} / 20$, which goes to $\infty$ as $\delta$ decreases.
	\end{proof}

	We now prove Theorem~\ref{thm:bbh:main}. The first part that proves small set expansion of $G$ given a $2 \to q$ norm bound indeed follows from older work (e.g.,~\cite{KV05}).
	\begin{lem}[Lemma B.1 of~\cite{BBHKSZ12}]
		For all $\delta > 0, \eps > 0$, $\enorm{2}{q}{P_{\geq \lambda}(G)} \leq \eps/\delta^{(q-2)/2q}$ implies that $\Phi_G(\delta) \geq 1 - \lambda - \eps^2$
		\label{thm:bbh:first}
	\end{lem}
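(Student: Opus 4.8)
The plan is to prove the stronger local statement: for every $S \subseteq V$ with $\mu(S) \le \delta$ one has $\Phi_G(S) \ge 1 - \lambda - \eps^2$; minimizing over all such $S$ then gives $\Phi_G(\delta) \ge 1 - \lambda - \eps^2$ by definition of $\Phi_G(\delta)$. Throughout, all norms are the expectation $p$-norms of this subsection, $\langle x, y \rangle := \E_i[x_i y_i]$ the associated inner product, and $G$ the symmetric normalized adjacency matrix, so that $\langle \mathbf{1}_S, G \mathbf{1}_S \rangle = \mu(S)(1 - \Phi_G(S))$ and $\|\mathbf{1}_S\|_2^2 = \|\mathbf{1}_S\|_q^q = \mu(S)$, where $\mathbf{1}_S$ is the indicator of $S$.

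First I would decompose $\mathbf{1}_S = f + g$ with $f := P_{\geq \lambda}(G)\,\mathbf{1}_S$ and $g := \mathbf{1}_S - f$. Since $f$ and $g$ lie in complementary $G$-invariant subspaces, the cross terms vanish and $\langle \mathbf{1}_S, G\mathbf{1}_S\rangle = \langle f, Gf\rangle + \langle g, Gg\rangle \le \|f\|_2^2 + \lambda\|g\|_2^2 \le \|f\|_2^2 + \lambda\,\mu(S)$, using that the eigenvalues of $G$ on the range of $f$ are at most $1$, on the range of $g$ below $\lambda$, and that $\|g\|_2 \le \|\mathbf{1}_S\|_2$. Combining this with $\langle \mathbf{1}_S, G\mathbf{1}_S\rangle = \mu(S)(1-\Phi_G(S))$ yields the Rayleigh-type lower bound $\|f\|_2^2 \ge \mu(S)\,(1 - \Phi_G(S) - \lambda)$. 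If the right-hand side is nonpositive we are already done, so set $t := 1 - \Phi_G(S) - \lambda > 0$.

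The heart of the argument is the matching upper bound on $\|f\|_2^2$ coming from the operator-norm hypothesis, and the one point requiring care is that one must apply the operator norm to $f$ rather than to $\mathbf{1}_S$. Since $P_{\geq\lambda}(G)$ is self-adjoint and idempotent, $\|f\|_2^2 = \langle f, \mathbf{1}_S\rangle$, and H\"older's inequality with exponents $q$ and $q^* := q/(q-1)$ gives $\langle f, \mathbf{1}_S\rangle \le \|f\|_q\,\|\mathbf{1}_S\|_{q^*} = \|f\|_q\,\mu(S)^{(q-1)/q}$; and because $f$ already lies in the image of the projector, $P_{\geq\lambda}(G)f = f$, so the definition of the $2 \to q$ operator norm gives $\|f\|_q \le \|P_{\geq\lambda}(G)\|_{2\to q}\,\|f\|_2$. (Bounding $\|f\|_q$ by $\|P_{\geq\lambda}(G)\|_{2\to q}\|\mathbf{1}_S\|_2$ instead would lose a factor and only give $\Phi_G(S) \ge 1-\lambda-\eps$; the self-improvement $P_{\geq\lambda}(G)f = f$ is exactly what produces the $\eps^2$.) Chaining the two estimates, $\|f\|_2^2 \le \|P_{\geq\lambda}(G)\|_{2\to q}\,\mu(S)^{(q-1)/q}\,\|f\|_2$, hence $\|f\|_2^2 \le \|P_{\geq\lambda}(G)\|_{2\to q}^2\,\mu(S)^{2(q-1)/q}$.

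Finally I would combine the lower and upper bounds on $\|f\|_2^2$ and cancel one factor of $\mu(S)$ to get $t \le \|P_{\geq\lambda}(G)\|_{2\to q}^2\,\mu(S)^{(q-2)/q}$. Since $q > 2$ the exponent $(q-2)/q$ is positive, so $\mu(S) \le \delta$ together with the hypothesis $\|P_{\geq\lambda}(G)\|_{2\to q} \le \eps/\delta^{(q-2)/2q}$ gives $t \le \big(\eps/\delta^{(q-2)/2q}\big)^2\,\delta^{(q-2)/q} = \eps^2$, i.e.\ $\Phi_G(S) \ge 1 - \lambda - \eps^2$, as needed. The proof is otherwise just a routine combination of H\"older's inequality and the spectral decomposition of $G$; the only genuine content is the placement of the operator norm and keeping the exponents $(q-1)/q$ and $(q-2)/q$ straight, so I do not expect a real obstacle. (As remarked in the text, this is the ``easy direction'' and corresponds to part~1 of Theorem~\ref{thm:bbh:main}.)
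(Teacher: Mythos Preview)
Your proof is correct and follows essentially the same route as the paper's: decompose the indicator into its $V_{\ge\lambda}$ and $V_{<\lambda}$ components, bound the Rayleigh quotient by $\|f\|_2^2+\lambda\mu(S)$, and control $\|f\|_2$ via the $2\to q$ norm hypothesis together with $\|\mathbf 1_S\|_{q^*}=\mu(S)^{(q-1)/q}$. The only cosmetic difference is that the paper bounds $\|P_{\ge\lambda}\mathbf 1_S\|_2$ directly via the operator-norm duality $\|P_{\ge\lambda}\|_{q^*\to 2}=\|P_{\ge\lambda}\|_{2\to q}$ (using that $P_{\ge\lambda}$ is self-adjoint), whereas you obtain the identical inequality from H\"older plus the idempotence $P_{\ge\lambda}f=f$; the resulting bound and the arithmetic are the same.
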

	\begin{proof}
		Let $q^* = q/(q-1)$ be the \Holder conjugate of $q$ such that $1/q + 1/q^* = 1$.
		Since $P_{\geq \lambda}$ is a projector,
		\[
		\enorm{q^*}{2}{P_{\geq \lambda}(G)} = \enorm{q^*}{2}{P_{\geq \lambda}(G)^T} = \enorm{2}{q}{P_{\geq \lambda}(G)}.
		\]

		Given $S \subseteq V$ with $\mu(S) = \mu \leq \delta$, let $f = 1_S / \sqrt{\mu}$ be the normalized indicator vector of $S$ so that $\enorm{2}{f} = 1$.
		Let $f = f' + f''$ where $f'$ is its projection to the eigenvalues at least $\lambda$ (i.e., $f' = P_{\geq \lambda}f$) and $f''$ is its projection to the eigenvalues strictly less than $\lambda$.
		Since $\enorm{q^*}{1_S} = \mu^{1/q^*} = \mu^{(q-1)/q}$,
		we have $\enorm{q^*}{f} = \mu^{((q-1)/q) - 1/2} \leq \delta^{((q-1)/q) - 1/2}$ (since $q > 2$ and $\delta \geq \mu$), and
		\[
		\enorm{2}{f'} \leq \enorm{q^*}{f} \cdot \enorm{q^* \to 2}{P_{\geq \lambda}(G)} \leq \delta^{((q-1)/q) - 1/2} \cdot (\eps / \delta^{(q-2)/2q}) = \eps.
		\]
		Then
		\[
		\langle f, Gf \rangle =
		\langle f', Gf' \rangle +
		\langle f'', Gf'' \rangle
		\leq \enorm{2}{f'}^2 + \lambda \enorm{2}{f''}^2
		\leq \eps^2  + \lambda.
		\]
		Since $\Phi_G(S) = 1 - \langle f, Gf \rangle$, the lemma follows.
	\end{proof}

	The second part of Theorem~\ref{thm:bbh:main} requires more technical proofs.
	\begin{lem}[Lemma 8.2 of~\cite{BBHKSZ12}]
		There is a constant $a = a(q)$ such that for all $\delta > 0$, $\Phi_G(\delta) > 1 - a \lambda^{2q}$ implies $\enorm{2}{q}{P_{\geq \lambda}(G)} \geq 2/\sqrt{\delta}$.
		\label{lem:bbh:second}
	\end{lem}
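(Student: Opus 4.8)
The plan is to prove the contrapositive, in the form it is actually used inside Theorem~\ref{thm:bbh:main}(2): if $\enorm{2}{q}{P_{\geq\lambda}(G)} > 2/\sqrt{\delta}$, then $G$ has a subset $S$ with $\mu(S) \le \delta$ and $\Phi_G(S) \le 1 - a\lambda^{2q}$, hence $\Phi_G(\delta) \le 1 - a\lambda^{2q}$; equivalently, $\Phi_G(\delta) > 1 - a\lambda^{2q}$ forces $\enorm{2}{q}{P_{\geq\lambda}(G)} \le 2/\sqrt{\delta}$, which is the bound invoked in the proof of Theorem~\ref{thm:hardness:SSE}. First I would extract a witness: by definition of the norm there is $f$ with $\enorm{2}{f} = 1$ and $\enorm{q}{P_{\geq\lambda}f} > 2/\sqrt{\delta}$, and replacing $f$ by $P_{\geq\lambda}f/\enorm{2}{P_{\geq\lambda}f}$ only improves the ratio, so I may assume $f \in V_{\geq\lambda}(G)$ and $\enorm{q}{f} > 2/\sqrt{\delta}$. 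Since $f$ lies in the span of eigenvectors of $G$ with eigenvalue $\ge \lambda$, we get $\langle f, Gf\rangle_\mu \ge \lambda\enorm{2}{f}^2 = \lambda$. Finally, replacing $f$ by $|f|$ preserves $\enorm{2}{f}$ and $\enorm{q}{f}$ and, since $G$ has non-negative entries, can only increase $\langle f, Gf\rangle_\mu$, so I may also assume $f \ge 0$. The two facts to exploit are then that $f$ has a non-trivial Rayleigh quotient ($\langle f, Gf\rangle_\mu \ge \lambda$) and that, relative to its unit $2$-norm, its $q$-norm is large, so $f$ is ``spiky'' --- most of its $q$-th moment is carried by coordinates of large value, which necessarily form a set of small measure.

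The core step is a dyadic rounding of $f$ to a single non-expanding set of measure at most $\delta$. Partition the coordinates into buckets $B_t := \{ i : 2^t \le f_i < 2^{t+1}\}$, $t \in \mathbb{Z}$ (the zero coordinates are irrelevant). Writing $\langle f, Gf\rangle_\mu = \E_\pi[f_i f_j]$ over the two-step walk distribution $\pi$ (pick $i$ uniformly, then $j$ a random neighbour), I would first discard ``far'' bucket pairs: if $|t-s| > K$ then $f_i f_j \le 2^{1-K}\max(f_i,f_j)^2$, and since each of $i,j$ is marginally uniform under $\pi$, the far part contributes at most $O(2^{-K}) \le \lambda/2$ for $K = O(\log(1/\lambda))$. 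Hence the ``near band'' $|t-s| \le K$ carries Rayleigh-quotient mass $\ge \lambda/2$; averaging over the $O(\log(1/\lambda))$ admissible offsets and then over $t$, and using Cauchy--Schwarz to pass from the weighted count $\E_\pi[f_i f_j \mathbf{1}[i\in B_t,\, j \in B_s]]$ to $\langle \mathbf{1}_{B_{t^*}}, G\mathbf{1}_{B_{t^*}}\rangle_\mu$, yields a single bucket $S := B_{t^*}$ with $\langle \mathbf{1}_S, G\mathbf{1}_S\rangle_\mu \ge \lambda^{O(1)}\,\mu(S)$. Smallness of $\mu(S)$ comes from spikiness: since $\sum_i f_i^2 = d$, every level set $\{f \ge \theta\}$ has measure $\le \theta^{-2}$, while coordinates with $f_i$ below a threshold $\theta \asymp (\enorm{q}{f}^q)^{1/(q-2)}$ contribute at most $\tfrac12 \E_i f^q$; so the buckets carrying the bulk of the $q$-th moment --- from which $t^*$ can be chosen --- all satisfy $\mu(B_t) \le \theta^{-2} \le \delta$ once $\enorm{q}{f} > 2/\sqrt{\delta}$, by a short computation with the exponent $q/(q-2)$. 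Together, $S$ has $\mu(S)\le\delta$ and $\Phi_G(S) = 1 - \langle \mathbf{1}_S, G\mathbf{1}_S\rangle_\mu/\mu(S) \le 1 - a\lambda^{2q}$ for a suitable $a = a(q) > 0$, which is the desired conclusion.

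One remark on scope: nothing in this argument needs $q$ to be an even integer. The bucketing, the tail truncation, and the moment comparison (log-convexity of $L_p$-norms, i.e.\ H\"older, to control $\enorm{q/2}{f}$ against $\enorm{2}{f}$ and $\enorm{q}{f}$) all go through for every $q \in (2,\infty)$; this is the single place where the generalization beyond the integer case of~\cite{BBHKSZ12} needs to be checked. Combined with Lemma~\ref{thm:bbh:first} (the complementary direction, already established in the excerpt), this completes both halves of Theorem~\ref{thm:bbh:main}.

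The step I expect to be the main obstacle is the double pigeonhole that couples expansion with measure: one must choose the band width $K$, the offset, and the bucket index $t^*$ so that the Rayleigh-quotient loss is a fixed power of $\lambda$ --- the two Cauchy--Schwarz steps together with the band truncation producing the eventual $\lambda^{2q}$, with the dependence on $q$ entering through how the level-set threshold $\theta$ scales with $q$ --- \emph{while simultaneously} forcing $\mu(B_{t^*}) \le \delta$. Extracting a clean constant $a = a(q)$ and the precise exponent $2q$ is the delicate part; the remaining estimates (tail truncation, Cauchy--Schwarz, and the level-set measure bound) are routine.
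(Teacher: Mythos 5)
Your reduction throws away the one hypothesis that makes the lemma true. After extracting a witness $f\in V_{\geq\lambda}(G)$ with $\enorm{2}{f}=1$ and $\enorm{q}{f}>2/\sqrt{\delta}$ and replacing $f$ by $|f|$, you explicitly retain only three facts: unit $2$-norm, large $q$-norm, and $\langle f,Gf\rangle\geq\lambda$. These are strictly weaker than membership in $V_{\geq\lambda}(G)$, and with only these facts the conclusion is false. Concretely, let $G$ be a random $d$-regular graph with $d$ large, fix $\lambda=1/2$, and take $f=\sqrt{1-\eps^2}\,\mathbf{1}+\eps\sqrt{n}\,e_x$ for a single vertex $x$ and a small constant $\eps$: then $f\geq 0$, $\enorm{2}{f}\approx 1$, $\langle f,Gf\rangle\approx 1-\eps^2\geq\lambda$, and $\enorm{q}{f}^q\gtrsim \eps^q n^{q/2-1}\to\infty$ since $q>2$, while for $\delta$ and $1/\sqrt{d}$ small compared to $a\lambda^{2q}$ every set of measure at most $\delta$ has expansion exceeding $1-a\lambda^{2q}$ (expander mixing), so no small non-expanding set exists for any rounding scheme to find. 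This is exactly where your dyadic bucketing breaks: the near-band Rayleigh mass can be carried entirely by large-measure buckets where $f$ is small (the constant part), while the buckets carrying the bulk of the $q$-th moment (the spike) can be perfectly expanding; nothing in the argument couples the bucket $t^*$ chosen for its Rayleigh quotient to the buckets of measure at most $\delta$, and in the example above no choice of band width, offset, and $t^*$ can succeed. The "delicate double pigeonhole" you flag is thus not a technical refinement but a genuinely missing ingredient, and it cannot be supplied from the retained hypotheses.

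The paper's proof uses the subspace structure in an essentially different way, and some version of that input is unavoidable. There $f$ is the \emph{maximizer} of $\enorm{q}{f}/\enorm{2}{f}$ over $V_{\geq\lambda}$; one writes $f=Gg$ with $g=\sum_i(\alpha_i/\lambda_i)\chi_i\in V_{\geq\lambda}$ and $\enorm{2}{g}\leq\enorm{2}{f}/\lambda$, and the expansion hypothesis enters only through the variant local Cheeger bound of Steurer~\cite{Steurer10}: $\Phi_G(\delta)>1-a\lambda^{2q}$ forces $\cp(G(S))\leq 1/(e|S|)$ for every $S$ with $\mu(S)\leq\delta$, where $e=d(c,q)\lambda^{-q}$. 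A dyadic decomposition of $\{|f|\geq 1/\sqrt{\delta}\}$ (which has measure at most $\delta$ and carries at least half of $\enorm{q}{f}^q$), combined with this collision-probability bound and the superadditivity of $t\mapsto t^q$ for nonnegative summands, yields $\enorm{q}{g}\geq 5\enorm{q}{f}/\lambda$, contradicting the maximality of $f$; this is where the exponent $2q$ and the constant $a(q)$ actually come from, and your plan neither uses this bound nor replaces it with anything of comparable strength. (Your reading of the statement itself is the right one: as printed the lemma says ``$\geq 2/\sqrt{\delta}$'', but the direction proved and used in Theorems~\ref{thm:bbh:main} and~\ref{thm:hardness:SSE} is the upper bound $\enorm{2}{q}{P_{\geq\lambda}(G)}\leq 2/\sqrt{\delta}$, i.e.\ the contrapositive you set out to prove.)
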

	\begin{proof}
		Let $f$ be a function in $V_{\geq \lambda}$ with $\enorm{2}{f} = 1$ that maximizes $\enorm{q}{f}$.
		We write $f = \sum_{i =1}^m \alpha_i \chi_i$ where $\chi_1, \dots, \chi_m$ denote the eigenfunctions of $G$ with values $\lambda_1, \dots, \lambda_m$ that are at least $\lambda$.
		Assume towards contradiction that $\enorm{q}{f} < 2 / \sqrt{\delta}$. We will prove that $g = \sum_{i =1}^m (\alpha_i / \lambda_i) \chi_i$ satisfies $\enorm{q}{g} \geq 5\enorm{q}{f} / \lambda$.
		Note that $g$ is defined such that $f = Gg$.
		This is a contradiction since (using $\lambda_i \in [\lambda, 1]$) $\enorm{2}{g} \leq \enorm{2}{f} / \lambda$, and we assumed $f$ is a function in $V_{\geq \lambda}$ with a maximal ratio $\enorm{q}{f} / \enorm{2}{f}$.

		Let $U \subseteq V$ be the set of vertices such that $|f(x)| \geq 1/\sqrt{\delta}$ for all $x \in U$. Using the Markov inequality and the fact that $\E_{x \in V}[f(x)^2] = 1$, we know that $\mu(U) = |U|/|V| \leq \delta$.
		%, meaning that under our assumptions any subset $S \subseteq U$ satisfies $\cp(G(S)) \leq 1 / (e|S|)$.
		On the other hand, because $\enorm{q}{f}^q \geq 2^q / \delta^{q/2}$, we know that $U$ contributes at least half of the term $\enorm{q}{f}^q = \E_{x \in V}[ |f(x)|^q]$. That is, if we define $\alpha$ to be $\mu(U) \E_{x \in U} [|f(x)|^q]$ then $\alpha \geq \enorm{q}{f}^q / 2$. We will prove the lemma by showing that $\enorm{q}{g}^q \geq (10\lambda^{-1})^q \alpha$.

		Let $c = c(q)$ and $d = d(c, q)$ be sufficiently large constants that will be determined later, and $e = d \cdot \lambda^{-q}$.
		By the variant local Cheeger bound obtained in Theorem 2.1 of~\cite{Steurer10}, there exists $a = a(d, q)$ such that
		$\Phi_G(\delta) > 1 - a \lambda^{2q}$ implies that $\cp(G(S)) \leq 1/(e|S|)$ for all $S$ with $\mu(S) \leq \delta$.

		We define $U_i$ to be the set $\{ x \in U : f(x) \in [c^i / \sqrt{\delta}, c^{i+1}/\sqrt{\delta}] \}$, and let $I$ be the maximal $i$ such that $U_i$ is non-empty. Thus, the sets $U_0, \dots, U_I$ form a partition of $U$ (where some of these sets may be empty). We let $\alpha_i$ be the contribution of $U_i$ to $\alpha$. That is, $\alpha_i = \mu_i \E_{x \in U_i} [|f(x)|^q]$, where $\mu = \mu(U_i)$. Note that $\alpha = \alpha_0 + \dots + \alpha_I$. We will show that there are some indices $i_1, \dots, i_J$ such that
		\begin{enumerate}
			\item $\alpha_{i_1} + \dots + \alpha_{i_J} \geq \alpha / (2c^{q})$.
			\item For all $j \in [J]$, there is a non-negative function $g_j : V \to \R$ such that $\E_{x \in V} [|g_j(x)|^q] \geq e \alpha_{i_j} / (10c^2)^{q/2}$.
			\item For every $x \in V$, $g_1(x) + \dots + g_J(x) \leq |g(x)|$.
		\end{enumerate}
		Showing these will complete the proof, since it is easy to see that for non-negative functions $g', g''$ and $q \in [1, \infty)$
		\[
		\E[(g'(x) + g''(x))^q] \geq \E[g'(x)^q] + \E[g''(x)^q],
		\]
		and hence 2. and 3. imply that
		\begin{equation}
			\enorm{q}{g}^q = \E[|g(x)|^q] \geq (e / (10c^2)^{q/2}) \sum_j \alpha_{i_j}.
			\label{eq:bbh:1}
		\end{equation}
		Using 1., we conclude that for $e \geq 2c^{q} \cdot (10c^2)^{q/2} \cdot (10 / \lambda)^q$, the right-hand side of~\eqref{eq:bbh:1} will be larger than $(10 / \lambda)^q \alpha$.
		In particular, we set $d = d(c, q) = 2c^{q} \cdot (10c^2)^{q/2} \cdot 10^q$.

		We find the indices $i_1, \dots, i_J$ iteratively. We let $\calI$ be initially the set $\{0, ..., I \}$ of all indices. For $j = 1, 2, \dots,$ we do the following as long as $\calI$ is not empty:
		\begin{itemize}
			\item Let $i_j$ be the largest index in $\calI$.
			\item Remove from $\calI$ every index $i$ such that $\alpha_i \leq c^{q} \alpha_{i_j} / 2^{i - i_j}$.
		\end{itemize}

		We let $J$ denote the step we stop. Note that our indices $i_1, \dots, i_J$ are sorted in descending order. For every step $j$, the total of the $\alpha_i$'s for all indices we removed is less than $c^{q} \alpha_{i_j}$ and hence we satisfy 1. We use the following claim, whose proof is omitted here since it does not involve $q$ at all.
		This follows from the fact that $\cp(G(S)) \leq 1/(e|S|)$ for all $S$ with $\mu(S) \leq \delta$.
		\begin{claim}[Claim 8.3 of~\cite{BBHKSZ12}]
			Let $S \subseteq V$ and $\beta > 0$ such that $\mu(S) \leq \delta$ and $|f(x)| \geq \beta$ for all $x \in S$. Then there is a set of size at least $e|S|$ such that $\E_{x \in T} [g(x)^2] \geq \beta^2 / 4$.
			\label{claim:bbh:collision}
		\end{claim}

		We will construct the functions $g_1, \dots, g_J$ by applying iteratively Claim~\ref{claim:bbh:collision}.
		We do the following for $j = 1, \dots, J$:
		\begin{enumerate}
			\item let $T_j$ be the set of size $e|U_{i_j}|$ that is obtained by applying Claim~\ref{claim:bbh:collision} to the function $f$ and the set $U_{i_j}$. Note that $\E_{x \in T_j} [g(X)^2] \geq \beta_{i_j}^2 / 4$, where we let $\beta_i = c^i / \sqrt{\delta}$ (and hence for every $x \in U_i$, $\beta_i \leq |f(x)| \leq c \beta_i$).
			\item Let $g'_j$ be the function on input $x$ that outputs $\gamma \cdot |g(x)|$ if $x \in T_j$ and $0$ otherwise, where $\gamma \leq 1$ is a scaling factor that ensures that $\E_{x \in T_j} [g'(x)^2]$ equals exactly $\beta_{i_j}^2 / 4$.
			\item We define $g_j(x) = \max(0, g'_j(x) - \sum_{k < j} g_k(x))$.
		\end{enumerate}

		Note that the second step ensures $g'_j(x) \leq |g(x)|$, while the third step ensures that $g_1(x) + \dots + g_j(x) \leq g'_j(x)$ for all $j$, and in particular $g_1(x) + \dots + g_J(x) \leq |g(x)|$. Hence the only thing left to prove is the following.
		\begin{claim} [Claim 8.5 of~\cite{BBHKSZ12}]
			$\E_{x \in V} [|g_j(x)|^q] \geq e \alpha_{i_j} / (10c^2)^{q/2}$.
		\end{claim}
		\begin{proof}
			Recall that for every $i$, $\alpha_i = \mu_i \E_{x \in U_i} [|f(x)|^q]$, and hence (using $f(x) \in [\beta_i, c\beta_i)$ for $x \in U_i$):
			\begin{equation}
				\mu_i \beta_i^q \leq \alpha_i \leq \mu_i c^q \beta_i^q.
				\label{eq:bbh:2}
			\end{equation}
			Now fix $T = T_j$. Since $\E_{x \in V}[ |g_j(x)|^q ]= \mu(T)\cdot \E_{x \in T} [|g_j(x)|^q]$ and $\mu(T) = e\mu(U_{i_j})$, we can use~\eqref{eq:bbh:2} and
			$\E_{x \in T} [|g_j(x)|^q] \geq (\E_{x \in T} [g_j(x)^2])^{q/2}$ (since $q > 2$), to reduce proving the claim to showing the following:
			\begin{equation}
				\E_{x \in T} [g_j(x)^2] \geq (c \beta_{i_j})^2 / (10c^2) = \beta^2_{i_j} / 10.
				\label{eq:bbh:3}
			\end{equation}

			We know that $\E_{x \in T} [g'_j(x)^2] = \beta_{i_j}^2 / 4$. We claim that~\eqref{eq:bbh:3} will follow by showing that for every $k < j$,
			\begin{equation}
				\E_{x \in T}[ g'_k(x)^2 ]\leq 100^{-i'} \cdot \beta_{i_j}^2 / 4,
				\label{eq:bbh:4}
			\end{equation}
			where $i' = i_k - i_j$. (Note that $i' > 0$ since in our construction the indices $i_1, \dots, i_J$ are sorted in descending order.)

			Indeed,~\eqref{eq:bbh:4} means that if we let momentarily $\enorm{2}{g_j}$ denote $\sqrt{\E_{x \in T} [g_j(x)^t]}$ then
			\begin{equation}
				\enorm{2}{g_j} \geq
				\enorm{2}{g_j'} - \enorm{2}{\sum_{k < j} g_k} \geq
				\enorm{2}{g_j'} - \sum_{k < j} \enorm{2}{ g_k} \geq
				\enorm{2}{g'_j} (1 - \sum_{i' = 1}^{\infty} 10^{-i'})
				\geq 0.8 \enorm{2}{g'_j}.
				\label{eq:bbh:5}
			\end{equation}
			The first inequality holds we can write $g_j$ as $g'_j - h_j$, where $h_j = \min(g'_j, \sum_{k < j} g_k)$.
			Then, on the other hand, $\enorm{2}{g_j} \geq \enorm{2}{g'_j} - \enorm{2}{h_j}$, and on the other hand,
			$\enorm{2}{h_j} \leq \enorm{2}{\sum_{k < j} g_k}$ since $g'_j \geq 0$.
			The second inequality holds because $\enorm{2}{g_k} \leq \enorm{2}{g'_k}$. By squaring~\eqref{eq:bbh:5} and plugging in the value of $\enorm{2}{g'_j}^2$ we get~\eqref{eq:bbh:3}.
			\begin{proof}[Proof of~\eqref{eq:bbh:4}]
				By our construction, it must hold that
				\begin{equation}
					c^{q}\alpha_{i_k} / 2^{i'} \leq \alpha_{i_j},
					\label{eq:bbh:6}
				\end{equation}
				since otherwise the index $i_j$ would have been removed from the $\calI$ at the $k$th step. Since $\beta_{i_k} = \beta_{i_j} c^{i'}$, we can plug~\eqref{eq:bbh:2} in~\eqref{eq:bbh:6} to get
				\[
				\mu_{i_k} c^{q + qi'} / 2^{i'} \leq c^q \mu_{i_j}
				\]
				or
				\[
				\mu_{i_k} \leq \mu_{i_j} \cdot  2^{i'} \cdot c^{- qi'}.
				\]

				Since $|T_i| = e|U_i|$ for all $i$, it follows that $|T_k| / |T| \leq 2^{i'} \cdot c^{- qi'}$. On the other hand, we know that
				$\E_{x \in T_k} [g'_k(x)^2] = \beta^2_{i_k} / 4 = c^{2i'} \beta^2_{i_j} /4$. Thus,
				\[
				\E_{x \in T}[ g'_k(x)^2 ]\leq 2^{i'} c^{2i' - qi'} \beta^2_{i_j} /4 = (2 / c^{q-2})^{i'} \beta^2_{i_j} / 4,
				\]
				and we now just choose $c$ sufficiently large so that $2 / c^{q-2} > 100$.
			\end{proof}
		\end{proof}
	\end{proof}

	\subsection{Hardness of \texorpdfstring{$\minpq{p^*}{p}{\cdot}$}{minpq}}
	\label{subsec:grsw}
	In this subsection, we prove Theorem~\ref{thm:hardness:NP} that for any $\eps > 0$ and $p \in (2, \infty)$, it is NP-hard to approximate
	$\minpq{p^*}{p}{\cdot}$ within a factor $(\gamma_p - \eps)$, where $\gamma_p = (\E_{g \sim \mathcal{N}(0, 1)}[|g|^p])^{1/p} > 1$ is the absolute $p$th moment of the standard Gaussian.

	Our result is obtained by using the result of Guruswami et al.~\cite{GRSW16} that proved the same hardness of $\minpq{2}{p}{\cdot}$.
	When $\enorm{p}{\cdot}$ denotes expectation $p$-norm defined by $\enorm{p}{x} := \E_{i}[ |x_i|^p ]^{1/p}$, since $p^* < 2$, any $x$ satisfies $\enorm{p^*}{x} \leq \enorm{2}{x}$.
	This implies that for any matrix $A$, the optimal value of $\minpq{p^*}{p}{A}$ is at least the optimal value of $\minpq{2}{p}{A}$.
	We modify the reduction of~\cite{GRSW16} slightly such that in the Yes case, $x$ that minimizes $\minpq{2}{p}{A}$ has either $+1$ or $-1$ in each coordinate.
	This implies $\enorm{2}{x} = \enorm{p^*}{x}$, and certifies that $\minpq{p^*}{p}{A} = \minpq{2}{p}{A}$.
	In the No case, $\minpq{p^*}{p}{A}$ is always at least $\minpq{2}{p}{A}$, so the gap between the Yes case and the No case for $\minpq{p^*}{p}{\cdot}$ is at least as large as the gap for
	$\minpq{2}{p}{\cdot}$.

	Our presentation closely follows the recent work by Bhattiprolu et al.~\cite{BGGLT18}.
	To present the reduction, we introduce standard backgrounds on Fourier analysis and Label Cover problems.

	\paragraph{Fourier Analysis.}
	%\label{sec:fourier}
	We introduce some basic facts about Fourier analysis of Boolean functions.
	Let $R \in \N$ be a positive integer, and consider a function $f : \{ \pm 1 \}^R \to \R$.
	For any subset $S \subseteq [R]$ let $\chi_S := \prod_{i \in S} x_i$.
	Then we can represent $f$ as
	\begin{equation}
		\label{eq:inverse_fourier}
		f(x_1, \dots, x_R) = \sum_{S \subseteq [R]} \hatf(S) \cdot \chi_S(x_1, \dots x_R),
	\end{equation}
	where
	\begin{equation}
		\label{eq:fourier}
		\hatf(S) = \E_{x \in \{ \pm 1 \}^R} [f(x) \cdot \chi_S(x)] \mbox{ for all } S \subseteq [R].
	\end{equation}
	The {\em Fourier transform} refers to a linear operator $F$ that maps $f$ to $\hatf$ as defined as~\eqref{eq:fourier}.
	We interpret $\hatf$ as a $2^R$-dimensional vector whose coordinates are indexed by $S \subseteq [R]$.
	In this subsection, we let $\cnorm{p}{\cdot}$ to denote the counting $p$-norm and $\enorm{p}{\cdot}$ to denote the expectation $p$-norm.
	Endow the expectation norm and the expectation norm to $f$ and $\hatf$ respectively; i.e.,
	\[
	\enorm{p}{f} := \left( \E_{x \in \{ \pm 1 \}^R}{|f(x)|^p} \right)^{1/p}
	\quad \mbox{ and } \quad
	\cnorm{p}{\hatf} := \left( \sum_{S \subseteq [R]} | \hatf(S) |^p \right)^{1/p}.
	\]
	as well as the corresponding inner products $\mysmalldot{f}{g}$ and $\mysmalldot{\hatf}{\hatg}$ consistent with their $2$-norms.
	We also define the {\em inverse Fourier transform} $F^T$ to be a linear operator
	that maps a given $\hatf : 2^R \to \R$ to $f : \{ \pm 1 \}^R \to \R$ defined as in~\eqref{eq:inverse_fourier}.
	We state the following well-known facts from Fourier analysis.
	\iffalse
	\begin{observation} $F$ is an orthogonal transformation; i.e., for any $f, g : \{ \pm 1 \}^R \to \R$,
		\[\eip{f, g} = \cip{F f, F g}.\]
	\end{observation}
	\fi
	\begin{observation} [Parseval's Theorem]
		For any $f : \{ \pm 1 \}^R \to \R$, $\enorm{2}{f} = \cnorm{2}{F f}$.
	\end{observation}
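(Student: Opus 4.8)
The plan is to derive Parseval's identity from the fact that the characters $\{\chi_S\}_{S \subseteq [R]}$ form an orthonormal basis of the space of real functions on $\{\pm 1\}^R$ under the expectation inner product $\mysmalldot{f}{g} = \E_{x \in \{\pm1\}^R}[f(x) g(x)]$. First I would verify orthonormality: for $S, T \subseteq [R]$ one has $\chi_S(x)\chi_T(x) = \chi_{S \triangle T}(x)$ (since $x_i^2 = 1$), and because the coordinates $x_i$ are independent uniform on $\{\pm 1\}$ we get $\E_x[\chi_U(x)] = \prod_{i \in U}\E[x_i] = [U = \emptyset]$. Hence $\mysmalldot{\chi_S}{\chi_T} = [S = T]$. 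Since there are $2^R$ characters and the function space has dimension $2^R$, they form an orthonormal basis.

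Next I would use the expansion~\eqref{eq:inverse_fourier}, namely $f = \sum_{S \subseteq [R]} \hatf(S)\, \chi_S$, which is exactly the expansion of $f$ in this orthonormal basis with coordinates given by~\eqref{eq:fourier}. Then
\[
\enorm{2}{f}^2 = \E_{x}\big[ f(x)^2 \big] = \mysmalldot{f}{f} = \sum_{S, T \subseteq [R]} \hatf(S)\hatf(T)\, \mysmalldot{\chi_S}{\chi_T} = \sum_{S \subseteq [R]} \hatf(S)^2 = \cnorm{2}{\hatf}^2 = \cnorm{2}{F f}^2,
\]
and taking square roots gives $\enorm{2}{f} = \cnorm{2}{F f}$, as claimed.

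There is essentially no obstacle here; this is a routine orthonormal-basis computation. The only point requiring a line of care is the character multiplication rule $\chi_S \chi_T = \chi_{S\triangle T}$ together with the vanishing of $\E[\chi_U]$ for $U \neq \emptyset$, which is where the independence of the coordinates $x_i$ and the symmetry of the $\pm 1$ distribution enter. Everything else is bookkeeping in the $2^R$-dimensional inner product space.
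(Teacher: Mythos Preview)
Your proof is correct and is the standard derivation of Parseval's identity on the Boolean cube. The paper does not supply a proof of this observation at all---it simply records it among ``well-known facts from Fourier analysis''---so there is nothing to compare against; your argument fills in exactly the routine orthonormal-basis computation one would expect.
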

	\begin{observation} $F$ and $F^T$ form an adjoint pair; i.e., for any $f : \{ \pm 1 \}^R \to \R$
		and $\hatg : 2^R \to \R$,
		\[
		\mysmalldot{\hatg}{Ff} =
		\mysmalldot{F^T \hatg}{f}.
		\]
	\end{observation}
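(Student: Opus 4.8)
The plan is to verify the identity by directly expanding both sides using the definitions of $F$, $F^T$, and the two (differently normalized) inner products, and then interchanging a finite sum with a finite average. No auxiliary result beyond bilinearity is needed, so there is essentially no obstacle — the content is purely bookkeeping of normalizations.

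First I would unfold the left-hand side. By the definition of the Fourier transform, $(Ff)(S) = \hatf(S) = \E_{x \in \{\pm 1\}^R}[f(x)\chi_S(x)]$, and the inner product $\mysmalldot{\hatg}{Ff}$ is the counting inner product consistent with $\cnorm{2}{\cdot}$, namely $\sum_{S \subseteq [R]} \hatg(S)\hatf(S)$. Substituting the formula for $\hatf(S)$ gives
\[
\mysmalldot{\hatg}{Ff} = \sum_{S \subseteq [R]} \hatg(S)\, \E_{x \in \{\pm 1\}^R}[f(x)\chi_S(x)].
\]
Since both the sum over $S \subseteq [R]$ and the expectation over $x \in \{\pm 1\}^R$ range over finite sets, I can exchange them and factor out $f(x)$, obtaining
\[
\mysmalldot{\hatg}{Ff} = \E_{x \in \{\pm 1\}^R}\Big[ f(x) \sum_{S \subseteq [R]} \hatg(S)\chi_S(x) \Big].
\]
By the definition of the inverse Fourier transform, $\sum_{S \subseteq [R]} \hatg(S)\chi_S(x) = (F^T\hatg)(x)$, so the bracketed quantity is $f(x)\cdot (F^T\hatg)(x)$, and the expectation equals $\E_{x}[(F^T\hatg)(x)\, f(x)] = \mysmalldot{F^T\hatg}{f}$, the expectation inner product on $\{\pm 1\}^R$. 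This is exactly the right-hand side, which finishes the argument.

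As remarked, there is no genuine difficulty: the only manipulation is the interchange of a finite sum and a finite average, justified by bilinearity. (One could equivalently argue that the claimed identity is bilinear in the pair $(f,\hatg)$ and check it on the bases $\{\chi_T\}_T$ for $f$ and $\{e_S\}_S$ for $\hatg$, where it collapses to the orthonormality relation $\E_x[\chi_S(x)\chi_T(x)] = [S = T]$; or one could deduce it from Parseval's theorem together with the invertibility of $F$. I would include the three-line display computation above as the actual proof, since it is the most self-contained.)
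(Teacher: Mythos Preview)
Your proof is correct. The paper states this observation as one of several ``well-known facts from Fourier analysis'' and gives no proof, so your direct expansion---unfolding the counting inner product on the Fourier side, swapping the finite sum over $S$ with the expectation over $x$, and recognizing the inverse transform---is exactly the standard verification one would supply if asked to fill in the details.
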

	\begin{observation}
		$F^T F$ is the identity operator.
	\end{observation}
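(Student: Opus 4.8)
The plan is to show directly that $F^{T}F$ fixes every $f\colon\{\pm1\}^{R}\to\R$, by unwinding the two definitions and invoking orthonormality of the characters. First I would write out the composition: applying $F$ to $f$ gives the vector $\hatf$ with $\hatf(S)=\E_{x\in\{\pm1\}^{R}}[f(x)\chi_{S}(x)]$ for $S\subseteq[R]$, and then applying $F^{T}$ to $\hatf$ returns, by \eqref{eq:inverse_fourier}, the function $x\mapsto\sum_{S\subseteq[R]}\hatf(S)\chi_{S}(x)$. Thus $(F^{T}Ff)(x)=\sum_{S\subseteq[R]}\mysmalldot{f}{\chi_{S}}\chi_{S}(x)$, and the claim is exactly the statement that $f$ equals its own expansion in the character system $\{\chi_{S}\}$.

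The key lemma I would establish is that $\{\chi_{S}\}_{S\subseteq[R]}$ is an orthonormal basis of the space of real functions on $\{\pm1\}^{R}$ under the expectation inner product. Orthonormality is a one-line computation: $\chi_{S}(x)\chi_{T}(x)=\prod_{i\in S\triangle T}x_{i}=\chi_{S\triangle T}(x)$ since $x_{i}^{2}=1$, and $\E_{x}[\chi_{U}(x)]=\prod_{i\in U}\E[x_{i}]$ vanishes unless $U=\emptyset$, because the coordinates are independent and uniform on $\{\pm1\}$; hence $\mysmalldot{\chi_{S}}{\chi_{T}}=\E_{x}[\chi_{S\triangle T}(x)]=[S=T]$. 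Since there are $2^{R}$ such characters and the function space has dimension $2^{R}$, they form a basis, so $f=\sum_{S\subseteq[R]}\mysmalldot{f}{\chi_{S}}\chi_{S}$ for every $f$. Combined with the display above, this yields $F^{T}Ff=f$, i.e.\ $F^{T}F$ is the identity operator.

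Alternatively, and perhaps more in keeping with the flow of the preceding observations, I could derive it from Parseval's Theorem together with the adjoint-pair observation: Parseval says $F$ is a norm isometry between the expectation norm on $f$ and the counting norm on $\hatf$, so by polarization (valid since we are over $\R$) $F$ preserves inner products, $\mysmalldot{Ff}{Fg}=\mysmalldot{f}{g}$; then the adjoint-pair identity gives $\mysmalldot{F^{T}Ff}{g}=\mysmalldot{Ff}{Fg}=\mysmalldot{f}{g}$ for all $g$, whence $F^{T}Ff=f$. I do not expect a genuine obstacle here; the only points needing a little care are justifying that the characters span the entire space in the first approach (the dimension count, or equivalently that the only function orthogonal to every $\chi_{S}$ is $0$), and invoking the polarization identity correctly in the second.
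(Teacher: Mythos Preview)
Your proof is correct. The paper itself does not prove this observation; it simply lists it among ``well-known facts from Fourier analysis'' and moves on, so there is no argument in the paper to compare against. Both of your routes are standard and valid: the first (orthonormality of the characters plus a dimension count) is the textbook direct verification, while the second (Parseval plus polarization plus the adjoint identity) is a clean abstract argument that makes explicit how this observation follows from the two preceding ones.
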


	\iffalse
	In the actual reduction, we consider a {\em partial} Fourier transform $F_P$
	that maps a function $f : \{ \pm 1 \}^R \to \R$ to a vector $\hatf : [R] \to \R$
	defined as $\hatf(i) = \E_{x \in \{ \pm 1 \}^R} [f(x) \cdot x_i]$ for all $i \in [R]$.
	It is the original Fourier transform where $\hatf$ is further projected to $R$ coordinates corresponding to linear coefficients.
	The partial inverse Fourier transform $F_P^T$ is a transformation that maps
	a vector $\hatf : [R] \to \R$ to a function $f : \{ \pm 1 \}^R \to \R$ as in~\eqref{eq:inverse_fourier} restricted to $S = \{ i \}$ for some $i \in [R]$.
	These partial transforms satisfy similar observations as above:
	(1) $\enorm{2}{f} \geq \cnorm{2}{F_P f}$,
	(2) $\enorm{2}{F_P^T \hatf} = \cnorm{2}{\hatf}$,
	(3) $F_P$ and $F_P^T$ form an adjoint pair,
	and (4) $(F_P^T F_P) f = f$ if and only if $f$ is a linear function.
	\fi

	\paragraph{Smooth Label Cover.}
	An instance of Label Cover is given by a quadruple $\calL = (G, [R], [L], \Sigma)$ that consists of a regular connected graph $G = (V, E)$, a label set $[R]$ for some positive integer $n$, and a collection $\Sigma = ((\pi_{e, v}, \pi_{e, w}) : e = (v, w) \in E)$ of pairs of maps both from $[R]$ to $[L]$ associated with the endpoints of the edges in $E$. Given a {\em labeling} $\ell : V \to [R]$, we say that an edge $e = (v, w) \in E$ is {\em satisfied } if $\pi_{e, v}(\ell(v)) = \pi_{e, w}(\ell(w))$. Let $\opt(\calL)$ be the maximum fraction of satisfied edges by any labeling.

	The following hardness result for Label Cover, given in~\cite{GRSW16}, is a slight variant of the original construction due to~\cite{Khot02}. The theorem also describes the various structural properties, including smoothness, that are identified by the hard instances.
	\begin{thm}
		\label{thm:smooth_label_cover}
		For any $\xi >0$ and $J \in \N$, there exist positive integers $R = R(\xi, J), L = L(\xi, J)$ and $D = D(\xi)$, and a Label Cover instance $(G, [R], [L], \Sigma)$ as above such that
		\begin{itemize}
			\item (Hardness): It is NP-hard to distinguish between the following two cases:
			\begin{itemize}
				\item Yes case: $\opt(\calL) = 1$.
				\item No case: $\opt(\calL) \leq \xi$.
			\end{itemize}

			\item (Structural Properties):
			\begin{itemize}
				\item ($J$-Smoothness): For every vertex $v \in V$ and distinct $i, j \in [R]$, we have
				\[
				{\text{\bf Pr}}_{e \ni v} \Big[ \pi_{e,v}(i)=\pi_{e,v}(j) \Big] \leq 1 / J.
				\]
				\item ($D$-to-$1$): For every vertex $v \in V$, edge $e \in E$ incident on $v$, and $i \in [L]$, we have $|\pi^{-1}_{e, v}(i)| \leq D$; that is at most $D$ elements in $[R]$ are mapped to the same element in $[L]$.

				\item (Weak Expansion): For any $\delta > 0$ and vertex set $V' \subseteq V$ such that $|V'| = \delta \cdot |V|$, the number of edges among the vertices in $|V'|$ is at least $(\delta^2 / 2) |E|$.
			\end{itemize}

		\end{itemize}
	\end{thm}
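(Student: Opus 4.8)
The plan is not to reprove this from scratch but to assemble it from the construction of Guruswami et al.~\cite{GRSW16}, which is a mild variant of Khot's smooth Label Cover~\cite{Khot02}; below I indicate the reduction and where each asserted property comes from. First I would start from the NP-hardness of Gap-$3$SAT: by the PCP theorem there is a constant $\eps_0 > 0$ such that it is NP-hard to distinguish a satisfiable $3$SAT formula $\phi$ from one in which no assignment satisfies more than a $(1-\eps_0)$-fraction of the clauses, and by a standard expander-replacement step we may additionally assume every variable of $\phi$ occurs in a bounded number of clauses. (This bounded-occurrence preprocessing is exactly what will later furnish the $D$-to-$1$ and weak-expansion properties.)

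From $\phi$ I would build the two-prover game underlying $\calL = (G,[R],[L],\Sigma)$ in the usual way, but with the ``smoothing'' modification. The left vertices are $t$-tuples of clauses of $\phi$; a label in $[R]$ is a satisfying assignment to the (at most $3t$) variables occurring in that tuple. The right vertices encode, for each of the $t$ clauses, a single distinguished variable of that clause (so $|[L]|$ is bounded in terms of $t$), and the edge $(v,w)\in E$ with its projection $\pi_{e,v}$ simply reads off the values that a satisfying assignment assigns to the $t$ distinguished variables. Taking $t = t(\xi)$ large enough, the soundness of this product game drops below $\xi$: in the Yes case a global satisfying assignment induces labelings of value $1$, while in the No case, if some pair of labelings satisfied more than a $\xi$-fraction of edges, a covering/averaging argument over the $t$ coordinates together with a parallel-repetition-style soundness analysis would extract an assignment to $\phi$ satisfying more than a $(1-\eps_0)$-fraction of its clauses, a contradiction.

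The structural properties are then read off the construction. For $J$-smoothness, two distinct labels $i\ne j$ of a left vertex $v$ are satisfying assignments that disagree on some variable $y$ occurring in $v$'s clauses, and $\pi_{e,v}(i)=\pi_{e,v}(j)$ can happen only if $y$ is not among the $t$ distinguished variables selected for that edge; by padding each clause with dummy literals so that each of its variables is distinguished with probability at most $1/J$, this event has probability at most $1/J$ over a random edge incident to $v$. The $D$-to-$1$ property is immediate from bounded occurrence (the preimage of a value on a distinguished variable corresponds to the at most $D$ clauses containing that variable). Weak expansion holds because $G$ is a subgraph of the natural incidence/product graph of a bounded-degree expander, for which any vertex subset of density $\delta$ spans at least a $(\delta^2/2)$-fraction of the edges.

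The step I expect to be the genuine obstacle is the soundness analysis of the \emph{smoothed} product game: one must check that inserting the padding and the coordinate-forgetting projections does not prevent the soundness from tending to $0$ as $t\to\infty$, i.e.\ that $J$-smoothness and arbitrarily small soundness $\xi$ can be attained simultaneously (with $R,L$ depending on both $\xi$ and $J$, and $D$ only on $\xi$). Everything else is routine bookkeeping with the parameters $R(\xi,J)$, $L(\xi,J)$, $D(\xi)$. Since this soundness analysis is precisely the content of~\cite{Khot02,GRSW16}, for that part I would simply invoke those works.
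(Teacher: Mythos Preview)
The paper does not prove this theorem at all: it is stated as a known black-box result, attributed to~\cite{GRSW16} (itself a variant of Khot~\cite{Khot02}), and then used without further argument. Your plan to ultimately invoke~\cite{Khot02,GRSW16} is therefore exactly what the paper does, so there is no divergence in approach.

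That said, a few details in your informal sketch are off and would need correcting if you actually wrote out the construction rather than citing it. Smoothness in Khot's construction does not come from ``padding clauses with dummy literals''; it comes from the specific way the $t$-fold product game chooses which coordinates to project (roughly, a random subset of coordinates is revealed, so any two distinct labels collide under the projection with probability $O(1/t)$, and one takes $t$ large relative to $J$). Your $D$-to-$1$ explanation conflates variable occurrence with preimage size: the bound on $|\pi_{e,v}^{-1}(i)|$ comes from the fact that, once the values on the projected coordinates are fixed, the number of extensions to a full satisfying assignment of the $t$ clauses is bounded by a constant depending only on $t=t(\xi)$. And weak expansion is not a consequence of an expander-replacement step on the base instance; it follows from the product structure of the $t$-fold repeated game (the edge distribution is a product, so any vertex set of measure $\delta$ captures at least a $\delta^2/2$ fraction of edge mass). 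None of this affects your bottom line, since you defer the construction to the cited works anyway.
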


	\paragraph{Reduction.}
	Let $\calL = (G, [R], [L], \Sigma)$ be an
	instance of Label Cover with $G = (V, E)$.
	%In the rest of this section, $n = |V|$ and
	Our reduction will construct a linear operator
	$\bfA : \R^{N} \to \R^{M}$ with $N = |V| \cdot 2^R$ and $M = 2|V| \cdot 2^R - |V| + |E| \cdot |L|$.
	The space $\R^{N}$ will be endowed the expectation norm (and call its elements functions) and
	$\R^{M}$ will be endowed the counting norm (and call its elements vectors).
	We define $\bfA$ by giving a linear transformation from a function $\bff : V \times \{ \pm 1 \}^R \to \R$ to a vector $\bfa \in \R^M$.
	Let $C := M^3$.
	Given $\bff$, a vertex $v \in V$ induces $f_v \in \R^{2^R}$ defined by $f_v(x) := \bff(v, x)$ for $x \in \{\pm 1\}^R$.
	Let $\bfhatg \in V \times [R]$ be the vectors of linear coefficients; $\bfhatg(v, i) = \hatf_v(i)$ for $v \in V, i \in [R]$.
	Given $\bff$ (that determines $\{ \hatf_v \}_v \in V$ and $\bfhatg$), $\bfa = \bfA \bff$ is defined as follows.

	\begin{itemize}
		\item For $v \in V$ and $x \in \{ \pm 1 \}^R$, $\bfa(v, x) = \sum_{i = 1}^R \bfhatg(v, i) x_i$.
		\item For $v \in V$ and $S \subseteq [R]$ with $|S| \neq 1$, $\bfa(v, S) = C \cdot \hat{f_v}(S)$.
		\item For $e = (u, v) \in E$ and $i \in [L]$, $\bfa(e, i) = C \cdot \big( \sum_{j \in \pi^{-1}_{e, u}(i)} \hat{f_u}(i) - \sum_{j \in \pi^{-1}_{e, v}(i)} \hat{f_v}(i) \big)$.
	\end{itemize}
	Since $\bfhatg$ and $\bfa$ are all linear in $\bff$, the matrix $\bfA$ that satisfies $\bfa = \bfA \bff$ is well-defined, which is our instance of $\minpq{p^*}{p}{\cdot}$.
	Intuitively, $C$ will be chosen large enough so that every $\hat{f_v}$ has almost all Fourier mass on its linear coefficients,
	and their linear coefficients correctly indicate the labels that satisfy all constraints of the Label Cover instance.

	\paragraph{Completeness.}
	We prove the following lemma for the Yes case.

	\begin{lem} [Completeness] Let $\ell : V \to [R]$ be a labeling that satisfies every
		edge of $\calL$. There exists a function $\bff \in \R^{V \times 2^R}$ such that
		$\bff(v, x)$ is either $+1$ or $-1$ for all $v \in V, x \in \{ \pm 1 \}^R$ and
		$\cnorm{p}{\bfA \bff} = (|V|\cdot 2^R)^{1/p}$.
		In particular, $\cnorm{p}{\bfA \bff} / \enorm{p^*}{\bff} = (|V|\cdot 2^R)^{1/p}$.
	\end{lem}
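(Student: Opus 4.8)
The plan is to use the dictator construction. Given a labeling $\ell\colon V\to[R]$ that satisfies every edge of $\calL$, I would set $\bff(v,x):=x_{\ell(v)}$ for all $v\in V$ and $x\in\{\pm1\}^R$. This is visibly $\{\pm1\}$-valued, which handles the first assertion, and since $|\bff(v,x)|^{p^*}=1$ for every $(v,x)$ we have $\enorm{p^*}{\bff}=1$; so the whole lemma reduces to showing $\cnorm{p}{\bfA\bff}=(|V|\cdot 2^R)^{1/p}$, after which the ``in particular'' clause is immediate.

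The key point is that $f_v=\chi_{\{\ell(v)\}}$, so its Fourier support is the single set $\{\ell(v)\}$: $\widehat{f_v}(S)=1$ if $S=\{\ell(v)\}$ and $0$ otherwise, and in particular $\bfhatg(v,i)=[\,i=\ell(v)\,]$. I would then evaluate the three blocks of $\bfa=\bfA\bff$ in turn. The linear-evaluation block gives $\bfa(v,x)=\sum_{i=1}^R[\,i=\ell(v)\,]\,x_i=x_{\ell(v)}$, of absolute value $1$. The higher-degree block vanishes because $\widehat{f_v}(S)=0$ whenever $|S|\neq1$, so $\bfa(v,S)=C\cdot\widehat{f_v}(S)=0$. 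The consistency block vanishes because, writing $i_0:=\pi_{e,u}(\ell(u))=\pi_{e,v}(\ell(v))$ for a satisfied edge $e=(u,v)$, we get $\sum_{j\in\pi_{e,u}^{-1}(i)}\widehat{f_u}(j)=[\,i=i_0\,]=\sum_{j\in\pi_{e,v}^{-1}(i)}\widehat{f_v}(j)$ for every $i\in[L]$, so the difference is identically $0$. Summing $p$-th powers over all $M$ coordinates, only the $|V|\cdot 2^R$ coordinates of the first block survive, each contributing $1$, hence $\cnorm{p}{\bfA\bff}^p=|V|\cdot 2^R$, and the claim follows by taking $p$-th roots.

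There is essentially no obstacle in this (completeness) direction; the only bookkeeping points needing care are (i) confirming that the dictator's entire Fourier mass sits on its linear, singleton coefficient, so that the whole $C$-scaled part of $\bfA\bff$ collapses to $0$ and the scaling parameter $C$ is irrelevant here, and (ii) noting that the $D$-to-$1$ structure of the projections $\pi_{e,v}$ causes no trouble, since each preimage $\pi_{e,v}^{-1}(i)$ meets the support $\{\ell(v)\}$ of $\widehat{f_v}$ in at most one coordinate. The genuinely hard part of the overall reduction, controlling the soundness (No) case through the smoothness and weak-expansion properties of $\calL$, lies in the companion lemma rather than here.
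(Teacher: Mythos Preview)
Your proposal is correct and follows essentially the same approach as the paper: both define $\bff(v,x):=x_{\ell(v)}$, observe that the Fourier transform of each $f_v$ is supported on the single linear coefficient $\{\ell(v)\}$, and then verify block by block that the $C$-scaled coordinates of $\bfA\bff$ vanish while each of the $|V|\cdot 2^R$ linear-evaluation coordinates has absolute value $1$. Your write-up is in fact more explicit than the paper's on the consistency block (spelling out why the two preimage sums both equal $[\,i=i_0\,]$), but the argument is identical in substance.
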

	\begin{proof} Let $\bff(v, x) := x_{\ell(v)}$ for
		every $v \in V, x \in \{ \pm 1 \}^R$.  Consider $\bfa = \bfA \bff$.
		Since every $\hatf_v$ is linear, for each $v \in V$ and $S \subseteq [R]$ with $|S| \neq 1$, $\bfa(v, S) = 0$.
		For each $v \in V$ and $i \in [R]$, $\bfhatg(v, i) = 1$ if and only if $i = \ell(v)$ and $0$ otherwise.
		Since $\ell$ satisfies every edge of $\calL$, $\bfa(e, i) = 0$ for every $e \in E$ and $i \in [L]$.
		This implies that for every $v \in V, x \in \{ \pm 1 \}^R$, $\bfa(v, x) = x_{\ell(v)} = \bff(v, x)$.
		Therefore,  $\cnorm{p}{\bfA \bff} = (|V| \cdot 2^R)^{1/p}$.
	\end{proof}

	\paragraph{Soundness.}
	We prove the following lemma for the soundness.
	Combined with Theorem~\ref{thm:smooth_label_cover} for hardness of Label Cover and observing that $\enorm{p^*}{\bff} \leq \enorm{2}{\bff}$, it finishes the proof of Theorem~\ref{thm:hardness:NP}.
	\begin{lem}
		For any $\eta > 0$, there exists $\xi > 0$ (that determines $D = D(\xi)$ as in Theorem~\ref{thm:smooth_label_cover}) and $J \in \N$ such that if $\opt(\calL) \leq \xi$, $\calL$ is $D$-to-$1$ and $\calL$ is $J$-smooth, for every $\bff$ with $\enorm{2}{\bff} = 1$, $\cnorm{p}{\bfA \bff} \geq (\gamma_p - \eta) (|V| \cdot 2^R)^{1/p}$.
	\end{lem}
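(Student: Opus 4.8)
I would argue the contrapositive: assuming there is $\bff$ with $\enorm{2}{\bff}=1$ and $\cnorm{p}{\bfA\bff}<(\gamma_p-\eta)(|V|\cdot 2^R)^{1/p}$, I would decode from $\bff$ a labeling of $\calL$ satisfying more than a $\xi$-fraction of the edges, contradicting $\opt(\calL)\le\xi$. This is the standard soundness analysis of Guruswami et al.~\cite{GRSW16} (following the exposition in~\cite{BGGLT18}); the only genuinely new point, namely replacing the hypothesis $\enorm{2}{\bff}=1$ by $\enorm{p^*}{\bff}=1$, is handled outside this lemma via $\enorm{p^*}{\bff}\le\enorm{2}{\bff}$, so here it suffices to reason under $\enorm{2}{\bff}=1$.

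\textbf{Step 1 (the scaling $C=M^3$ kills the nonlinear/inconsistent mass).} Since $\cnorm{p}{\bfA\bff}^p$ is the sum of the $p$-th powers of all coordinates of $\bfA\bff$, the ``heavy'' coordinates $\bfa(v,S)$ with $|S|\ne 1$ and $\bfa(e,i)$, each carrying a factor $C$, contribute at most $\gamma_p^p\,|V|\,2^R\le \gamma_p^p M$ in total; hence $\sum_v\sum_{|S|\ne 1}|\hat f_v(S)|^p$ and $\sum_{e,i}\bigl|\sum_{j\in\pi^{-1}_{e,u}(i)}\hat f_u(j)-\sum_{j\in\pi^{-1}_{e,v}(i)}\hat f_v(j)\bigr|^p$ are both at most $\gamma_p^p M^{1-3p}$. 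These are $p$-norms of at most $M$ numbers with $p\ge2$, so the corresponding squared $\ell_2$ masses are at most $(\gamma_p^p M^{1-3p})^{2/p}M^{1-2/p}=\gamma_p^2 M^{-5}=o(1)$. Writing $g_v$ for the degree-$1$ part of $f_v$, Parseval and $\sum_v\enorm{2}{f_v}^2=|V|$ then give $\sum_v\enorm{2}{g_v}^2\ge |V|-o(1)$ (and $\le|V|$), while for each edge $e=(u,v)$ the pushforwards of the linear-coefficient vectors of $u$ and $v$ along $\pi_{e,u},\pi_{e,v}$ are $\ell_2$-close on average over $e$. Finally, the coordinates $\bfa(v,x)=g_v(x)$ already force $\sum_{v,x}|g_v(x)|^p\le\cnorm{p}{\bfA\bff}^p$, i.e.\ $\tfrac1{|V|}\sum_v\enorm{p}{g_v}^p<(\gamma_p-\eta)^p$.

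\textbf{Step 2 (invariance forces influential coordinates).} By the power-mean inequality $\tfrac1{|V|}\sum_v\enorm{2}{g_v}^p\ge\bigl(\tfrac1{|V|}\sum_v\enorm{2}{g_v}^2\bigr)^{p/2}=1-o(1)$, and if almost all of the $\ell_2^2$-mass sat on vertices $v$ for which $g_v$ has no coordinate of large (relative) influence, then Berry--Esseen/CLT would give $\enorm{p}{g_v}\ge(\gamma_p-o(1))\enorm{2}{g_v}$ for those $v$, so $\tfrac1{|V|}\sum_v\enorm{p}{g_v}^p\ge(\gamma_p-o(1))^p>(\gamma_p-\eta)^p$ (using $\enorm{p}{g_v}\ge\enorm{2}{g_v}$ and power-mean for the remaining vertices) --- a contradiction. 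Hence there are constants $\tau=\tau(p,\eta)>0$ and $\eta'=\eta'(p,\eta)>0$ so that on a set of vertices carrying at least an $\eta'$-fraction of the $\ell_2^2$-mass, $g_v$ has an influential coordinate (squared coefficient $\ge\tau\enorm{2}{g_v}^2$); the set $\mathrm{Inf}(v)$ of such coordinates has size at most $1/\tau$. I would then decode $\calL$ by giving each such $v$ a uniformly random label in $\mathrm{Inf}(v)$ (arbitrary elsewhere): for an edge $e=(u,v)$ the $\ell_2$-closeness of the pushed-forward linear parts implies $\pi_{e,u}(i_u)=\pi_{e,v}(i_v)$ with probability $\Omega(\tau^{O(1)})$ over the random choices, where $J$-smoothness guarantees distinct coordinates of $\mathrm{Inf}(u)$ rarely collide under $\pi_{e,u}$ (so an influential label survives the pushforward) and $D$-to-$1$ bounds the relevant preimages. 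Averaging over $e$ yields $\opt(\calL)\ge c(p,\eta)>0$.

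\textbf{Step 3 (parameters) and the main obstacle.} Given $\eta$, Steps~1--2 produce $\tau,\eta',c(p,\eta)$; I would then pick $J=J(\tau,\eta')$ large enough that the smoothness estimate in the decoding step goes through with at most $1/\tau$ heavy coordinates, and then $\xi<c(p,\eta)$, which fixes $D=D(\xi)$ via Theorem~\ref{thm:smooth_label_cover}; the decoded labeling contradicts $\opt(\calL)\le\xi$. The main obstacle is precisely this last point: making the invariance step and the label-decoding quantitative, and checking that the dependency chain $\eta\to(\tau,\eta')\to J\to\xi\to D$ is consistent (in particular that the decoding bound $c(p,\eta)$ does not degrade once $D=D(\xi)$ is determined). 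None of the individual estimates is deep, but importing the precise smooth-Label-Cover decoding lemma of~\cite{GRSW16} and tracking these parameters carefully is where the work lies.
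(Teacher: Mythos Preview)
Your proposal is correct and follows essentially the same route as the paper: argue the contrapositive, use the $C$-weighted coordinates to force near-linearity of each $f_v$ and near-consistency across edges, invoke Berry--Esseen to locate influential linear coefficients on a large set of vertices, and then run the GRSW16 smooth-Label-Cover decoding. Two small execution differences are worth noting: (i) the paper handles your Step~1 by the simpler pointwise argument (if any single $|\hat f_v(S)|>1/M^2$ or any single edge discrepancy exceeds $1/M^2$, then already $\cnorm{p}{\bfA\bff}\ge M$), rather than your H\"older bound on total mass; and (ii) in Step~2 the paper needs the set of good vertices to be large in \emph{vertex count}, not just in $\ell_2^2$-mass as you state, so that weak expansion yields $\Omega(\theta^2)|E|$ edges with both endpoints good---this is why the paper introduces the big/small vertex distinction ($\cnorm{2}{\hat g_v}\gtrless 1/M$) before citing Lemma~4.4 of~\cite{GRSW16}.
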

	\begin{proof}
		We will prove contrapositive; if $\cnorm{p}{\bfA \bff} \leq (\gamma_p - \eta) (|V| \cdot 2^R)^{1/p}$ for some $\bff$ is small then $\opt(\calL) \geq \xi$ with the choice of the parameters that will determined later.
		Fix such an $\bff$ with $\enorm{2}{\bff} = 1$ that determines $f_v$ and $\hat{f_v}$ for each $v \in V$.
		Let $\bfa = \bfA \bff$.
		Suppose that there is $v \in V$ and $S \subseteq [R]$ with $|S| \neq 1$ such that $|\hat{f_v}(S)| > 1/M^2$.
		It means that $|\bfa(v, S)| > C / M^2$. Since $C = M^3$, it already implies $\cnorm{p}{\bfa} \geq M \gg (\gamma_p - \eta) (|V|\cdot 2^R)^{1/p}$,
		so suppose that there is no such $v$ and $S$.

		Let $\bfhatg \in V \times [R]$ be defined as above; $\bfhatg(v, i) = \hatf_v(i)$ for $v \in V, i \in [R]$.
		By Parseval,
		\[
		\sum_{v \in V} \cnorm{2}{\hatf_v}^2 = \sum_{v \in V} \enorm{2}{f_v}^2 = |V| \cdot \E_{v \in V} \enorm{2}{f_v}^2 = |V|\cdot \enorm{2}{\bff}^2 = |V|.
		\]
		and the fact that $|\hat{f}_v(S)| < 1/M^2$ for every $v \in V$, $S \subseteq [R]$ with $|S| \neq 1$,
		we have $\cnorm{2}{\bfhatg} \in [\sqrt{|V|} - 1/M, \sqrt{|V|}]$.

		Furthermore,
		suppose that there is $e = (u, v) \in E$ and $i \in [L]$ such that
		\[
		\bigg|\sum_{j \in \pi^{-1}_{e, u}(i)} \bfhatg(u, i) - \sum_{j \in \pi^{-1}_{e, v}(i)} \bfhatg(v, j) \bigg| \geq 1 / M^2.
		\]
		This implies that $|\bfa(e, i)| \geq C / M^2$.
		Since $C = M^3$, it already implies $\cnorm{p}{\bfa} \geq M \gg (\gamma_p - \eta) (|V|\cdot 2^R)^{1/p}$,
		so we can assume that there is no such $e$ and $i$.

		To bound $\cnorm{p}{\bfa}$, it only remains to analyze
		\begin{equation}
			\sum_{v \in V} \sum_{x \in \{ \pm 1 \}^R} \bigg|\bfa(v, x) \bigg|^p
			=
			\sum_{v \in V} \sum_{x \in \{ \pm 1 \}^R} \bigg|\sum_{i \in [R]} \bfhatg(v, i) x_i \bigg|^p.
			\label{eq:gammap}
		\end{equation}
		The rest of the proof closely follows~\cite{GRSW16}, and we explain high-level intuitions and why their proofs work in our settings.
		First, let us assume that $\cnorm{2}{\bfhatg} = \sqrt{|V|}$. It involves a multiplicative error of $(1 - 1/M)$, which is negligible in our proof.
		To simplify notations, let $\hatg_v \in \R^{R}$ be such that $\hatg_v(i) := \hatf_v(\{ i \} ) = \bfhatg(v, i)$ for each $v \in V$ and $i \in [R]$.
		Call a vertex $v \in V$ {\em $\tau$-irregular} if there exists $i \in [R]$ such that $|\bfhatg(v, i)| > \tau \cnorm{2}{\hatg_v}^2$. If not, $v$ is {\em $\tau$-regular}.
		Also, call a vertex $v \in V$ {\em small} if $\cnorm{2}{\hatg_v} < 1 / M$. Otherwise, call it {\em big}.

		For each $v \in V$, we consider
		$
		\sum_{x \in \{ \pm 1 \}^R} \big|\sum_{i \in [R]} \hatg(v, i) x_i \big|^p.
		$
		By Khintchine inequality, it is at most $2^R \cdot \gamma_p^p \cdot \cnorm{2}{\hatg}^p$.
		The following lemma, based on standard applications of the Berry-Esseen theorem, shows that the converse is almost true when $v$ is $\tau$-regular,
		implying the contribution from irregular vertices to~\eqref{eq:gammap} is large.
		\begin{lem}[\cite{KNS10}]
			For sufficiently small $\tau$ (depending only on $p$), if $v \in V$ is $\tau$-regular, then
			\[
			\sum_{x \in \{ \pm 1 \}^R} \bigg|\sum_{i \in [R]} \hatg(v, i) x_i \bigg|^p \geq 2^R \cdot \gamma_p^p \cdot \cnorm{2}{\hatg}^p (1 - \sqrt{\tau}).
			\]
			\label{lem:kns}
		\end{lem}
		Let $S$ be the set of big $\tau$-irregular vertices.
		Based on the above, the following lemma shows that $S$ must be a large set.
		Originally,~\cite{GRSW16} only argued for $\tau$-irregular vertices. (The notion of big and small vertices does not appear there.)
		However, since the contribution of small vertices to~\eqref{eq:gammap} is negligible, the same proof essentially works.

		\begin{lem}[Lemma 4.4 of~\cite{GRSW16}]
			There are $\tau$ and $\theta$, depending only on $p$ and $\eta$, such that $S$, the set of big $\tau$-irregular vertices, satisfies $|S| \geq \theta |V|$.
			\label{lem:grsw1}
		\end{lem}

		By the weak expansion property of $\calL$ guaranteed in Theorem~\ref{thm:smooth_label_cover}, $S$ induces at least $\theta^2 |E|$ edges of $\calL$.
		To finish the proof,~\cite{GRSW16} showed that we can satisfy a significant fraction of the edges from $\calL$.
		The only difference in their setting and our setting is that
		\begin{itemize}
			\item \cite{GRSW16}: $S$ is the set of all $\tau$-irregular veritces. For each $e = (u, v)$ and $i \in [L]$,
			\begin{equation}
				\sum_{j \in \pi^{-1}_{e, u}(i)} \hatg_u(j) = \sum_{j \in \pi^{-1}_{e, v}(i)} \hatg_v(j).
				\label{eq:grsw:lc}
			\end{equation}
			\item Here: $S$ is the set of all big $\tau$-irregular veritces. For each $e = (u, v)$ and $i \in [L]$,
			\begin{equation}
				\bigg| \sum_{j \in \pi^{-1}_{e, u}(i)} \hatg_u(j) - \sum_{j \in \pi^{-1}_{e, v}(i)} \hatg_v(j) \bigg| < 1/M^2.
				\label{eq:grsw:lc2}
			\end{equation}
		\end{itemize}
		These differences do not affect their proof since in the only place~\eqref{eq:grsw:lc} was used for $e = (u, v)$ and $i \in [L]$,
		they indeed used the fact the left-hand side of~\eqref{eq:grsw:lc2} is at most $0.3\tau \cdot \max(\cnorm{2}{\hatg_u}, \cnorm{2}{\hatg_v})$.
		Since we additionally assumed that $S$ is big, $\cnorm{2}{\hatg_u} \geq 1/M$ for each $u \in S$, so it is always satisfied from~\eqref{eq:grsw:lc2}.

		\begin{lem}[\cite{GRSW16}]
			Let $\beta := 10000D^4 / \tau^4 J$. Then
			$\opt(\calL) \geq (\tau^4 / 16)  (\theta^2 - 2/\beta)$.
		\end{lem}
		Since $\theta$ and $\tau$ only depend on $\eta$ and $p$,
		fixing small enough $\xi$ (that determines $D$) and large enough $J$ will ensure
		$\opt(\calL) \geq (\tau^4 / 16)  (\theta^2 - 2/\beta) \geq \xi$, finishing the proof of the lemma.
	\end{proof}

	\subsection{Hardness for Finite Fields}
	\label{subsec:hardness_fields}
	In this subsection, we prove Lemma~\ref{lem:hardness_fields}, which in turn finishes the proof of Theorem~\ref{thm:hardness_fields} for hardness of $\ell_0$-row lank approximation for matrices whose entries are from a finite field $\F$.
	\begin{lem}[Restatement of Lemma~\ref{lem:hardness_fields}]
		Let $\F$ be a finite field and $A \in \F^{n \times d}$ with $n \geq d$ and $k = d - 1$.
		Then \[
		\min_{U \in \F^{n \times k}, V \in \F^{k \times d}} \norm{0}{U V - A} = \min_{x \in \F^d, x \neq 0} \norm{0}{Ax}.
		\]
	\end{lem}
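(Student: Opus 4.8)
The plan is to reuse the correspondence between low-rank matrices and low-dimensional subspaces, exactly as in the proof of Lemma~\ref{lem:lp_to_minpq}, only now with the Hamming ($\ell_0$) metric. Write $a_1,\dots,a_n \in \F^d$ for the rows of $A$. A matrix $B \in \F^{n\times d}$ can be written as $B = UV$ with $U \in \F^{n\times(d-1)}$ and $V \in \F^{(d-1)\times d}$ if and only if $\mathrm{rank}(B) \le d-1$, i.e.\ if and only if $\rowspace(B)$ is contained in a subspace $S \subseteq \F^d$ of dimension $d-1$; moreover, once $V$ is fixed each row of $UV$ can be chosen independently in $\rowspace(V)$. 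Hence
\[
\min_{U,V}\norm{0}{UV-A} \;=\; \min_{\substack{S\subseteq\F^d\\ \dim S = d-1}}\ \sum_{i=1}^n \rho_i(S), \qquad \text{where}\quad \rho_i(S) := \min_{s\in S}\norm{0}{a_i - s},
\]
using that every subspace of dimension $<d-1$ sits inside one of dimension exactly $d-1$, that such an $S$ is realizable as $\rowspace(V)$ by taking the rows of $V$ to be a basis of $S$, and that enlarging $S$ only decreases each $\rho_i(S)$.

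Next I would pass from $(d-1)$-dimensional subspaces to lines via orthogonal complements. The standard bilinear form $\langle x,y\rangle = \sum_j x_jy_j$ on $\F^d$ is non-degenerate over every field, so $\dim S + \dim S^\perp = d$ and $(S^\perp)^\perp = S$; consequently $S \mapsto S^\perp$ is a bijection between $(d-1)$-dimensional subspaces $S$ and one-dimensional subspaces $\F x$, with $S = x^\perp := \{v : \langle v,x\rangle = 0\}$ for any nonzero $x$ spanning $S^\perp$. Since $\norm{0}{A(\lambda x)} = \norm{0}{Ax}$ for $\lambda \neq 0$, it is enough to prove, for each nonzero $x$, that $\sum_{i} \rho_i(x^\perp) = \norm{0}{Ax}$. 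For the lower bound: if $\langle a_i,x\rangle \neq 0$ then $a_i \notin x^\perp$, so $\rho_i(x^\perp)\ge 1$, and summing gives $\sum_i\rho_i(x^\perp) \ge |\{i : \langle a_i,x\rangle\neq 0\}| = \norm{0}{Ax}$. For the upper bound: fix a coordinate $j$ with $x_j \neq 0$ and set $s := a_i - c\,e_j$ with $c := \langle a_i,x\rangle / x_j \in \F$; then $\langle s,x\rangle = \langle a_i,x\rangle - c\,x_j = 0$, so $s \in x^\perp$, and $s$ differs from $a_i$ only in coordinate $j$ — and not at all when $\langle a_i,x\rangle = 0$ (then $c=0$). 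Thus $\rho_i(x^\perp) \le [\langle a_i,x\rangle \neq 0]$, so $\sum_i\rho_i(x^\perp) \le \norm{0}{Ax}$. Combining the two inequalities and minimizing over nonzero $x$ (equivalently over $(d-1)$-dimensional $S$) yields the claimed identity.

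I do not expect a genuine obstacle here; the only care points are bookkeeping: (i) justifying that $\min_{U,V}$ really equals $\min_{\dim S = d-1}\sum_i\rho_i(S)$, including realizability of each such $S$ as $\rowspace(V)$ and the reduction from $\dim S \le d-1$ to $\dim S = d-1$; and (ii) the finite-field subtlety already flagged in the excerpt, namely that the spanning vector $x$ of $S^\perp$ may itself lie in $S$ (an isotropic vector) — this is harmless, since nowhere above do we use $x \notin x^\perp$. The degenerate case $\mathrm{rank}(A) < d$ is subsumed: then $A$ itself has rank $\le d-1$, so the left-hand side is $0$, while the columns of $A$ are linearly dependent, so $Ax = 0$ for some $x \neq 0$ and the right-hand side is $0$ as well.
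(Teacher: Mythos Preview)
Your proof is correct and follows essentially the same approach as the paper: both identify the rank-$(d-1)$ approximation with projecting rows onto a hyperplane $S = x^\perp$, then show that the Hamming distance from $a_i$ to $x^\perp$ equals the indicator $[\langle a_i,x\rangle \neq 0]$ by explicitly modifying one coordinate of $a_i$. Your write-up is in fact more careful than the paper's (you give the explicit witness $s = a_i - (\langle a_i,x\rangle/x_j)e_j$, spell out the subspace--line bijection, and separately dispatch the $\mathrm{rank}(A)<d$ case), but the underlying argument is the same.
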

	\begin{proof}
		Assume that the rank of $A$ is $d$; otherwise the lemma becomes trivial.
		We first prove $(\geq)$.
		Given $V^* \in \F^{k \times d}$ that achieves the best rank $k$ approximation, assume without loss of generality that the rank of $V^*$ is $k = d - 1$.
		Let $x \in \F^d$ be a nonzero vector that is orthogonal to the rowspace of $V$; i.e., $\langle v, x \rangle = 0$ if and only if $v \in \rowspace(V)$.
		Note that unlike in $\R$, $x$ can be in $\rowspace(V)$, but it does not affect the proof.
		Let $a_1, \dots, a_n$ be the rows of $A$.
		For fixed $V^*$ and $i \in [n]$, if $a_i \in \rowspace(V)$, then we can compute the $i$th row of $U^*$ (denoted by $u^*_i$) such that $u^*_i V^* = a_i$.
		Otherwise, $\langle a_i, x \rangle = b$ for some $b \neq 0$, since $x$ is nonzero, there is $u^*_i$ such that $\norm{0}{u^*_i V^* = a_i} = 1$.
		Therefore, $\norm{0}{U^* V^* - A} = \norm{0}{Ax}$, which implies that
		$\min_{U \in \F^{n \times k}, V \in \F^{k \times d}} \norm{0}{U V - A} \geq \min_{x \in \F^d, x \neq 0} \norm{0}{Ax}$.

		For the other direction, given $x \in \F^d \setminus \{ 0 \}$,
		the set of vectors $u$ with $\langle u, x \rangle = 0$ forms a $k$-dimensional subspace. (Again, this space may contain $x$ unlike in $\R$, but it does not matter.) Let $V^* \in \R^{k \times d}$ be a matrix whose rows span that space, and compute $U^*$ as above.
		The above analysis shows that $\norm{0}{U^*V^* - A}  = \norm{0}{Ax}$, which completes the proof.
	\end{proof}

	\newpage
	\section{Additional Results} \label{sec:misc}

	Here we list some additional results on variants of the $\ell_p$ low rank approximation problem.

	\subsection{Bicriteria Algorithm}

	In this section we show that we can develop low rank approximations that apply to matrices whose entries are not bounded by $\text{poly}(n)$ so long as we accept bicriteria algorithms. That is, instead of a target rank $k$ approximation, the algorithm will output an approximating matrix of rank $3k$.

	\begin{thm} If $A$ is an $n \times d$ matrix, our target rank $k$ is a constant, and $1 \leq p < 2$, then there exists a polynomial time algorithm that outputs a matrix $M$ of rank at most $3k$ such that $\lVert M - A \rVert_p \leq (1 + \epsilon) OPT $ where $OPT$ is the best rank $k$ $\ell_p$-low rank approximation value for $A$ with probability $1 - O(1)$. \end{thm}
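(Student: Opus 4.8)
The plan is to re-use the dense $p$-stable median-sketch machinery of Sections~\ref{subsec:p0}--\ref{subsec:bigp} essentially verbatim, and to replace the \emph{one} ingredient that used $\|A\|_\infty\le\poly(n)$ --- the enumeration of $SU^{*}$ in Theorem~\ref{approxalg}, which needed the entries of $SU^{*}$ to be discretized and polynomially bounded --- by a magnitude-oblivious argument whose price is a relaxation of the output rank from $k$ to $3k$. Throughout I restrict to $p\in[1,2)$, so that the best response $\hat U=\argmin_U\|UV-A\|_p$ to a fixed right factor $V$ is a convex ($\ell_p$-regression) problem solvable in polynomial time; it therefore suffices to produce a right factor $V$ of rank at most $3k$ with $\min_U\|UV-A\|_p\le(1+\eps)\opt$.

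Let $U^{*},V^{*}$ be an optimal rank-$k$ solution, $\opt=\|U^{*}V^{*}-A\|_p$, and let $S$ be an $m\times n$ matrix of i.i.d.\ $p$-stable entries with $m=\poly(k/\eps)$. As in Corollary~\ref{PTAS:p1} and its $\ell_p$ analogue, with constant probability $S$ satisfies $\med(SU^{*}V-SA)\ge(1-\eps)\|U^{*}V-A\|_p$ for every $V$ and $\med(SU^{*}V^{*}-SA)\le(1+\eps)\opt$ (for $p\ne1$ one normalizes by $\med_p$ as in Section~\ref{subsec:bigp}). Since $SU^{*}$ is an unknown $m\times k$ matrix, $SU^{*}V^{*}$ is a rank-$\le k$ matrix in $\R^{m\times d}$, whence
\[
\min_{\text{rank-}3k\ B\in\R^{m\times d}}\med(B-SA)\ \le\ \med(SU^{*}V^{*}-SA)\ \le\ (1+\eps)\opt .
\]
The crucial observation is that the right-hand minimization ranges over $3k$-dimensional subspaces of the \emph{constant}-dimensional space $\R^{m}$ (the column space of $B$), and the Grassmannian of such subspaces is compact and \emph{scale-free}: it admits a net whose size depends only on $m,k,\eps$, independent of the entries of $A$. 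For each candidate subspace $L$ I would evaluate $\sum_{j}\min_{b\in L}\med\big(b-(SA)_{:,j}\big)$; since $m=O(1)$, the inner problem is solved by ranging over the $O(1)$ choices of which $\lceil m/2\rceil$ coordinates realize the median and solving an $\ell_\infty$ linear program on each, so the whole evaluation takes $O(d)$ time, and a standard (if slightly technical) net argument shows the best candidate is within $(1+\eps)$ of the optimum. This produces a rank-$3k$ matrix $\hat B=\hat W\hat V$ with $\med(\hat B-SA)\le(1+O(\eps))\opt$; let $\hat{\mathcal R}=\rowspace(\hat V)\subseteq\R^{d}$, of dimension $\le 3k$, take $V:=\hat V$, and output $\hat U\hat V$ with $\hat U$ the $\ell_p$ best response.

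It remains to show $\|\hat U\hat V-A\|_p\le(1+O(\eps))\opt$. Since $\{SU\hat V:U\in\R^{n\times 3k}\}$ is exactly the set of $m\times d$ matrices whose row space is contained in $\hat{\mathcal R}$, and $\hat B$ is such a matrix, we get $\min_U\med(SU\hat V-SA)\le\med(\hat B-SA)\le(1+O(\eps))\opt$; so it suffices to prove the matching lower bound $\med(SU\hat V-SA)\ge(1-O(\eps))\|U\hat V-A\|_p$ for every $U$. This is the main obstacle. The proof of Theorem~\ref{constantsketch} exploited that each column of the varying factor lies in a \emph{$k$-dimensional} space, so that a single dense sketch with $\poly(k/\eps)$ rows embeds the whole column family; here the varying factor is $U$, whose columns range over all of $\R^{n}$, and the $n$-side cannot be embedded into $m<n$ coordinates. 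The fix is to run the robust-embedding argument on the genuinely low-dimensional side $\hat{\mathcal R}$ instead: sketch $\hat U\hat V-A$ from the right by a second $p$-stable matrix and re-derive the analogue of Theorem~\ref{constantsketch} with the \emph{fixed} factor $\hat V$ (rank $\le 3k$) in the role of ``$U$''. The good-column / large-bad-column / small-bad-column bookkeeping of that proof carries over once $\hat V$ is placed in $p$-well-conditioned form and the relevant quantities are controlled using Claim~\ref{lowerbound} and the discretization of $\hat V$; the extra units of rank are precisely what make the intermediate low-rank approximation of $SA$ tractable and what compensate for replacing the unknown $SU^{*}$ by an unconstrained factor. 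Rescaling $\eps$ by a constant completes the proof, and the restriction to $p\ge1$ is essential, since for $0<p<1$ neither the regression step is convex nor is the triangle inequality available except for $\|\cdot\|_p^p$.
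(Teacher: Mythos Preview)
Your approach diverges from the paper's, and it contains a genuine gap.

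The paper's argument is much simpler and sidesteps the scale problem entirely. Let $B$ be the best rank-$k$ Frobenius approximation of $A$ (computable via SVD). Since $B$ has rank $k$, $OPT\le\|A-B\|_p$, and norm comparison gives $\|A-B\|_p\le\poly(n)\,OPT$. Thus $\|A-B\|_p$ is a \emph{computable} scale that is within a $\poly(n)$ factor of $OPT$. One now runs the PTAS of Theorem~\ref{approxalg} on the residual $A-B$ with target rank $2k$: the discretization and enumeration of $SU^{*}$ go through exactly as before, with all bounds stated relative to $\|A-B\|_p$ instead of $\|A\|_\infty$, because rounding $U^{*}$ to granularity $\eps\|A-B\|_p/\poly(n)$ costs only $\eps\,OPT$. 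Since $A_k-B$ has rank at most $2k$ and $\|(A_k-B)-(A-B)\|_p=OPT$, the output $(A-B)^{*}_{2k}$ satisfies $\|(A-B)^{*}_{2k}-(A-B)\|_p\le(1+\eps)OPT$, and $M:=(A-B)^{*}_{2k}+B$ has rank $\le 3k$.

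Your Grassmannian-net step does not actually avoid discretization. The Grassmannian of $3k$-planes in $\R^{m}$ is indeed compact and scale-free, but the \emph{approximation error} incurred by replacing a subspace $L$ by a $\delta$-close $L'$ is additive in the scale of $SA$, not in $OPT$: for the optimal $b_j\in L$ one only gets $b_j'\in L'$ with $\|b_j-b_j'\|\lesssim\delta\|(SA)_{:,j}\|$, so the total error is $\delta\cdot\sum_j\|(SA)_{:,j}\|_\infty\asymp\delta\cdot\|A\|_p$. To make this $\le\eps\,OPT$ you need $\delta\lesssim\eps\,OPT/\|A\|_p$, which is unbounded below when $A$ has unbounded entries; the net size then blows up. This is exactly the scale information that Theorem~\ref{approxalg} obtains from Claim~\ref{lowerbound} and the $\poly(n)$ bound on entries, and that the paper recovers via $\|A-B\|_p$.

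Your proposed ``fix'' for the lower bound is also circular. A fresh right-sketch $T$ would indeed give $\med((U\hat V-A)T)\ge(1-O(\eps))\|U\hat V-A\|_p$ for all $U$ (Theorem~\ref{constantsketch} transposed, with $\hat V$ fixed), but this is a statement about $T$, not about $S$; it does nothing to certify that the $\hat V$ produced by the $S$-side Grassmannian search satisfies $\min_U\|U\hat V-A\|_p\le(1+\eps)OPT$. The only subspace for which Theorem~\ref{constantsketch} yields that conclusion is $\mathrm{colspan}(SU^{*})$ itself, and you have no mechanism to land on it. Finally, your invocation of Claim~\ref{lowerbound} is inapplicable here, since that claim assumes integer entries bounded by $\poly(n)$, precisely the hypothesis you are trying to drop.
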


	\begin{proof}
		Let $C_{l}$ denote the best rank $l$ approximation to a matrix $C$ in the $\ell_p$ norm (i.e. the matrix that minimizes $\lVert C_l - C \rVert_p$).

		Let $B$ be the best rank $k$ approximation to $A$ in the Frobenius norm. Then $$\lVert A - B \rVert_p \leq \text{poly}(n) \lVert A - B \rVert_F \leq \text{poly}(n) \lVert A - A_k \rVert_F \leq \text{poly} (n) OPT.$$

		We can find a rank $2k$ $(1 + \epsilon)$-approximation to $A - B$ using the same techniques as in Theorem \ref{approxalg}, where we sample a matrix $S$ of $p$-stable variables, guess values for $SU^*$, and then minimize $\lVert SU^* V^* - S(A - B) \rVert_p$. Now the entries of $A - B$ are not necessarily bounded by $\text{poly}(n)$ so we need to justify that it suffices to guess $\text{poly}(n)$ values for $SU^*$.

		Indeed, by a well-conditioned basis argument, no entry of $U^*$ has absolute value greater than $\text{poly}(n) \lVert A - B \rVert_p$. Furthermore, we can round each entry of $U^*$ (similar to the proof of Theorem \ref{approxalg}) to the nearest multiple of $\frac{\epsilon \lVert A - B \rVert_p}{\text{poly}(n)}$ and incur an additive error of at most $\epsilon OPT$ because $\lVert A - B \rVert_p \leq \text{poly}(n) OPT$. This error is small enough for the purposes of our approximation.

		Let $(A - B)^*_{2k} = U^* V^*$ and let $M = (A - B)^*_{2k} + B$. We have \begin{align*} \lVert M - A \rVert_p &= \lVert (A - B)^*_{2k} + B - A \rVert_p \\
			&= \lVert (A - B)^*_{2k} - (A - B) \rVert_p \\
			&\leq (1 + \epsilon) \lVert (A - B)_{2k} - (A - B) \rVert_p + \epsilon OPT \\
			&\leq (1 + \epsilon) \lVert A_{k} - B - (A - B) \rVert_p + \epsilon OPT \\
			&\leq (1 + \epsilon) OPT
		\end{align*} where the first inequality follows from our argument above and the second inequality follows because $A_k - B$ has rank at most $k + k = 2k$.

		Since $M$ has rank at most $2k + k = 3k$, then the result follows.
	\end{proof}

	\subsection{Weighted Low Rank Approximation}

	For $0 < p < 2$, we can also design a PTAS for the weighted $\ell_p$ low rank approximation problem. In this setting we have a matrix $A$, a weight matrix $W$ of rank $r$, and we want to output a rank $k$ matrix $A'$ such that, for $\epsilon > 0$,
	\[
	\lVert W \circ (A - A') \rVert_p^p \leq
	(1 + \epsilon) \min_{\text{rank k }A_k} \lVert W \circ (A - A_k) \rVert_p^p.
	\]

	Our main tool will be a multiple regression concentration result based on that of \cite{rsw16}.

	\begin{thm} \label{multipleregression}
		Let $S$ be a $\text{poly}(k / \epsilon) \times n$ matrix whose entries are i.i.d $p$-stable random variables with scale 1. Let $M^{(1)}, M^{(2)}, \ldots, M^{(m)}$ be $n \times d$ matrices and let $b^{(1)}, b^{(2)}, \ldots, b^{(m)} \in \R^n$. Let $$x^{(i)} = \argmin_x \lVert M^{(i)} x - b^{(i)} \rVert_p^p$$ and $$y^{(i)} = \argmin_y \med (SM^{(i)} y - Sb^{(i)}) / \med_p.$$ Then w.h.p we have $$\sum_i \lVert M^{(i)} y^{(i)} - b^{(i)} \rVert_p^p \leq (1 + O(\epsilon)) \sum_i \lVert M^{(i)} x^{(i)} - b^{(i)} \rVert_p^p $$
	\end{thm}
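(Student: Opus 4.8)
The plan is to mimic the proof of Theorem~\ref{constantsketch}, treating each of the $m$ regression instances as one ``column'' and a fixed low-dimensional subspace as the object the sketch must embed, in the role previously played by the column space of the single left factor $U$. Write $e^{(i)} := M^{(i)} x^{(i)} - b^{(i)}$, $\tilde e^{(i)} := M^{(i)} y^{(i)} - b^{(i)}$, $r_i := \lVert e^{(i)}\rVert_p$, $\tilde r_i := \lVert \tilde e^{(i)}\rVert_p$, and $\mathrm{OPT} := \sum_i r_i^p$; we must show $\sum_i \tilde r_i^p \le (1+O(\eps))\,\mathrm{OPT}$, after which rescaling $\eps$ finishes the proof. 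First I would record the structural input from the intended application: the column spaces $\mathrm{colspace}(M^{(i)})$ all lie inside a common subspace $\mathcal{M}$ of dimension $D = \poly(k/\eps)$ (for weighted rank-$k$ approximation with a rank-$r$ weight matrix, $\mathcal{M} = \sum_{\ell=1}^r \mathrm{diag}(w_\ell)\,\mathrm{colspace}(U)$ has dimension $O(rk)$). Taking $S$ to have $O(\eps^{-2} D \log(D/\eps))$ i.i.d.\ $p$-stable rows, Lemma~\ref{pembedding} gives that with constant probability $S$ is a $(1\pm O(\eps))$ quantile-embedding of $\mathcal{M}$; condition on this. The same bound also gives, for each fixed $i$ (at the cost of a constant factor from the conditioning), that $S$ fails to be a quantile-embedding of $\mathrm{span}(\mathcal{M}, b^{(i)})$ with probability at most $\eps^{C}$, for a constant $C = C(p)$ to be fixed below. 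Call $i$ \emph{good} if $S$ does embed $\mathrm{span}(\mathcal{M}, b^{(i)})$ and \emph{bad} otherwise.

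For a good $i$, the embedding of $\mathrm{span}(\mathcal{M},b^{(i)})$ together with term-by-term optimality of $y^{(i)}$ and monotonicity of $t\mapsto t^p$ gives
\[
\tilde r_i \;\le\; \tfrac{1}{1-O(\eps)}\,\med(S\tilde e^{(i)})/\med_p \;\le\; \tfrac{1}{1-O(\eps)}\,\med(S e^{(i)})/\med_p \;\le\; \tfrac{1+O(\eps)}{1-O(\eps)}\,r_i,
\]
so $\med(S\tilde e^{(i)})^p/\med_p^p \ge (1-O(\eps))\,\tilde r_i^p$. For a bad $i$ I would reuse the large/small dichotomy: call $i$ \emph{large} if $\tilde r_i \ge \tfrac{1}{\eps}\bigl(r_i + \tfrac{1}{1-O(\eps)}\,q_{1-\eps/2}(S e^{(i)})/\med_p\bigr)$ and \emph{small} otherwise. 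Since $\tilde e^{(i)} - e^{(i)} = M^{(i)}(y^{(i)}-x^{(i)})\in\mathcal{M}$ and $S$ embeds $\mathcal{M}$, at least a $(\tfrac12+\eps)$-fraction of the coordinates of $S(\tilde e^{(i)}-e^{(i)})$ have absolute value at least $(1-O(\eps))\lVert \tilde e^{(i)}-e^{(i)}\rVert_p \ge (1-O(\eps))(\tilde r_i - r_i)$, which for large $i$ is at least $(1-O(\eps))\tilde r_i + q_{1-\eps/2}(S e^{(i)})$; as at most an $\eps/2$-fraction of the coordinates of $S e^{(i)}$ exceed $q_{1-\eps/2}(S e^{(i)})$, the triangle inequality (using $p\ge 1$) shows a $(\tfrac12+\tfrac{\eps}{2})$-fraction of coordinates of $S\tilde e^{(i)}$ are at least $(1-O(\eps))\tilde r_i$, i.e.\ $\med(S\tilde e^{(i)})^p/\med_p^p \ge (1-O(\eps))\tilde r_i^p$ for large $i$ too — the exact analogue of Claim~\ref{largesketch}.

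Summing the good and large contributions and then using $\sum_i \med(S\tilde e^{(i)})^p/\med_p^p \le \sum_i \med(Se^{(i)})^p/\med_p^p \le (1+O(\eps))\,\mathrm{OPT}$ (optimality of each $y^{(i)}$, then Lemma~\ref{pfixedbound} on the fixed matrix with columns $e^{(i)}$), we obtain
\[
(1-O(\eps))\Bigl(\textstyle\sum_i \tilde r_i^p - \sum_{i\ \mathrm{small\ bad}} \tilde r_i^p\Bigr) \;\le\; (1+O(\eps))\,\mathrm{OPT}.
\]
It remains to bound $\sum_{i\ \mathrm{small\ bad}} \tilde r_i^p$. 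For a small $i$ one has $\tilde r_i^p \le \tfrac{O(1)}{\eps^p}\bigl(r_i^p + q_{1-\eps/2}(S e^{(i)})^p/\med_p^p\bigr)$; summing over bad $i$ and using that $i$ is bad with probability at most $\eps^{C}$, Markov gives $\sum_{i\ \mathrm{bad}} r_i^p \le O(\eps^{C-1})\,\mathrm{OPT}$ with probability $1-1/\Omega(1)$, while a Markov argument as for \eqref{quantilebadbound} together with Lemma~\ref{pquantilemed} gives $\sum_{i\ \mathrm{bad}} q_{1-\eps/2}(Se^{(i)})^p/\med_p^p \le O(\eps^{C-1-p})\,\mathrm{OPT}$; since $p<2$ is a fixed constant we may take $C=C(p)$ large enough that the whole quantity is $O(\eps)\,\mathrm{OPT}$. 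Rearranging the display gives $\sum_i \tilde r_i^p \le (1+O(\eps))\,\mathrm{OPT}$. The only places $p$ enters are the $p$-th powers and a single use of the triangle inequality for $\lVert\cdot\rVert_p$, so the argument runs unchanged precisely for $1\le p<2$, using $p$-stable sketches and Lemmas~\ref{pembedding}, \ref{pfixedbound}, \ref{pquantilemed} in place of their Cauchy counterparts.

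\textbf{Main obstacle.} The step that is not inherited for free from Theorem~\ref{constantsketch} is the control of \emph{bad} instances: there the crucial fact was that the single fixed $U$ has $S\,U$ a subspace embedding of $\mathrm{colspace}(U)$ with high probability (condition (i) of Theorem~\ref{constantsketch}), whereas here every instance carries its own $M^{(i)}$ and we cannot afford a union bound over $m$ distinct column spaces. Resolving this is precisely the reduction to a common subspace $\mathcal{M}$, and checking that $\mathcal{M}$ is genuinely low-dimensional in the intended applications — hence that $\poly(k/\eps)$ (or $\poly(rk/\eps)$) sketch rows suffice — is where essentially all the work beyond the template lies, and what pins down the sketch dimension in the statement.
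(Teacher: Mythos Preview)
Your route is genuinely different from the paper's. The paper's proof is much shorter and does \emph{not} go through the good/large/small trichotomy at all: it simply sets $T=\{i:\ S\text{ is a median embedding of }\mathrm{span}(M^{(i)},b^{(i)})\}$, applies a single Markov bound to $\sum_{i\notin T}\lVert [M^{(i)}\ b^{(i)}]y\rVert_p^p$, and then asserts that the resulting lower bound
\[
\sum_i \med\bigl(S[M^{(i)}\ b^{(i)}]y\bigr)^p/\med_p^p\;\ge\;(1-\Theta(\eps))\sum_i \lVert [M^{(i)}\ b^{(i)}]y\rVert_p^p
\]
holds \emph{for all} $y$; plugging in $y=y^{(i)}$ and chaining with Lemma~\ref{pfixedbound} finishes. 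No common subspace $\mathcal M$ is assumed, no quantile bound is invoked, and the large/small split never appears.

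The paper's short argument, however, elides precisely the point you flag as the main obstacle: Markov only controls $\sum_{i\notin T}\lVert\cdot\rVert_p^p$ with high probability for a \emph{fixed} $y$, not for the sketch-dependent minimizers $y^{(i)}$, and once $T$ is realized one can pick $y$ supported on the bad indices so that the claimed ``for all $y$'' inequality fails outright. Your fix---embed a single low-dimensional $\mathcal M\supseteq\bigcup_i\mathrm{colspace}(M^{(i)})$ once, and then run the Claim~\ref{largesketch} analysis on $M^{(i)}(y^{(i)}-x^{(i)})\in\mathcal M$ for each bad $i$---is the right repair, and the extra hypothesis (not present in the theorem statement) is exactly what makes the ``large'' case go through uniformly in $i$. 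So what you are actually proving is a correct strengthening tailored to the application in Theorem~\ref{thm:Wp02}, where $\dim\mathcal M=O(rk)$; without some such common-subspace link between the $M^{(i)}$'s, the bad-index contribution $\sum_{i\notin T}\tilde r_i^p$ is genuinely uncontrolled, and neither the paper's argument nor any proof of the theorem as literally stated seems to go through. Your identification of this as where ``essentially all the work beyond the template lies'' is exactly right.
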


	\begin{proof}
		By Lemmas \ref{pfixedbound} and \ref{smallpfixedbound}, w.h.p. $$ \sum_i \frac{\med (S (M^{(i)} x^{(i)} - b^{(i)}))^p}{\med_p^p} \leq (1 + O(\epsilon)) \sum_i \lVert M^{(i)} x^{(i)} - b^{(i)} \rVert_p^p .$$

		Let $T$ be the set of all $i$ such that $$\frac{\med(S [M^{(i)} \ b^{(i)}] y)^p}{\med_p^p} \geq (1 - \Theta(\epsilon)) \lVert [M^{(i)} \ b^{(i)}] y \rVert_p^p$$ for all $y$. By Corollary \ref{pmedembed}, we know that for each $i$, the probability that $i \in T$ is at least $1 - \Theta(\epsilon)$.

		Thus $$ \ex{\sum_{i \notin T}  \lVert [M^{(i)} \ b^{(i)}] y \rVert_p^p} \leq \Theta(\epsilon) \sum_i \lVert [M^{(i)} \ b^{(i)}] y \rVert_p^p $$ so by Markov's inequality, w.h.p we have $$\sum_{i \notin T}  \lVert [M^{(i)} \ b^{(i)}] y \rVert_p^p  \leq \Theta(\epsilon) \sum_i \lVert [M^{(i)} \ b^{(i)}] y \rVert_p^p .$$

		Let $y$ be arbitrary. Since $$\sum_i \frac{\med(S [M^{(i)} \ b^{(i)}] y)^p}{\med_p^p} \geq \sum_{i \notin T} \frac{\med(S [M^{(i)} \ b^{(i)}] y)^p}{\med_p^p} \geq (1 - \Theta(\epsilon)) \sum_{i \notin T} \lVert [M^{(i)} \ b^{(i)}] y \rVert_p^p,  $$ it follows that for all $y$ we have $$\sum_i \frac{\med(S [M^{(i)} \ b^{(i)}] y)^p}{\med_p^p} \geq (1 - \Theta(\epsilon)) \sum_i \lVert [M^{(i)} \ b^{(i)}] y \rVert_p^p. $$

		Therefore w.h.p we have \begin{eqnarray*}
			&&(1 - \Theta(\epsilon)) \sum_i \lVert M^{(i)} y^{(i)} - b^{(i)} \rVert_p^p \leq \sum_i \frac{S(M^{(i)} y^{(i)} - b^{(i)})}{\med_p^p} \\
			&\leq& \sum_i \frac{S(M^{(i)} x^{(i)} - b^{(i)})}{\med_p^p} \leq (1 + O(\epsilon)) \sum_i \lVert M^{(i)} x^{(i)} - b^{(i)} \rVert_p^p
		\end{eqnarray*} because $0 < p < 2$. The result follows.

	\end{proof}

	\begin{thm} \label{thm:Wp02}
		Suppose $A$ and $W$ are $n \times d$ matrices with entries bounded by $\text{poly}(n)$,
		and $r=\mathrm{rank}(W)$.
		There is an algorithm that for any integer $k$, $p\in(0,2)$ and $\eps\in(0,1)$,
		outputs in time $n^{r \cdot \poly(k / \eps)}$ a $n \times k$ matrix $U^*$ and
		a $k \times d$ matrix $V^*$ such that
		\[
		\lVert W \circ (A - U^*V^*) \rVert_p^p \leq
		(1+ O(\eps))\min_{\text{rank-$k$ }A_k} \lVert W \circ (A - A_k) \rVert_p^p.
		\]
	\end{thm}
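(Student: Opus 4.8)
The plan is to exploit the rank-$r$ structure of $W$ together with the ``guess a sketch'' technique of Theorem~\ref{approxalg} and the simultaneous-regression guarantee of Theorem~\ref{multipleregression}. First I would compute (in polynomial time) an $n\times r$ matrix $H=[h_1\mid\cdots\mid h_r]$ whose columns span the column space of $W$, and write each column of $W$ as $W_{:,j}=\sum_{\ell=1}^r\beta_{j\ell}h_\ell$ with coefficients $\beta_{j\ell}$ computable from $W$. For a fixed left factor $U$, the weighted objective splits over columns, $\|W\circ(A-UV)\|_p^p=\sum_{j=1}^d\|\,\mathrm{diag}(W_{:,j})U\,V_{:,j}-\mathrm{diag}(W_{:,j})A_{:,j}\,\|_p^p$, so column $j$ is an $\ell_p$-regression with design matrix $M^{(j)}:=\mathrm{diag}(W_{:,j})U$ and target $b^{(j)}:=\mathrm{diag}(W_{:,j})A_{:,j}$. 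The crucial observation is that for any sketch $S$ we have $SM^{(j)}=\sum_{\ell=1}^r\beta_{j\ell}\,\big(S\,\mathrm{diag}(h_\ell)U\big)$, so knowing just the $r$ matrices $S\,\mathrm{diag}(h_\ell)U^*$, where $U^*$ is an optimal left factor, suffices to form $SM^{(j)}$ for \emph{all} columns at once.

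\textbf{The sketch and the guessing.} I would let $S$ be a $\poly(k/\eps)\times n$ matrix of i.i.d.\ $p$-stable (Cauchy for $p=1$) variables of scale $1$, and condition on the high-probability event of Theorem~\ref{multipleregression}. As in the proof of Theorem~\ref{approxalg}: since $A$ and $W$ have integer entries bounded by $\poly(n)$, either $\opt_W:=\min_{\text{rank-}k\ A_k}\|W\circ(A-A_k)\|_p^p=0$ (a case detected and solved separately by exact semialgebraic methods, which run in $\poly(n)$ time for constant $k$) or $\opt_W\ge n^{-\Theta(k)}$ by the same bit-complexity reasoning as behind Claim~\ref{lowerbound}. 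By a well-conditioned-basis argument (replacing $(U^*,V^*)$ by $(U^*R,R^{-1}V^*)$, which does not change the product), the entries of $U^*$ may be assumed to lie in $[-\poly(n),\poly(n)]$ and rounded to integer multiples of $n^{-\Theta(k)}$, at the cost of an additive $\eps\cdot\opt_W$; rounding the entries of $S$ as in Step~3 of Algorithm~\ref{alg:p1} changes each $SM^{(j)}$ and $Sb^{(j)}$ by a negligible amount as well. Consequently each matrix $S\,\mathrm{diag}(h_\ell)U^*$ has only $\poly(k/\eps)$ entries, each taking one of $n^{\Theta(k)}$ possible values, so there are only $n^{r\cdot\poly(k/\eps)}$ candidates for the $r$-tuple $\big(S\,\mathrm{diag}(h_1)U^*,\dots,S\,\mathrm{diag}(h_r)U^*\big)$, and we enumerate all of them.

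\textbf{Solving and concluding.} For a correct guess I would, for each column $j$, assemble $SM^{(j)}=\sum_\ell\beta_{j\ell}\,S\mathrm{diag}(h_\ell)U^*$ and $Sb^{(j)}=S\,\mathrm{diag}(W_{:,j})A_{:,j}$, and solve $\min_y\med(SM^{(j)}y-Sb^{(j)})/\med_p$ --- a convex program when $p\ge1$, and for $0<p<1$ an enumeration over the $k$-dimensional discretized variable $y$, additionally sketching from the right to recover a left factor, exactly as in Corollary~\ref{PTAS:smallp}. Setting $\tilde V_{:,j}:=y^{(j)}$, Theorem~\ref{multipleregression} gives $\sum_j\|M^{(j)}\tilde V_{:,j}-b^{(j)}\|_p^p\le(1+O(\eps))\sum_j\min_{y}\|M^{(j)}y-b^{(j)}\|_p^p\le(1+O(\eps))\|W\circ(A-U^*V^*)\|_p^p=(1+O(\eps))\opt_W$. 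I then compute a best response $\tilde U$ by the analogous per-row weighted regression, whence $\|W\circ(A-\tilde U\tilde V)\|_p^p\le\|W\circ(A-U^*\tilde V)\|_p^p=\sum_j\|M^{(j)}\tilde V_{:,j}-b^{(j)}\|_p^p\le(1+O(\eps))\opt_W$, and return the best pair over all guesses; rescaling $\eps$ yields the stated bound. The running time is $n^{r\cdot\poly(k/\eps)}$ for the enumeration, times $\poly(n)^{O(k)}$ per guess for the $d$ regressions.

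\textbf{Main obstacle.} I expect the discretization step to be the delicate point: unlike in Theorem~\ref{approxalg}, the per-column (and per-row) regressions have design matrices $\mathrm{diag}(W_{:,j})U^*$ that vary with the weight pattern, so the well-conditioned-basis argument bounding and rounding the entries of $U^*$ (equivalently of the guessed $S\,\mathrm{diag}(h_\ell)U^*$) must be made uniform over $j$ and must correctly account for rows and columns where $W$ vanishes. Establishing the $n^{-\Theta(k)}$ lower bound on a nonzero $\opt_W$ is a second, more routine point, and the non-convexity of $\ell_p$-regression for $0<p<1$ is handled verbatim as in Corollary~\ref{PTAS:smallp}.
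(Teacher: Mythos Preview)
Your proposal is correct and follows essentially the same approach as the paper: reduce to per-column weighted regressions with design matrices $\mathrm{diag}(W_{:,j})U^*$, observe that by the rank-$r$ structure of $W$ one only needs to guess $r$ sketched matrices $S\,\mathrm{diag}(h_\ell)U^*$ (the paper takes the $h_\ell$ to be $r$ actual columns of $W$, you take an arbitrary basis of the column space --- an immaterial difference), and then invoke Theorem~\ref{multipleregression} together with the discretization machinery of Theorem~\ref{approxalg}. Your write-up is in fact more careful than the paper's brief sketch, and the obstacles you flag (uniform discretization, the lower bound on a nonzero weighted optimum, the $0<p<1$ non-convexity) are exactly the points the paper waves through by reference to earlier arguments.
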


	\begin{proof}
		To achieve a relative-error low rank approximation $W \circ (UV - A)$, for each column $i$ we can guess sketches for $W_{:,i} \circ UV_i$ using a similar argument as in Theorem \ref{approxalg}. Indeed, we can apply Theorem \ref{multipleregression} with $M^{(i)} = W_{:,i} \circ U^* V^*_i$ and $b^{(i)} = W_{:,i} \circ A_{:,i}$. To do so, we need to be able to guess $S W_{:,i} \circ U^*$, a $\poly(\frac{k}{\eps}) \times k$ matrix, in $\text{poly} (n)$ tries. We will follow the same reasoning as in the proof of Theorem \ref{approxalg}. Since the entries of $W$ and $A$ are bounded by $\text{poly} (n)$, then we can bound the entries of $U^*$ by $\text{poly}(n)$ using a well-conditioned basis. Furthermore, we can round each entry of $U^*$ to the nearest multiple of $\text{poly} (n^{-1})$ while incurring an error factor of only $(1 + O(\eps))$. Thus, we need only $n^{\poly(k / \eps)}$ guesses.

		Of course, there $d$ columns so this is not enough to achieve a PTAS. However, we only need to guess sketches for $r$ values of $j$ because $W$ has rank $r$ so we can express any column of $W$ as a linear combination of those $r$ columns. That is, we choose a subset $S$ of the columns such that $|S| = r$ and guess the sketches of $W_{:,i} \circ UV_i$ for each $i \in S$ as described in the previous paragraph. Therefore, we require $n^{r \cdot \poly(k / \eps)}$ time in total for a $(1 + O(\eps))$ approximation algorithm. Since $k$ and $r$ are constants this results in a PTAS.
	\end{proof}

	\newpage
	\section*{Acknowledgements}
	We thank Uriel Feige for helpful discussions and for pointing us to the work of Alon and Sudakov~\cite{AlonS99}.
	We would also like to thank Luca Trevisan, Christos Papadimitriou, and Michael Mahoney for some useful discussions.

%	\bibliographystyle{plain}
%	\bibliography{ref}

\begin{thebibliography}{10}

		\bibitem{ACMN2010}
		Marcel~R. Ackermann, Johannes Bl{\"{o}}mer, and Christian Sohler.
		\newblock Clustering for metric and nonmetric distance measures.
		\newblock {\em {ACM} Trans. Algorithms}, 6(4):59:1--59:26, 2010.

		\bibitem{AlonS99}
		Noga Alon and Benny Sudakov.
		\newblock On two segmentation problems.
		\newblock {\em J. Algorithms}, 33(1):173--184, 1999.

		\bibitem{ABHKS05}
		Sanjeev Arora, Eli Berger, Elad Hazan, Guy Kindler, and Muli Safra.
		\newblock On non-approximability for quadratic programs.
		\newblock In {\em Foundations of Computer Science, 2005. FOCS 2005. 46th Annual
			IEEE Symposium on}, pages 206--215. IEEE, 2005.

		\bibitem{AK11}
		Per Austrin and Subhash Khot.
		\newblock A simple deterministic reduction for the gap minimum distance of code
		problem.
		\newblock In {\em International Colloquium on Automata, Languages, and
			Programming}, pages 474--485. Springer, 2011.

		\bibitem{birw16}
		Arturs Backurs, Piotr Indyk, Ilya~P. Razenshteyn, and David~P. Woodruff.
		\newblock Nearly-optimal bounds for sparse recovery in generic norms, with
		applications to \emph{k}-median sketching.
		\newblock In {\em Proceedings of the Twenty-Seventh Annual {ACM-SIAM} Symposium
			on Discrete Algorithms, {SODA} 2016, Arlington, VA, USA, January 10-12,
			2016}, pages 318--337, 2016.

		\bibitem{BBBKLW18_arXiv}
		Frank Ban, Vijay Bhattiprolu, Karl Bringmann, Pavel Kolev, Euiwoong Lee, and
		David~P. Woodruff.
		\newblock A {PTAS} for l\({}_{\mbox{p}}\)-low rank approximation.
		\newblock {\em CoRR}, abs/1807.06101, 2018.

		\bibitem{BBHKSZ12}
		Boaz Barak, Fernando~GSL Brandao, Aram~W Harrow, Jonathan Kelner, David
		Steurer, and Yuan Zhou.
		\newblock Hypercontractivity, sum-of-squares proofs, and their applications.
		\newblock In {\em Proceedings of the forty-fourth annual ACM symposium on
			Theory of computing}, pages 307--326. ACM, 2012.

		\bibitem{bv10}
		Radim Belohl{\'{a}}vek and Vil{\'{e}}m Vychodil.
		\newblock Discovery of optimal factors in binary data via a novel method of
		matrix decomposition.
		\newblock {\em J. Comput. Syst. Sci.}, 76(1):3--20, 2010.

		\bibitem{BV11}
		Aditya Bhaskara and Aravindan Vijayaraghavan.
		\newblock Approximating matrix p-norms.
		\newblock In {\em Proceedings of the twenty-second annual ACM-SIAM symposium on
			Discrete Algorithms}, pages 497--511. SIAM, 2011.

		\bibitem{BGGLT18}
		Vijay Bhattiprolu, Mrinalkanti Ghoshi, Venkatesan Guruswami, Euiwoong Lee, and
		Madhur Tulsiani.
		\newblock Inapproximability of matrix $p \rightarrow q$ norms.
		\newblock {\em Electronic Colloquium on Computational Complexity (ECCC)}, 2018.
		\newblock TR18-037.

		\bibitem{BKW17}
		Karl Bringmann, Pavel Kolev, and David~P. Woodruff.
		\newblock Approximation algorithms for $\ell_0$-low rank approximation.
		\newblock In {\em {NIPS}}, 2017.
		\newblock To appear. \url{http://arxiv.org/abs/1710.11253}.

		\bibitem{bd13}
		J.~Paul Brooks and Jos\'{e}~H. Dul\'{a}.
		\newblock The $\ell_1$-norm best-fit hyperplane problem.
		\newblock {\em Appl. Math. Lett.}, 26(1):51--55, 2013.

		\bibitem{bdb13}
		J.~Paul Brooks, Jos\'{e}~H. Dul\'{a}, and Edward~L Boone.
		\newblock A pure $\ell_1$-norm principal component analysis.
		\newblock {\em Computational statistics \& data analysis}, 61:83--98, 2013.

		\bibitem{bj12}
		J.~Paul Brooks and Sapan Jot.
		\newblock Pcal1: An implementation in r of three methods for $\ell_1$-norm
		principal component analysis.
		\newblock {\em Optimization Online preprint}, 2012.

		\bibitem{clmw11}
		Emmanuel~J Cand{\`e}s, Xiaodong Li, Yi~Ma, and John Wright.
		\newblock Robust principal component analysis?
		\newblock {\em Journal of the ACM (JACM)}, 58(3):11, 2011.

		\bibitem{CIK17}
		L.~Sunil Chandran, Davis Issac, and Andreas Karrenbauer.
		\newblock On the parameterized complexity of biclique cover and partition.
		\newblock In {\em {IPEC}}, volume~63 of {\em LIPIcs}, pages 11:1--11:13, 2016.

		\bibitem{chd16}
		Kai-Yang Chiang, Cho-Jui Hsieh, and Inderjit~S Dhillon.
		\newblock Robust principal component analysis with side information.
		\newblock In {\em Proceedings of The 33rd International Conference on Machine
			Learning}, pages 2291--2299, 2016.

		\bibitem{CGOL17}
		Flavio Chierichetti, Sreenivas Gollapudi, Ravi Kumar, Silvio Lattanzi, Rina
		Panigrahy, and David~P. Woodruff.
		\newblock Algorithms for {\textdollar}{\textbackslash}ell{\_}p{\textdollar}
		low-rank approximation.
		\newblock In {\em Proceedings of the 34th International Conference on Machine
			Learning, {ICML} 2017, Sydney, NSW, Australia, 6-11 August 2017}, pages
		806--814, 2017.

		\bibitem{CW09}
		Kenneth~L. Clarkson and David~P. Woodruff.
		\newblock Numerical linear algebra in the streaming model.
		\newblock In {\em Proceedings of the 41st Annual {ACM} Symposium on Theory of
			Computing, {STOC} 2009, Bethesda, MD, USA, May 31 - June 02, 2009}, pages
		205--214, 2014.

		\bibitem{cw15b}
		Kenneth~L Clarkson and David~P Woodruff.
		\newblock Input sparsity and hardness for robust subspace approximation.
		\newblock In {\em Foundations of Computer Science (FOCS), 2015 IEEE 56th Annual
			Symposium on}, pages 310--329. IEEE, 2015.

		\bibitem{dajw15}
		Chen Dan, Kristoffer~Arnsfelt Hansen, He~Jiang, Liwei Wang, and Yuchen Zhou.
		\newblock On low rank approximation of binary matrices.
		\newblock {\em CoRR}, \url{http://arxiv.org/abs/1511.01699}, 2015.

		\bibitem{DDHKM09}
		Anirban Dasgupta, Petros Drineas, Boulos Harb, Ravi Kumar, and Michael~W.
		Mahoney.
		\newblock Sampling algorithms and coresets for $\ell_p$ regression.
		\newblock {\em SIAM J. Comput.}, 38(5):19, 2009.

		\bibitem{dv07}
		Amit Deshpande and Kasturi~R. Varadarajan.
		\newblock Sampling-based dimension reduction for subspace approximation.
		\newblock In {\em Proceedings of the 39th Annual {ACM} Symposium on Theory of
			Computing, San Diego, California, USA, June 11-13, 2007}, pages 641--650,
		2007.

		\bibitem{fl11}
		Dan Feldman and Michael Langberg.
		\newblock A unified framework for approximating and clustering data.
		\newblock In {\em Proceedings of the 43rd {ACM} Symposium on Theory of
			Computing, {STOC} 2011, San Jose, CA, USA, 6-8 June 2011}, pages 569--578,
		2011.

		\bibitem{fmsw10}
		Dan Feldman, Morteza Monemizadeh, Christian Sohler, and David~P. Woodruff.
		\newblock Coresets and sketches for high dimensional subspace approximation
		problems.
		\newblock In {\em Proceedings of the Twenty-First Annual {ACM-SIAM} Symposium
			on Discrete Algorithms, {SODA} 2010, Austin, Texas, USA, January 17-19,
			2010}, pages 630--649, 2010.

		\bibitem{FGLPS18}
		Fedor~V. Fomin, Petr~A. Golovach, Daniel Lokshtanov, Fahad Panolan, and Saket
		Saurabh.
		\newblock Approximation schemes for low-rank binary matrix approximation
		problems.
		\newblock volume abs/1807.07156, 2018.

		\bibitem{GV15}
		Nicolas Gillis and Stephen~A. Vavasis.
		\newblock On the complexity of robust {PCA} and $\ell_1$-norm low-rank matrix
		approximation.
		\newblock {\em CoRR}, \url{http://arxiv.org/abs/1509.09236}, 2015.

		\bibitem{Grothendieck56}
		Alexandre Grothendieck.
		\newblock {\em R{\'e}sum{\'e} de la th{\'e}orie m{\'e}trique des produits
			tensoriels topologiques}.
		\newblock Soc. de Matem{\'a}tica de S{\~a}o Paulo, 1956.

		\bibitem{GRSW16}
		Venkatesan Guruswami, Prasad Raghavendra, Rishi Saket, and Yi~Wu.
		\newblock Bypassing {UGC} from some optimal geometric inapproximability
		results.
		\newblock {\em ACM Transactions on Algorithms (TALG)}, 12(1):6, 2016.
		\newblock Conference version in SODA '12.

		\bibitem{ggyt12}
		Harold~W. Gutch, Peter Gruber, Arie Yeredor, and Fabian~J. Theis.
		\newblock {ICA} over finite fields - separability and algorithms.
		\newblock {\em Signal Processing}, 92(8):1796--1808, 2012.

		\bibitem{HM13}
		Aram~W Harrow and Ashley Montanaro.
		\newblock Testing product states, quantum {M}erlin-{A}rthur games and tensor
		optimization.
		\newblock {\em Journal of the ACM (JACM)}, 60(1):3, 2013.

		\bibitem{HO10}
		Julien~M Hendrickx and Alex Olshevsky.
		\newblock Matrix p-norms are {NP}-hard to approximate if $p \neq 1,2, \infty$.
		\newblock {\em SIAM Journal on Matrix Analysis and Applications},
		31(5):2802--2812, 2010.

		\bibitem{ImpagliazzoP01}
		Russell Impagliazzo and Ramamohan Paturi.
		\newblock On the complexity of k-sat.
		\newblock {\em J. Comput. Syst. Sci.}, 62(2):367--375, 2001.

		\bibitem{j14}
		Peng Jiang, Jiming Peng, Michael Heath, and Rui Yang.
		\newblock A clustering approach to constrained binary matrix factorization.
		\newblock In {\em Data Mining and Knowledge Discovery for Big Data}, pages
		281--303. 2014.

		\bibitem{KNW10}
		Daniel~M. Kane, Jelani Nelson, and David~P. Woodruff.
		\newblock An optimal algorithm for the distinct elements problem.
		\newblock In {\em Proceedings of the Twenty-Ninth {ACM} {SIGMOD-SIGACT-SIGART}
			Symposium on Principles of Database Systems, {PODS} 2010, June 6-11, 2010,
			Indianapolis, Indiana, {USA}}, pages 41--52, 2010.

		\bibitem{kv09}
		Ravi Kannan and Santosh Vempala.
		\newblock Spectral algorithms.
		\newblock {\em Foundations and Trends in Theoretical Computer Science},
		4(3-4):157--288, 2009.

		\bibitem{kk03}
		Qifa Ke and Takeo Kanade.
		\newblock Robust subspace computation using $\ell_1$ norm.
		\newblock {\em Technical Report CMU-CS-03-172, Carnegie Mellon University,
			Pittsburgh, PA.}, 2003.

		\bibitem{kk05}
		Qifa Ke and Takeo Kanade.
		\newblock Robust $\ell_1$ norm factorization in the presence of outliers and
		missing data by alternative convex programming.
		\newblock In {\em 2005 IEEE Computer Society Conference on Computer Vision and
			Pattern Recognition (CVPR'05)}, volume~1, pages 739--746. IEEE, 2005.

		\bibitem{Khot02}
		Subhash Khot.
		\newblock Hardness results for coloring 3-colorable 3-uniform hypergraphs.
		\newblock In {\em Foundations of Computer Science, 2002. Proceedings. The 43rd
			Annual IEEE Symposium on}, pages 23--32. IEEE, 2002.

		\bibitem{KV05}
		Subhash~A Khot and Nisheeth~K Vishnoi.
		\newblock The unique games conjecture, integrality gap for cut problems and
		embeddability of negative type metrics into $\ell_1$.
		\newblock In {\em Proceedings of the 46th Annual IEEE Symposium on Foundations
			of Computer Science}, pages 53--62. IEEE Computer Society, 2005.

		\bibitem{kimefficient}
		Eunwoo Kim, Minsik Lee, Chong-Ho Choi, Nojun Kwak, and Songhwai Oh.
		\newblock Efficient-norm-based low-rank matrix approximations for large-scale
		problems using alternating rectified gradient method.
		\newblock {\em IEEE transactions on neural networks and learning systems},
		26(2):237--251, 2015.

		\bibitem{KNS10}
		Guy Kindler, Assaf Naor, and Gideon Schechtman.
		\newblock The {UGC} hardness threshold of the {L}p {G}rothendieck problem.
		\newblock {\em Mathematics of Operations Research}, 35(2):267--283, 2010.
		\newblock Conference version in SODA '08.

		\bibitem{kumar2004simple}
		Amit Kumar, Yogish Sabharwal, and Sandeep Sen.
		\newblock A simple linear time $(1+\varepsilon)$-approximation algorithm for
		$k$-means clustering in any dimensions.
		\newblock In {\em {FOCS}}, pages 454--462, 2004.

		\bibitem{kumar2005linear}
		Amit Kumar, Yogish Sabharwal, and Sandeep Sen.
		\newblock Linear time algorithms for clustering problems in any dimensions.
		\newblock In {\em {ICALP}}, pages 1374--1385, 2005.

		\bibitem{kwak08}
		Nojun Kwak.
		\newblock Principal component analysis based on $\ell_1$-norm maximization.
		\newblock {\em IEEE transactions on pattern analysis and machine intelligence},
		30(9):1672--1680, 2008.

		\bibitem{m11}
		Michael~W. Mahoney.
		\newblock Randomized algorithms for matrices and data.
		\newblock {\em Foundations and Trends in Machine Learning}, 3(2):123--224,
		2011.

		\bibitem{Manurangsi18}
		Pasin Manurangsi.
		\newblock Inapproximability of maximum biclique problems, minimum k-cut and
		densest at-least-k-subgraph from the small set expansion hypothesis.
		\newblock {\em Algorithms}, 11(1):10, 2018.

		\bibitem{mkcp16}
		P.~P. {Markopoulos}, S.~{Kundu}, S.~{Chamadia}, and D.~A. {Pados}.
		\newblock {Efficient $\ell_1$-Norm Principal-Component Analysis via Bit
			Flipping}.
		\newblock {\em ArXiv e-prints}, 2016.

		\bibitem{mkp13}
		Panos~P. Markopoulos, George~N. Karystinos, and Dimitrios~A. Pados.
		\newblock Some options for $\ell_1$-subspace signal processing.
		\newblock In {\em {ISWCS} 2013, The Tenth International Symposium on Wireless
			Communication Systems, Ilmenau, {TU} Ilmenau, Germany, August 27-30, 2013},
		pages 1--5, 2013.

		\bibitem{mkp14}
		Panos~P. Markopoulos, George~N. Karystinos, and Dimitrios~A. Pados.
		\newblock Optimal algorithms for $\ell_1$-subspace signal processing.
		\newblock {\em {IEEE} Trans. Signal Processing}, 62(19):5046--5058, 2014.

		\bibitem{meng2013cyclic}
		Deyu Meng, Zongben Xu, Lei Zhang, and Ji~Zhao.
		\newblock A cyclic weighted median method for $\ell_1$ low-rank matrix
		factorization with missing entries.
		\newblock In {\em AAAI}, volume~4, page~6, 2013.

		\bibitem{MM13}
		Xiangrui Meng and Michael~W. Mahoney.
		\newblock Low-distortion subspace embeddings in input-sparsity time and
		applications to robust linear regression.
		\newblock In {\em Proceedings of the 45th Annual {ACM} Symposium on Theory of
			Computing, {STOC} 2013, Palo Alto, CA, USA, June 01 - 04, 2013}, pages
		91--100, 2013.

		\bibitem{Pauli09}
		Pauli Miettinen.
		\newblock Matrix decomposition methods for data mining: Computational
		complexity and algorithms.
		\newblock PhD Thesis, University of Helsinki, Finland, 2009.

		\bibitem{mmgdm08}
		Pauli Miettinen, Taneli Mielik{\"{a}}inen, Aristides Gionis, Gautam Das, and
		Heikki Mannila.
		\newblock The discrete basis problem.
		\newblock {\em {IEEE} Trans. Knowl. Data Eng.}, 20(10):1348--1362, 2008.

		\bibitem{nnasj14}
		Praneeth Netrapalli, UN~Niranjan, Sujay Sanghavi, Animashree Anandkumar, and
		Prateek Jain.
		\newblock Non-convex robust pca.
		\newblock In {\em Advances in Neural Information Processing Systems}, pages
		1107--1115, 2014.

		\bibitem{nyh14}
		Feiping Nie, Jianjun Yuan, and Heng Huang.
		\newblock Optimal mean robust principal component analysis.
		\newblock In {\em Proceedings of the 31st International Conference on Machine
			Learning (ICML-14)}, pages 1062--1070, 2014.

		\bibitem{prf15}
		Amichai Painsky, Saharon Rosset, and Meir Feder.
		\newblock Generalized independent component analysis over finite alphabets.
		\newblock {\em {IEEE} Trans. Information Theory}, 62(2):1038--1053, 2016.

		\bibitem{park2016iteratively}
		Young~Woong Park and Diego Klabjan.
		\newblock Iteratively reweighted least squares algorithms for $\ell_1$-norm
		principal component analysis.
		\newblock {\em arXiv preprint arXiv:1609.02997}, 2016.

		\bibitem{RS10}
		Prasad Raghavendra and David Steurer.
		\newblock Graph expansion and the unique games conjecture.
		\newblock In {\em Proceedings of the Forty-second ACM Symposium on Theory of
			Computing}, STOC '10, pages 755--764, 2010.

		\bibitem{RST10}
		Prasad Raghavendra, David Steurer, and Prasad Tetali.
		\newblock Approximations for the isoperimetric and spectral profile of graphs
		and related parameters.
		\newblock In {\em Proceedings of the forty-second ACM symposium on Theory of
			computing}, pages 631--640. ACM, 2010.

		\bibitem{RST12}
		Prasad Raghavendra, David Steurer, and Madhur Tulsiani.
		\newblock Reductions between expansion problems.
		\newblock In {\em Computational Complexity (CCC), 2012 IEEE 27th Annual
			Conference on}, pages 64--73. IEEE, 2012.

		\bibitem{rsw16}
		Ilya~P. Razenshteyn, Zhao Song, and David~P. Woodruff.
		\newblock Weighted low rank approximations with provable guarantees.
		\newblock In {\em Proceedings of the 48th Annual {ACM} {SIGACT} Symposium on
			Theory of Computing, {STOC} 2016, Cambridge, MA, USA, June 18-21, 2016},
		pages 250--263, 2016.

		\bibitem{sbm03}
		Jouni~K. Sepp{\"{a}}nen, Ella Bingham, and Heikki Mannila.
		\newblock A simple algorithm for topic identification in 0-1 data.
		\newblock In {\em {PKDD}}, volume 2838 of {\em LNCS}, pages 423--434, 2003.

		\bibitem{sjy09}
		Bao{-}Hong Shen, Shuiwang Ji, and Jieping Ye.
		\newblock Mining discrete patterns via binary matrix factorization.
		\newblock In {\em {KDD}}, pages 757--766, 2009.

		\bibitem{sv12}
		Nariankadu~D. Shyamalkumar and Kasturi~R. Varadarajan.
		\newblock Efficient subspace approximation algorithms.
		\newblock {\em Discrete {\&} Computational Geometry}, 47(1):44--63, 2012.

		\bibitem{sh06}
		Tom{\'{a}}s Singliar and Milos Hauskrecht.
		\newblock Noisy-{OR} component analysis and its application to link analysis.
		\newblock {\em Journal of Machine Learning Research}, 7:2189--2213, 2006.

		\bibitem{swz17}
		Zhao Song, David~P. Woodruff, and Peilin Zhong.
		\newblock Low rank approximation with entrywise l\({}_{\mbox{1}}\)-norm error.
		\newblock In {\em Proceedings of the 49th Annual {ACM} {SIGACT} Symposium on
			Theory of Computing, {STOC} 2017, Montreal, QC, Canada, June 19-23, 2017},
		pages 688--701, 2017.

		\bibitem{Steinberg05}
		Daureen Steinberg.
		\newblock Computation of matrix norms with applications to robust optimization.
		\newblock {\em Research thesis, Technion-Israel University of Technology},
		2005.

		\bibitem{Steurer10}
		David Steurer.
		\newblock Subexponential algorithms for d-to-1 two-prover games and for
		certifying almost perfect expansion.
		\newblock {\em Manuscript}, 2010.

		\bibitem{vag07}
		Jaideep Vaidya, Vijayalakshmi Atluri, and Qi~Guo.
		\newblock The role mining problem: finding a minimal descriptive set of roles.
		\newblock In {\em {SACMAT}}, pages 175--184, 2007.

		\bibitem{w14}
		David~P. Woodruff.
		\newblock Sketching as a tool for numerical linear algebra.
		\newblock {\em Foundations and Trends in Theoretical Computer Science},
		10(1-2):1--157, 2014.

		\bibitem{wgrpm09}
		John Wright, Arvind Ganesh, Shankar Rao, Yigang Peng, and Yi~Ma.
		\newblock Robust principal component analysis: Exact recovery of corrupted
		low-rank matrices via convex optimization.
		\newblock In {\em Advances in neural information processing systems}, pages
		2080--2088, 2009.

		\bibitem{y11}
		Arie Yeredor.
		\newblock Independent component analysis over galois fields of prime order.
		\newblock {\em {IEEE} Trans. Information Theory}, 57(8):5342--5359, 2011.

		\bibitem{zzl15}
		Huishuai Zhang, Yi~Zhou, and Yingbin Liang.
		\newblock Analysis of robust pca via local incoherence.
		\newblock In {\em Advances in Neural Information Processing Systems}, pages
		1819--1827, 2015.

		\bibitem{zlsyO12}
		Yinqiang Zheng, Guangcan Liu, Shigeki Sugimoto, Shuicheng Yan, and Masatoshi
		Okutomi.
		\newblock Practical low-rank matrix approximation under robust $\ell_1$-norm.
		\newblock In {\em 2012 {IEEE} Conference on Computer Vision and Pattern
			Recognition, Providence, RI, USA, June 16-21, 2012}, pages 1410--1417, 2012.

	\end{thebibliography}

\end{document}